\newtheorem{lemma}{Lemma}
\newtheorem{propo}{Proposition}
\newtheorem{theor}{Theorem}
\newtheorem{defin}{Definition}
\newtheorem{remar}{Remark}
\newcommand{\vertii}[1]{{\left\vert\kern-0.3ex\left\vert #1 
    \right\vert\kern-0.3ex\right\vert}}
\newcommand{\vertiii}[1]{{\left\vert\kern-0.3ex\left\vert\kern-0.3ex\left\vert #1 
    \right\vert\kern-0.3ex\right\vert\kern-0.3ex\right\vert}}
\newcommand*\dd{\mathop{}\!\mathrm{d}}
\title{Boltzmann–Grad limit for the inelastic Lorentz gas:\\Part I. Existence, uniqueness, and rigorous derivation\\via weak convergence}
\author[1]{Th\'eophile Dolmaire\,\footnote{E-mail addresses: \texttt{theophile.dolmaire@univaq.it} (T.~Dolmaire), \texttt{alessia.nota@gssi.it} (A.~Nota).}}
\author[2]{Alessia Nota}
\affil[1]{Dipartimento di Ingegneria e Scienze dell'Informazione e Matematica (DISIM), Università degli Studi dell'Aquila, 67100  L'Aquila, Italy} 
\affil[2]{Gran Sasso Science Institute, Viale Francesco Crispi 7, 67100 L’Aquila, Italy}
\begin{document}

\maketitle

\begin{abstract}
\noindent
In this paper we provide a rigorous derivation of the inelastic linear Boltzmann equation, in the Boltzmann-Grad limit, from a dissipative, random, Lorentz gas in arbitrary dimensions $d \geq 2$. Specifically, we  consider a microscopic particle system where scatterers are randomly distributed according to a Poisson process, and a tagged light particle undergoes  inelastic collisions with the scatterers following a reflection law characterized by a fixed restitution coefficient. 
We establish the existence and uniqueness of weak solutions to the inelastic linear Boltzmann equation within the class of non-negative Radon measures, assuming that the initial data has a finite exponential moment. We first show that the forward dynamics of the dissipative particle system is globally defined almost surely and then prove the weak-$\ast$ convergence of the microscopic solution towards the weak solutions of the inelastic linear Boltzmann equation, providing an explicit rate of convergence. Furthermore, under the same initial data assumptions, we prove the existence of strong solutions to the inelastic linear Boltzmann equation, constructed via a series representation of the solutions.
\end{abstract}

\noindent
\textbf{Keywords:}  Boltzmann–Grad Limit; Kinetic Equations;
Rigorous Derivation;
Linear Boltzmann Equation;
Lorentz Gas; 
Dissipative Particle Systems;
Inelastic Collisions;
Granular Media

\tableofcontents

\numberwithin{equation}{section}

\section{Introduction}

Kinetic equations describe in terms of distribution functions the evolution of large particle systems such as gas, plasmas, or even animal or human populations in which spread of diseases or evolution of opinions can be studied. The paradigmatic example is the Boltzmann equation, where the quantity $f(t,x,v)$ represents the probability to find particles of a dilute gas lying at time $t$ at position $x$ and moving with velocity $v$. The evolution of the distribution function is obtained from the microscopic dynamics of such a particle system, and the rigorous justification of the validity of the equation, deduced directly from the microscopic dynamics, is a task of central importance to establish solid foundations of statistical mechanics. Such a task is often referred to as the \emph{derivation} of the kinetic equations. The rigorous derivation of kinetic equations ensures that the equations can be used in a reliable manner to study the behaviour of large particle systems, and in some cases the procedure can be refined to provide a quantitative estimation.\\
\noindent
In the present article, we will establish the rigorous derivation of the linear inelastic Boltzmann equation in dimension $d$, with $d \geq 2$ arbitrary. The equation describes the inelastic Lorentz gas, that is, the evolution of non-interacting, light particles that evolve among a background of infinitely heavy, randomly distributed %inamovible
scatterers, and that collide inelastically with the scatterers. Such a system describes for instance the transport of electrons in a metal. We will start from the deterministic evolution of the particles, assumed to be point-particles, among the scatterers of radius $\varepsilon > 0$. The scatterers will be distributed randomly in $\mathbb{R}^d$ %(where $d \in \mathbb{N}^*$ is the dimension) 
according to a Poisson process of intensity $\mu_\varepsilon > 0$. We will show that in the Boltzmann-Grad limit (that is, when $\varepsilon \rightarrow 0$ in such a way that $\mu_\varepsilon \varepsilon^{d-1} = 1$, so that the mean free path remains constant, while the volume fraction, proportional to $\mu_\varepsilon \varepsilon^{d}$, vanishes), the distribution of a tagged particle of the microscopic system can be approximated by the associated solution to the linear inelastic Boltzmann equation, that writes:
\begin{align}
\label{EQUATInelaLineaBoltzFinal}
\partial_t f(t,x,v) + v\cdot\nabla_x f(t,x,v) = \int_{\mathbb{S}^{d-1}} \frac{\vert v\cdot\omega \vert}{r^2} f(\hspace{0.25mm}'\hspace{-0.5mm}v) \dd \omega - C_d \vert v \vert f(v)
\end{align}
where $'\hspace{-0.25mm}v = v - \big(1 + 1/r\big) \big(v\cdot\omega\big)\omega$ is the pre-collisional velocity of a particle that collides with a scatterer at a point such that the normal of the scatterer is $\omega \in \mathbb{S}^{d-1}$ and such that the velocity after such a collision is $v$, $r \in [0,1]$ is the \emph{restitution coefficient} measuring the inelasticity of the collision and $C_d = \int_{\mathbb{S}^{d-1}} \big\vert \frac{v}{\vert v \vert} \cdot \omega \big\vert \dd \omega$ %$C_d = \int_{\mathbb{S}^{d-1}} \big\vert \big(v/\vert v \vert\big) \cdot \omega \big\vert \dd \omega$ 
is a positive constant that depends only on the dimension. In the case $r = 1$, we recover the elastic case.\\
\newline
To the best of our knowledge, even if in a linear framework, our result constitutes the first rigorous derivation of a dissipative collisional equation where energy is lost over time, and, specifically, of an inelastic version of the Boltzmann equation from a microscopic particle system that evolves according to deterministic dynamics. More precisely, we prove that the solution of the microscopic inelastic Lorentz model converges in a suitable weak sense towards the solution  of \eqref{EQUATInelaLineaBoltzFinal}. We also emphasize that the derivation we provide is quantitative, in the sense that the error made by approximating the solution of the linear inelastic Boltzmann equation by the evolution of the light particle is explicitly given, as a power of $\varepsilon$, where $\varepsilon$ is the size of the scatterers of the Lorentz gas.

\paragraph{State of the art on the Lorentz Gas Model.} The Lorentz gas, introduced by H. A. Lorentz in 1905 to model the motion of electrons in metals, stands as a simple but highly non-trivial model in this context. It is a rare source of exact results in kinetic theory, providing a concrete example where microscopic reversibility can be reconciled with macroscopic irreversibility. Indeed, for this system, one can prove, under suitable scaling limits, a rigorous validation of linear kinetic equations and, from this, of diffusion equations.\\
\noindent
The Lorentz gas consists of a particle moving through infinitely heavy, randomly distributed
scatterers. The interaction between the Lorentz particle and the scatterers is specified by a central potential of finite range. Hence, the motion of the Lorentz particle is defined through the solution of Newton’s equations of motion.   
In the elastic case, the original system is Hamiltonian, the only stochasticity being that of the positions of the
scatterers. This randomness is essential to obtain the correct kinetic description.\\
\noindent
The first scaling one could consider is the Boltzmann-Grad limit (or low-density limit), 
namely, when the number of collisions is small, and thus
the mean free path of the particle is macroscopic. 
The initial breakthrough in this direction was achieved by Gallavotti, who derived the linear Boltzmann equation for a particle moving through a random distribution of fixed, hard scatterers \cite{Galla1969}, \cite{Gallavotti}. This work was subsequently generalized in terms of convergence of path
measures and extended to more general scatterer distributions by Spohn \cite{Spohn1}, see also \cite{Spohn2}. Boldrighini, Bunimovich and Sinai 
proved instead that the limiting Boltzmann equation holds for almost every scatterer configuration
drawn from a Poisson distribution \cite{BBS}.  We remark that these results have
been provided in cases of compactly supported potentials and provide a qualitative validation of the
linear Boltzmann equation, with no explicit control of the error in the kinetic limit.  For a quantitative estimate of the error for a Lorentz Gas of hard-spheres, we
refer to \cite{BNPP}. For first contributions towards the open problem of the validation in the case of long-range potentials we refer to \cite{DP, Ayi_017} and \cite{NSV}.\\
\newline
In the nonlinear framework, the derivation of the nonlinear Boltzmann equation from a Newtonian system of hard spheres has been rigorously proven first by Lanford in \cite{Lanf975}, in the low-density regime and for short times. Lanford's result was later extended, among many others, in \cite{GSRT013} (quantitative derivation), \cite{PuSS014} (short range potential interactions), \cite{Dolm023}, \cite{LeBi022} (domains with boundary), \cite{BGSS023} (study of the the fluctuations), and finally \cite{DeHM024} (extension of Lanford's theorem to arbitrary time intervals). For more details on the derivation of the nonlinear Boltzmann equation, the reader may refer to \cite{BGSS023_3}, and the references therein.\\
\newline
\noindent
As previously emphasized, the randomness in the distribution of scatterers plays a crucial role in deriving the linear Boltzmann equation. Indeed, in the case of a periodic setting, where heavy particles are positioned at the vertices of a lattice in Euclidean space, we face the maximum amount of correlation between the heavy particles. This significantly alters the structure of the resulting kinetic equation. The linear Boltzmann equation fails to be the correct mesoscopic description of this model (see \cite{CG1}).  The first complete proof of the Boltzmann-Grad limit for the periodic Lorentz gas, valid for all lattices and space dimensions, has been obtained by Marklof and Strömbergsson \cite{MS} (see also \cite{MS_2}). The properties of the resulting generalized linear Boltzmann equation are discussed by Caglioti and Golse in  \cite{CG3}.\\
\noindent
In the weak-coupling regime, when there are very many but weak collisions, interpreted as a central limit effect, a  
linear Landau equation appears. The first result in this direction was obtained by Kesten and Papanicolaou for a particle in $\mathbb{R}^3$ subject to a weak mean zero random force field \cite{KP}. Later, D\"urr, Goldstein, and Lebowitz showed that in $\mathbb{R}^2$ the velocity process converges in distribution to the Brownian motion on a surface of constant speed for sufficiently smooth interaction potentials \cite{DGL}. The linear Landau equation appears also in an intermediate scale between the low density and
the weak-coupling regime, see \cite{DR}.\\ 
\noindent
The rigorous derivation of hydrodynamic equations, specifically the heat
equation, from the mechanical system given by the random Lorentz gas relies on the kinetic approximation of the microscopic dynamics, i.e. uses the kinetic equation a bridge. This approach has been used to obtain the heat equation in different contexts, see \cite{BNP, BNPP, BGSR016}, and also \cite{ErSY008}. We additionally refer to \cite{LuTo} for a different approach in this direction.\\   
\noindent
Most of the mathematical results on the linear Boltzmann equation assume
that there are no additional transport terms arising from external fields. However, the presence of such fields significantly impact both the derivation of the equation in the low-density regime and the properties of its solutions. Specifically, the motion of a Lorentz particle in $\mathbb{R}^2$ under a uniform, constant magnetic field formally leads to a generalized Boltzmann equation with memory effects (see \cite{BMHH,   BHPH} and  also \cite{KuSp} where the model has been studied numerically). A rigorous derivation of this equation has recently been achieved in \cite{NSS}. We refer also to \cite{MN} for the rigorous derivation of linear kinetic equations  that include magnetic transport effects.

\paragraph{The linear inelastic Lorentz model and the inelastic linear Boltzmann equation.} In the case of light particles interacting inelastically with the scatterers, the question of the rigorous derivation remained open. The derivation is also lacking for the nonlinear inelastic Boltzmann equation. This is because of the singularities developed by systems of inelastic particles, the so-called \emph{granular materials}, and by the kinetic equations describing these objects (see for instance \cite{Vill006}). The difficulties come on the one hand from the decay of the temperature, preventing the existence of steady states and implying the convergence towards self-similar solutions \cite{MiMo009}. On the other hand, the particles tend to create stable clusters (hence the name of \emph{granular} media), inducing an explosion of the gradient in the position variable or convergence towards Dirac masses at the level of the kinetic equation (\cite{BeCP997}, \cite{Tosc000}). In particular, the creation of clusters violates the separation of the micro- and mesoscopic scales, such a separation being a crucial ingredient to obtain the validity of kinetic equations from particle systems. At the level of the particle system, the dynamics among the clusters might degenerate so that infinitely many collisions can take place in finite time. This is the phenomenon of \emph{inelastic collapse} (see \cite{MNYo994}, \cite{CoGM995}, \cite{ZhKa996}, \cite{CDKK999}, \cite{BeCa999}, \cite{McNa012}, \cite{DoVe024}, \cite{DoVe025}, \cite{DoHu024}, \cite{Dolm025}), which represents in particular a serious obstruction to perform a rigorous derivation of the nonlinear inelastic Boltzmann equation. We will address carefully this issue in the linear setting.\\
In the case of the linear model, studied numerically in \cite{HDHa001} (see also \cite{MaPi999}, in the case when an external force re-injects energy in the system, and \cite{MaPi007}, \cite{DoMN025} in the specific case of the Maxwell collision kernel), the decay of the temperature can be estimated from below, so that it is possible to prove that the gas does not cool down in finite time. Besides, the question of the regularity of the solutions of the linear equation remained open to the best of our knowledge. We will discuss this question, providing a natural condition to ensure the convergence of the series representation of the solutions to the linear inelastic Boltzmann equation.\\
For a general introduction to granular gases and the inelastic Boltzmann equation, we refer to the classical references \cite{BrPo004}, \cite{Vill006} and \cite{CHMR021}.

\paragraph{Formal derivation of the inelastic linear Boltzmann equation.} For the sake of completeness, in this section, we will present the usual formal arguments that enable to recover the inelastic linear Boltzmann equation, describing at the kinetic scale the evolution of a tagged particle in a background of inelastic scatterers. The Boltzmann equation is obtained formally by considering the number $f(t,x,v)$ of particles at time $t$, position $x$ and moving with velocity $v$, and by evaluating the change rate of $f$ due to the different microscopic mechanisms that affect the dynamics of the particles. In the present case, the particles described by $f$ do not interact one with another, and can collide only with the background of fixed scatterers. The variation per unit of time of the number of particles, at time $t$ and at the position $x$, with a velocity $v \in A$ for $A$ measurable, is given by:
\begin{align}
\label{EQUATFormaDeriv}
\frac{\dd}{\dd t} \int_{\mathbb{R}^d_v} f(t,x,v) \mathds{1}_A(v) \dd v &= - \int_{\mathbb{R}^d_v} v\cdot\nabla_x f(t,x,v) \mathds{1}_A(v) \dd v \nonumber\\
&\hspace{10mm}+ \int_{\mathbb{S}^{d-1}_\omega}\int_{\mathbb{R}^d_v} b(\vert v \vert,\left\vert \frac{v}{\vert v \vert}\cdot \omega \right\vert) f(v) \Big[ \mathds{1}_A(v') - \mathds{1}_A(v) \Big] \dd v \dd \omega
\end{align}
where $v'$ is the post-collisonal velocity of a particle colliding with a scatterer with impact parameter $\omega \in \mathbb{S}^{d-1}$ and pre-collisional velocity $v$. Here, $v'$ is given by the scattering relation $v'= \kappa_\omega(v)$, already defined in \eqref{EQUATDefinFlot_TempsCroissLoi_Crois}, as:
\begin{align}
\label{EQUATScattering}
v' = \kappa_\omega(v) = v - (1+r) \big( v\cdot\omega\big) \omega.
\end{align}
If $v' \in A$, a collision $v \rightarrow v'$ increases the number of particles with velocity in $A$, while if $v \in A$ a collision $v \rightarrow v'$ decreases the number of particles in $A$ (assuming that $v$ and $v'$ do not belong together to $A$, which is very likely is $A$ has a small measure). All the possible collisions are considered thanks to the integration over all the possible pre-collisional velocities $v \in \mathbb{R}^d$ and angular parameters $\omega \in \mathbb{S}^{d-1}$.\\
Finally, the collision kernel $b$ describes the rate at which the collisions with particles with pre-collisional velocity $v$ and angular parameter $\omega$ take place. By Galilean invariance, such a collision kernel depends only on $\vert v \vert$ and $\big\vert \big(v/\vert v \vert\big) \cdot \omega \big\vert$.\\
Considering an approximation of any measurable function $\varphi$ by scale functions, we obtain the following weak form for the linear Boltzmann equation, in the case of a general collision kernel $b$:
\begin{align}
\label{EQUATLineaBoltzGenerFormeFaibl}
\frac{\dd}{\dd t} \int_{\mathbb{R}^d_v} f(t,x,v) \varphi(v) \dd v + \int_{\mathbb{R}^d_v} v\cdot\nabla_x f(t,x,v) \varphi(v) \dd v = \int_{\mathbb{S}^{d-1}_\omega}\int_{\mathbb{R}^d_v} b(\vert v \vert,\left\vert \frac{v}{\vert v \vert}\cdot \omega \right\vert) f(v) \Big[ \varphi(v') - \varphi(v) \Big] \dd v \dd \omega.
\end{align}
Denoting by $J\kappa_\omega$ the Jacobian determinant of the scattering $\kappa_\omega:v\mapsto v'$ (for $\omega$ fixed), we have $\big\vert J\kappa_\omega \big\vert = r$. Denoting by $\hspace{0.25mm}'\hspace{-0.25mm}v$ the velocity such that $\kappa_\omega(\hspace{0.25mm}'\hspace{-0.25mm}v) = v$ or equivalently $\hspace{0.25mm}'\hspace{-0.25mm}v = \kappa^{-1}_\omega(v) = v - (1+1/r)(v\cdot\omega)\omega$,  we deduce the following strong form for the linear Boltzmann equation, for a general collision kernel $b$:
\begin{align}
\label{EQUATLineaBoltzGenerFormeForte}
\partial_t f(t,x,v) + v\cdot\nabla_x f(t,x,v) &= \int_{\mathbb{S}^{d-1}_\omega} \frac{1}{\vert J\kappa_\omega ('v) \vert} b\left( \vert \hspace{0.25mm}'\hspace{-0.5mm}v \vert, \left\vert \frac{\hspace{0.25mm}'\hspace{-0.25mm}v}{\vert \hspace{0.25mm}'\hspace{-0.25mm}v \vert}\cdot\omega \right\vert \right) f(\hspace{0.25mm}'\hspace{-0.25mm}v) \dd \omega - f(v) \int_{\mathbb{S}^{d-1}_\omega} b\left( \vert v \vert, \left\vert \frac{v}{\vert v \vert}\cdot\omega \right\vert \right) \dd \omega \nonumber\\
&= \int_{\mathbb{S}^{d-1}_\omega}\frac{1}{r} b\left( \vert \hspace{0.25mm}'\hspace{-0.25mm}v \vert, \left\vert \frac{\hspace{0.25mm}'\hspace{-0.25mm}v}{\vert \hspace{0.25mm}'\hspace{-0.25mm}v \vert}\cdot\omega \right\vert \right)  f('\hspace{-0.25mm}v) \dd \omega - f(v) \int_{\mathbb{S}^{d-1}_\omega}b\left( \vert v \vert, \left\vert \frac{v}{\vert v \vert}\cdot\omega \right\vert \right) \dd \omega.
\end{align}

\noindent
In the specific case of \emph{hard sphere} collisions, that we will consider in this paper, the explicit expression of the collision kernel is $b(\vert V \vert, \left\vert \frac{V}{\vert V \vert}\cdot\omega \right\vert) = \vert V\cdot\omega \vert$. Therefore, using that $\vert \hspace{0.25mm}'\hspace{-0.5mm}v\cdot \omega \vert = \frac{1}{r} \vert v \cdot \omega \vert$, the strong form of the linear Boltzmann equation for a Lorentz gas, in the case of inelastic collisions with a fixed restitution coefficient, and for the hard sphere collision kernel, is:
\begin{align}
\label{EQUATLineaBoltzSphDuFormeForte}
\partial_t f(t,x,v) + v\cdot\nabla_x f(t,x,v) = \int_{\mathbb{S}^{d-1}_\omega} \frac{\vert v\cdot\omega \vert}{r^2} f('v) \dd \omega - f(v) \int_{\mathbb{S}^{d-1}_\omega} \vert v\cdot \omega \vert \dd\omega.
\end{align}
It is possible to simplify the expression of the loss term of the collision operator, since we have in general:
\begin{align}
\label{EQUATDefin_C_d__bis_}
\int_{\mathbb{S}^{d-1}_\omega} \vert \frac{v}{\vert v \vert} \cdot\omega \vert \dd \omega = \int_{\mathbb{S}^{d-1}_\omega} \vert R(e_1) \cdot R(\sigma) \vert \cdot \vert R'(\sigma) \vert \dd \sigma = \int_{\mathbb{S}^{d-1}_\omega} \vert e_1\cdot\sigma \vert \dd \sigma,
\end{align}
where $R$ is any vectorial rotation that sends the first vector of the canonical basis  $e_1$ on $v/\vert v \vert$ (which is a fixed vector, since the integration variable is $\omega$). We will denote by $C_d$ the previous integral, which depends only on the dimension $d$:
\begin{align}
\label{EQUATDefin_C_d_}
C_d =  \int_{\mathbb{S}^{d-1}_\omega} \vert e_1\cdot\sigma \vert \dd \sigma.
\end{align}
This allows to obtain, for a general dimension $d \geq 2$, the final expression \eqref{EQUATInelaLineaBoltzFinal} for the strong form of the inelastic linear Boltzmann equation. Notice that in dimension $2$, we have $
C_2 = %\int_{\mathbb{S}^1} \vert \frac{v}{\vert v \vert} \cdot \omega \vert \dd\omega =
\int_0^{2\pi} \vert \cos\theta \vert \dd\theta = 4,
$
whereas in dimension $3$, choosing the parametrization $\sigma = (\cos\varphi\cos\theta,\sin\varphi\cos\theta,\sin\theta)$ of the sphere, with $\theta \in [-\pi/2,\pi/2]$ and $\varphi \in [0,2\pi]$, we find
$ C_3 =% \int_{\mathbb{S}^2} \vert \frac{v}{\vert v \vert} \cdot \omega \vert \dd\omega = 
\int_0^{2\pi}\int_{-\pi/2}^{\pi/2} \vert \cos\varphi \cos\theta \vert \cos\theta \dd \theta \dd \varphi = 2\pi.$

\paragraph{The difficulties concerning the derivation of the inelastic linear Boltzmann equation.}
The rigorous derivation of the linear Boltzmann equation, in the elastic case, as performed originally by Gallavotti in the seminal article \cite{Gallavotti}, consists in considering the quantity:
\begin{align}
\label{EQUATQuantGallavotti}
f_\varepsilon(t,x,v) = \mathbb{E}_{\mu_\varepsilon} \big[ f_0\left(T^{-t}_{c,\varepsilon}(x,v)\right)\big],
\end{align}
and showing that $f_\varepsilon$ converges, in some strong sense (pointwise, in the $L^1$ norm, or more generally in any $L^p$ norm), in the Boltzmann-Grad limit $\varepsilon \rightarrow 0$, $\mu_\varepsilon \varepsilon^{d-1} = 1$ towards the solution of the linear Boltzmann equation with initial datum $f_0$. 
 Observe that $f_\varepsilon(t,x,v)$ is defined by \eqref{EQUATQuantGallavotti} as the mean value of $f_0\left(T^{-t}_{c,\varepsilon}(x,v)\right)$, averaged over all the distributions $c$ of the scatterers, distributed according to a Poisson process in $\mathbb{R}^d$ with intensity $\mu_\varepsilon$. $T^{-t}_{c,\varepsilon}(x,v)$ is the preimage of the point $(x,v)$ of the phase space, by the dynamical flow of the tagged particle. In other words, \eqref{EQUATQuantGallavotti} describes the probability to find the tagged particle in the configuration $(x,v)$ of the phase space at time $t$, which is equivalent to consider the probability to find the tagged particle at the initial time $t=0$ in the configuration $T^{-t}_{c,\varepsilon}(x,v)$.\\
In the case of a tagged particle colliding inelastically with the scatterers, one has to take into account the fact that the dynamical flow of the tagged particle, at $c$ fixed, is contracting the measure in the phase space. Therefore, in the present case, we have to consider:
\begin{align}
\label{EQUATQuantNotre_Cas_}
f_\varepsilon(t,x,v) = \mathbb{E}_{\mu_\varepsilon} \big[ \vert J\big(T^{-t}_{c,\varepsilon}(x,v)\big) \vert \cdot f_0 \big( T^{-t}_{c,\varepsilon}(x,v) \big) \big],
\end{align}
where $J\big(T^{-t}_{c,\varepsilon}(x,v)\big)$ denotes the Jacobian determinant of $T^{-t}_{c,\varepsilon}$, with respect to the $(x,v)$  variables. This Jacobian determinant can be computed explicitly, for example relying on the Transport-Collision-Transport formula introduced in \cite{DoVeNot}.\\
We further observe that, in the inelastic case, we are facing an additional major difficulty. More precisely, if the intensity $\mu_\varepsilon$ of the Poisson process satisfies the Boltzmann-Grad limit, that is, is scaled as $\mu_\varepsilon \varepsilon^{d-1} = 1$, the mean free path of the tagged particle is a constant independent of $\varepsilon$. But in the case of inelastic collisions, the velocity of the tagged particle is increasing at each collision when the dynamics is considered backward in time. More precisely, the norm of the velocity will grow geometrically with a positive probability, leading to a sequence of infinite collisions in finite time, which prevents us from considering directly the inverse $T_{c,\varepsilon}^{-t}$ of the dynamical flow.\\
To overcome this difficulty, we will rely on an approach based on the weak formulation of the linear inelastic Boltzmann equation. This approach is the analog of the one used in \cite{NoVe017}. We aim to discuss the direct derivation from the expression \eqref{EQUATQuantNotre_Cas_}, that is, the derivation of \eqref{EQUATInelaLineaBoltzFinal} in the strong sense, in a future work.\\
\newline
The proof of the derivation relies on the series representation of the solutions of the linear Boltzmann equation, following the pioneering work of Gallavotti \cite{Gallavotti}. It is worth to remark that in the case considered in this paper, due to the inelastic nature of the  collisions,  we face the following serious difficulty. Considering the series expansion of the solution to the inelastic linear Boltzmann equation, in the integrand of the $k$-th term appears the product $\prod_{l=1}^k \vert v^{-(l-1)}\cdot \omega_l \vert/r^2$, where $\omega_l \in \mathbb{S}^{d-1}$ is the angular parameter of the $l$-th collision, and $v^{-(l)}$ is the $l$-th pre-collisonal velocity (defined as the $l$-th iteration of the inverse of the scattering $\kappa_{\omega_l}$ \eqref{EQUATScattering}). Due to the dissipative nature of the collisions, $\vert v^{-(l)} \vert$ grows exponentially fast in $l$, so that the $k$-th term of the series representation grows as $(1/r)^{2k+k^2}$, yielding severe issues to prove the convergence of the series.\\
We address this issue by considering solutions with a bounded exponential moment. To the best of our knowledge, the series representation of the solutions of the inelastic linear Boltzmann equation, and the proof of its convergence, is a novelty.

\paragraph{Plan of the paper.} The content is organized as follows. In Section 2, we introduce the particle system we are considering at the microscopic level, as well as the Boltzmann-Grad scaling that allows to reach the mesoscopic description. We then present the main results, and the strategy we adopt for the derivation. Section 3 is devoted to the rigorous results we will need concerning the inelastic linear Boltzmann equation \eqref{EQUATInelaLineaBoltzFinal}. We discuss in particular the series representation of the solutions of \eqref{EQUATInelaLineaBoltzFinal}. In Section 4 we perform the rigorous derivation of \eqref{EQUATInelaLineaBoltzFinal} from the inelastic Lorentz gas. Finally, in Section 5 we prove that the dynamics of the light particle inelastically colliding with the scatterers, is well-posed, globally in time, for a set of distributions of scatterers of probability $1$.

\paragraph{Notations}
We will denote by $\mathcal{P}\big(\mathbb{R}^d\times \mathbb{R}^d\big)$ the set of the probability measures in $\mathbb{R}^d\times \mathbb{R}^d$, $\mathcal{M}\big(\mathbb{R}^d\times \mathbb{R}^d\big)$ will denote the set of finite signed Radon measures on $\mathbb{R}^d\times \mathbb{R}^d$, and $\mathcal{M}_{+}\big(\mathbb{R}^d\times \mathbb{R}^d\big)$ its non-negative cone, that is, the subset of non-negative measures in $\mathcal{M}\big(\mathbb{R}^d\times \mathbb{R}^d\big)$ (see \cite{Foll999}, Section 7.1). In addition, we will denote by $\mathcal{M}_{+,1}\big(\mathbb{R}^d\times \mathbb{R}^d\big)$ the subset of $\mathcal{M}\big(\mathbb{R}^d\times \mathbb{R}^d\big)$ of non-negative Radon measures with a finite order moment, that is, such that:
\begin{align}
\label{EQUATDefin_M_1+}
\mathcal{M}_{+,1}\big(\mathbb{R}^d\times \mathbb{R}^d\big) = \big\{ f \in \mathcal{M}_{+}\big(\mathbb{R}^d\times \mathbb{R}^d\big)\ /\ \int_{\mathbb{R}^d\times\mathbb{R}^d} \hspace{-5mm} \vert v \vert \, f(\dd x,\dd v) < +\infty \big\}.
\end{align}
$\mathcal{M}\big(\mathbb{R}^d\times \mathbb{R}^d\big)$, $\mathcal{M}_+\big(\mathbb{R}^d\times \mathbb{R}^d\big)$ and $\mathcal{M}_{+,1}\big(\mathbb{R}^d\times \mathbb{R}^d\big)$ will be endowed with the total variation norm.\\
We will denote by $\mathcal{C}_0\big(\mathbb{R}^d\times\mathbb{R}^d\big)$ the set of continuous functions on $\mathbb{R}^d \times \mathbb{R}^d$ that are vanishing at infinity, and by $\mathcal{C}^\infty_c([0,+\infty[\times\mathbb{R}^d\times\mathbb{R}^d)$ the set of infinitely differentiable functions, compactly supported in $[0,+\infty[\times\mathbb{R}^d\times\mathbb{R}^d$.\\
We recall that since $\mathbb{R}^d\times\mathbb{R}^d$ is locally compact, the dual of $\mathcal{C}_0\big(\mathbb{R}^d\times\mathbb{R}^d\big)$ is $\mathcal{M}\big(\mathbb{R}^d\times \mathbb{R}^d\big)$ (see for instance \cite{Rudi987}).
\newline
When $I$ is a finite set, we will denote by $\# I$ its cardinal. For a subset $B$ of $\mathbb{R}^d$ which is Lebesgue-measurable, we will denote by $\vert B \vert$ its Lebesgue measure.\\
For two subsets $A$, $B$ of $\mathbb{R}^d$, we also introduce their \emph{sum}, denoted by $A+B$, and defined as:
\begin{align}
A+B = \{x+y\ /\ x\in A,\, y\in B\}.
\end{align}
We will often consider the sum of segments with balls.\\
Finally, for any subset $A$ of $\mathbb{R}^d$, we will denote by $\overline{A}$ its closure, that is, $\overline{A}$ is the smallest closed subset of $\mathbb{R}^d$ that contains $A$.

\section{Model, main results and strategy}

\subsection{The model}

\subsubsection{The microscopic dynamics}

\paragraph{Distributions of scatterers.} We call \emph{distribution of scatterers} (or \emph{distribution} in short), denoted by $c$, a finite or countable set of points $c_i \in \mathbb{R}^d$, which is locally finite. In other words, $c = \big(c_i\big)_{i \in I} \in \big(\mathbb{R}^d\big)^I$, with $\#I \in \mathbb{N}$ or $I = \mathbb{N}$, and $\# ( c \cap K) \in \mathbb{N}$ for any compact set $K \subset \mathbb{R}^d$. The whole collection of distributions of scatterers will be denoted by $C$. In order to define the Poisson process of intensity $\mu > 0$, we equip $C$ with the measure $\mathbb{P}_\mu$ defined as follows. The probability of finding exactly $N$ points $c_{i_1},\dots, c_{i_N}$ of $c$ in a given Lebesgue-measurable subset $B\subseteq \mathbb{R}^d$, of finite measure, is equal to:
\begin{align}
\label{EQUATPoissonPro}
\mathbb{P}_{\mu,N}(B) = e^{-\mu \vert B \vert}\frac{\mu^N}{N!} \vert B \vert^N,
\end{align}
where $\vert B \vert$ is the Lebesgue measure of $B$. In other words, the probability of finding $N$ points $c_{i_1},\dots,c_{i_N}$ of $c$ lying respectively in the infinitesimal volumes $dc_1,\dots, dc_N$ centered at $x_1,\dots, x_N$ is 
\begin{align}
\mathbb{P}_{\mu,N} (x_1, \dots, x_N ) \dd x_1 \dots \dd x_N = e^{-\mu \vert B\vert} \frac{ \mu^N }{N!} \dd x_1 \dots \dd x_N,
\end{align}
which is the Janossy measure of order N of the Poisson point process restricted to the Borel subset $B\subseteq \mathbb{R}^d$. The reader may refer to Definition 4.6, Example 4.8 and formula (4.21) in \cite{LaPe}, and also to the reference \cite{Gols022}).\\
For $\varepsilon > 0$ and given a distribution $c$, we will also consider the family of balls $\big(B(c_i,\varepsilon)\big)_{i\in I}$ centered on the points $c_i$ of $c$, and of radius $\varepsilon > 0$. We will call this family of balls the \emph{distribution of scatterers $c$ of radius $\varepsilon$}.

\paragraph{The dynamics of the tagged particle.} We define now the dynamics of the tagged particle among the scatterers, in terms of a singular differential equation.

\begin{defin}[Forward flow of the tagged particle]
\label{DEFINFlot_TempsCrois}
Let $r \in\ ]0,1]$, $\varepsilon > 0$, $t \in \mathbb{R}_+^*$, and let $c \in C$ be a distribution of scatterers in $\mathbb{R}^d$. Let $(x,v) \in \mathbb{R}^d \times \mathbb{R}^d$. We define the \emph{forward flow of the tagged particle}, as the piecewise affine mapping:
\begin{align}
T_{c,\varepsilon} : \left\{
\begin{array}{ccc}
[0,t] \times \mathbb{R}^d \times \mathbb{R}^d & \rightarrow & \mathbb{R}^d \times \mathbb{R}^d,\\
(s,x,v) & \mapsto & T_{c,\varepsilon}^s(x,v) = \big( x_{c,\varepsilon}(s),v_{c,\varepsilon}(s) \big)
\end{array}
\right.
\end{align}
that is right continuous in $t$ at $(x,v) \in \mathbb{R}^d \times \mathbb{R}^d$ fixed, with a limit from the left at all points, and that satisfies, using the notation $\big( x_{c,\varepsilon}(s),v_{c,\varepsilon}(s) \big) = \big(x(s),v(s)\big)$:
\begin{align}
\label{EQUATDefinFlot_Cond1}
T_{c,\varepsilon}^0(x,v) = (x,v),
\end{align}
\begin{align}
\label{EQUATDefinFlot_Cond2}
\frac{\dd}{\dd s} T_{c,\varepsilon}^s(x,v) = \big(v(s),0\big) \hspace{5mm} \text{if} \hspace{5mm} d \big( x(s), c \big) > \varepsilon,
\end{align}
and
\begin{align}
\label{EQUATDefinFlot_TempsCroissLoi_Crois}
v(s) = v(s^-) - (1+r) \big( v(s^-) \cdot \omega(s) \big) \omega(s) \hspace{5mm} \text{if} \hspace{5mm} d \big( x(s), c \big) = \varepsilon \hspace{2mm} \text{and} \hspace{3mm} \# \big( c \cap \big( x(s) + \overline{B(0,\varepsilon)} \big) \big) = 1,
\end{align}
where:
\begin{align}
v(s^-) = \lim_{\substack{\tau \rightarrow s\\ \tau < s}} v(\tau),
\end{align}
and
\begin{align}
\label{EQUATDefinOmega}
\omega(s) = \lim_{\substack{\tau \rightarrow s\\ \tau < s}} \frac{x(\tau) - c(s)}{\big\vert x(\tau) - c(s) \big\vert} \hspace{5mm} \text{with} \hspace{5mm} c(s) \in \mathbb{R}^d \hspace{3mm} \text{being the center of the collided scatterer at time }s,\\
\text{ that is, such that} \hspace{1mm} \{ c(s) \} =  c \cap \big( x(s) + \overline{B(0,\varepsilon)} \big).\nonumber
\end{align}
\end{defin}

\begin{figure}[h!]
\centering
    \includegraphics[trim = 0cm 0cm 0cm 1.5cm, width=0.4\linewidth]{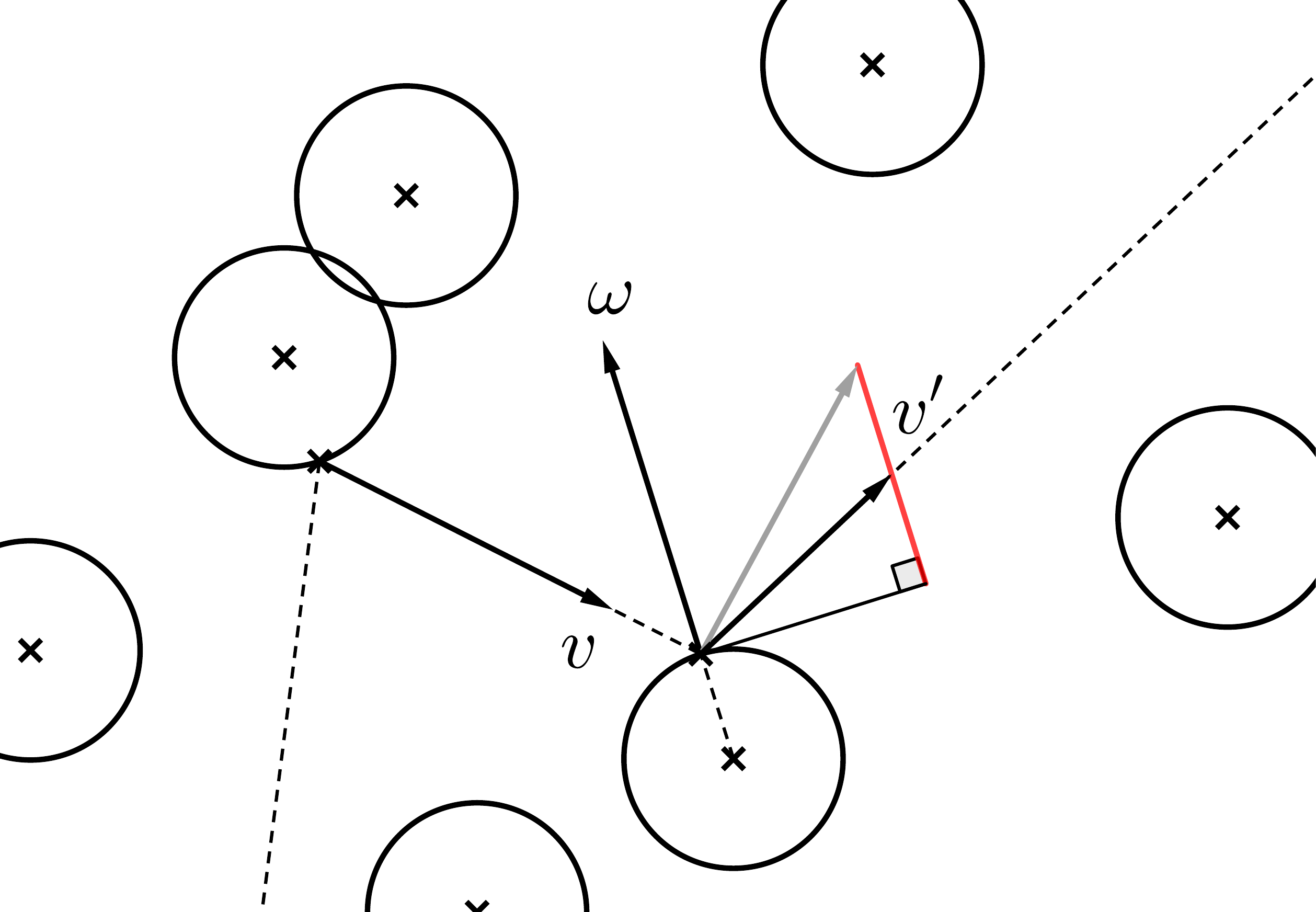}
\caption{Representation of a collision obtained by the scattering mapping: $v' = \kappa_\omega(v)$. In comparison, an elastic collision ($r = 1$) is represented in light grey.}
\label{FIGURScattering}
\end{figure}

\noindent
Depending on the configuration of scatterers $c$, the forward flow of the tagged particle introduced in Definition \ref{DEFINFlot_TempsCrois} might not be globally defined. We introduce therefore a notation that will be useful to state the main result of the present article.

\begin{defin}[Generalized forward flow] \label{def:genFlowF}
\label{DEFINGener_Flow}
For any positive time $t > 0$ and any configuration $c$ of scatterers, we define the \emph{generalized forward hard sphere flow} as the mapping:
\begin{align}
\widetilde{T}_{c,\varepsilon}:\left\{
\begin{array}{cccc}
[0,t] \times \mathbb{R}^d \times \mathbb{R}^d & \hspace{-2mm} \rightarrow& \mathbb{R}^d \times \mathbb{R}^d,&\\
(s,x,v) & \hspace{-2mm} \mapsto& \widetilde{T}_{c,\varepsilon}^s(x,v) &
\hspace{-2mm} = \left\{
\begin{array}{lc}
\, T_{c,\varepsilon}^s(x,v) &\hspace{-1mm}\text{if the flow } s \mapsto T_{c,\varepsilon}^s(x,v) \text{ is globally defined}\\
&\hspace{-5mm}\text{on }[0,t]\text{ for the distribution of scatterers } c\\
& \text{and the initial configuration } (x,v),\\\\
\, (x,v) &\text{ otherwise.}
\end{array}
\right.
\end{array}
\right.
\end{align}
\end{defin}

\begin{remar}
\label{REMARBackw_Flow}
In Definitions \ref{DEFINFlot_TempsCrois} and \ref{DEFINGener_Flow}, we introduced only the flow \emph{forward in time}. We will see that it will be possible to establish a  well-posedness result for such a flow (see Proposition \ref{PROPODefinGlobaDynam}). Nevertheless, it will be conceptually important to consider a \emph{backward in time} version of such a flow, that is, to consider the past of a trajectory which lies at $(x,v)$ in the phase space at time $0$. Such a backward flow will be denoted by $T_{c,\varepsilon}^{t}(x,v)$, with $t < 0$. 
However, and contrary to the elastic case, the backward flow turns out to be more singular than the forward flow. Indeed, it is easy to consider scatterer configurations that lead to an infinite number of collisions in finite time when considering the backward in time dynamics of a tagged particle colliding with inelastic scatterers, as the sequence of consecutive time intervals of mean free flight is a geometric series of ratio $> 1$. This phenomenon is an instance of the \emph{inelastic collapse}, well documented in the case of interacting inelastic hard spheres (see for instance \cite{BrPo004}, \cite{MNYo994}, \cite{McNa012}, \cite{DoVe024}, \cite{Dolm025}, and the references therein).
\end{remar}

\subsubsection{The Boltzmann-Grad limit}

In order to perform the derivation of the linear inelastic Boltzmann equation from the dynamics of the tagged particle, we will consider the so-called Boltzmann-Grad scaling, also known as the low density limit.

\begin{defin}[Boltzmann-Grad limit]
Let $\varepsilon > 0$ and $\mu_\varepsilon > 0$. We consider a distribution of scatterers $\big(B(c_i,\varepsilon)\big)_{i\in I}$ of radius $\varepsilon > 0$, distributed according to the Poisson process of intensity $\mu_\varepsilon > 0$ in $\mathbb{R}^d$. We say that the distribution of scatterers satisfies the Boltzmann-Grad scaling condition if:
\begin{align}
\label{EQUATBoltzmann-Grad_Limit}
\mu_\varepsilon \varepsilon^{d-1} = 1.
\end{align}
By definition, the \emph{Boltzmann-Grad limit} is characterized by $\varepsilon \rightarrow 0$ while \eqref{EQUATBoltzmann-Grad_Limit} holds.
\end{defin}

\noindent
By assumption, in the Boltzmann-Grad limit the tagged particle suffers, in average, a fixed number of collisions per unit of time with the scatterers $B(c_i,\varepsilon)$. We remark that the Boltzmann-Grad limit is also referred to as the low density limit, because the volume fraction tends to zero. Indeed, in any fixed ball $B(0,R)$, the average volume occupied by the scatterers of volume $C(d)\varepsilon^d$ is given by:
\begin{align}
\sum_{k=0}^{+\infty} k\,  e^{-\mu_\varepsilon \vert B(0,R) \vert} \frac{\mu_\varepsilon^k}{k!} \big\vert B(0,R) \big\vert^k C(d) \varepsilon^d = \mu_\varepsilon \big\vert B(0,R) \big\vert C(d) \varepsilon^d = C(d) \big\vert B(0,R) \big\vert \varepsilon \rightarrow 0.
\end{align}
\paragraph{Discussion on the inelastic collapse in the Boltzmann-Grad limit.} Taking into account the dissipation of the kinetic energy due to the scattering \eqref{EQUATFormaDeriv}, we have that  the norm of the velocity of the tagged particle is, in average,  contracted by a constant at each collision. Since the mean free path is constant in the Boltzmann-Grad scaling, the series of the consecutive time intervals of mean free flight is diverging.\\
As a consequence, we can expect that the gas of tagged particles does not cool down in finite time when \eqref{EQUATBoltzmann-Grad_Limit} holds. In other words, it is expected that the inelastic collapse (see Remark \ref{REMARBackw_Flow}) does not take place within the inelastic Lorentz gas, at least for almost every distribution of scatterers. This comment applies only to the case of the forward in time dynamics. Concerning the backwards dynamics, the effect is the opposite: the norm of the velocities grows exponentially fast with the number of collisions, and therefore in the low density limit the series of the consecutive time intervals of mean free flight times is converging. This means that we expect the inelastic collapse to take place for the backwards dynamics, and it is actually not hard to imagine configurations of scatterers leading to the collapse.

\subsection{Main result}

\noindent
We first provide the notion of solution of the microscopic inelastic Lorentz model as follows.
\begin{defin}[Microscopic solution of the inelastic Lorentz model]%[Microscopic inelastic Lorentz process]
\label{def:f_ep}
Let $\varepsilon > 0$ and $\mu_\varepsilon> 0$ be two real numbers satisfying the condition \eqref{EQUATBoltzmann-Grad_Limit}. Let $f_0\in \mathcal{P}\big(\mathbb{R}^d\times \mathbb{R}^d\big)$, for any Borel set $A$ of $\mathbb{R}^d\times \mathbb{R}^d$ we define the 
\emph{microscopic solution of the inelastic Lorentz model}, %\emph{microscopic inelastic Lorentz process}, 
that we denote by $f_\varepsilon \in L^{\infty}\big( [0,T); \mathcal{M}_{+}(\mathbb{R}^d\times \mathbb{R}^d)\big)$, as the measure-valued function defined by the following duality relation:
\begin{equation}\label{defeq:f_ep}
\int_{\mathbb{R}^d\times\mathbb{R}^d} \mathds{1}_A(x,v) f_\varepsilon(t,\dd x,\dd v) =\int_{\mathbb{R}^d\times \mathbb{R}^d} \mathbb{P}_{\mu_\varepsilon}(\{c \in C \ /\ \widetilde{T}_{c,\varepsilon}^{t}(x_0,v_0) \in A\})f_0(\dd x_0,\dd v_0),
\end{equation}
where $C$ is the Poisson process on $\mathbb{R}^d$ of intensity $\mu_\varepsilon$, $\mathbb{P}_{\mu_\varepsilon}$ the associated measure, and $\widetilde{T}_{c,\varepsilon}^t(x,v)$ is the generalized forward flow of the tagged particle introduced in Definition \ref{DEFINGener_Flow}.
\end{defin}
\noindent
Our goal is to prove that in some weak sense $ f_{\varepsilon}(x,v,t)\to f(x,v,t)$ as $\varepsilon\to 0$ in the Boltzmann-Grad limit %$\mu = \mu_\varepsilon$ with 
$\mu_\varepsilon \cdot \varepsilon^{d-1} = 1$, where $f$ is a solution of the kinetic equation \eqref{EQUATLineaBoltzSphDuFormeForte} 
with initial datum $f_0$. We now introduce the notion of solutions to  \eqref{EQUATLineaBoltzSphDuFormeForte} in the sense of measures. 

\begin{defin}[Weak solution of the linear inelastic Boltzmann equation \eqref{EQUATLineaBoltzSphDuFormeForte}]
\label{DEFINSolutFaibleEquatBoltzLineaInela}
Let $f_0 \in \mathcal{P}\big(\mathbb{R}^d\times \mathbb{R}^d\big)$ be a probability measure on $\mathbb{R}^d \times \mathbb{R}^d$ and $f \in \mathcal{C}\big([0,+\infty[,\mathcal{M}_{+,1}(\mathbb{R}^d\times\mathbb{R}^d)\big)$ be a continuous function taking values in the set of measure $\mathcal{M}_{+,1}(\mathbb{R}^d\times\mathbb{R}^d)$, defined in \eqref{EQUATDefin_M_1+}.\\
We say that $f$ is a \emph{weak solution of the linear inelastic Boltzmann equation \eqref{EQUATLineaBoltzSphDuFormeForte} with initial datum $f_0$} if $f(0,\cdot,\cdot) = f_0$ and if
\begin{align}
&\hspace{-2mm}- \int_{\mathbb{R}^d_x} \hspace{-1mm} \int_{\mathbb{R}^d_v} \hspace{-1.5mm} \widetilde{\varphi}(0,x,v) f_0(\dd x,\dd v) - \int_0^{+\infty} \hspace{-2.5mm} \int_{\mathbb{R}^d_x} \hspace{-1mm} \int_{\mathbb{R}^d_v} \hspace{-1.5mm} \partial_t \widetilde{\varphi}(t,x,v) f(t,\dd x,\dd v) \dd t - \int_0^{+\infty} \hspace{-2.5mm} \int_{\mathbb{R}^d_x} \hspace{-1mm} \int_{\mathbb{R}^d_v} \hspace{-1.5mm} v\cdot\nabla_x \widetilde{\varphi}(t,x,v) f(t,\dd x,\dd v) \dd t \nonumber\\ 
&\hspace{50mm} = \int_0^{+\infty} \hspace{-2.5mm} \int_{\mathbb{R}^d_x} \hspace{-1mm} \int_{\mathbb{R}^d_v} \int_{\mathbb{S}^{d-1}_\omega} \vert v \cdot \omega \vert \big[ \widetilde{\varphi}(t,x,v') - \widetilde{\varphi}(t,x,v) \big] \dd \omega f(t,\dd x,\dd v) \dd t
\end{align}
for any test function $\widetilde{\varphi} \in \mathcal{C}_c^\infty([0,+\infty[\times\mathbb{R}^d\times\mathbb{R}^d)$, with $v' = v - (1+r)(v\cdot\omega)\omega$ defined in \eqref{EQUATScattering}.
\end{defin}

\noindent
We are now in position to state the main result of the present article.

\begin{theor}[Derivation of the weak form of the linear inelastic Boltzmann equation]
\label{THEORDerivationBoltzmann_InelaLinea}
Let $f_0 \in \mathcal{P}\big(\mathbb{R}^d\times \mathbb{R}^d\big) \cap \mathcal{M}_{+,1}\big(\mathbb{R}^d\times\mathbb{R}^d\big)$. Let us assume in addition that there exists a constant $p > 1$ such that:
\begin{align}
\label{EQUATMomntExponDonneInitiBoltzInelaLinea}
\int_{\mathbb{R}^d_x} \hspace{-1mm} \int_{\mathbb{R}^d_v} e^{\vert v \vert^p} f_0(\dd x,\dd v) < +\infty,
\end{align}
and let $f \in \mathcal{C} \big([0,+\infty[,\mathcal{M}_+(\mathbb{R}^d\times\mathbb{R}^d)\big)$ be the unique weak solution of the linear inelastic Boltzmann equation \eqref{EQUATLineaBoltzSphDuFormeForte} with initial datum $f_0$. Then, $f_\varepsilon(t,\cdot,\cdot)$ converges weakly$-*$ towards $f(t,\cdot,\cdot)$ in the Boltzmann-Grad limit $\varepsilon \rightarrow 0$ with \eqref{EQUATBoltzmann-Grad_Limit} holding true, that is, for any $t \geq 0 $ and for any test function $\varphi \in \mathcal{C}_0(\mathbb{R}^d\times\mathbb{R}^d)$, we have:\\
\begin{align}
\label{EQUATConvergencFinalTheor}
\int_{\mathbb{R}^d_x} \hspace{-1mm} \int_{\mathbb{R}^d_v} \varphi(x,v) f_\varepsilon(t,\dd x, \dd v) \underset{\varepsilon \rightarrow 0}{\longrightarrow} \int_{\mathbb{R}^d_x} \hspace{-1mm} \int_{\mathbb{R}^d_v} \varphi(x,v) f(t,\dd x,\dd v),
\end{align}
where $f_\varepsilon$ is the microscopic solution of the inelastic Lorentz model  %inelastic Lorentz process 
starting from $f_0$ defined in \eqref{defeq:f_ep}, Definition \ref{def:f_ep}.\\
The rate of convergence in \eqref{EQUATConvergencFinalTheor} is explicit. More precisely, there exists a universal constant $\varepsilon_0 > 0$ such that, for any $0 < \varepsilon \leq \varepsilon_0$, and for any $t_0 > 0$, $\varphi \in \mathcal{C}_0\big(\mathbb{R}^d\times\mathbb{R}^d)$, we have:
\begin{align}
\Big\vert \int_{\mathbb{R}^d_x} \hspace{-1mm} \int_{\mathbb{R}^d_v} \varphi(x,v) f_\varepsilon(t,\dd x, \dd v) - \int_{\mathbb{R}^d_x} \hspace{-1mm} \int_{\mathbb{R}^d_v} \varphi(x,v) f(t,\dd x,\dd v) \Big\vert \leq C_\text{final}\, \varepsilon^{1/4},
\end{align}
where $C_\text{final} = C_\text{final}(d,r,p,t_0,f_0,\varphi)$ is a constant that depends only on the dimension $d$, the restitution coefficient $r$, the exponential weight $p$, the initial datum $f_0$ and the test function $\varphi$. The constant $C_\text{final}$ depends on $f_0$ and $\varphi$ only via the exponential moment \eqref{EQUATMomntExponDonneInitiBoltzInelaLinea} of $f_0$, and the supremum norm $\vertii{\varphi}_\infty$ of $\varphi$.
\end{theor}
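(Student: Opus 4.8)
The plan is to prove \eqref{EQUATConvergencFinalTheor} by a Gallavotti-type series expansion of the microscopic solution $f_\varepsilon$, compared term by term against the series representation of the weak solution $f$ obtained in Section~3; this is the quantitative counterpart of the weak-formulation strategy of \cite{NoVe017}. Fix $t \ge 0$ (with $t \le t_0$ for the quantitative bound) and $\varphi \in \mathcal{C}_0(\mathbb{R}^d\times\mathbb{R}^d)$. By Proposition~\ref{PROPODefinGlobaDynam} the forward flow $T^t_{c,\varepsilon}$ is globally defined for $\mathbb{P}_{\mu_\varepsilon}$-almost every configuration $c$, hence $\widetilde{T}^t_{c,\varepsilon} = T^t_{c,\varepsilon}$ $\mathbb{P}_{\mu_\varepsilon}$-a.s. and \eqref{defeq:f_ep} becomes $\int_{\mathbb{R}^d\times\mathbb{R}^d} \varphi\, f_\varepsilon(t,\dd x,\dd v) = \int_{\mathbb{R}^d\times\mathbb{R}^d} \mathbb{E}_{\mu_\varepsilon}\big[\varphi\big(T^t_{c,\varepsilon}(x_0,v_0)\big)\big]\, f_0(\dd x_0,\dd v_0)$. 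For fixed $(x_0,v_0)$ I would expand the Poisson average by conditioning on the number of scatterer centers in $B(x_0,\vert v_0\vert t + \varepsilon)$ — a region containing the $\varepsilon$-tube of every trajectory issued from $(x_0,v_0)$ on $[0,t]$, the bound $\vert v_{c,\varepsilon}(s)\vert \le \vert v_0\vert$ being a consequence of the dissipativity $\vert \kappa_\omega(v)\vert \le \vert v\vert$ of the scattering \eqref{EQUATScattering} — and then re-sum according to the number $k$ of scatterers actually hit. Summing the Poisson weights of the ``spectator'' scatterers, this produces for each $k$ a \emph{clean} contribution (trajectories colliding once with each of $k$ distinct scatterers, with no recollision and no obstruction), together with \emph{error} contributions from recollisions, from shadowing/overlap of scatterers, and from the gap between the true free-flight exclusion volume and the idealized loss term $C_d\vert v\vert$.

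In the clean contribution of order $k$ I would change variables from the $k$ hit-scatterer centers to the collision times $0 < t_1 < \dots < t_k < t$ and impact directions $\omega_1,\dots,\omega_k\in\mathbb{S}^{d-1}$; the Jacobian contributes $\prod_{j=1}^k \varepsilon^{d-1}\,\vert v^{(j-1)}\cdot\omega_j\vert$, where $v^{(0)}=v_0$ and $v^{(j)} = \kappa_{\omega_j}(v^{(j-1)})$, so that under the Boltzmann–Grad scaling \eqref{EQUATBoltzmann-Grad_Limit} the clean order-$k$ term becomes exactly the $k$-th term of the series representation of $f$ tested against $\varphi$: hard-sphere kernel $\vert v\cdot\omega\vert$, the $k$ forward scatterings $\kappa_{\omega_j}$ applied to the argument of $\varphi$, and loss factor $C_d$. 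Since $\vert v^{(j)}\vert \le \vert v_0\vert$, the order-$k$ contribution on both sides is bounded by $\tfrac{1}{k!}\big(c\,\vert v_0\vert\, t\big)^k\,\vertii{\varphi}_\infty$ for a constant $c = c(d,r)$ (the $r$-dependence being inherited from the bookkeeping of the series of $f$ in Section~3), so the $k$-sum converges to an $e^{c\vert v_0\vert t}\,\vertii{\varphi}_\infty$-type bound, and integrating against $f_0$ is legitimate thanks to the exponential moment \eqref{EQUATMomntExponDonneInitiBoltzInelaLinea}, which controls $\int_{\mathbb{R}^d\times\mathbb{R}^d} e^{c\vert v_0\vert t}\, f_0(\dd x_0,\dd v_0)$. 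For the error contributions I would use geometric estimates: conditioned on the first $j$ collisions, a recollision at step $j{+}1$ confines the next scatterer center to a set of volume $O(\varepsilon\,\vert v_0\vert\, t)$ once a grazing set of impact directions of measure $O(\varepsilon^{1/2})$ is excised, and the exclusion-volume/loss-term mismatch is $O(\varepsilon)$ per collision; summing over the $O(k)$ positions where a bad event may occur bounds the order-$k$ error by $C\,k\,\varepsilon^{1/2}$ times the clean order-$k$ bound above.

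It remains to truncate and optimise. Restricting to $\vert v_0\vert \le V$ leaves a remainder $\le e^{-V^p}\int_{\mathbb{R}^d\times\mathbb{R}^d} e^{\vert v\vert^p} f_0(\dd x,\dd v)$ by \eqref{EQUATMomntExponDonneInitiBoltzInelaLinea}; restricting to $k \le K$ leaves a remainder $\le \sum_{k > K}\tfrac{1}{k!}(cVt)^k$, which is small once $K$ exceeds a fixed multiple of $Vt_0$. The surviving error then has the shape $C(d,r,t_0)\,\varepsilon^{1/2}\,K\,e^{cVt_0}\,\vertii{\varphi}_\infty$ plus the two truncation remainders; choosing $V$ a suitable multiple of $\big(\log(1/\varepsilon)\big)^{1/p}$ and $K$ a suitable multiple of $Vt_0$, the factor $e^{cVt_0}$ becomes only sub-polynomial in $1/\varepsilon$ (this is where $p>1$ is used), the truncation remainders are powers of $\varepsilon$, and one checks that the exponents can be balanced to give the announced $\varepsilon^{1/4}$, with a constant depending on $f_0,\varphi$ only through \eqref{EQUATMomntExponDonneInitiBoltzInelaLinea} and $\vertii{\varphi}_\infty$. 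Continuity in $t$ of $f$ and the density of $\mathcal{C}_0$ test functions then upgrade the estimate to the stated pointwise-in-$t$, weak-$*$ convergence, and uniqueness of $f$ (established in Section~3) guarantees the limit is the one in the statement.

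The delicate point is making the recollision and shadowing estimates uniform enough in $k$: the comparison with the series of $f$ inflates each order-$k$ term, and this inflation must be reconciled with the geometric $\varepsilon^{1/2}$ smallness \emph{after} summing over $k$ up to $K(\varepsilon)$ and integrating the exponential weight of $f_0$. This is precisely where the super-linear moment $p>1$ is indispensable: it is what allows the $e^{cVt_0}$ factor coming from fast initial particles — which, forward in time, collide more often since their free-flight times are shorter — to be absorbed into a sub-polynomial correction, leaving a genuine positive power of $\varepsilon$.
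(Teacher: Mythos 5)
Your proposal follows essentially the same route as the paper: expand
$\varphi_\varepsilon(t,x_0,v_0)=\mathbb{E}_{\mu_\varepsilon}\big[\varphi\big(T^t_{c,\varepsilon}(x_0,v_0)\big)\big]$
in a Gallavotti series by conditioning on the number of internal scatterers, change variables from scatterer centres to collision times $t_j$ and impact directions $\omega_j$ with Jacobian $\varepsilon^{d-1}\vert v^{(j-1)}\cdot\omega_j\vert$, identify the recollision-free contribution of order $k$ with the $k$-th term of the series representation \eqref{EQUATRepreSerieEquatAdjoi} of the adjoint solution $\psi$ tested against $f_0$, excise geometrically small sets of configurations to make this identification valid, and close the estimate by integrating the remainders against $f_0$ via the exponential moment \eqref{EQUATMomntExponDonneInitiBoltzInelaLinea}. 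You also correctly locate the structural insight that rescues the argument in the inelastic case: comparing with the \emph{forward} adjoint series, whose post-collisional velocities satisfy $\vert v^{(j)}\vert\le\vert v^{(0)}\vert$, so the $(1/r)^{k^2}$ divergence of the backward Duhamel series never enters the term-by-term comparison. Two points merit comment.

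First, you introduce a velocity cut-off $V(\varepsilon)\sim(\log(1/\varepsilon))^{1/p}$ and a collision-number cut-off $K(\varepsilon)$ and then optimise. The paper dispenses with this: the remainder bounds of Propositions \ref{PROPOReecrPhiEpElimiRecol} and \ref{PROPOCompaFinal_Phi_PhiEp} are obtained with explicit dependence on $(t,\vert v\vert)$ of the form $\vertii{\varphi}_\infty\,\widetilde{C}\,\max(1,t\vert v\vert^2,t^2\vert v\vert^2)\,e^{\widetilde{C}' t\vert v\vert}\,\varepsilon^{1/4}$, after summing over $k$ and $j$. Since $p>1$, the weight $\max(1,t_0\vert v\vert^2,t_0^2\vert v\vert^2)\,e^{\widetilde{C}'t_0\vert v\vert}\,e^{-\vert v\vert^p}$ is bounded uniformly in $v$, so the remainders integrate directly against $e^{\vert v\vert^p}f_0$ and no $\varepsilon$-dependent truncation (nor a sub-polynomial correction) is required. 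Your version is workable, but the paper's is cleaner and makes the role of $p>1$ transparent in one line.

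Second, your geometric error bookkeeping is under-specified in a way that matters for the rate. To carry out the recursive change of variables $c_j\mapsto(t_j,\omega_j)$ one must first know that the flow between $t-t_{j-1}$ and $t-t_j$ is free transport, which is not automatic in the presence of recollisions; the paper excises \emph{five} distinct pathological sets at each step (small free flight $\widetilde{\mathcal{P}}^1$, overlap with the initial ball $\widetilde{\mathcal{P}}^2$, interference with an earlier segment $\widetilde{\mathcal{P}}^3$, near-colinearity of the outgoing velocity with an earlier scatterer direction $\widetilde{\mathcal{P}}^4$, near-parallelism of consecutive free-flight segments $\widetilde{\mathcal{P}}^5$) with exponents $\delta_1,\delta_2,\delta_3$ that must satisfy the compatibility condition $\delta_3<1-\delta_2$ precisely so that, on the complement of the excised sets, no recollision occurs and the change of variables can proceed at the next step. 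Lemma \ref{LEMMEColinearitScatt} converts the near-colinearity conditions into $\varepsilon^{\delta/2}$ measure estimates, and the optimal choice $\delta_1=\delta_3=1/2$, $\delta_2=1/4$ yields $\varepsilon^{1/4}$, with the bottleneck coming precisely from the small-free-flight and colinearity cut-offs. Your single-cut-off picture (a grazing set of measure $O(\varepsilon^{1/2})$) would give at best $\varepsilon^{1/2}$ for that one source of error but misses the constraint that the \emph{same} cut-offs must also guarantee the well-posedness of the recursion, which is where the exponent degrades to $1/4$. You flag this as the ``delicate point''; the paper resolves it by the layered cut-off construction above, together with Lemma \ref{LEMMEEstimMesurTube_Dynam} to control the measure of the twisted dynamical tube entering the exponential factor.
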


\noindent
The existence and uniqueness of the weak solution to the inelastic linear Boltzmann equation \eqref{EQUATInelaLineaBoltzFinal} will be established in Section \ref{SSECTExistUniquSolut}, respectively in Proposition \ref{PROPOExistWeak_Solut} and \ref{PROPOUniquWeak_Solut} .

\begin{remar}
Observe that the weak$-*$ convergence stated in Theorem \ref{THEORDerivationBoltzmann_InelaLinea} is equivalent to:
\begin{align}
\int_{\mathbb{R}^d_x} \hspace{-1mm} \int_{\mathbb{R}^d_v} \mathds{1}_A(x,v) f_\varepsilon(\dd x, \dd v) \underset{\varepsilon \rightarrow 0}{\longrightarrow} \int_{\mathbb{R}^d_x} \hspace{-1mm} \int_{\mathbb{R}^d_v} \mathds{1}_A(x,v) f(t,\dd x,\dd v)
\end{align}
for any measurable set $A$ of $\mathbb{R}^d \times \mathbb{R}^d$, which is also equivalent to state that the sequence of measures $\big(f_\varepsilon\big)_\varepsilon$ converges towards the measure $f$, in the weak sense in terms of probability theory (see for instance \cite{Bill999}).
\end{remar}

\subsection{Strategy of the proof}
We summarize here the main steps needed to prove the main result of this paper, i.e. Theorem \ref{THEORDerivationBoltzmann_InelaLinea}. In a first part, we motivate the formal definition of the objects we will consider, and in a second part, we will describe the detailed plan of the proof of Theorem \ref{THEORDerivationBoltzmann_InelaLinea}, decomposed into several intermediate results.

\subsubsection{The main ideas and objects behind the proof}
\label{SSSeMain_Ideas}

\paragraph{The microscopic distribution function.} The arguments in the present Section \ref{SSSeMain_Ideas} have to be understood at a formal level. We will make all the steps rigorous in the sequel. 
Our aim is to perform the rigorous derivation of the inelastic linear Boltzmann equation \eqref{EQUATLineaBoltzSphDuFormeForte} from the microscopic inelastic Lorentz model. More precisely, we will prove that a solution $f$ of \eqref{EQUATLineaBoltzSphDuFormeForte} can be approximated by the distribution function $f_\varepsilon$ of a microscopic tagged particle, that collides inelastically with a set of scatterers of size $\varepsilon$, when $\varepsilon \rightarrow 0$, assuming that the distribution of the scatterers obeys a Poisson process of intensity $\mu_\varepsilon$ with $\mu_\varepsilon \varepsilon^{d-1} = 1$.\\
The natural object to be considered as distribution  function $f_\varepsilon$ %of the microscopic inelastic Lorentz process 
is:
\begin{align}
\label{EQUATDefinDistrLorenMicro}
f_\varepsilon(x,v) = \mathbb{E}_{\mu_\varepsilon}\Big[ \big\vert J\big( T_{c,\varepsilon}^{-t}(x,v)\big) \big\vert \cdot f_0\big( T_{c,\varepsilon}^{-t}(x,v)\big) \Big],
\end{align}
where $f_0$ is the distribution function of the tagged particle in the phase space at the initial time, and $T_{c,\varepsilon}^{-t}(x,v)$ is the flow of the tagged particle, introduced in Definition \ref{DEFINFlot_TempsCrois}. As already mentioned, \eqref{EQUATDefinDistrLorenMicro} is the generalization of the classical distribution function introduced by Gallavotti in %the seminal article 
\cite{Gallavotti} to the inelastic case. Observe, nevertheless,  that \eqref{EQUATDefinDistrLorenMicro} is defined relying on the backward flow, which is not clearly defined a priori (see Remark \ref{REMARBackw_Flow}).\\
The quantity $f_\varepsilon$ defined in \eqref{EQUATDefinDistrLorenMicro} corresponds to the density, with respect to the Lebesgue measure, of the measure describing the probability to find at time $t$ the tagged particle in the elementary volume $x + \dd x$, with a velocity lying in $v + \dd v$. The Jacobian of the dynamical flow has to be considered, because the measure in the phase space is not conserved along the dynamics of the tagged particle, due to the dissipative nature of the collisions.

\paragraph{The approach to perform the derivation of the weak form.} In this paper, we focus on the derivation of the weak solutions $f$ of \eqref{EQUATLineaBoltzSphDuFormeForte}. That is, for any time $t \geq 0$ and any test function $\varphi=\varphi(x,v)$ that is regular enough, we consider the quantity $I_\varepsilon(\varphi,t)$ defined as:
\begin{align}
\overline{I}_\varepsilon(\varphi,t) = \int_{\mathbb{R}^d_x} \hspace{-1mm} \int_{\mathbb{R}^d_v} f_\varepsilon(t,x,v) \varphi(x,v) \dd v \dd x = \int_{\mathbb{R}^d_x} \hspace{-1mm} \int_{\mathbb{R}^d_v} \mathbb{E}_{\mu_\varepsilon} \bigg[ \vert J\big( T^{-t}_{c,\varepsilon}(x,v) \big) \vert \cdot f_0\big( T^{-t}_{c,\varepsilon}(x,v) \big) \bigg] \varphi(x,v) \dd v \dd x,
\end{align}
assuming that we can indeed consider the quantity \eqref{EQUATDefinDistrLorenMicro}.\\
Using Fubini's theorem, and the change of variables $(x,v) \rightarrow T^t_{c,\varepsilon}(x,v)$ in $\mathbb{R}^d_x \times \mathbb{R}^d_v$ for distributions of scatterers $c$ fixed, we find:
\begin{align}
\label{EQUATCalcuFormlObjet_Test}
\overline{I}_\varepsilon(\varphi,t) &= \mathbb{E}_{\mu_\varepsilon} \Bigg[ \int_{\mathbb{R}^d_x} \hspace{-1mm} \int_{\mathbb{R}^d_v} \vert J\big( T^{-t}_{c,\varepsilon}(x,v) \big) \vert \cdot f_0\big( T^{-t}_{c,\varepsilon}(x,v) \big) \varphi(x,v) \dd v \dd x \Bigg] \nonumber\\
&= \mathbb{E}_{\mu_\varepsilon} \Bigg[ \int_{\mathbb{R}^d_x} \hspace{-1mm} \int_{\mathbb{R}^d_v} f_0(x,v) \varphi\big(T^t_{c,\varepsilon}(x,v)\big) \dd v \dd x \Bigg] = \int_{\mathbb{R}^d_x} \hspace{-1mm} \int_{\mathbb{R}^d_v} f_0(x,v) \mathbb{E}_{\mu_\varepsilon} \big[ \varphi\big(T^t_{c,\varepsilon}(x,v)\big) \big] \dd v \dd x.
\end{align}
Therefore, the object we will consider at the level of the microscopic particle system to prove the weak convergence will be:
\begin{align}
\label{EQUATQuantNotre_Cas_Faibl}
\varphi_\varepsilon(t,x,v) = \mathbb{E}_{\mu_\varepsilon} \big[ \varphi \big(T^t_{c,\varepsilon}(x,v) \big) \big],
\end{align}
so that the weak convergence of $f_\varepsilon$ towards $f$, by testing against any test function $\varphi$, is obtained by comparing the two integrals:
\begin{align}
\label{EQUAT_Integrals_Weak_Conv}
I_0(\varphi,t) = \int_{\mathbb{R}^d_x}\int_{\mathbb{R}^d_v} f(t,x,v) \varphi(x,v) \dd v \dd x \hspace{5mm} \text{and} \hspace{5mm} I_\varepsilon(\varphi,t) = \int_{\mathbb{R}^d_x}\int_{\mathbb{R}^d_v} f_0(x,v) \varphi_\varepsilon(t,x,v) \dd x \dd v,
\end{align}
with $\varphi_\varepsilon$ defined by \eqref{EQUATQuantNotre_Cas_Faibl}. Let us also observe that, in the particular case when the test function $\varphi$ is chosen to be the indicator function $\mathds{1}_A$ of a measurable set $A$ of $\mathbb{R}^d\times\mathbb{R}^d$, we have the elementary identity concerning $\varphi_\varepsilon$:
\begin{align}
\label{EQUATQuant_Cas_Indic_Part}
\varphi_\varepsilon(t,x,v) = \mathbb{E}_{\mu_\varepsilon} \big[ \mathds{1}_A \big(T^t_{c,\varepsilon}(x,v) \big) \big] = \int_C \mathds{1}_A \big(T^t_{c,\varepsilon}(x,v) \big) \dd \mathbb{P}_{\mu_\varepsilon}(c) = \mathbb{P}_{\mu_\varepsilon}\big( \{c \in C\ /\ T^t_{c,\varepsilon}(x,v) \in A \} \big).
\end{align}
In turn, observe that introducing the expressions \eqref{EQUATQuantNotre_Cas_Faibl}, \eqref{EQUATQuant_Cas_Indic_Part} of $\varphi_\varepsilon$, together with $I_\varepsilon$, allows to define a time-dependent family of measures $f_\varepsilon(t)$ via its action on the test functions $\varphi$ relying on the duality relation:
\begin{align}
\langle f_\varepsilon(t),\varphi \rangle = I_\varepsilon(\varphi,t).
\end{align}
Considering the quantities \eqref{EQUATQuantNotre_Cas_Faibl} or \eqref{EQUATQuant_Cas_Indic_Part} has the great advantage to rely on the forward flow of the tagged particle, allowing therefore to consider objects that are more regular than \eqref{EQUATDefinDistrLorenMicro}. This discussion motivates the introduction of the quantity $f_\varepsilon$ by the action described in \eqref{defeq:f_ep}, Definition \ref{def:f_ep}.

\paragraph{Relying on the adjoint equation of \eqref{EQUATLineaBoltzSphDuFormeForte}.} In the same way as we rewrote the integral $\overline{I}_\varepsilon(\varphi,t)$ in a more regular object by transferring the dynamical flow to the test function $\varphi$, we can also take advantage of the definition of the weak solutions of \eqref{EQUATLineaBoltzSphDuFormeForte}. In particular, we observe that if $f$ is a weak solution of the linear inelastic Boltzmann equation \eqref{EQUATLineaBoltzSphDuFormeForte} in the sense of Definition \ref{DEFINSolutFaibleEquatBoltzLineaInela}, and if the function $\widetilde{\varphi}$ solves the \emph{adjoint equation} of \eqref{EQUATLineaBoltzSphDuFormeForte}:
\begin{align}
\label{EQUATLineaBoltzSphDuFormeAdjoi}
\partial_t \widetilde{\varphi} - v \cdot \nabla_x \widetilde{\varphi} = \int_{\mathbb{S}^{d-1}_\omega} \vert v \cdot \omega \vert \big[ \widetilde{\varphi}(v') - \widetilde{\varphi}(v) \big] \dd \omega
\end{align}
with initial datum $\widetilde{\varphi}(0,\cdot,\cdot) = \varphi$ (where $v' = \kappa_\omega(v)$ is defined in \eqref{EQUATScattering}), then
\begin{align}
\label{EQUATRegulTransfert_t=0_T}
\int_{\mathbb{R}^d_x} \hspace{-1mm} \int_{\mathbb{R}^d_v} \varphi(x,v) f(t,\dd x,\dd v) = \int_{\mathbb{R}^d_x} \hspace{-1mm} \int_{\mathbb{R}^d_v} \widetilde{\varphi}(t,x,v) f_0(\dd x,\dd v).
\end{align}

\noindent
In the end, we will prove:
\begin{align}
\label{EQUATMain_Convergenc}
\int_{\mathbb{R}_x^d}\int_{\mathbb{R}^d_v} \varphi_\varepsilon(t,x,v) f_0(\dd x,\dd v) \underset{\varepsilon \rightarrow 0}{\longrightarrow} \int_{\mathbb{R}^d_x}\int_{\mathbb{R}^d_v} \widetilde{\varphi}(t,x,v) f_0(\dd x,\dd v),
\end{align}
with $\varphi_\varepsilon$ as in \eqref{EQUATQuantNotre_Cas_Faibl} and $\widetilde{\varphi}$ solving \eqref{EQUATLineaBoltzSphDuFormeAdjoi}, relying on explicit expressions of the dual quantities $\varphi_\varepsilon$ and $\widetilde{\varphi}$.

\paragraph{A brief comparison with the Kolmogorov equations.} The quantities \eqref{EQUATDefinDistrLorenMicro} and \eqref{EQUATQuantNotre_Cas_Faibl} have strong links with the theory of the backward and forward Kolmogorov equations. Indeed, for a continuous Markov process $X(\tau)$ on a certain probability space $\mathcal{S}$, with $0 \leq \tau \leq t$, if we define $u$ and $f$, respectively, by
\begin{align}
u(\tau,s) = \mathbb{P}(X(\tau) = s) \hspace{5mm} \text{and} \hspace{5mm} f(\tau,s) = \mathbb{E}[ g(X(t)) \vert X(\tau) = s]
\end{align}
where $g$ is an arbitrary function, it is well-known that $u$ and $f$ solve respectively the forward and the backward Kolmogorov equations, and that this two quantities are linked by the duality relation:
\begin{align}
\sum_{s \in \mathcal{S}} u(\tau,s)f(\tau,s) = \sum_{s \in \mathcal{S}} u(0,s)f(0,s) \hspace{3mm} \forall\, \tau \in [0,t].
\end{align}
In our case, the continuous process is given by the flow $T^\tau_{c,\varepsilon}$, and the probability space is replaced by the phase space $\mathbb{R}^d_x \times \mathbb{R}^d_v$. Considering as a test function $\varphi$ the indicator function $\mathds{1}_A$ of a subset $A$ of the phase space, we have:
\begin{align}
\varphi_\varepsilon(\tau,x,v) = \mathbb{E}_{\mu_\varepsilon} \big[ \mathds{1}_A\big( T^\tau_{c,\varepsilon}(x,v) \big) \big] = \mathbb{P}_{\mu_\varepsilon} \big( T^\tau_{c,\varepsilon}(x,v) \in A \big),
\end{align}
which corresponds clearly to the quantity $u$, solving the forward Kolmogorov equation. In the kinetic limit, $u$ corresponds to the solution $\varphi$ of the adjoint equation \eqref{EQUATLineaBoltzSphDuFormeAdjoi}, and the relation \eqref{EQUATRegulTransfert_t=0_T} that the solution $\varphi$ satisfies with a solution $f$ of the linear inelastic Boltzmann equation can be interpreted as the duality relation between two solutions of a pair of backward and forward Kolmogorov equations. At the kinetic level, the limiting trajectories of the tagged particle are indeed Markov processes.\\
At the level of the particles, the process described by the dynamical flow cannot be Markovian anymore, because the scatterers have a positive size, allowing recollisions to take place. Nevertheless, considering anyway the duality bracket between $f_\varepsilon(\tau)$ and $\varphi_\varepsilon(t-\tau,\cdot,\cdot)$, we have:
\begin{align}
\int_{\mathbb{R}^d_x} \hspace{-1mm} \int_{\mathbb{R}^d_v} f_\varepsilon(\tau,x,v) \varphi_\varepsilon(t-\tau,x,v) \dd x \dd v &= \int_{\mathbb{R}^d_x} \hspace{-1mm} \int_{\mathbb{R}^d_v} \mathbb{E}_{\mu_\varepsilon} \bigg[ \vert J\big(T^{-\tau}_{c,\varepsilon}(x,v)\big) \vert \cdot f_0 \big( T^{-\tau}_{c,\varepsilon}(x,v) \big) \bigg] \nonumber\\
&\hspace{40mm} \cdot \mathbb{E}_{\mu_\varepsilon} \big[ \mathds{1}_A\big( T^{t-\tau}_{c,\varepsilon}(x,v) \big) \big] \dd v \dd x.
\end{align}
First, we observe that if $\tau = 0$ or $\tau = t$, one of the two expected values simplifies (because either $T^{-\tau}_{c,\varepsilon}$ or $T^{t-\tau}_{c,\varepsilon}$ becomes the identity, so it becomes independent from the distributions $c$ of the scatterers), and the equality between the duality brackets at time $\tau = 0$ and at time $\tau = t$ corresponds to the formal computation \eqref{EQUATCalcuFormlObjet_Test}.\\
In the other cases, that is when $0 < \tau < t$, we observe that if the covariance between the random variables $\vert J\big(T^{-\tau}_{c,\varepsilon}(x,v)\big) \vert \cdot f_0 \big( T^{-\tau}_{c,\varepsilon}(x,v) \big)$ and $\mathds{1}_A\big( T^{t-\tau}_{c,\varepsilon}(x,v) \big)$ is zero, then we can write:
\begin{align}
\int_{\mathbb{R}^d_x} \hspace{-1mm} \int_{\mathbb{R}^d_v} f_\varepsilon(t,x,v) \varphi_\varepsilon(t-\tau,x,v) \dd x \dd v &= \int_{\mathbb{R}^d_x} \hspace{-1mm} \int_{\mathbb{R}^d_v} \mathbb{E}_{\mu_\varepsilon} \bigg[ \vert J\big(T^{-\tau}_{c,\varepsilon}(x,v)\big) \vert \cdot f_0 \big( T^{-\tau}_{c,\varepsilon}(x,v) \big) \cdot \mathds{1}_A\big( T^{t-\tau}_{c,\varepsilon}(x,v) \big) \bigg] \dd v \dd x \nonumber\\
&= \mathbb{E}_{\mu_\varepsilon} \Big[ \int_{\mathbb{R}^d_x} \hspace{-1mm} \int_{\mathbb{R}^d_v} \vert J\big(T^{-\tau}_{c,\varepsilon}(x,v)\big) \vert \cdot f_0 \big( T^{-\tau}_{c,\varepsilon}(x,v) \big) \cdot \mathds{1}_A\big( T^{t-\tau}_{c,\varepsilon}(x,v) \big) \dd v \dd x \Big] \nonumber\\
&= \mathbb{E}_{\mu_\varepsilon} \Big[ \int_{\mathbb{R}^d_x} \hspace{-1mm} \int_{\mathbb{R}^d_v} f_0 (x,v) \cdot \mathds{1}_A\big( T^t_{c,\varepsilon}(x,v) \big) \dd v \dd x \Big],
\end{align}
which is a quantity independent from $\tau$.\\
However, there is no reason in general for the covariance between the two random variables $\vert J\big(T^{-\tau}_{c,\varepsilon}(x,v)\big) \vert \cdot f_0 \big( T^{-\tau}_{c,\varepsilon}(x,v) \big)$ and $\mathds{1}_A\big( T^{t-\tau}_{c,\varepsilon}(x,v) \big)$ to be zero. Nevertheless, if we assume that we consider only distributions $c$ of scatterers that give rise only to trajectories without recollisions, then $\vert J\big(T^{-\tau}_{c,\varepsilon}(x,v)\big) \vert \cdot f_0 \big( T^{-\tau}_{c,\varepsilon}(x,v) \big)$ and $\mathds{1}_A\big( T^{t-\tau}_{c,\varepsilon}(x,v) \big)$ are indeed independent random variables, because the first one involves the backward flow, while the second involves the forward flow, both starting from the same point $(x,v)$ of the phase space.\\
We recover therefore that in the absence of recollisions, the particle system behaves as a Markov process. We observe finally that the probability of recollisions vanishes in the Boltzmann-Grad limit, at least in the elastic case, which is consistent with the limit process that we recover in the kinetic limit.

\subsubsection{The steps of the proof of Theorem \ref{THEORDerivationBoltzmann_InelaLinea}}

To establish the weak convergence stated in Theorem \ref{THEORDerivationBoltzmann_InelaLinea}, we will proceed as follows. We consider any test function $\varphi \in \mathcal{C}_0(\mathbb{R}^d\times\mathbb{R}^d)$ of the dual of the space of finite Radon measures $\mathcal{M}(\mathbb{R}^d \times \mathbb{R}^d)$, and we define the quantities $I_0(\varphi,t)$ and $I_\varepsilon(\varphi,t)$ as in \eqref{EQUAT_Integrals_Weak_Conv}, \eqref{EQUATQuantNotre_Cas_Faibl}. We recall that the main content of Theorem \ref{THEORDerivationBoltzmann_InelaLinea} is the convergence $I_\varepsilon(\varphi,t) {\longrightarrow} I_0(\varphi,t)$, that is, $f_\varepsilon {\overset{*}{\rightharpoonup}} f$, as $\varepsilon \rightarrow 0$.

\begin{enumerate}
\item We first prove that the \textbf{forward} dynamics of the particle system is well-posed, globally in time, a.e.. This step is necessary to give sense to $\varphi_\varepsilon$ as defined in \eqref{EQUATQuantNotre_Cas_Faibl}, and used in \eqref{EQUAT_Integrals_Weak_Conv} to define $I_\varepsilon(\varphi,t)$. This is the content of Section \ref{SECTIWell_PosedDynam} and, specifically, Proposition \ref{PROPODefinGlobaDynam}.
\item We rewrite $\varphi_\varepsilon$ in a series form, where the index of each term corresponds to the number of obstacles that are collided by the tagged particle during the time interval $[0,t]$ when evolving according to the dynamics of $T_{c,\varepsilon}^t(x,v)$, and where the integration variables are the positions $c_k$ of the different obstacles that are collided. A small remainder will appear, corresponding on the one hand to a pathological distribution of the scatterers (pathological in the sense that the dynamics is not well-defined), and corresponding on the other hand to distributions of scatterers such that the initial position of the tagged particle at time $t=0$ lies inside one of the obstacles. This is the content of Proposition \ref{PROPOReecrPhiEpIntegObsta}.
\item In order to compare with the series expression of $\varphi_\varepsilon$, we write the solution $\widetilde{\varphi}$ of the adjoint equation \eqref{EQUATLineaBoltzSphDuFormeAdjoi} as a series as well. This is Proposition \ref{PROPORepreSerieEquatAdjoi}.\\
The series representation of $\widetilde{\varphi}$ involves only the initial datum $\varphi = \widetilde{\varphi}(0,\cdot,\cdot)$, and makes sense provided that this initial datum is continuous. Considering then the integral $I_0(\varphi,s) = \int_{\mathbb{R}^d\times\mathbb{R}^d} \widetilde{\varphi}(s,x,v)f_0(\dd x,\dd v)$ for any $s \in [0,t]$, we introduce this way a time-dependent family of measures $g(s) \in \mathcal{M}_+\big(\mathbb{R}^d \times \mathbb{R}^d\big)$ (for any $s \in [0,s]$) defined by duality as $g(s): \varphi \mapsto I_0(\varphi,s)$.
\item Restarting from the series representation of $\varphi_\varepsilon$, we isolate now the distributions of scatterers that lead to a recollision. This provides an additional remainder, defined and estimated in Proposition \ref{PROPOReecrPhiEpElimiRecol}. % page \pageref{PROPOReecrPhiEpElimiRecol}. 
In addition, we rewrite the terms of the new series representation in $(t,\omega)$-coordinates, which almost completes the comparison with the series representation of the solution $\widetilde{\varphi}$ of the adjoint equation.\\
It is in this step that is done the careful (geometric) estimate on the domains of integration removed to prevent recollisions.
\item Finally, we carefully study the remaining differences between the expressions of the solution of the adjoint equation, and the last version of $\varphi_\varepsilon$ written in $(t,\omega)$-coordinates. Specifically, here it remains to estimate on the one hand the difference between the two dynamical tubes that arise. On the other hand, we estimate the error caused by the cut-offs in the integration domains that were introduced in the previous step.
\item We can then conclude the proof of Theorem \ref{THEORDerivationBoltzmann_InelaLinea}. Indeed, the previous steps provide that $f_\varepsilon(t)$ converges weakly towards the measure $g(t)$ defined in the third step. In addition, observing that $g$ is also a weak solution to the linear inelastic Boltzmann equation \eqref{EQUATLineaBoltzSphDuFormeForte} (Proposition \ref{PROPOExistWeak_Solut}), and that such a weak solution is unique once the initial datum is given (Proposition \ref{PROPOUniquWeak_Solut}), the proof of Theorem \ref{THEORDerivationBoltzmann_InelaLinea} follows.
\end{enumerate}

\begin{remar}
Our constructive approach involves expanding $f_\varepsilon$ into a series and comparing it with the series representation of the solution to the linear Boltzmann equation, as provided by the iterated Duhamel formula. This methodology aligns with the one that has been originally proposed in the pioneering work of Gallavotti \cite{Gallavotti}. On the other hand, the idea of employing weak solutions, tested against solutions to the adjoint equation, has been explored and discussed in \cite{NoVe017}.  
In that context, the coalescing process prevents to determine uniquely the past events that lead to the formation of a cluster of particles. In the present case, the singularity comes from the possibility that the inelastic collapse can take place in the past of a trajectory of the tagged particle. Considering weak solutions enables the analysis of the forward dynamics of the particle system, which is well-defined under the conditions considered in \cite{NoVe017}, as well as in our case. 
\end{remar}

\section{Properties of the inelastic linear Boltzmann equation}

In this section, we present the key results concerning the inelastic linear Boltzmann equation \eqref{EQUATInelaLineaBoltzFinal} that will be used throughout the rest of the article. We discuss in particular the existence and uniqueness of solutions (strong or weak), as well as the decay of kinetic energy.

\subsection{Duhamel formula for the solutions of \eqref{EQUATLineaBoltzSphDuFormeForte}}
\label{SSECTDerivFormuSerie}

Let us now provide a representation of the solutions of the linear Boltzmann equation \eqref{EQUATLineaBoltzSphDuFormeForte}. In the present Section \ref{SSECTDerivFormuSerie}, the computations are formal. We will investigate the convergence of the series representation in Section \ref{SSECTDiscuConveSerieRepre}. 
First, we define the quantity:
\begin{align}
g(t,x,v) = f(t,x+tv,v),
\end{align}
where $f$ is assumed to be a solution of \eqref{EQUATLineaBoltzSphDuFormeForte}. We have:
\begin{align}
\label{EQUATDiffeFonct__g__}
\partial_t g(t,x,v) + C_d \vert v \vert g(t,x,v) = \int_\omega \frac{\vert v\cdot\omega \vert}{r^2} g(t,x+t(v-\hspace{0.25mm}'\hspace{-0.75mm}v),'\hspace{-1mm}v) \dd \omega, 
\end{align}
with $C_d$ defined by \eqref{EQUATDefin_C_d_}. Integrating in time \eqref{EQUATDiffeFonct__g__}, we find:
\begin{align}
g(t,x,v) = e^{-C_d \vert v \vert t} g(0,x,v) + \int_0^t e^{-C_1 \vert v \vert (t-s)} \int_\omega \frac{\vert v\cdot\omega \vert}{r^2} g(s,x + s(v-\hspace{0.25mm}'\hspace{-0.75mm}v),\hspace{0.25mm}'\hspace{-0.5mm}v)\dd\omega \dd s.
\end{align}
We obtain after infinitely many iterations:
\begin{align}
\label{EQUATRepreSerieSolutTransBoltzLineaInela}
g(t,x,v) &= e^{-C_d \vert v \vert t} g(0,x,v) \nonumber\\
&\hspace{3mm}+ e^{-C_d\vert v \vert t}\sum_{k=1}^{+\infty} \int_{t_1=0}^t\int_{\omega_1 \in \mathbb{S}^{d-1}} \dots \int_{t_k=0}^{t_{k-1}}\int_{\omega_k \in \mathbb{S}^{d-1}} e^{C_d \sum_{j=1}^k t_j\left[ \vert v^{-(j-1)} \vert - \vert v^{-j} \vert \right]} \nonumber\\
&\hspace{16mm} \times \left( \prod_{l=1}^k \frac{\vert v^{-(l-1)}\cdot\omega_l \vert}{r^2} \right) g(0,x + \sum_{m=1}^k \left[ t_m(v^{-(m-1)} - v^{-m})\right],v^{-k}) \dd \omega_k \dd t_k \dots \dd \omega_1 \dd t_1.
\end{align}
Here, we denoted by $v^{-0}$ the velocity $v$, and $\hspace{0.25mm}'\hspace{-0.25mm}v$ is denoted by $v^{-1}$, i.e. $\kappa_{\omega_1}(v^{-1}) = v$ and, more generally,
\begin{align}
\kappa_{\omega_k}(v^{-k}) = v^{-(k-1)}.
\end{align}
\noindent
Back to the solution $f$ of \eqref{EQUATLineaBoltzSphDuFormeForte}, we obtain the following series representation:

\begin{align}
\label{EQUATRepreSerieSolution__BoltzLineaInela}
f(t,x,v) &= e^{-C_d \vert v \vert t} f_0(x-tv,v) \nonumber\\
&\hspace{5mm}+ e^{-C_d\vert v \vert t}\sum_{k=1}^{+\infty} \int_{t_1=0}^t\int_{\omega_1 \in \mathbb{S}^{d-1}} \dots \int_{t_k=0}^{t_{k-1}}\int_{\omega_k \in \mathbb{S}^{d-1}} e^{C_d \sum_{j=1}^k t_j\left[ \vert v^{-(j-1)} \vert - \vert v^{-j} \vert \right]} \left( \prod_{l=1}^k \frac{\vert v^{-(l-1)}\cdot\omega_l \vert}{r^2} \right) \nonumber\\
&\hspace{40mm} \times f_0(x - tv + \sum_{m=1}^k \left[ t_m(v^{-(m-1)} - v^{-m})\right],v^{-k}) \dd \omega_k \dd t_k \dots \dd \omega_1 \dd t_1.
\end{align}

\subsection{On the convergence of the series representation}
\label{SSECTDiscuConveSerieRepre}

\noindent
In the present section we will discuss the convergence of the series \eqref{EQUATRepreSerieSolution__BoltzLineaInela}. One important difference between the inelastic and elastic cases is the presence of the product $\prod_{l=1}^k \frac{\vert v^{-(l-1)}\cdot\omega_l \vert}{r^2}$ in the integrand, which prevents to obtain the convergence when assuming only $f_0 \in L^\infty$. Note that in absence of $\vert v^{-(l-1)}\cdot\omega_l \vert$ in the numerator, the term $1/r^{2k} = \left( 1/r^2 \right)^k$ could be absorbed as in the elastic case, in an exponential. However, here, the norm of $v^{-(l-1)}$ is causing an additional divergence, which grows fast. Indeed, depending on the angular parameters $\omega_l$, one can have:
\begin{align}
\vert \hspace{0.25mm}'\hspace{-0.5mm}v \vert = \frac{1}{r} \vert v \vert,\hspace{3mm} \vert v^{-2} \vert = \frac{1}{r} \vert \hspace{0.25mm}'\hspace{-0.5mm}v \vert = \frac{1}{r^2} \vert v \vert, \dots
\end{align}
so that in the end, the product $\prod_{l=1}^k \frac{\vert v^{-(l-1)}\cdot\omega_l \vert}{r^2}$ grows like $(1/r)^{k^2}$. On the other hand, for $a,b > 0$, the series $
\sum_{n \geq 0} \frac{a^n b^{n^2}}{n!}$ is converging only if $b \leq 1$. In this case, $b = 1/r^2$, so the series \eqref{EQUATRepreSerieSolution__BoltzLineaInela}  may not converge without additional assumptions, except in the elastic case.
\\
We will show that the series in \eqref{EQUATRepreSerieSolution__BoltzLineaInela} is converging provided that $f_0$ decays sufficiently fast at infinity. Specifically, we will assume that $f_0$ presents an exponential decay in velocity at infinity. Observe that a Maxwellian decay is not necessary. Actually, any super-polynomial decay in velocity would be enough to conclude.

\begin{theor}[Convergence of the series representation in the space of exponential weights in velocity]
\label{PROPOConveRepreSerieSolutBoltzLineaInela}
Let $f_0: \mathbb{R}^{2d} \rightarrow \mathbb{R}$ be a non-negative, measurable function. Let us assume that there exist $p$ and $\alpha$ two strictly positive real numbers such that:
\begin{align}
\label{EQUATDecaySuperExpon_f_0_}
\text{supess}_{(x,v) \in \mathbb{R}^{2d}} \left\vert f_0(x,v) e^{\alpha \vert v \vert^p} \right\vert < +\infty.
\end{align}
Then, the series given by the formula \eqref{EQUATRepreSerieSolution__BoltzLineaInela} is converging.\\
Therefore, if $f_0 \in \mathcal{C}^1\big(\mathbb{R}^d\times\mathbb{R}^d\big)$ and if \eqref{EQUATDecaySuperExpon_f_0_} holds, there exists a strong solution to the linear inelastic Boltzmann equation \eqref{EQUATLineaBoltzSphDuFormeForte} with initial datum $f_0$, given by the formula \eqref{EQUATRepreSerieSolution__BoltzLineaInela}.
\end{theor}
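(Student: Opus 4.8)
The plan is to prove that the series \eqref{EQUATRepreSerieSolution__BoltzLineaInela} converges absolutely, uniformly for bounded times, by estimating its general term $T_k$, and then to obtain the strong solution statement by differentiating term by term.

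\emph{Elementary facts about the iterated backward scattering.} Fix $(t,x,v)$ and set $V_j := |v^{-j}|$, so $V_0 = |v|$. Expanding $v^{-j} = v^{-(j-1)} - (1+1/r)(v^{-(j-1)}\cdot\omega_j)\omega_j$ along $\omega_j$ gives $V_j^2 = V_{j-1}^2 + \tfrac{1-r^2}{r^2}(v^{-(j-1)}\cdot\omega_j)^2$; hence $(V_j)_j$ is non-decreasing with $V_{j-1}\le V_j\le V_{j-1}/r$, so $V_j\le |v|/r^j$ and $V_{l-1}\le V_k$ for all $l\le k$. For $r<1$ the same identity yields the Carleman-type relation $|v^{-(l-1)}\cdot\omega_l| = \tfrac{r}{\sqrt{1-r^2}}\sqrt{V_l^2-V_{l-1}^2}$ (the elastic case $r=1$ has all $V_j=|v|$ and the series converges at once from $f_0\in L^\infty$). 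Finally, using $t_1>\dots>t_k$ one checks that the two exponential factors in \eqref{EQUATRepreSerieSolution__BoltzLineaInela} combine into $\exp\big(-C_d\sum_{i=0}^k\tau_i V_i\big)$ with $\tau_i\ge 0$, $\sum_i\tau_i=t$, which in particular is $\le 1$.

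\emph{The key estimate.} By the Carleman relation and the arithmetic–geometric mean inequality,
\[
\prod_{l=1}^k \frac{|v^{-(l-1)}\cdot\omega_l|}{r^2}\;\le\;\frac{1}{(r\sqrt{1-r^2})^k}\Big(\frac{1}{k}\sum_{l=1}^k (V_l^2-V_{l-1}^2)\Big)^{k/2}\;\le\;\frac{V_k^k}{(r\sqrt{1-r^2})^k\,k^{k/2}},
\]
while \eqref{EQUATDecaySuperExpon_f_0_} bounds the value of $f_0$ at the final velocity $v^{-k}$ by $M e^{-\alpha V_k^p}$ with $M=\supess|f_0 e^{\alpha|\cdot|^p}|$. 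The whole competition thus goes through $V_k$ alone, and the elementary maximization $\sup_{s\ge 0}s^k e^{-\alpha s^p}=(k/(\alpha p e))^{k/p}$ gives $V_k^k e^{-\alpha V_k^p}\le (k/(\alpha p e))^{k/p}$. Integrating the remaining constants over the time-ordered simplex $\{0<t_k<\dots<t_1<t\}$ (volume $t^k/k!$) and over $(\mathbb{S}^{d-1})^k$ yields $|T_k|\le M B^k k^{k/p}/(k!\,k^{k/2})$ for a constant $B=B(d,r,p,t)$; by Stirling, $|T_k|\le M(Be)^k k^{k(1/p-3/2)}$, uniformly in $x,v$ and for $t$ in bounded sets, so the series converges absolutely and uniformly on such sets. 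This absorption of the $(1/r)^{\sim k^2}$ growth — produced by the factor $1/r^2$ at each of the $k$ collisions compounded with the exponentially increasing pre-collisional speeds $V_j\sim |v|/r^j$ — against the factorial $1/k!$ and the super-exponential decay of $f_0$ is the heart of the argument and the main difficulty: without the decay of $f_0$ the series genuinely diverges, as noted after \eqref{EQUATRepreSerieSolution__BoltzLineaInela}. The crude threshold $1/p<3/2$ above is already comfortable for $p\ge1$ (and dropping the AM–GM gain, i.e.\ using only $V_{l-1}\le V_k$, gives $|T_k|\lesssim C^k(k!)^{1/p-1}$, summable for $p>1$); covering the full range $p>0$ requires the same scheme while retaining the loss exponential $e^{-C_d\sum_i\tau_i V_i}$ in the time integral and splitting the angular domain according to whether the pre-collisional speeds have grown, so that wherever $\prod_l|v^{-(l-1)}\cdot\omega_l|$ is large the decay of $f_0$, evaluated at the then-large $v^{-k}$, compensates.

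\emph{Strong solution.} Assume in addition $f_0\in\mathcal{C}^1$. Since $x$ enters the $k$-th term only through the argument of $f_0$, with $x$-gradient of that argument equal to the identity, the formally $x$-differentiated series has exactly the same structure with $\nabla f_0$ in place of $f_0$ and obeys the same bounds (working, if one does not wish to impose decay on $\nabla f_0$, with the transported unknown $g(t,x,v)=f(t,x+tv,v)$, which along characteristics solves the pure $t$-ODE \eqref{EQUATDiffeFonct__g__}); hence the sum is $\mathcal{C}^1$. Writing the Duhamel iteration $g=g_0+\mathcal D g$, the $N$-th partial sum satisfies $S_{N+1}=g_0+\mathcal D S_N$; letting $N\to\infty$ by the locally uniform convergence shows that $g$ solves the Duhamel integral equation \eqref{EQUATRepreSerieSolutTransBoltzLineaInela}, and differentiating in $t$ recovers \eqref{EQUATDiffeFonct__g__}, so that $f$ solves \eqref{EQUATLineaBoltzSphDuFormeForte} classically, with $f(0,\cdot,\cdot)=f_0$ read off from the $k=0$ term.
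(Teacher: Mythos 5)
Your approach is genuinely different from the paper's and is cleaner where it applies, but as written it does not establish the theorem for the full range $p>0$ that is claimed.

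The paper proceeds by splitting each angular integral into the two regimes $|\hat v^{-(l-1)}\cdot\omega_l|\le\beta$ and $>\beta$, decomposing the $k$-th term into $2^k$ pieces indexed by $(e_1,\dots,e_k)\in\{0,1\}^k$, and choosing the cut-off $\beta=\beta_{k,s}$ depending on $k$ and on the number $s$ of ``large'' directions so that the $(1/r)^{ks}$ growth is killed either by the $\beta^{k-s}$ smallness (few large directions) or by the iterated lower bound $|v^{-k}|\ge q_\beta^s|v|$ fed into $e^{-\alpha|v^{-k}|^p}$ (many large directions). You instead observe that $v^{-j}=v^{-(j-1)}-(1+1/r)(v^{-(j-1)}\cdot\omega_j)\omega_j$ gives the exact Carleman-type identity $V_j^2=V_{j-1}^2+\tfrac{1-r^2}{r^2}(v^{-(j-1)}\cdot\omega_j)^2$, so that the whole collisional product telescopes, and then AM–GM and the pointwise maximization $\sup_s s^k e^{-\alpha s^p}=(k/(\alpha p e))^{k/p}$ replace the paper's combinatorics of cut-offs. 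All the individual steps you perform are correct: the identity, the non-decrease $V_{j-1}\le V_j\le V_{j-1}/r$, the recombination of the exponentials into $\exp(-C_d\sum_i\tau_i V_i)\le 1$, and the Stirling estimate leading to $|T_k|\lesssim M(Be)^k k^{k(1/p-3/2)}$.

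The difficulty is precisely the threshold you acknowledge: your bound only converges for $1/p<3/2$, i.e.\ $p>2/3$ (and without the AM–GM refinement, $p>1$), while the theorem statement only assumes $p>0$. You sketch what would be needed to cover $p\in(0,2/3]$ — keeping the loss exponential and splitting the angular domain according to whether the pre-collisional speeds have grown — but do not carry it out, and that sketch is essentially the content of the paper's dichotomy argument: it is there, and only there, that the paper gains enough to beat the $(1/r)^{\sim k^2}$ growth for arbitrarily slow super-exponential decay. So as submitted this is a genuine gap: you have proved a weaker theorem (convergence of the series \eqref{EQUATRepreSerieSolution__BoltzLineaInela} under the additional hypothesis $p>2/3$), not the stated one. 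Since every downstream use of this result in the paper (Theorem \ref{THEORDerivationBoltzmann_InelaLinea}, Proposition \ref{PROPOExistWeak_Solut}) assumes $p>1$, your argument would suffice for those purposes and is considerably shorter; but to replace the paper's proof of Theorem \ref{PROPOConveRepreSerieSolutBoltzLineaInela} as stated, you would need to actually implement the dichotomy. (For the final $\mathcal{C}^1$ assertion, your remark that the $x$-differentiated series has the same structure with $\nabla_x f_0$ in place of $f_0$, and that one can sidestep decay assumptions on $\nabla_x f_0$ by working with $g(t,x,v)=f(t,x+tv,v)$, is the right idea; the paper is equally brief on this point.)
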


\begin{proof}
We consider \eqref{EQUATRepreSerieSolution__BoltzLineaInela}. We start with estimating precisely the difference between the norms of $v^{-(j-1)}$ and $v^{-j}$. We have:
\begin{align}
v^{-j} = v^{-(j-1)} - \left(1+\frac{1}{r}\right) v^{-(j-1)}\cdot \omega_j \omega_j,
\end{align}
so that:
\begin{align}
\vert v^{-j} \vert = \vert v^{-(j-1)} \vert \sqrt{1 + \left(\frac{1}{r^2}-1\right) \frac{\left( v^{-(j-1)}\cdot\omega_j\right)^2}{\vert v^{-(j-1)} \vert^2}} \cdotp
\end{align}
The convergence of the series is based on the analysis of two different cases: $\vert \frac{v^{-(j-1)}}{\vert v^{-(j-1)} \vert} \cdot\omega_j \vert \leq \beta$ or $\vert \frac{v^{-(j-1)}}{\vert v^{-(j-1)} \vert} \cdot\omega_j \vert > \beta$, where $0 \leq \beta = \beta(k,l) \leq 1$ will be chosen later. In the second case, one can take advantage of the exponential decay of $f_0$.\\
\newline
Let us consider now the $k$-th term ($k \geq 1$) of the series \eqref{EQUATRepreSerieSolution__BoltzLineaInela}. Each of the $k$ integrals in $\omega_l$ ($1 \leq l \leq k$) is decomposed between the domains $\vert \frac{v^{-(l-1)}}{\vert v^{-(l-1)} \vert}\cdot\omega_l \vert \leq \beta$ and $\vert \frac{v^{-(l-1)}}{\vert v^{-(l-1)} \vert}\cdot\omega_l \vert > \beta$. The $k$-th term of the series is then decomposed into $2^k$ sub-terms, encoded as follows. To each of such sub-terms, we associate the vector $(e_1,\dots,e_k) \in \{0,1\}^k$, where
\begin{align}
\left\{
\begin{array}{ccc}
e_l = 0 &\text{if the integral over }\omega_l\text{ runs over the domain:}& I(e_l) = \big\{ \omega_l \in \mathbb{S}^{d-1}\ /\ \vert \frac{v^{-(l-1)}}{\vert v^{-(l-1)} \vert}\cdot\omega_l \vert \leq \beta \big\},\\
e_l = 1 &\text{if the integral over }\omega_l\text{ runs over the domain:}& I(e_l) = \big\{ \omega_l \in \mathbb{S}^{d-1}\ /\ \vert \frac{v^{-(l-1)}}{\vert v^{-(l-1)} \vert}\cdot\omega_l \vert > \beta \big\}.
\end{array}
\right.
\end{align}
To each of the vectors $(e_1,\dots,e_k)$, we associate its \emph{size} $s$, defined as:
\begin{align}
s\left( \left(e_1,\dots,e_k\right) \right) = \sum_{l=1}^k e_l.
\end{align}
In other words, the size of $(e_1,\dots,e_k)$ counts the number of entries that are equal to $1$. If $e_l = 0$, we have:
\begin{align}
\vert v^{-(l-1)} \vert \leq \vert v^{-l} \vert \leq \sqrt{1 + \left(\frac{1}{r^2}-1\right)\beta^2} \cdot \vert v^{-(l-1)} \vert,
\end{align}
and if $e_l = 1$, we have:
\begin{align}
\sqrt{1 + \left(\frac{1}{r^2}-1\right)\beta^2} \cdot \vert v^{-(l-1)} \vert < \vert v^{-l} \vert \leq \frac{1}{r} \vert v^{-(l-1)} \vert.
\end{align}
For $0 \leq \beta \leq 1$, we define the following quantity:
\begin{align}
q_\beta = \sqrt{1 + \left(\frac{1}{r^2}-1\right)\beta^2} \geq 1,
\end{align}
which is equal to $1$ if $\beta = 0$, strictly larger than $1$ if $\beta \neq 0$, and which is equal to $1/r$ if $\beta = 1$. Finally, let us observe that this quantity is an increasing function of $\beta \in [0,1]$.\\
For any vector $(e_1,\dots,e_k) \in \{0,1\}^k$, we have:
\begin{align}
\label{EQUATTermeSerieGenerOrdre__k__}
\Bigg\vert &\int_{t_1=0}^t\int_{\omega_1 \in I(e_1)} \dots \int_{t_k=0}^{t_{k-1}}\int_{\omega_k \in I(e_k)} e^{C_d \sum_{j=1}^k t_j\left[ \vert v^{-(j-1)} \vert - \vert v^{-j} \vert \right]} \left( \prod_{l=1}^k \frac{\vert v^{-(l-1)}\cdot\omega_l \vert}{r^2} \right) \nonumber\\
&\hspace{60mm} \times f_0(x - tv + \sum_{m=1}^k \left[ t_m(v^{-(m-1)} - v^{-m})\right],v^{-k}) \dd \omega_k \dd t_k \dots \dd \omega_1 \dd t_1 \Bigg\vert \nonumber\\
&\leq \text{supess}_{(x,v) \in \mathbb{R}^{2d}} \left\vert f_0(x,v) e^{\alpha \vert v \vert^p} \right\vert \int_{t_1=0}^t\int_{\omega_1 \in I(e_1)} \dots \int_{t_k=0}^{t_{k-1}}\int_{\omega_k \in I(e_k)} \left( \prod_{l=1}^k \vert v^{-(l-1)} \vert \right) \left( \prod_{l=1}^k \frac{\left\vert \frac{v^{-(l-1)}}{\vert v^{-(l-1)} \vert}\cdot\omega_l \right\vert}{r^2} \right) \nonumber\\
&\hspace{115mm} \times e^{-\alpha \vert v^{-k} \vert^p} \dd \omega_k \dd t_k \dots \dd \omega_1 \dd t_1.
\end{align}

\noindent
Let us consider a vector $(e_1,\dots,e_k) \in \{0,1\}^k$ of length $s$. In this case, since we have:
\begin{align}
\label{EQUATNorme_v^-(l-1)_Produ}
\vert v^{-(l-1)} \vert = \left[ \prod_{m=1}^{l-1} \sqrt{1 + \left(\frac{1}{r^2}-1\right) \left( \frac{v^{-(m-1)}}{\vert v^{-(m-1)} \vert}\cdot \omega_m\right)^2} \right] \vert v \vert.
\end{align}
In the product:
\begin{align}
\label{EQUATProduNormeVites}
\prod_{l=1}^k \vert v^{-(l-1)} \vert
\end{align}
the scalar product $\frac{v}{\vert v \vert}\cdot \omega_1$ appears $(k-1)$ times, the scalar product $\frac{v^{-1}}{\vert v^{-1} \vert}\cdot \omega_2$ appears $(k-2)$ times, and more generally, the scalar product $\frac{v^{-(m-1)}}{\vert v^{-(m-1)} \vert}\cdot \omega_m$ appears $(k-m)$ times. Therefore, the product \eqref{EQUATProduNormeVites} is maximal when the terms that appear the most are the largest. In other words, the product \eqref{EQUATProduNormeVites} is maximal when $e_1 = \dots = e_s = 1$, and $e_{s+1} = \dots = e_k = 0$.\\
As a consequence, we deduce on the one hand (using the convention $\prod_{m=1}^0 u_m = 1$):
\begin{align}
\label{EQUATMajorProduNormeVites_v^-l}
\prod_{l=1}^k \vert v^{-(l-1)} \vert &= \prod_{l=1}^k \left[ \prod_{m=1}^{l-1} \sqrt{1 + \left(\frac{1}{r^2}-1\right) \left( \frac{v^{-(m-1)}}{\vert v^{-(m-1)} \vert}\cdot \omega_m\right)^2} \right] \vert v \vert \nonumber\\
&\leq \left( \prod_{l=1}^{s+1} \left[ \prod_{m=1}^{l-1} \frac{1}{r} \right] \vert v \vert \right) \left( \prod_{l=s+2}^k \left[ \left(\prod_{m=1}^{s} \frac{1}{r} \right) \left( \prod_{m=s+1}^{l-1} q_{\beta_k} \right) \right] \vert v \vert \right) \nonumber\\
&\leq \left( \prod_{l=1}^{s+1} \frac{1}{r^{l-1}} \vert v \vert \right) \left( \prod_{l=s+2}^k \frac{1}{r^s}q_{\beta_k}^{l-s-1} \vert v \vert \right) = \frac{1}{r^{\sum_{l=1}^{s+1}(l-1) + \sum_{l=s+2}^k s}} q_{\beta}^{\sum_{l=s+2}^k (l-s-1)} \vert v \vert^k \nonumber\\
&\leq \frac{1}{r^{\left[\frac{s(s+1)}{2} + (k-s-1)s\right]}} q_{\beta}^\frac{(k-s-1)(k-s)}{2} \vert v \vert^k \leq \left( \frac{1}{r} \right)^{ks} q_{\beta}^\frac{(k-s)^2}{2} \vert v \vert^k.
\end{align}
On the other hand, from \eqref{EQUATNorme_v^-(l-1)_Produ} we deduce also:
\begin{align}
\label{EQUATMinorProduNormeVites_v^-l}
\vert v^{-k} \vert = \left[ \prod_{m=1}^{k} \sqrt{ 1 + \left( \frac{1}{r^2} - 1 \right) \left( \frac{v^{-(m-1)}}{\vert v^{-(m-1)} \vert}\cdot\omega_m \right)^2 } \right] \vert v \vert \geq q_{\beta}^s \vert v \vert.
\end{align}
Finally, we have:
\begin{align}
\label{EQUATMajorProduProduScala}
\prod_{l=1}^k \left\vert \frac{v^{-(l-1)}}{\vert v^{-(l-1)} \vert} \cdot \omega_l \right\vert \leq \beta^{k-s}.
\end{align}
Gathering \eqref{EQUATMajorProduNormeVites_v^-l}, \eqref{EQUATMinorProduNormeVites_v^-l} and \eqref{EQUATMajorProduProduScala}, we obtain the following upper bound on the elementary term \eqref{EQUATTermeSerieGenerOrdre__k__}, which is the part associated to the vector $(e_1,\dots,e_k)$ (of length $s$) of the $k$-th term of the series representation \eqref{EQUATRepreSerieSolution__BoltzLineaInela}:
\begin{align}
\label{EQUATTermeSerieGenerOrdre__k_2}
\Bigg\vert &\int_{t_1=0}^t\int_{\omega_1 \in I(e_1)} \dots \int_{t_k=0}^{t_{k-1}}\int_{\omega_k \in I(e_k)} e^{C_d \sum_{j=1}^k t_j\left[ \vert v^{-(j-1)} \vert - \vert v^{-j} \vert \right]} \left( \prod_{l=1}^k \frac{\vert v^{-(l-1)}\cdot\omega_l \vert}{r^2} \right) \nonumber\\
&\hspace{60mm} \times f_0(x - tv + \sum_{m=1}^k \left[ t_m(v^{-(m-1)} - v^{-m})\right],v^{-k}) \dd \omega_k \dd t_k \dots \dd \omega_1 \dd t_1 \Bigg\vert \nonumber\\
&\leq \left(\text{supess}_{(x,v) \in \mathbb{R}^{2d}} \left\vert f_0(x,v) e^{\alpha \vert v \vert^p} \right\vert \right) \int_{t_1=0}^t\int_{\omega_1 \in I(e_1)} \dots \int_{t_k=0}^{t_{k-1}}\int_{\omega_k \in I(e_k)}
\left(\frac{1}{r}\right)^{ks} q_{\beta}^{\frac{(k-s)^2}{2}} \vert v \vert^k \nonumber\\
&\hspace{80mm} \times \left(\frac{1}{r}\right)^{2k} \beta^{k-s} e^{-\alpha (q_{\beta})^{sp} \vert v \vert^p}
\dd \omega_k \dd t_k \dots \dd \omega_1 \dd t_1 \nonumber\\
&\leq \left(\text{supess}_{(x,v) \in \mathbb{R}^{2d}} \left\vert f_0(x,v) e^{\alpha \vert v \vert^p} \right\vert\right) \frac{t^k}{k!} \left( \frac{\vert \mathbb{S}^{d-1} \vert \cdot \vert v \vert}{r^2} \right)^k \left( \frac{1}{r}\right)^{ks} q_{\beta}^{\frac{(k-s)^2}{2}} \beta^{k-s} e^{-\alpha (q_{\beta})^{sp} \vert v \vert^p}.
\end{align}
Now, each of the $k$-th terms of the series \eqref{EQUATRepreSerieSolution__BoltzLineaInela} is decomposed into $2^k$ sub-terms, labelled by the vectors $(e_1,\dots,e_k) \in \{0,1\}^k$. In addition, for $k$ fixed, and for a length $s$ fixed, there are ${k \choose s}$ such vectors. Therefore, the infinite series in \eqref{EQUATRepreSerieSolution__BoltzLineaInela} can be bounded from above by:
\begin{align}
\label{EQUATSerieBorneSuper}
&\sum_{k=1}^{+\infty} \sum_{s=0}^k {k \choose s} \left(\text{supess}_{(x,v) \in \mathbb{R}^{2d}} \left\vert f_0(x,v) e^{\alpha \vert v \vert^p} \right\vert\right) \frac{t^k}{k!} \left( \frac{\vert \mathbb{S}^{d-1} \vert \cdot \vert v \vert}{r^2} \right)^k \left( \frac{1}{r}\right)^{ks} q_{\beta}^{\frac{(k-s)^2}{2}} \beta^{k-s} e^{-\alpha (q_{\beta})^{sp} \vert v \vert^p} \nonumber\\
&= \left(\text{supess}_{(x,v) \in \mathbb{R}^{2d}} \left\vert f_0(x,v) e^{\alpha \vert v \vert^p} \right\vert\right) \sum_{k=1}^{+\infty} C^k \sum_{s=0}^k \frac{1}{s!(k-s)!}\left( \frac{1}{r}\right)^{ks} q_{\beta}^{\frac{(k-s)^2}{2}} \beta^{k-s} e^{-\alpha (q_{\beta})^{sp} \vert v \vert^p},
\end{align}
where
\begin{align}
C = \frac{\vert \mathbb{S}^{d-1} \vert \cdot \vert v \vert}{r^2}t.
\end{align}
We can now conclude the convergence of the series, relying on the study of the series:
\begin{align}
\mathcal{S} = \sum_{k=1}^{+\infty} C^k \sum_{s=0}^k \frac{1}{s!(k-s)!}\left( \frac{1}{r}\right)^{ks} q_{\beta}^{\frac{(k-s)^2}{2}} \beta^{k-s} e^{-\alpha (q_{\beta})^{sp} \vert v \vert^p}.
\end{align}
In order to do this, we need to choose in an appropriate manner the cut-off parameter $\beta$. Let us observe that the estimate \eqref{EQUATTermeSerieGenerOrdre__k_2} is obtained uniformly in terms of the vectors $(e_1,\dots,e_k)$, provided that $k$ is fixed, as well as the length $s$ of such vectors. Therefore, $\beta$ cannot be chosen depending on the vectors $(e_1,\dots,e_k)$, nevertheless, we can choose $\beta$ depending  on both $k$ and $s$. More precisely, we will consider:
\begin{align}
\label{EQUATDefinBeta__k_s_}
\beta = \beta_{k,s} = \left\{
\begin{array}{cc}
\displaystyle{\frac{1}{(k-s) e^{\ln(1/r) s(s+1)}}} \hspace{2mm} &\text{if } s\neq 0 \text{ and } s\neq k,\\
\frac{1}{k} \hspace{2mm} &\text{if } s = 0,\\\beta_0 > 0 \hspace{2mm} &\text{if } s = k.
\end{array}
\right.
\end{align}
To estimate  the series $\mathcal{S}$, we start with separating the sum over $s$ in three parts, treating separately the two extreme cases $s = 0$ and $s = k$. We set:
\begin{align}
\mathcal{S} &= \underbrace{\sum_{k=1}^{+\infty} C^k \frac{1}{k!} q_{\beta}^{\frac{k^2}{2}} \beta^k e^{-\alpha \vert v \vert^p}}_{=\,\mathcal{S}_1}
+ \underbrace{\sum_{k=1}^{+\infty} C^k \sum_{s=1}^{k-1} \frac{1}{s!(k-s)!}\left( \frac{1}{r}\right)^{ks} q_{\beta}^{\frac{(k-s)^2}{2}} \beta^{k-s} e^{-\alpha (q_{\beta})^{sp} \vert v \vert^p}}_{=\,\mathcal{S}_2} \nonumber\\
&\hspace{5mm}+ \underbrace{\sum_{k=1}^{+\infty} C^k \frac{1}{k!}\left( \frac{1}{r}\right)^{k^2} \hspace{-3mm} e^{-\alpha (q_{\beta})^{kp} \vert v \vert^p}}_{=\,\mathcal{S}_3}
\end{align}
An important point to keep in mind is that $q_\beta$ depends on $\beta$, and that $q_\beta$ converges to $1$ when $\beta$ converges to $0$, so that one cannot easily make use of the exponential weight in such a case.\\
We will rely on the two following formulas.
\begin{align}
\label{EQUATFormuGener__q__beta1}
q_\beta^{\frac{k^2}{2}}\beta^k = e^{\frac{k^2}{4} \ln\left( 1 + \left(\frac{1}{r^2}-1\right) \beta^2 \right)} e^{k\ln\beta} \leq e^{\frac{((1/r^2)-1)}{4}k^2\beta^2 + k\ln\beta}
\end{align}
and
\begin{align}
\label{EQUATFormuGener__q__beta2}
\left( \frac{1}{r} \right)^{k^2} e^{-\alpha (q_\beta)^{kp} \vert v \vert^p} = e^{k^2 \ln(1/r) - \alpha (q_\beta)^{kp} \vert v \vert^p}.
\end{align}
Concerning the first series $\mathcal{S}_1$, using \eqref{EQUATFormuGener__q__beta1}, the choice $\beta_{k,0} = 1/k$ provides:
\begin{align}
\label{EQUATBorneSomme_S_1_}
\mathcal{S}_1 &\leq \sum_{k=1}^{+\infty} \frac{C^k}{k!} e^{\frac{((1/r^2)-1)}{4}k^2\beta^2 + k\ln\beta} \leq \sum_{k=1}^{+\infty} \frac{C^k}{k!} e^{\frac{((1/r^2)-1)}{4} - k\ln k} \leq e^{C + \frac{((1/r^2)-1)}{4}}.
\end{align}
Concerning the third series $\mathcal{S}_3$, \eqref{EQUATFormuGener__q__beta2} provides, with the choice $\beta_{k,k} = \beta_0 > 0$ (so that $q_{\beta_0}^p > 1$ is independent from $k$), that:
\begin{align}
\label{EQUATBorneSomme_S_3_}
\mathcal{S}_3 \leq \left( \sup_{k \geq 1} e^{k^2 \ln(1/r) - \alpha (q_{\beta_0})^{kp} \vert v \vert^p} \right) \sum_{k=1}^{+\infty} \frac{C^k}{k!} \leq e^C \left( \sup_{k \geq 1} e^{k^2 \ln(1/r) - \alpha (q_{\beta_0})^{kp} \vert v \vert^p} \right).
\end{align}
We turn now to the central series $\mathcal{S}_2$. Exchanging the sums over $k$ and $s$ we find, and performing the change of variables $j = k-s$, we find:
\begin{align}
\mathcal{S}_2 &= \sum_{k=1}^{+\infty} C^k \sum_{s=1}^{k-1} \frac{1}{s!(k-s)!}\left( \frac{1}{r}\right)^{ks} q_{\beta}^{\frac{(k-s)^2}{2}} \beta^{k-s} e^{-\alpha (q_{\beta})^{sp} \vert v \vert^p} \nonumber\\
&= \sum_{s=1}^{+\infty} \sum_{j=1}^{+\infty} \frac{C^{j+s}}{s!j!} \left(\frac{1}{r}\right)^{(j+s)s} q_\beta^{\frac{j^2}{2}} \beta^j e^{-\alpha q_\beta^{sp} \vert v \vert^p}.
\end{align}
The choice \eqref{EQUATDefinBeta__k_s_} of $\beta_{k,s}$ provides:
\begin{align}
C^j \left(\frac{1}{r}\right)^{js} q_{\beta_{j+s,s}}^{\frac{j^2}{2}} \beta_{j+s,s}^j \leq \exp\left[ j(\ln C - \ln j) + \frac{(1/r^2-1)}{4} - js^2 \ln(1/r) \right].
\end{align}
Therefore, since $j(\ln C - \ln j)$ tends to $-\infty$ when $j \rightarrow +\infty$, $\exp\left(j(\ln C - \ln j) + \frac{(1/r^2-1)}{4} \right)$ is bounded from above, uniformly in $s$, so that:
\begin{align}
\label{EQUATBorneSomme_S_2_}
\mathcal{S}_2 &\leq \sum_{s=1}^{+\infty} \frac{C^s}{s!} \left(\frac{1}{r}\right)^{s^2}e^{-s^2 \ln(1/r)} \sum_{k=1}^{+\infty} \frac{\left[\sup_{l\geq 1} \exp\left( j(\ln C - \ln j) + \frac{(1/r^2-1)}{4}\right)\right]}{j!} \nonumber\\
&\leq \left[\sup_{l\geq 1} \exp\left( j(\ln C - \ln j) + \frac{(1/r^2-1)}{4}\right)\right] e^{C+1}.
\end{align}
Gathering \eqref{EQUATBorneSomme_S_1_}, \eqref{EQUATBorneSomme_S_3_} and \eqref{EQUATBorneSomme_S_2_}, we deduce that the series $\mathcal{S}$ (given by the expression \eqref{EQUATSerieBorneSuper}) is converging.\\
Therefore, the series \eqref{EQUATRepreSerieSolution__BoltzLineaInela} is converging, and the proof of Theorem \ref{PROPOConveRepreSerieSolutBoltzLineaInela} is complete.
\end{proof}

\subsection{Series representation of the solutions of the adjoint equation}

We turn now the series representation of the solutions of the adjoint of the inelastic linear Boltzmann equation.

\begin{propo}[Series representation for the solutions of the adjoint equation \eqref{EQUATLineaBoltzSphDuFormeAdjoi}]
\label{PROPORepreSerieEquatAdjoi}
Let $\varphi: \mathbb{R}^d \times \mathbb{R}^d \rightarrow \mathbb{R}$ be a $\mathcal{C}^1$ function, such that $\varphi$ and its gradient $\nabla_{x,v} \varphi$ are vanishing at infinity.\\
We define the series:
\begin{align}
\label{EQUATRepreSerieEquatAdjoi}
\psi(t,x,v) &= e^{-C_d \vert v \vert t}\varphi(x+tv,v) \nonumber\\
&\hspace{5mm}+ \sum_{k=1}^{+\infty} \int_{t_1=0}^t \int_{\mathbb{S}^{d-1}_{\omega_1}} \hspace{-3mm}\dots \int_{t_k=0}^{t_{k-1}}\int_{\mathbb{S}^{d-1}_{\omega_k}} e^{\big[ \sum_{j=1}^k C_d \vert v^{(j-1)} \vert (t_j-t_{j-1}) - C_d \vert v^{(k)} \vert t_k \big]} \prod_{l=1}^k \big\vert v^{(l-1)}\cdot\omega_l \big\vert \nonumber\\
&\hspace{38mm} \times \varphi(x + \sum_{m=1}^k \big(t_{m-1}-t_m\big) v^{(m-1)} + t_k v^{(k)},v^{(k)}) \dd \omega_k \dd t_k \dots \dd \omega_1 \dd t_1,
\end{align}
with $C_d$ defined in \eqref{EQUATDefin_C_d_}, $t^{(0)} = t$ by convention and:
\begin{align}
v^{(k)} = v^{(k-1)} - (1+r) \big( v^{(k-1)} \cdot \omega_k \big) \omega_k \hspace{2mm} \forall k \geq 1,\hspace{5mm} v^{(0)} = v.
\end{align}
Then the series $\psi(t,x,v)$ is converging for any $t \in \mathbb{R}$, $x,v \in \mathbb{R}^d$, $\psi$ is a $\mathcal{C}^1$ function, and solves (in the strong sense) the adjoint equation \eqref{EQUATLineaBoltzSphDuFormeAdjoi} of the linear inelastic Boltzmann equation \eqref{EQUATLineaBoltzSphDuFormeForte}, with initial datum $\varphi$. In other words, we have:
\begin{align}
\left\{
\begin{array}{rclc}
\partial_t \psi(t,x,v) - v \cdot \nabla_x \psi(t,x,v) &=& \displaystyle{\int_{\mathbb{S}^{d-1}_\omega}} \vert v \cdot \omega \vert \big[ \psi(t,x,v') - \psi(t,x,v) \big] \dd \omega &\forall\, t\in \mathbb{R},\, x,v\in \mathbb{R}^d,\\
\psi(0,x,v) &=& \varphi(x,v) &\forall\, x,v\in \mathbb{R}^d,
\end{array}
\right.
\end{align}
with $v' = v - (1+r) \big(v\cdot\omega)\omega$.\\
Conversely, any $\mathcal{C}^1$ function $\widetilde{\varphi}:\mathbb{R}\times\mathbb{R}^d\times\mathbb{R}^d \rightarrow \mathbb{R}$ that is a strong solution to the adjoint equation \eqref{EQUATLineaBoltzSphDuFormeAdjoi} of the linear inelastic Boltzmann equation \eqref{EQUATLineaBoltzSphDuFormeForte}, with initial datum $\varphi$, is equal to the series $\psi$. In other words, $\widetilde{\varphi} = \psi$, the solution of the adjoint equation \eqref{EQUATLineaBoltzSphDuFormeAdjoi} is unique in the class of the $\mathcal{C}^1$ functions, and is given by the expression \eqref{EQUATRepreSerieEquatAdjoi} of $\psi$.\\
In addition, assuming only that $\varphi \in \mathcal{C}_0(\mathbb{R}^d\times\mathbb{R}^d)$, for any $(t,x,v) \in [0,+\infty[ \times \mathbb{R}^d \times \mathbb{R}^d$, the expression \eqref{EQUATRepreSerieEquatAdjoi} defining the function $\psi$ is still well-defined, it satisfies:
\begin{align}
\label{EQUATBoundGrowtWeak_Solut}
\big\vert \psi(t,x,v) \big\vert \leq \vertii{ \varphi }_\infty e^{C_d \vert v \vert t}
\end{align}
and if the initial datum $\varphi = \psi(0,\cdot,\cdot)$ is compactly supported, then the support $\text{supp} \,(\psi)$ of $\psi$ is such that $\text{supp} \, (\psi) \cap \big( [0,t_0] \times \mathbb{R}^d \times \mathbb{R}^d \big)$ is a compact set of $\mathbb{R}_+ \times \mathbb{R}^d \times \mathbb{R}^d$ for any $t_0 > 0$.
\end{propo}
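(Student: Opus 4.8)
The plan is to recognise the series \eqref{EQUATRepreSerieEquatAdjoi} as the fully iterated Duhamel expansion for the adjoint equation \eqref{EQUATLineaBoltzSphDuFormeAdjoi}, and to exploit the one feature that distinguishes it from the series \eqref{EQUATRepreSerieSolution__BoltzLineaInela} for the forward equation: the post-collisional map $v\mapsto v'=v-(1+r)(v\cdot\omega)\omega$ is \emph{non-expanding}, so that $r\,|v^{(j-1)}|\le|v^{(j)}|\le|v^{(j-1)}|$ and hence $|v^{(j)}|\le|v|$ for all $j$. This monotonicity of the speeds is exactly what makes the series converge with no exponential-moment hypothesis, only $\varphi\in\mathcal{C}_0$. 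The proof then has four stages: (i) absolute, locally uniform convergence together with the bound \eqref{EQUATBoundGrowtWeak_Solut} and continuity of $\psi$; (ii) $\mathcal{C}^1$ regularity when $\varphi\in\mathcal{C}^1$ with $\varphi,\nabla_{x,v}\varphi$ vanishing at infinity; (iii) identification of $\psi$ with the unique $\mathcal{C}^1$ strong solution through the Duhamel fixed-point relation; (iv) the support statement.

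For (i), fix $t\ge0$ (for $t<0$ replace $t$ by $|t|$ everywhere). In the $k$-th term of \eqref{EQUATRepreSerieEquatAdjoi} bound $|\varphi|\le\vertii{\varphi}_\infty$; the exponential weight $\exp[\sum_{j=1}^kC_d|v^{(j-1)}|(t_j-t_{j-1})-C_d|v^{(k)}|t_k]$ is $\le1$ since $t_0=t\ge t_1\ge\dots\ge t_k\ge0$ makes its exponent non-positive; and, writing $n_l:=v^{(l-1)}/|v^{(l-1)}|$ which does not depend on $\omega_l$, the speed monotonicity gives $\prod_{l=1}^k|v^{(l-1)}\cdot\omega_l|\le|v|^k\prod_{l=1}^k|n_l\cdot\omega_l|$, and integrating the $\omega_l$ one after another (innermost first) yields $\int_{(\mathbb{S}^{d-1})^k}\prod_l|n_l\cdot\omega_l|=C_d^k$ by \eqref{EQUATDefin_C_d__bis_}, while the nested integrals over $\{t>t_1>\dots>t_k>0\}$ contribute $t^k/k!$. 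The $k$-th term is therefore at most $\vertii{\varphi}_\infty(C_d|v|t)^k/k!$, and summing gives $|\psi(t,x,v)|\le\vertii{\varphi}_\infty e^{C_d|v|t}$, i.e. \eqref{EQUATBoundGrowtWeak_Solut}; this used only $\varphi\in L^\infty$, and the same bound makes the convergence uniform on compacts. As each term is continuous in $(t,x,v)$ (dominated convergence and continuity of $\varphi$ and of the affine maps $(x,v)\mapsto(\Xi_k,v^{(k)})$), $\psi$ is continuous.

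For (ii), with $\varphi,\nabla_{x,v}\varphi$ bounded, differentiate \eqref{EQUATRepreSerieEquatAdjoi} term by term. The $x$-derivative hits only $\varphi$ (the position is affine in $x$ with Jacobian the identity), so the $x$-derivative series obeys the same bound with $\vertii{\nabla_x\varphi}_\infty$; for $\partial_t$ and $\nabla_v$, differentiating the exponent, the kernel product, the position argument, or the outer limit $t_1=t$ yields only extra factors polynomial in $(|v|,t)$ — one uses that $v\mapsto v^{(j)}$ is linear with operator norm $\le1$ and that the boundary term at $t_1=t$ has vanishing position and exponent, hence is of lower order — so the differentiated series still converges locally uniformly; thus $\psi\in\mathcal{C}^1$ and differentiation commutes with $\sum$. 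For (iii), by construction the series is the iteration of the Duhamel identity $\psi(t,x,v)=e^{-C_d|v|t}\varphi(x+tv,v)+\int_0^te^{-C_d|v|(t-s)}\int_{\mathbb{S}^{d-1}}|v\cdot\omega|\,\psi(s,x+(t-s)v,v')\dd\omega\,\dd s$; differentiating this along the characteristics of $\partial_t-v\cdot\nabla_x+C_d|v|$ gives \eqref{EQUATLineaBoltzSphDuFormeAdjoi}, and $\psi(0,\cdot,\cdot)=\varphi$ because every $k\ge1$ term carries a factor $\int_0^t(\cdots)$. Conversely any $\mathcal{C}^1$ strong solution $\widetilde\varphi$ with datum $\varphi$ satisfies the same Duhamel identity, so $w:=\widetilde\varphi-\psi$ equals its own $n$-th Duhamel iterate for every $n$; on a slab $\{0\le t\le t_0,\ |v|\le V\}$ the Duhamel operator only ever reaches times $\le t$, speeds $\le V$ and positions in a set of diameter $O(Vt_0)$, on which the continuous $w$ is bounded by some $M$, whence $|w|\le M(C_dVt_0)^n/n!\to0$ and $w\equiv0$ (and similarly for $t\le0$).

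Finally (iv): if $\text{supp}\,\varphi\subset B(0,R)\times B(0,R)$, the free term $e^{-C_d|v|t}\varphi(x+tv,v)$ vanishes unless $|v|\le R$ and $|x+tv|\le R$, hence on $[0,t_0]$ it is supported in the compact set $\{|v|\le R,\ |x|\le R(1+t_0)\}$. For the $k$-th term ($k\ge1$), the evaluation of $\varphi$ at $(\Xi_k,v^{(k)})$ forces $v^{(k)}\in\text{supp}\,\varphi$ and $\Xi_k\in\text{supp}\,\varphi$; since $|v^{(j)}|\le|v|$ one has the finite-speed bound $|\Xi_k-x|\le|v|t$, so $|x|\le R+|v|t_0$, while $|v^{(k)}|\ge r^k|v|$ confines that term to $|v|\le R\,r^{-k}$. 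I expect the real obstacle to be precisely that these confinements degrade as $k\to\infty$: to enclose the whole of $\text{supp}\,\psi\cap([0,t_0]\times\mathbb{R}^d\times\mathbb{R}^d)$ in a single compact set one must combine the exponential damping with a quantitative argument showing that a collision history slowing a large speed down into $\text{supp}_v\varphi$ within time $t_0$ is admissible only on a heavily restricted set of times and angles — a geometric/measure estimate in the same spirit as the cut-off estimates announced in Proposition \ref{PROPOReecrPhiEpElimiRecol}. Everything else is routine bookkeeping; this last confinement is where the work lies.
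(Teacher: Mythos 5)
Your stages (i)--(iii) match the paper's own route in Appendix \ref{SECTIProofSerieRepreSolutAdjoi} (iterated Duhamel expansion, term-by-term differentiation of the partial sums, iterated-Duhamel remainder estimate for uniqueness), and your stage (i) is in fact more explicit than the paper's: where the appendix merely asserts that the partial sums and their derivatives are Cauchy locally uniformly, you isolate the two facts that close the estimate --- the exponent in \eqref{EQUATRepreSerieEquatAdjoi} is non-positive since the $t_j$ decrease with $t_0=t$, and $|v^{(l-1)}|\le|v|$ since the inelastic scattering is non-expanding --- and obtain the clean term-by-term bound $\vertii{\varphi}_\infty(C_d|v|t)^k/k!$, which is exactly \eqref{EQUATBoundGrowtWeak_Solut}. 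This is a genuine gain in transparency over the paper's version.

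Your hesitation in stage (iv) is the right instinct, and you should push it to its conclusion rather than defer to a hoped-for geometric estimate: the compact-support claim is not proved in the paper's appendix at all, and as stated it is false. Take $\varphi\ge 0$, continuous, with $\text{supp}\,\varphi\subset B_R\times B_R$, and pick $(x_0,v_0)\in\{\varphi>0\}$ with $|v_0|=\rho>0$. When $\omega_l$ is aligned with $v^{(l-1)}$ the scattering sends $v^{(l-1)}\mapsto -rv^{(l-1)}$: maximal dissipation with direction reversed. For each $k$, take $v$ with $|v|=\rho\,r^{-k}$ and $v/|v|=\pm v_0/|v_0|$ (sign depending on the parity of $k$), take $x\approx x_0-tv_0$ so that $|x|\le R(1+t_0)$, and look at the $k$-th term of \eqref{EQUATRepreSerieEquatAdjoi} integrated over $\omega_l$ in a small neighbourhood of the poles of $v^{(l-1)}$ and over $t>t_1>\dots>t_k$ all within $O(\delta/|v|)$ of $t$. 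On this open parameter set $v^{(k)}\approx(-r)^k v\approx v_0$, the displacement $\Xi_k-x=\sum_m(t_{m-1}-t_m)v^{(m-1)}+t_kv^{(k)}\approx tv^{(k)}$, hence $\Xi_k\approx x_0$, so $(\Xi_k,v^{(k)})\in\{\varphi>0\}$ and the integrand is strictly positive. Since all terms of \eqref{EQUATRepreSerieEquatAdjoi} are non-negative for $\varphi\ge 0$, this $(t,x,v)$ lies in $\text{supp}\,\psi$, and letting $k\to\infty$ shows $\text{supp}\,\psi\cap([0,t_0]\times\mathbb{R}^d\times\mathbb{R}^d)$ is unbounded in $v$, hence not compact. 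The statement therefore needs to be replaced by a quantitative decay in velocity (which \eqref{EQUATBoundGrowtWeak_Solut} plus the exponential moment of $f_0$ already delivers for the purposes of Proposition \ref{PROPOExistWeak_Solut}); note also that the proof of Proposition \ref{PROPOUniquWeak_Solut} invokes compact support of $\widetilde{\eta}_m\chi_n$ through precisely this clause, so the uniqueness argument is affected and needs a separate truncation to be repaired.
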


\noindent
Observe that the integration variables $t_k < t_{k-1} < \dots <t_1$ (with $0 \leq t_k$ and $t_1 \leq t$) have to be interpreted as the consecutive collision times of the tagged particle with the scatterers. We emphasize that $t_k$ corresponds to the time of the first collision, $t_{k-1}$ the second, and so on, so that the labeling of the collision times is inverted with respect to the labeling of the collisions.\\
The proof of Proposition \ref{PROPORepreSerieEquatAdjoi} relies on classical arguments. We do not present the proof here, but it is postponed to Appendix \ref{SECTIProofSerieRepreSolutAdjoi}, for the sake of completeness.
\begin{remar}
It is important to observe here that the regularity of the initial datum $\varphi$ does not propagate to the solution $\widetilde{\varphi} = \psi$ of the adjoint equation \eqref{EQUATLineaBoltzSphDuFormeAdjoi}, in the following sense. If we assume that $\varphi$ is vanishing at infinity, it is not clear a priori that such a property holds true for $\psi$. This phenomenon is specific to the inelastic case, and might be interpreted as the consequence of the fact that, contrary to the elastic case for which $
e^{\big[ \sum_{j=1}^k C_d \vert v^{(j-1)} \vert (t_j-t_{j-1}) - C_d \vert v^{(k)} \vert t_k \big]} = e^{-C_d \vert v \vert t}$, here the exponential term in \eqref{EQUATRepreSerieEquatAdjoi} is more singular, and does not compensate entirely the growth coming from the product $\prod_{l=1}^k \big\vert v^{(l-1)}\cdot\omega_l \big\vert$.
\end{remar}

\subsection{Existence and uniqueness of the weak solution to the inelastic linear Boltzmann equation}
\label{SSECTExistUniquSolut}

We start with establishing the existence of weak solutions to the inelastic linear Boltzmann equation. In the same spirit as in \cite{NoVe017}, we construct explicitly a weak solution, relying on solutions of the adjoint equation \eqref{EQUATLineaBoltzSphDuFormeAdjoi}.

\begin{propo}[Existence of weak solutions to \eqref{EQUATLineaBoltzSphDuFormeForte}]
\label{PROPOExistWeak_Solut}
Let $f_0 \in \mathcal{P}\big( \mathbb{R}^d \times \mathbb{R}^d \big)$ be a probability measure, which is also a non-negative Radon measure, and let $p > 1$ be constant such that:
\begin{align}
\label{EQUATExpo_Bound_Init_DataTHEORExist}
\int_{\mathbb{R}^d_x} \hspace{-1mm} \int_{\mathbb{R}^d_v} e^{\vert v \vert^p} f_0(\dd x,\dd v) < + \infty.
\end{align}
Then, there exists a weak solution to \eqref{EQUATLineaBoltzSphDuFormeForte} in the sense of Definition \ref{DEFINSolutFaibleEquatBoltzLineaInela}, with initial datum $f_0$, that we denote by $g$, and that is defined by duality for all $t \geq 0$ as:
\begin{align}
\int_{\mathbb{R}^d}\hspace{-1mm}\int_{\mathbb{R}^d} \varphi(x,v) g(t,\dd x,\dd v) = \int_{\mathbb{R}^d}\hspace{-1mm}\int_{\mathbb{R}^d} \psi(t,x,v) f_0(\dd x,\dd v) \hspace{5mm} \forall \varphi \in \mathcal{C}_0(\mathbb{R}^d\times\mathbb{R}^d),
\end{align}
where $\psi$ is the series given by the expression \eqref{EQUATRepreSerieEquatAdjoi}, with initial datum $\varphi$.
\end{propo}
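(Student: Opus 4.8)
The plan is to verify directly that the measure-valued map $g$ specified in the statement by $\langle g(t),\varphi\rangle=\int_{\mathbb{R}^d\times\mathbb{R}^d}\psi(t,x,v)\,f_0(\dd x,\dd v)$ (with $\psi$ the series \eqref{EQUATRepreSerieEquatAdjoi} with initial datum $\varphi$) takes values in $\mathcal{M}_{+,1}(\mathbb{R}^d\times\mathbb{R}^d)$, is continuous in $t$, starts from $f_0$, and satisfies the weak formulation of Definition \ref{DEFINSolutFaibleEquatBoltzLineaInela}. All the analytic content about $\psi$ is already contained in Proposition \ref{PROPORepreSerieEquatAdjoi}; what remains is to pair it with $f_0$ and do the bookkeeping.

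\emph{Step 1: $g(t)$ is a non-negative Radon measure with finite first moment, and $g(0)=f_0$.} Fix $t\ge 0$ and $\varphi\in\mathcal{C}_0(\mathbb{R}^d\times\mathbb{R}^d)$. By Proposition \ref{PROPORepreSerieEquatAdjoi} the series \eqref{EQUATRepreSerieEquatAdjoi} converges, $\psi(t,\cdot,\cdot)$ is continuous, and $|\psi(t,x,v)|\le\vertii{\varphi}_\infty e^{C_d|v|t}$ by \eqref{EQUATBoundGrowtWeak_Solut}. Since $p>1$ one has $C_d|v|t\le|v|^p+C(d,p,t)$ for all $v$, so \eqref{EQUATExpo_Bound_Init_DataTHEORExist} makes $\Lambda(t):=\int e^{C_d|v|t}f_0(\dd x,\dd v)$ finite; hence $\varphi\mapsto\int\psi(t,x,v)f_0(\dd x,\dd v)$ is a bounded linear functional on $\mathcal{C}_0$ of norm $\le\Lambda(t)$, represented by a unique $g(t)\in\mathcal{M}(\mathbb{R}^d\times\mathbb{R}^d)$ via Riesz's theorem. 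For $t\ge 0$ every term of \eqref{EQUATRepreSerieEquatAdjoi} is an integral over $\{0\le t_k\le\dots\le t_1\le t\}\times(\mathbb{S}^{d-1})^k$ of a positive exponential weight times the non-negative product $\prod_{l=1}^k|v^{(l-1)}\cdot\omega_l|$ times $\varphi$ at a point, so $\varphi\ge 0$ gives $\psi(t,\cdot,\cdot)\ge 0$ and $\langle g(t),\varphi\rangle\ge 0$, whence $g(t)\in\mathcal{M}_+$. Re-running the proof of \eqref{EQUATBoundGrowtWeak_Solut} with the weight $1+|v|$ instead of $\vertii{\varphi}_\infty$ — which is legitimate because forward in time $|v^{(l)}|\le|v^{(l-1)}|\le|v|$, so $|\varphi(\,\cdot\,,v^{(k)})|\le 1+|v|$ whenever $|\varphi(x,w)|\le 1+|w|$ — yields $|\psi(t,x,v)|\le C(t)(1+|v|)e^{C_d|v|t}$ for such $\varphi$. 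Choosing $0\le\varphi_n\uparrow|v|$ smooth and compactly supported and using monotone convergence gives $\int|v|\,g(t,\dd x,\dd v)\le C(t)\int(1+|v|)e^{C_d|v|t}f_0(\dd x,\dd v)<+\infty$ (finite since $p>1$), so $g(t)\in\mathcal{M}_{+,1}$. Finally $\psi(0,\cdot,\cdot)=\varphi$ forces $g(0)=f_0$, and continuity in $t$ follows from continuity of $t\mapsto\psi(t,x,v)$ together with the domination $|\psi(t,\cdot,\cdot)|\le\vertii{\varphi}_\infty e^{C_d|v|t_0}\in L^1(f_0)$ on $[0,t_0]$, giving $g\in\mathcal{C}([0,+\infty[,\mathcal{M}_{+,1})$ as demanded in Definition \ref{DEFINSolutFaibleEquatBoltzLineaInela}.

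\emph{Step 2: $g$ solves the weak equation.} For $\eta\in\mathcal{C}^1_c(\mathbb{R}^d\times\mathbb{R}^d)$ let $S^*(s)\eta:=\psi_\eta(s,\cdot,\cdot)$ be the $\mathcal{C}^1$ solution of the adjoint equation \eqref{EQUATLineaBoltzSphDuFormeAdjoi} with datum $\eta$ (Proposition \ref{PROPORepreSerieEquatAdjoi}), so $\langle g(s),\eta\rangle=\langle f_0,S^*(s)\eta\rangle$, and set $A\eta:=v\cdot\nabla_x\eta+\int_{\mathbb{S}^{d-1}}|v\cdot\omega|\big(\eta(x,v')-\eta(x,v)\big)\dd\omega$ with $v'=v-(1+r)(v\cdot\omega)\omega$. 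As the transport and collision operators carry no explicit time dependence and act on disjoint variables, $A$ commutes with the evolution; since $A\eta$ is again a $\mathcal{C}^1$ datum vanishing at infinity with its gradient (the kink of $|v\cdot\omega|$ contributes nothing, $\eta(x,v')=\eta(x,v)$ on $\{v\cdot\omega=0\}$), the uniqueness part of Proposition \ref{PROPORepreSerieEquatAdjoi} gives $\partial_s[S^*(s)\eta]=A\,S^*(s)\eta=S^*(s)\,A\eta$. Now take $\widetilde\varphi\in\mathcal{C}_c^\infty([0,+\infty[\times\mathbb{R}^d\times\mathbb{R}^d)$ supported in $\{s\le t_0\}$ and put $W(s):=S^*(s)[\widetilde\varphi(s,\cdot,\cdot)]$. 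By the chain rule $\tfrac{d}{ds}W(s)=S^*(s)\big[(\partial_s\widetilde\varphi+A\widetilde\varphi)(s,\cdot,\cdot)\big]$, and integrating over $s\in[0,t_0]$, with $W(t_0)=0$ and $W(0)=\widetilde\varphi(0,\cdot,\cdot)$, gives $\int_0^{t_0}S^*(s)\big[(\partial_s\widetilde\varphi+A\widetilde\varphi)(s,\cdot,\cdot)\big]\dd s=-\widetilde\varphi(0,\cdot,\cdot)$. Pairing with $f_0$, exchanging $\langle f_0,\cdot\rangle$ with $\int_0^{t_0}$ by Fubini (all integrands are bounded on $[0,t_0]$ by a fixed multiple of $(1+|v|)e^{C_d|v|t_0}\in L^1(f_0)$), and using $\langle f_0,S^*(s)\theta\rangle=\langle g(s),\theta\rangle$, one obtains
\begin{align*}
-\int_{\mathbb{R}^d\times\mathbb{R}^d}\!\widetilde\varphi(0,x,v)\,f_0(\dd x,\dd v)=\int_0^{+\infty}\!\!\int_{\mathbb{R}^d\times\mathbb{R}^d}\!\Big[\partial_s\widetilde\varphi(s,x,v)+v\cdot\nabla_x\widetilde\varphi(s,x,v)+\!\int_{\mathbb{S}^{d-1}}\!\!|v\cdot\omega|\big(\widetilde\varphi(s,x,v')-\widetilde\varphi(s,x,v)\big)\dd\omega\Big]g(s,\dd x,\dd v)\,\dd s,
\end{align*}
which is exactly the identity of Definition \ref{DEFINSolutFaibleEquatBoltzLineaInela} once the transport and collision terms are moved to the left. (Equivalently one may avoid the semigroup notation and differentiate the series \eqref{EQUATRepreSerieEquatAdjoi} termwise in $s$.) Together with Step~1 this shows $g$ is a weak solution of \eqref{EQUATLineaBoltzSphDuFormeForte} with initial datum $f_0$.

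\emph{Main obstacle.} The algebra above is short; the delicate point is the quantitative control that legitimizes every interchange (Riesz duality, the monotone-convergence bound on the first moment, Fubini, differentiation under $\int(\cdot)\,f_0$ and termwise of the series): one needs $\psi$ and its $x$- and $s$-derivatives bounded, uniformly on $[0,t_0]$, by weights integrable against $f_0$. This is precisely where the \emph{forward} character of the flow is crucial — forward in time the post-collisional speeds do not grow, so $\prod_l|v^{(l-1)}\cdot\omega_l|$ is dominated by $|v|^k$ and no factor of order $r^{-k^2}$ arises (in contrast to the backward series discussed near \eqref{EQUATRepreSerieSolution__BoltzLineaInela}) — and where the hypothesis $p>1$ is exactly what is needed so that $e^{C_d|v|t}$, and $(1+|v|)e^{C_d|v|t}$, are controlled by the exponential moment $e^{|v|^p}$ of $f_0$.
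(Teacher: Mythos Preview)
Your proof is correct and follows essentially the same duality-with-the-adjoint-evolution strategy as the paper; your Step~1 is in fact more thorough (the paper checks only that $g(t)$ is a finite Radon measure and leaves non-negativity, the first-moment bound, and time-continuity implicit). In Step~2 the paper reaches the same identity by writing the test function via a Duhamel superposition $\widetilde\eta(t)=-\int_t^{t_0}\phi(s,t)\,\dd s$ of solutions $\phi(s,\cdot)$ to $\partial_t\phi+v\cdot\nabla_x\phi+\mathcal{L}^*[\phi]=0$ with data $\xi(s):=(\partial_s+A)\widetilde\eta(s)$, whereas you obtain it by the equivalent semigroup chain rule $\tfrac{d}{ds}\,S^*(s)\widetilde\varphi(s)=S^*(s)\big[(\partial_s+A)\widetilde\varphi(s)\big]$ together with the commutation $A\,S^*(s)=S^*(s)\,A$.
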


\begin{proof}
First, according to the bound \eqref{EQUATBoundGrowtWeak_Solut} for the solution $\psi$ of the adjoint equation, we have that $(x,v) \mapsto \psi(t,x,v) e^{- \vert v \vert^p}$ is a bounded function. This ensures, together with the assumption \eqref{EQUATExpo_Bound_Init_DataTHEORExist}, that $g(t)$ is a finite Radon measure.\\
To prove that $g$ is a weak solution to \eqref{EQUATLineaBoltzSphDuFormeForte}, we consider a general test function $\widetilde{\eta} \in \mathcal{C}^\infty_c([0,+\infty[\times\mathbb{R}^d\times\mathbb{R}^d)$, and we introduce the quantity:
\begin{align}
\Delta(\widetilde{\eta}) &= \int_{\mathbb{R}^d_x} \hspace{-1mm} \int_{\mathbb{R}^d_v} \hspace{-1.5mm} \widetilde{\eta}(0,x,v) f_0(\dd x,\dd v) + \int_0^{+\infty} \hspace{-2.5mm} \int_{\mathbb{R}^d_x} \hspace{-1mm} \int_{\mathbb{R}^d_v} \hspace{-1.5mm} \partial_t \widetilde{\eta}(t,x,v) g(t,\dd x,\dd v) \dd t \nonumber\\
&\hspace{5mm}+ \int_0^{+\infty} \hspace{-2.5mm} \int_{\mathbb{R}^d_x} \hspace{-1mm} \int_{\mathbb{R}^d_v} \hspace{-1.5mm} v\cdot\nabla_x \widetilde{\eta}(t,x,v) g(t,\dd x,\dd v) \dd t + \int_0^{+\infty} \hspace{-2.5mm} \int_{\mathbb{R}^d_x} \hspace{-1mm} \int_{\mathbb{R}^d_v} \mathcal{L}^*\big[\widetilde{\eta}\big](t,x,v) g(t,\dd x,\dd v) \dd t
\end{align}
with $\displaystyle{\mathcal{L}^*\big[\widetilde{\eta}\big](t,x,v) = \int_{\mathbb{S}^{d-1}_\omega} \vert v \cdot \omega \vert \big[ \widetilde{\eta}(t,x,v') - \widetilde{\eta}(t,x,v) \big] \dd \omega}$. We denote by $\xi$ the quantity:% the defect of $\widetilde{\varphi}$ to solve the equation $\partial_t \varphi + v\cdot\nabla_x \varphi + \mathcal{L}^*\big[\varphi\big] = 0$, that is, we define:
\begin{align}
\label{EQUAT_PDE_Test_Funct}
\partial_t \widetilde{\eta}(t,x,v) + v\cdot\nabla_x \widetilde{\eta}(t,x,v) + \mathcal{L}^*\big[\widetilde{\eta}\big](t,x,v) = \xi(t,x,v).
\end{align}
We prove now an integral representation formula for $\widetilde{\eta}$ in terms of solutions of the equation $\partial_t \phi + v\cdot\nabla_x \phi + \mathcal{L}^*\big[\phi\big] = 0$. By assumption, $\widetilde{\eta}$ is compactly supported, so let $t_0 > 0$ be a real number such that $\text{supp}\, (\widetilde{\eta}) \subset [0,t_0] \times \mathbb{R}^d \times \mathbb{R}^d$. We consider the function $\widetilde{\theta}$ defined as $\widetilde{\theta}(t,x,v) = -\int_t^{t_0} \phi(s,t) \dd s$, where, for any $0 \leq s \leq t_0$, $\phi(s,\cdot)$ solves the Cauchy problem:
\begin{align}
\left\{
\begin{array}{ccc}
\partial_t \phi(s,t) + v\cdot\nabla_x \phi(s,t) + \mathcal{L}^*\big[\phi(s,t)\big] &=& 0,\\
\phi(s,s) &=& \xi(s).
\end{array}
\right.
\end{align}
By definition, we have $\widetilde{\theta}(t_0,\cdot,\cdot) = 0$ for any $(x,v) \in \mathbb{R}^d \times \mathbb{R}^d$. In addition, by construction we have also:
\begin{align}
\partial_t \widetilde{\theta} = \phi(t,t) - \int_t^{t_0} \partial_t \phi(s,t) \dd s &= \xi(t) + \int_t^{t_0} v \cdot \nabla_x \phi(s,t) \dd s + \int_t^{t_0} \mathcal{L}^* \big[ \phi(s,t) \big] \dd s \nonumber\\
&= \xi(t) - v \cdot \nabla_x \widetilde{\theta}(t) - \mathcal{L}^*\big[ \widetilde{\theta}(t) \big],
\end{align}
so that $\widetilde{\theta}$ solves also the equation \eqref{EQUAT_PDE_Test_Funct}. We deduce then that $\widetilde{\eta} = \widetilde{\theta}$. We can now rewrite $\Delta(\widetilde{\eta})$ as follows:
\begin{align}
\Delta(\widetilde{\eta}) &= \int_{\mathbb{R}^d_x} \hspace{-1mm} \int_{\mathbb{R}^d_v} \hspace{-1.5mm} \widetilde{\eta}(0,x,v) f_0(\dd x,\dd v) + \int_0^{+\infty} \hspace{-2.5mm} \int_{\mathbb{R}^d_x} \hspace{-1mm} \int_{\mathbb{R}^d_v} \hspace{-1.5mm} \xi(t,x,v) g(t,\dd x,\dd v) \dd t \nonumber\\
&= \int_{\mathbb{R}^d_x} \hspace{-1mm} \int_{\mathbb{R}^d_v} \hspace{-1.5mm} \widetilde{\eta}(0,x,v) f_0(\dd x,\dd v) + \int_0^{t_0} \hspace{-2.5mm} \int_{\mathbb{R}^d_x} \hspace{-1mm} \int_{\mathbb{R}^d_v} \hspace{-1.5mm} \phi(t,t,x,v) g(t,\dd x,\dd v) \dd t \nonumber\\
&= \int_{\mathbb{R}^d_x} \hspace{-1mm} \int_{\mathbb{R}^d_v} \hspace{-1.5mm} \widetilde{\eta}(0,x,v) f_0(\dd x,\dd v) + \int_0^{t_0} \hspace{-2.5mm} \int_{\mathbb{R}^d_x} \hspace{-1mm} \int_{\mathbb{R}^d_v} \hspace{-1.5mm} \phi(t,0,x,v) f_0(\dd x,\dd v) \dd t,
\end{align}
where in the last line we used that $\tau \mapsto \phi(s,t-\tau)$ solves the adjoint equation \eqref{EQUATLineaBoltzSphDuFormeAdjoi} for any $s$, and we applied the definition of the function $g$. We find in the end:
\begin{align}
\Delta(\widetilde{\eta}) &= \int_{\mathbb{R}^d_x} \hspace{-1mm} \int_{\mathbb{R}^d_v} \hspace{-1.5mm} \widetilde{\eta}(0,x,v) f_0(\dd x,\dd v) - \int_{\mathbb{R}^d_x} \hspace{-1mm} \int_{\mathbb{R}^d_v} \hspace{-1.5mm} \Big( - \int_0^{t_0} \phi(t,0) \dd t \Big) f_0(\dd x,\dd v) = 0,
\end{align}
so that $g$ is indeed a weak solution in the sense of Definition \ref{DEFINSolutFaibleEquatBoltzLineaInela}, which concludes the proof.
\end{proof}
\noindent
We complete the results concerning the weak solutions of \eqref{EQUATLineaBoltzSphDuFormeForte} by proving the uniqueness of such weak solutions.

\begin{propo}[Uniqueness of weak solutions to \eqref{EQUATLineaBoltzSphDuFormeForte}]
\label{PROPOUniquWeak_Solut}
Let $f_0 \in \mathcal{P}\big( \mathbb{R}^d \times \mathbb{R}^d \big)$ be a probability measure, which is also a non-negative Radon measure. 
Then, there exists at most one weak solution to \eqref{EQUATLineaBoltzSphDuFormeForte} in the sense of Definition \ref{DEFINSolutFaibleEquatBoltzLineaInela}, with initial datum $f_0$.
\end{propo}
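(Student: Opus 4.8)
The plan is to establish uniqueness by a duality argument that runs, in reverse, the construction used in the proof of Proposition \ref{PROPOExistWeak_Solut}. Suppose $f_1,f_2 \in \mathcal{C}\big([0,+\infty[,\mathcal{M}_{+,1}(\mathbb{R}^d\times\mathbb{R}^d)\big)$ are two weak solutions with the same initial datum $f_0$, and set $h = f_1 - f_2$, a continuous family of finite signed Radon measures with $h(0) = 0$. Subtracting the two weak formulations of Definition \ref{DEFINSolutFaibleEquatBoltzLineaInela}, the contributions of $f_0$ cancel and we are left with
\begin{align}
\int_0^{+\infty} \hspace{-2.5mm} \int_{\mathbb{R}^d_x} \hspace{-1mm} \int_{\mathbb{R}^d_v} \Big( \partial_t \widetilde{\eta} + v\cdot\nabla_x \widetilde{\eta} + \mathcal{L}^*\big[\widetilde{\eta}\big] \Big)(t,x,v)\, h(t,\dd x,\dd v)\,\dd t = 0
\end{align}
for every $\widetilde{\eta} \in \mathcal{C}^\infty_c([0,+\infty[\times\mathbb{R}^d\times\mathbb{R}^d)$, with $\mathcal{L}^*[\widetilde{\eta}](t,x,v) = \int_{\mathbb{S}^{d-1}_\omega} |v\cdot\omega|\big[\widetilde{\eta}(t,x,v') - \widetilde{\eta}(t,x,v)\big]\dd\omega$. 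It therefore suffices to show that $\int_0^{+\infty}\int\xi(t,x,v)\, h(t,\dd x,\dd v)\,\dd t = 0$ for every $\xi \in \mathcal{C}^\infty_c$: applying this with $\xi(t,x,v) = \chi(t)\zeta(x,v)$, $\chi \in \mathcal{C}^\infty_c([0,+\infty[)$, $\zeta \in \mathcal{C}^\infty_c(\mathbb{R}^d\times\mathbb{R}^d)$, and using the continuity of $t\mapsto\langle h(t),\zeta\rangle$, one gets $\langle h(t),\zeta\rangle = 0$ for all $t$, hence, letting $\zeta$ vary and recalling that each $h(t)$ is a finite Radon measure, $h \equiv 0$.

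To realize this, fix $\xi \in \mathcal{C}^\infty_c$ and choose $t_0 > 0$ strictly larger than the supremum of the time-support of $\xi$. I would then reuse the construction already performed in the proof of Proposition \ref{PROPOExistWeak_Solut}: set $\widetilde{\eta}(t,x,v) = -\int_t^{t_0}\phi(s,t,x,v)\,\dd s$, where for each $s \in [0,t_0]$ the family $t\mapsto\phi(s,t,\cdot,\cdot)$ solves $\partial_t\phi(s,t) + v\cdot\nabla_x\phi(s,t) + \mathcal{L}^*[\phi(s,t)] = 0$ with $\phi(s,s,\cdot,\cdot) = \xi(s,\cdot,\cdot)$, this family being furnished, through the time reversal $\tau\mapsto s-\tau$, by the series representation of the adjoint equation \eqref{EQUATLineaBoltzSphDuFormeAdjoi} of Proposition \ref{PROPORepreSerieEquatAdjoi}. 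The same computation as in the proof of Proposition \ref{PROPOExistWeak_Solut} gives $\widetilde{\eta}(t_0,\cdot,\cdot) = 0$ and $\partial_t\widetilde{\eta} + v\cdot\nabla_x\widetilde{\eta} + \mathcal{L}^*[\widetilde{\eta}] = \xi$. By Proposition \ref{PROPORepreSerieEquatAdjoi}, $\phi(s,\cdot)$ is $\mathcal{C}^1$ and, on the bounded interval $[0,t_0]$, compactly supported, so $\widetilde{\eta}$ is $\mathcal{C}^1$ with compact support; a routine mollification argument extends the weak formulation from $\mathcal{C}^\infty_c$ to such test functions (the relevant integrals involve only $\widetilde{\eta}$, $\partial_t\widetilde{\eta}$, $\nabla_x\widetilde{\eta}$ and $\mathcal{L}^*[\widetilde{\eta}]$, all of which converge uniformly on compacta under mollification, and they are integrated against the finite measures $h(t)$ over the bounded interval $[0,t_0]$). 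Inserting $\widetilde{\eta}$ into the identity above yields $\int_0^{+\infty}\int\xi\, h(t)\,\dd t = 0$, and since $\xi$ was arbitrary, the first paragraph gives $f_1 = f_2$.

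The step to watch is the admissibility of the test function $\widetilde{\eta}$: one needs that the dual Cauchy problem for $\phi(s,\cdot)$ is well-posed and that solutions of the adjoint equation with compactly supported data remain, on each finite interval $[0,t_0]$, $\mathcal{C}^1$ and compactly supported — exactly the content of Proposition \ref{PROPORepreSerieEquatAdjoi}, which must be invoked carefully because of the velocity growth $|\psi(t,x,v)| \leq \vertii{\varphi}_\infty e^{C_d|v|t}$ of \eqref{EQUATBoundGrowtWeak_Solut}. If one wished to avoid relying on the compact-support statement, the alternative is to test with the velocity-truncated function $\widetilde{\eta}\,\chi_N$ ($\chi_N$ a smooth cutoff of $\{|v|\leq N\}$) and to prove that the resulting commutator error — an integral against $h(t) = f_1(t)-f_2(t)$ of $\int_{\mathbb{S}^{d-1}}|v\cdot\omega|\,\widetilde{\eta}(t,x,v')[\chi_N(v')-\chi_N(v)]\,\dd\omega$, which is supported in an annulus $\{N\lesssim|v|\lesssim N/r\}$ — tends to $0$ as $N\to+\infty$, using the finite first moment of $h(t)$ together with the explicit control on $\widetilde{\eta}$ from the series representation. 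Everything else is a standard duality/density argument, so I expect no further essential obstacle once the test function $\widetilde{\eta}$ from the proof of Proposition \ref{PROPOExistWeak_Solut} is available.
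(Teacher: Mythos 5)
Your proof is correct and rests on the same technical input as the paper's argument, namely the solvability of the dual Cauchy problem for the operator $\partial_t + v\cdot\nabla_x + \mathcal{L}^*$ furnished by the series representation of Proposition \ref{PROPORepreSerieEquatAdjoi}, inserted into the weak formulation. The organization differs, in a way that is actually cleaner. The paper first sets $f_0 = 0$ by linearity, passes to the limit in a time-cutoff $\chi_n \to \mathds{1}_{[0,t_0]}$ to produce a boundary term $\int \widetilde{\eta}(t_0)\,f(t_0,\dd x,\dd v)$ (the representation formula \eqref{EQUATRepreFormuWeak_Solut}), and then tests against dual solutions $\widetilde{\eta}_m$ which solve the homogeneous equation $\partial_t\widetilde{\eta}_m + v\cdot\nabla_x\widetilde{\eta}_m + \mathcal{L}^*[\widetilde{\eta}_m] = 0$ and approximate $\mathds{1}_A$ at time $t_0$, so that $f(t_0,A) = 0$. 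You instead take $h = f_1 - f_2$, choose a source $\xi \in \mathcal{C}^\infty_c$ supported strictly before $t_0$, and build the Duhamel solution $\widetilde{\eta}(t) = -\int_t^{t_0}\phi(s,t)\,\dd s$ of $\partial_t\widetilde{\eta} + v\cdot\nabla_x\widetilde{\eta} + \mathcal{L}^*[\widetilde{\eta}] = \xi$ with $\widetilde{\eta}(t_0) = 0$; because the terminal value vanishes, the time cutoff $\chi_n$ and the two-step $n\to\infty$, $m\to\infty$ limit are unnecessary, and you conclude directly that $\int_0^\infty\langle h(t),\xi(t,\cdot)\rangle\,\dd t = 0$ for all $\xi$, hence $h \equiv 0$ by continuity in $t$ and density. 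The one issue you flag — that Proposition \ref{PROPORepreSerieEquatAdjoi} only guarantees $\widetilde{\eta} \in \mathcal{C}^1$ with compact support on $[0,t_0]$, so a mollification step is needed before inserting it into the weak formulation stated for $\mathcal{C}^\infty_c$ test functions — is genuine but equally present (and equally elided) in the paper's own proof, since the $\widetilde{\eta}_m$ there are also built from the series solution; and your observation that only $\widetilde{\eta}$, $\partial_t\widetilde{\eta}$, $\nabla_x\widetilde{\eta}$ and $\mathcal{L}^*[\widetilde{\eta}]$ enter, all converging uniformly on compacta under mollification, closes it. The velocity-cutoff alternative you sketch at the end is not needed once the compact-support clause of Proposition \ref{PROPORepreSerieEquatAdjoi} is invoked.
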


\begin{proof}
By linearity, it is enough to prove that only the zero function is a weak solution with zero initial datum. We consider then a weak solution $f$ to \eqref{EQUATLineaBoltzSphDuFormeForte}, with zero initial datum.\\
Let $t_0 > 0$ be any positive number. We consider a sequence $\big(\chi_n\big)_n$ of smooth functions $\chi_n:\mathbb{R}_+ \rightarrow [0,1]$ that converges pointwise towards $\mathds{1}_{[0,t_0]}$, supported on $[0,t_0]$, and such that $\chi_n(t) = 1$ for any $t \leq t_0 - 1/n$. By assumption, for any compactly supported test function $\widetilde{\eta}$, we have:
\begin{align}
&\int_0^{t_0} \hspace{-2.5mm} \int_{\mathbb{R}^d_x} \hspace{-1mm} \int_{\mathbb{R}^d_v} \hspace{-1.5mm} \partial_t \big( \chi_n(t) \widetilde{\eta}(t,x,v) \big) f(t,\dd x,\dd v) \dd t \nonumber\\
&\hspace{5mm}= - \int_0^{t_0} \hspace{-2.5mm} \chi_n(t) \hspace{-1mm} \int_{\mathbb{R}^d_x} \hspace{-1mm} \int_{\mathbb{R}^d_v} \hspace{-1.5mm} v\cdot\nabla_x \widetilde{\eta}(t,x,v) f(t,\dd x,\dd v) \dd t - \int_0^{t_0} \hspace{-2.5mm} \chi_n(t) \hspace{-1mm} \int_{\mathbb{R}^d_x} \hspace{-1mm} \int_{\mathbb{R}^d_v} \mathcal{L}^*\big[\widetilde{\eta}\big](t,x,v) f(t,\dd x,\dd v) \dd t.
\end{align}
We have:
\begin{align}
\int_0^{t_0} \hspace{-2.5mm} \int_{\mathbb{R}^d_x} \hspace{-1mm} \int_{\mathbb{R}^d_v} \hspace{-1.5mm} \partial_t \chi_n(t) \widetilde{\eta}(t,x,v) f(t,\dd x,\dd v) \dd t &= \Big( \int_{t_0-1/n}^{t_0} \hspace{-6mm} \partial_t \chi_n(t) \dd t \Big) \Big( \int_{\mathbb{R}^d_x} \hspace{-1mm} \int_{\mathbb{R}^d_v} \hspace{-1mm}  \widetilde{\eta}(t_0,x,v) f(t_0,\dd x,\dd v)  \Big) + o(1) \nonumber\\
&= - \int_{\mathbb{R}^d_x} \hspace{-1mm} \int_{\mathbb{R}^d_v} \hspace{-1mm}  \widetilde{\eta}(t_0,x,v) f(t_0,\dd x,\dd v) + o(1)
\end{align}
as $n \rightarrow +\infty$, using in particular that $\widetilde{\eta}$ and $f$ are both continuous. Therefore, by dominated convergence theorem, we have:
\begin{align}
\label{EQUATRepreFormuWeak_Solut}
&- \int_{\mathbb{R}^d_x} \hspace{-1mm} \int_{\mathbb{R}^d_v} \hspace{-1mm}  \widetilde{\eta}(t_0,x,v) f(t_0,\dd x,\dd v) + \int_0^{t_0} \hspace{-2.5mm} \int_{\mathbb{R}^d_x} \hspace{-1mm} \int_{\mathbb{R}^d_v} \hspace{-1.5mm} \partial_t  \widetilde{\eta}(t,x,v) f(t,\dd x,\dd v) \dd t \nonumber\\
&\hspace{5mm}= - \int_0^{t_0} \hspace{-2.5mm} \int_{\mathbb{R}^d_x} \hspace{-1mm} \int_{\mathbb{R}^d_v} \hspace{-1.5mm} v\cdot\nabla_x \widetilde{\eta}(t,x,v) f(t,\dd x,\dd v) \dd t - \int_0^{t_0} \hspace{-2.5mm} \int_{\mathbb{R}^d_x} \hspace{-1mm} \int_{\mathbb{R}^d_v} \mathcal{L}^*\big[\widetilde{\eta}\big](t,x,v) f(t,\dd x,\dd v) \dd t.
\end{align}
For any Borel set $A \subset \mathbb{R}^d\times\mathbb{R}^d$, we consider now a sequence $\big(\widetilde{\eta}_m\big)_m$ of smooth functions  smooth such that $\widetilde{\eta}_m(t_0,\cdot,\cdot)$ approximates $\mathds{1}_A$ and such that $\widetilde{\eta}_m(-t)$ solves the adjoint equation \eqref{EQUATLineaBoltzSphDuFormeAdjoi}. According to Proposition \ref{PROPORepreSerieEquatAdjoi}, $\widetilde{\eta}_m\chi_n$ is indeed a smooth, compactly supported function, so that \eqref{EQUATRepreFormuWeak_Solut} holds for $\widetilde{\eta}_m$. In the limit $m \rightarrow +\infty$, we find:
\begin{align}
f(t_0,A) = \int_{\mathbb{R}^d_x} \hspace{-1mm} \int_{\mathbb{R}^d_v} \mathds{1}_A(x,v) f(t_0,\dd x, \dd v) = 0,
\end{align}
which shows that $f(t_0)$ is the zero measure, concluding the proof of Proposition \ref{PROPOUniquWeak_Solut}.
\end{proof}

\subsection{Some considerations on the kinetic energy}

To complete this section on the inelastic linear Boltzmann equation, we conclude with a discussion on the cooling of the tagged particle as it evolves and collides with the scatterers. More precisely, we provide here an upper bound on the decay of the kinetic energy, following classical arguments in inelastic kinetic theory.

\begin{propo}
\label{PROPOBorneSuperEnergCinet}
Let $f$ be a non-negative and regular solution of \eqref{EQUATLineaBoltzSphDuFormeForte} with initial datum $f_0$, such that its mass is equal to $1$, and such that its second and third moments are always finite. Then we have:
\begin{align}
\int_x\int_v \vert v \vert^2 f(t,x,v) \dd v \dd x \leq \left( \left(\int_x\int_v \vert v \vert^2 f_0(t,x,v) \dd v \dd x\right)^{-1/2} + \frac{(1-r^2)}{2} \left[ \int_\omega \vert e_1\cdot \omega \vert^3 \dd \omega \right] t \right)^{-2}.
\end{align}
\end{propo}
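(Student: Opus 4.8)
The plan is to test the strong form \eqref{EQUATLineaBoltzSphDuFormeForte} against the weight $\varphi(v) = \vert v \vert^2$ and derive a closed differential inequality for the kinetic energy
\begin{align*}
E(t) := \int_{\mathbb{R}^d_x}\int_{\mathbb{R}^d_v} \vert v \vert^2\, f(t,x,v)\dd v\dd x.
\end{align*}
Multiplying \eqref{EQUATLineaBoltzSphDuFormeForte} by $\vert v\vert^2$ and integrating over $(x,v)$, the transport contribution $\int_{\mathbb{R}^d_x}\int_{\mathbb{R}^d_v} (v\cdot\nabla_x f)\,\vert v\vert^2\dd v\dd x$ vanishes after an integration by parts in $x$ (using the decay of $f$ and of its first moments contained in the regularity assumption), while the collision term, rewritten in weak form exactly as in \eqref{EQUATLineaBoltzGenerFormeFaibl} with $b = \vert v\cdot\omega\vert$, yields
\begin{align*}
\frac{\dd}{\dd t}E(t) = \int_{\mathbb{R}^d_x}\int_{\mathbb{R}^d_v}\int_{\mathbb{S}^{d-1}_\omega} \vert v\cdot\omega\vert\, f(t,x,v)\,\big(\vert v'\vert^2 - \vert v\vert^2\big)\dd\omega\dd v\dd x,
\end{align*}
with $v' = v - (1+r)(v\cdot\omega)\omega$ as in \eqref{EQUATScattering}.

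Next I would compute the energy loss per collision explicitly. Expanding $\vert v'\vert^2 = \vert v\vert^2 - 2(1+r)(v\cdot\omega)^2 + (1+r)^2(v\cdot\omega)^2$, one finds
\begin{align*}
\vert v'\vert^2 - \vert v\vert^2 = -(1-r^2)\,(v\cdot\omega)^2 \leq 0,
\end{align*}
so that $\frac{\dd}{\dd t}E(t) = -(1-r^2)\int_{\mathbb{R}^d_x}\int_{\mathbb{R}^d_v}\big(\int_{\mathbb{S}^{d-1}_\omega}\vert v\cdot\omega\vert^3\dd\omega\big) f(t,x,v)\dd v\dd x$. By the rotational-invariance computation already used to define $C_d$ in \eqref{EQUATDefin_C_d__bis_}--\eqref{EQUATDefin_C_d_} (together with the homogeneity of degree $3$), we have $\int_{\mathbb{S}^{d-1}_\omega}\vert v\cdot\omega\vert^3\dd\omega = \vert v\vert^3 \int_{\mathbb{S}^{d-1}_\omega}\vert e_1\cdot\omega\vert^3\dd\omega$; writing $K := \int_{\mathbb{S}^{d-1}_\omega}\vert e_1\cdot\omega\vert^3\dd\omega > 0$, this gives
\begin{align*}
\frac{\dd}{\dd t}E(t) = -(1-r^2)\,K\,\int_{\mathbb{R}^d_x}\int_{\mathbb{R}^d_v} \vert v\vert^3\, f(t,x,v)\dd v\dd x.
\end{align*}

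To close the estimate I would use that $f(t,\cdot,\cdot)$ is a probability measure (mass $1$): by Jensen's inequality applied to the convex function $s\mapsto s^{3/2}$ on $[0,+\infty[$, with observable $\vert v\vert^2$ against $f(t,x,v)\dd v\dd x$,
\begin{align*}
\int_{\mathbb{R}^d_x}\int_{\mathbb{R}^d_v}\vert v\vert^3 f(t,x,v)\dd v\dd x = \int_{\mathbb{R}^d_x}\int_{\mathbb{R}^d_v}\big(\vert v\vert^2\big)^{3/2} f \dd v\dd x \geq E(t)^{3/2},
\end{align*}
hence $E'(t) \leq -(1-r^2)\,K\, E(t)^{3/2}$. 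If $E(0) = 0$ both sides of the claimed inequality vanish; otherwise $E$ stays strictly positive on $[0,+\infty[$, and dividing by $E(t)^{3/2}$ gives $\frac{\dd}{\dd t}\big(E(t)^{-1/2}\big) = -\tfrac12 E(t)^{-3/2}E'(t) \geq \tfrac{(1-r^2)K}{2}$. Integrating over $[0,t]$ yields $E(t)^{-1/2} \geq E(0)^{-1/2} + \tfrac{(1-r^2)K}{2}\,t$, and raising to the power $-2$ produces exactly the announced bound.

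The argument is classical in inelastic kinetic theory, and the computation itself is elementary; the only genuine care needed is in justifying the formal steps — differentiation under the integral in $t$, the vanishing of the transport term, and the legitimacy of testing against the unbounded weights $\vert v\vert^2$ and $\vert v\vert^3$ — all of which are granted by the standing hypothesis that $f$ is a non-negative regular solution whose second and third moments remain finite for all times. I expect this justification, i.e.\ turning the a priori estimate from a formal into a rigorous one, to be the main (albeit minor) obstacle.
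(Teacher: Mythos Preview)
Your proof is correct and follows essentially the same route as the paper's: compute $\frac{\dd}{\dd t}E(t) = -(1-r^2)\big(\int_\omega |e_1\cdot\omega|^3\dd\omega\big)\int |v|^3 f$, then close via the moment inequality $\int |v|^3 f \geq E(t)^{3/2}$ and integrate the resulting differential inequality. The paper phrases the moment comparison as H\"older with $p=3/2$ rather than Jensen for $s\mapsto s^{3/2}$, but this is of course the same inequality.
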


\begin{proof}
The result is obtained as a consequence of a Gr\"onwall-type estimate. We have:
\begin{align}
\label{EQUATDerivEnergCinet}
\frac{\dd}{\dd t} \int_x\int_v \vert v \vert^2 f(t,x,v) \dd v \dd x &= \left( \int_\omega \vert e_1\cdot\omega \vert^3 \dd \omega \right) (r^2-1) \int_x\int_v \vert v \vert^3 f(t,x,v) \dd v \dd x.
\end{align}
Applying now the H\"older inequality, and keeping in mind that the mass of $f$ is constantly equal to $1$:
\begin{align}
\int_x\int_v \vert v \vert^2 f(v) \dd v\dd x \leq \left( \int_x\int_v \vert v \vert^{2p} f(v) \dd v\right)^{1/p} \left( \int_x\int_v f(v) \dd v \dd x \right)^{1/q} = \left( \int_x\int_v \vert v \vert^{2p} f(v) \dd v\right)^{1/p},
\end{align}
for any $p,q \in [1,+\infty]$ that are H\"older conjugates. In particular, taking $p = 3/2$, %we obtain:
%\begin{align}
%\frac{\dd}{\dd t} \int_x\int_v \vert v \vert^2 f(t,x,v) \dd v \dd x \leq \left( \int_\omega \left\vert e_1 \cdot \omega \right\vert \dd \omega \right) (r^2-1) \left( \int_x\int_v \vert v \vert^2 f(t,x,v) \dd v \dd x \right)^{3/2}.
%\end{align}
the result follows by direct integration of the inequality.
\end{proof}

\begin{remar}
Let us observe that \eqref{EQUATDerivEnergCinet} enables to recover the conservation of the kinetic energy in the elastic case, that is, when $r = 1$. In the general case, the upper bound on the decay of the kinetic energy agrees with the celebrated Haff's law for granular gases of hard spheres with fixed restitution coefficient \cite{BrPo004}.
\end{remar}

\section{Rigorous derivation of the linear inelastic Boltzmann equation, via weak convergence}

\subsection{Technical preliminaries}

To perform the derivation of the weak form, we will rely on the following technical results. The purpose of the following lemma is to deduce a condition on the angular parameter $\omega$, providing that the pre-collisional velocity $v$ is given, and that the direction of the post-collisional velocity $v' = v-(1+r)\big(v\cdot\omega)\omega$ is given, up to a small error.

\begin{lemma}[Almost colinearity after scattering]
\label{LEMMEColinearitScatt}
Let $r \in \ ]0,1[$. Then, at $v,p \in \mathbb{R}^d$ fixed, the post-collisional velocity $v' = \kappa_\omega(v)$ defined in \eqref{EQUATScattering} is almost colinear to $p$, that is:
\begin{align}
\label{EQUATLemmeColinHypot}
\left[ 1 - \left\vert \frac{v'}{\vert v' \vert} \cdot p \right\vert \right] \leq \delta
\end{align}
only if the angular parameter $\omega$ belongs to a measurable subset $\mathcal{P}_{\text{colin.}}(v,p,\delta) \subset \mathbb{S}^{d-1}$ of the unit sphere that has a Lebesgue measure smaller than:
\begin{align}
\label{EQUATLemmeColinConclMesur}
\big\vert \mathcal{P}_{\text{colin.}}(v,p,\delta) \big\vert \leq C(d,r) \delta^{1/2},
\end{align}
where $C(d,r)$ is a constant that depends only on the dimension $d$ and the restitution coefficient $r$. In particular, the estimate \eqref{EQUATLemmeColinConclMesur} holds uniformly in $v,p \in \mathbb{R}^d$.\\
More precisely, there exist two positive constants $\delta_0 \in \mathbb{R}_+^*$ and $C(d,r) \in \mathbb{R}_+^*$ such that, for any vectors $v,\omega, p \in \mathbb{R}^d$ such that $v \neq 0$ and $\omega, p \in \mathbb{S}^{d-1}$, and any positive number $0 < \delta \leq \delta_0$, such that if \eqref{EQUATLemmeColinHypot}
holds true, then, $\omega$ belongs to a subset subset $\mathcal{P}_{\text{colin.}}(v,p,\delta) \subset \mathbb{S}^{d-1}$ which satisfies the condition \eqref{EQUATLemmeColinConclMesur}.
\end{lemma}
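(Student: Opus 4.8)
The plan is to change variables on $\mathbb{S}^{d-1}$ from $\omega$ to the pair ``deflection angle, azimuth'' of the collision, for which the geometry of \eqref{EQUATScattering} becomes transparent. Fix $v \neq 0$ (if $v = 0$ the statement is vacuous) and set $e = v/\vert v \vert$; write $\omega = \cos\theta\, e + \sin\theta\, u$ with colatitude $\theta \in [0,\pi]$ and $u \in \mathbb{S}^{d-2} \subset e^\perp$. From $v' = v - (1+r)(v\cdot\omega)\omega$ one computes $v' = \vert v \vert\big[(1-(1+r)\cos^2\theta)\,e - (1+r)\cos\theta\sin\theta\,u\big]$, hence $\vert v' \vert = \vert v \vert\sqrt{1-(1-r^2)\cos^2\theta}$ and
\[
\frac{v'}{\vert v' \vert} = \cos\chi\,e + \sin\chi\,u', \qquad \cos\chi = g(\cos\theta) := \frac{1-(1+r)\cos^2\theta}{\sqrt{1-(1-r^2)\cos^2\theta}}, \qquad u' = -\operatorname{sgn}(\cos\theta)\,u,
\]
where $\chi = \alpha(\theta) \in [0,\pi]$ is the angle between $v$ and $v'$. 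The map $\omega \mapsto (\chi,u')$ is a two–to–one change of variables from $\mathbb{S}^{d-1}\setminus\{\pm e\}$ onto $(0,\pi)\times\mathbb{S}^{d-2}$, under which the surface measure becomes $\dd\omega = 2\,b(\chi)\,\dd\chi\,\dd u'$ with $b(\chi) = \sin^{d-2}\theta(\chi)\big/\vert\alpha'(\theta(\chi))\vert$. Writing $w(\chi,u') = \cos\chi\,e + \sin\chi\,u'$, the quantity to estimate is therefore
\[
\big\vert \mathcal{P}_{\text{colin.}}(v,p,\delta) \big\vert = 2\int_{(0,\pi)\times\mathbb{S}^{d-2}} \mathds{1}\!\left[\, 1 - \vert w(\chi,u')\cdot p \vert \leq \delta \,\right] b(\chi)\, \dd\chi\, \dd u'.
\]

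The first key point is that $b$ is bounded by a constant $B(d,r)$: since $\sin^{d-2}\theta \leq 1$, it suffices to prove $\vert\alpha'(\theta)\vert \geq c_1(r) > 0$ on $(0,\pi)$, equivalently that the differential cross–section of the inelastic hard-sphere scattering stays bounded. This is an elementary one–variable study of $g$: with $x = \cos^2\theta$ one checks $\frac{\dd}{\dd x}g = \frac{(1+r)^2}{2}\big((1-r)x-1\big)\big(1-(1-r^2)x\big)^{-3/2} < 0$ for $x \in [0,1]$, so $g$ decreases from $g(0) = 1$ to $g(1) = -1$, and $\alpha = \arccos\!\big(g(\vert\cos\theta\vert)\big)$ is a smooth tent-shaped bijection onto $[0,\pi]$ with $\alpha(\pi/2) = 0$ and $\alpha(0) = \alpha(\pi) = \pi$. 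On the open interval $\vert\alpha'(\theta)\vert = \sin\theta\,\vert g'(\cos\theta)\vert\big/\sqrt{1-g(\cos\theta)^2} > 0$, and short expansions of $g$ near $x = 0$ and $x = 1$ give $\vert\alpha'\vert \to 1+r$ as $\theta \to \pi/2$ and $\vert\alpha'\vert \to (1+r)/r$ as $\theta \to 0^+,\pi^-$; continuity then yields the positive lower bound $c_1(r)$. I expect this calculus input — establishing the uniform positivity of $\vert\alpha'\vert$, including the correct matching of the vanishing factors $\sin\theta$ and $\sqrt{1-g^2}$ at the endpoints — to be the main technical point.

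The second key point is a cancellation in the remaining, purely geometric, integral, which is what produces the $\delta^{1/2}$ (rather than $\delta^{(d-1)/2}$) rate. Let $\rho = \arccos(1-\delta) \in [0,\pi]$; then $\{ w \in \mathbb{S}^{d-1} : 1 - \vert w\cdot p \vert \leq \delta \}$ is the union of the two geodesic caps of radius $\rho$ centered at $p$ and at $-p$. By the $1$–Lipschitz property of the angular distance to $e$, the cap around $p$ is contained in the ``annulus'' $\{ w : \vert \angle(w,e) - \gamma \vert \leq \rho \}$, where $\gamma = \angle(p,e)$, and similarly the cap around $-p$ lies in the annulus centered at $\pi - \gamma$. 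Since $\dd\chi\,\dd u' = \dd w\big/\sin^{d-2}\!\big(\angle(w,e)\big)$, integrating $\mathds{1}$ over such an annulus makes the geometric factor cancel the Jacobian $\sin^{d-2}\chi$ exactly, giving at most $\vert\mathbb{S}^{d-2}\vert\cdot\big\vert [\gamma-\rho,\gamma+\rho]\cap[0,\pi] \big\vert \leq 2\rho\,\vert\mathbb{S}^{d-2}\vert$ for each cap — uniformly in $\gamma$, which also disposes of the degenerate case $p = \pm e$ (taking $\gamma \in \{0,\pi\}$ and $u_p$ arbitrary). Combining with the bound on $b$,
\[
\big\vert \mathcal{P}_{\text{colin.}}(v,p,\delta) \big\vert \leq 2B(d,r)\cdot 4\rho\,\vert\mathbb{S}^{d-2}\vert,
\]
and from $1-\cos\rho \geq 2\rho^2/\pi^2$ we get $\rho \leq \pi\sqrt{\delta}$ for all $\delta \leq \delta_0$ (any $\delta_0 \leq 2$). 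This yields \eqref{EQUATLemmeColinConclMesur} with $C(d,r) = 8\pi\,\vert\mathbb{S}^{d-2}\vert/c_1(r)$, uniformly in $v \neq 0$ and $p \in \mathbb{S}^{d-1}$, and $\mathcal{P}_{\text{colin.}}(v,p,\delta)$ may be taken to be the set $\{\omega : \eqref{EQUATLemmeColinHypot}\text{ holds}\}$ itself. For $d = 2$ the argument is identical (and simpler) upon reading $\mathbb{S}^{d-2} = \mathbb{S}^0$ with counting measure.
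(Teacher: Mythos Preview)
Your proof is correct and takes a genuinely different route from the paper's.

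The paper proceeds by first representing $p$ itself as a post-collisional direction, $p = \kappa_\sigma(v)/\vert\kappa_\sigma(v)\vert$ for some $\sigma\in\mathbb{S}^{d-1}$, and then studying the scalar product $f(\theta;\theta_0,\vert u\vert) = \frac{v'}{\vert v'\vert}\cdot p$ directly as a function of the colatitude $\theta$ of $\omega$ and its out-of-plane component $u$. In the coplanar case $u=0$ this requires computing and factoring $\partial_\theta f$, locating the two global extrema $f=\pm1$, and a Taylor-type lower bound on $1-f$ near each to extract the $\sqrt\delta$; the general case $u\neq 0$ is then handled by a monotonicity argument in $\vert u\vert^2$ combined with several case distinctions on $\theta_0$, plus a separate analysis of the auxiliary function $g(x)=(1-(1+r)x)/\sqrt{1-(1-r^2)x}$. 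Your approach instead pushes the measure forward along the scattering map $\omega\mapsto w=v'/\vert v'\vert$: the single calculus input is that the deflection angle $\chi=\alpha(\theta)$ has $\vert\alpha'\vert$ bounded below (your endpoint expansions $1+r$ and $(1+r)/r$ are correct, and $\alpha'$ does not vanish in the interior), which makes the density $b(\chi)$ bounded; the remaining step is the pure geometry of a spherical cap, where the $\sqrt\delta$ comes transparently from $\rho=\arccos(1-\delta)$. Your argument is markedly shorter and sidesteps both the representation of $p$ via $\sigma$ and the entire non-coplanar case analysis; what the paper's route gives in exchange is the finer structural information recorded in their Proposition~\ref{PROPOPlus_de_Detailssur_f} (exact location and nature of the extrema of $f$), which however is not used elsewhere.
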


\noindent
In the core of the proof of Theorem \ref{THEORDerivationBoltzmann_InelaLinea}, we will also make use of the following result. More precisely, the following result will be used to estimate the measure of the dynamical tube, that we will introduce in due time.

\begin{lemma}[Estimate on the measure of twisted tubes]
\label{LEMMEEstimMesurTube_Dynam}
Let $\varepsilon > 0$ be a positive number, and let $x_1$, $x_2$ and $x_3$ be three vectors of $\mathbb{R}^d$. We consider the following set:
\begin{align}
\mathcal{T}_\varepsilon &= \big( [x_1,x_2 ] \cup [x_2,x_3] \big) + \overline{B(0,\varepsilon)} \nonumber\\
&= \big\{ y \in \mathbb{R}^d\ /\ \exists\, \lambda \in [0,1],\, z \in \overline{B(0,\varepsilon)} \ \text{such that} \ y = \lambda x_1 + (1-\lambda)x_2 + z \nonumber\\
&\hspace{80mm} \text{or}\hspace{2mm} y = \lambda x_2 + (1-\lambda)x_3 + z \big\}.
\end{align}
Then, the Lebesgue measure of $\mathcal{T}_\varepsilon$ is maximal when $x_1$, $x_2$ and $x_3$ are aligned, and in this order, that is, when:
\begin{align}
\frac{x_2-x_1}{\vert x_2 - x_1 \vert} \cdot \frac{x_3 - x_2}{\vert x_3 - x_2 \vert} = 1.
\end{align}
\end{lemma}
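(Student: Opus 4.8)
The plan is to reduce the statement to a comparison of the volume of the pairwise overlap of the two $\varepsilon$-capsules around the segments, exploiting that each capsule volume depends only on the length of its core segment. Write $\ell_1 = \vert x_2 - x_1 \vert$ and $\ell_2 = \vert x_3 - x_2 \vert$, which stay fixed throughout, and set $T_1 = [x_1,x_2] + \overline{B(0,\varepsilon)}$ and $T_2 = [x_2,x_3] + \overline{B(0,\varepsilon)}$, so that $\mathcal{T}_\varepsilon = T_1 \cup T_2$. Since Minkowski addition of the ball $\overline{B(0,\varepsilon)}$ commutes with rigid motions, each $T_i$ is the image under a rigid motion of the $\varepsilon$-neighbourhood of a fixed segment of length $\ell_i$ placed in standard position; as Lebesgue measure is rigid-motion invariant, $\vert T_i \vert$ depends only on $\ell_i$, $\varepsilon$ and $d$. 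Concretely, $T_i$ is a cylinder of radius $\varepsilon$ and height $\ell_i$ capped by two half-balls of radius $\varepsilon$, so $\vert T_i \vert$ is an explicit affine function of $\ell_i$ (with coefficients depending only on $\varepsilon$ and $d$), in particular independent of the relative configuration of $x_1$, $x_2$, $x_3$.

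By inclusion–exclusion, $\vert \mathcal{T}_\varepsilon \vert = \vert T_1 \vert + \vert T_2 \vert - \vert T_1 \cap T_2 \vert$, so, the first two terms being fixed, maximising $\vert \mathcal{T}_\varepsilon \vert$ is equivalent to minimising the overlap $\vert T_1 \cap T_2 \vert$. I would then prove a configuration-independent lower bound for this overlap: since $x_2$ is an endpoint of both segments $[x_1,x_2]$ and $[x_2,x_3]$, one has $\overline{B(x_2,\varepsilon)} \subseteq T_1$ and $\overline{B(x_2,\varepsilon)} \subseteq T_2$, hence $\overline{B(x_2,\varepsilon)} \subseteq T_1 \cap T_2$, and therefore $\vert T_1 \cap T_2 \vert \geq \vert \overline{B(x_2,\varepsilon)} \vert$ for every choice of $x_1$, $x_2$, $x_3$.

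It then remains to check that this bound is attained precisely when the three points are aligned in the order $x_1, x_2, x_3$. Placing $x_2$ at the origin and the common direction along $e_1$, so that $x_1 = -\ell_1 e_1$ and $x_3 = \ell_2 e_1$, the key observation is that for $y = (y_1, y') \in \mathbb{R} \times \mathbb{R}^{d-1}$, if $y_1 \geq 0$ then the point of $[x_1,x_2]$ closest to $y$ is $x_2$ itself, so $\mathrm{dist}(y,[x_1,x_2]) = \vert y - x_2 \vert$; symmetrically, if $y_1 \leq 0$ then $\mathrm{dist}(y,[x_2,x_3]) = \vert y - x_2 \vert$. Consequently any $y \in T_1 \cap T_2$ satisfies $\vert y - x_2 \vert \leq \varepsilon$, i.e. $T_1 \cap T_2 = \overline{B(x_2,\varepsilon)}$ in the aligned configuration, so that $\vert T_1 \cap T_2 \vert = \vert \overline{B(x_2,\varepsilon)} \vert$ realises the minimum. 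Substituting this into the inclusion–exclusion identity yields the asserted maximality of $\vert \mathcal{T}_\varepsilon \vert$.

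I do not anticipate a genuine obstacle here: the only step requiring a little care is the distance computation in the aligned case, namely verifying that the dichotomy $y_1 \geq 0$ / $y_1 \leq 0$ really exhausts $T_1 \cap T_2$ irrespective of whether $\ell_1$, $\ell_2$ are larger or smaller than $\varepsilon$, which is elementary once the core segments are put in standard position. Everything else follows from the rigid-motion invariance of the capsule volumes together with inclusion–exclusion. If a strict maximum were wanted, one could additionally note that whenever the alignment condition fails, $T_1 \cap T_2$ exceeds $\overline{B(x_2,\varepsilon)}$ on a set of positive measure, so the inequality becomes strict.
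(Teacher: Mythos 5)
Your proof is correct, and it takes a genuinely different route from the paper's. You decompose $\mathcal{T}_\varepsilon = T_1 \cup T_2$ with the two \emph{full} capsules $T_i = [x_i,x_{i+1}] + \overline{B(0,\varepsilon)}$, observe that $\vert T_1 \vert + \vert T_2 \vert$ is fixed (rigid-motion invariance, each depending only on $\ell_i$), and reduce to minimising $\vert T_1 \cap T_2 \vert$; you then bound this overlap from below by the one-line inclusion $\overline{B(x_2,\varepsilon)} \subseteq T_1 \cap T_2$, and check by a direct distance computation that equality $T_1 \cap T_2 = \overline{B(x_2,\varepsilon)}$ holds in the aligned configuration. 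The paper instead works with a three-set decomposition $\mathcal{T}_\varepsilon = T_1 \cup B \cup T_2$, where $B = \overline{B(x_2,\varepsilon)}$ and $T_1, T_2$ are the capsules truncated at the hyperplane through $x_2$ orthogonal to each segment (so each $T_i$ is a cylinder plus a single end-cap); it then applies inclusion--exclusion twice and invokes the reflection symmetry $y \mapsto 2x_2 - y$ of $B$ to identify $\vert B \cap T_1^c \cap T_2^c \vert$ with $\vert B \cap T_1 \cap T_2 \vert$, at which point the desired inequality reduces to $\vert B \cap T_1 \cap T_2\vert \leq \vert T_1 \cap T_2 \vert$. The two bounds are numerically the same (since $\vert T_i^{\text{full}} \vert = \vert T_i^{\text{truncated}} \vert + \tfrac{1}{2}\vert B\vert$), but your argument is shorter and more transparent: the key step is a trivial set inclusion rather than a symmetry identity, and you also explicitly verify that the extremal value is attained, which the paper leaves implicit.
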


\noindent
The proofs of Lemmas \ref{LEMMEColinearitScatt} and \ref{LEMMEEstimMesurTube_Dynam} are postponed to the appendix.\\
Now that the technical tools are in place, we can now turn to the main part of the proof of Theorem \ref{THEORDerivationBoltzmann_InelaLinea}.

\begin{figure}[h!]
\centering
    \includegraphics[trim = 0cm 1.5cm 0cm 0cm, width=0.6\linewidth]{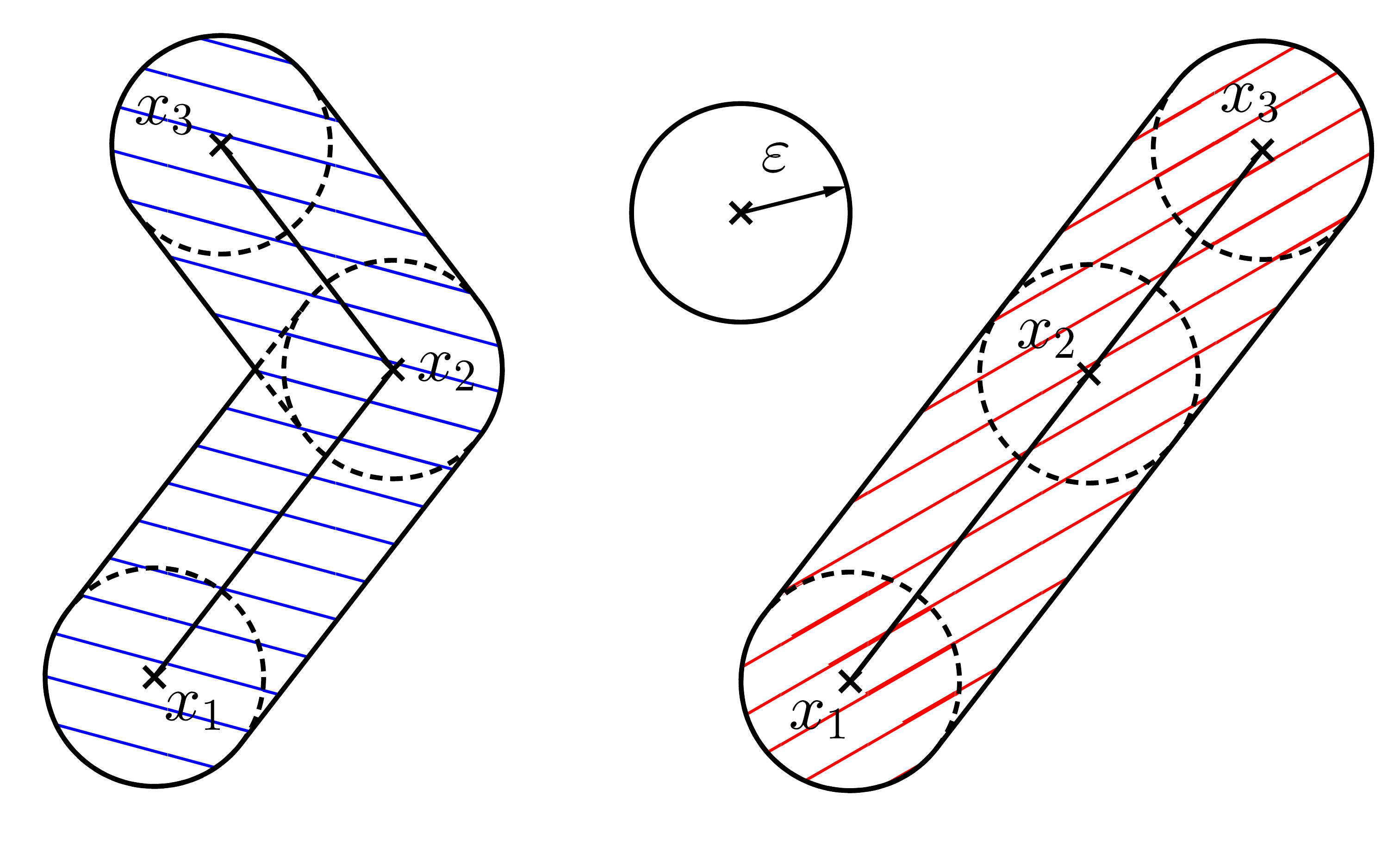}
\caption{Representation of the result of Lemma \ref{LEMMEEstimMesurTube_Dynam}. The surface of the twisted tube (in blue) is smaller than the surface of the straight tube (in red), even if the distances $\vert x_1 - x_2 \vert$ and $\vert x_2 - x_3 \vert$ are the same for the left and the right tubes. The phenomenon is independent from the dimension.}
\label{FIGURTubes}
\end{figure}

\subsection{Rewriting $\varphi_\varepsilon$ as a series}

The first step to rewrite $\varphi_\varepsilon = \mathbb{E}_{\mu_\varepsilon} \big[ \varphi\big( T^t_{c,\varepsilon}(x,v) \big) \big]$ is to ensure that the flow $T^t_{c,\varepsilon}(x,v)$ is well-defined. In order to define globally the flow $T_{c,\varepsilon}^t$ for a given distribution of scatterers $c \in C$, we will assume that $c \in X_0 \, \cap\, \mathcal{P}^c_{\text{patho.}}$, where $X_0$ is the set of the distributions $c = (c_i)_{i \in I}$ such that none of the $c_i$ is centered in the ball $\overline{B(x,\varepsilon)}$ (so that the initial position $x$ of the tagged particle is not inside any of the scatterers), and $\mathcal{P}_{\text{patho.}}$ is defined along Proposition \ref{PROPODefinGlobaDynam} in \eqref{EQUATDefinEnsmbPatho}, and contains all the distributions $c$ for which the flow $T_{c,\varepsilon}^t$ may not be globally defined because of the inelastic collapse.\\
If $c \in X_0^c \cup \mathcal{P}_{\text{patho.}}$, we extend the definition of the flow $T_{c,\varepsilon}$ by introducing $\widetilde{T}_{c,\varepsilon}$ as described in Definition \ref{DEFINGener_Flow}. We can then consider the object:
\begin{align}
\label{EQUATDefinPhiEp_Avec_FlotDefin}
\varphi_\varepsilon = \mathbb{E}_{\mu_\varepsilon} \big[ \varphi\big( \widetilde{T}^t_{c,\varepsilon}(x,v) \big) \big],
\end{align}
which is properly defined and which is a function of $t$, $x$ and $v$. For shortness, we omit the dependency on these variables. Then, we rewrite the quantity $\varphi_\varepsilon = \mathbb{E}_{\mu_\varepsilon} \big[ \varphi\big( \widetilde{T}^t_{c,\varepsilon}(x,v) \big) \big]$ in a way that allows to identify the limiting object towards which $\varphi_\varepsilon$ is converging in the Boltzmann-Grad limit.

\begin{propo}%[$\varphi_\varepsilon$ written in terms of integrals on the positions of the scatterers]
\label{PROPOReecrPhiEpIntegObsta}
Let $\varepsilon > 0$. Let $\varphi$ be a $\mathcal{C}_0(\mathbb{R}^d\times\mathbb{R}^d)$ function. Then, the quantity $\varphi_\varepsilon$, defined as \eqref{EQUATDefinPhiEp_Avec_FlotDefin}, satisfies:
\begin{align}
\varphi_\varepsilon = \varphi_\varepsilon^{(1)} + R_1,
\end{align}
with
\vspace{-7mm}
\begin{align}
\label{EQUATDefinPhi_1Propo}
\varphi_\varepsilon^{(1)} &= e^{- \mu_\varepsilon \vert [x,x+tv] + \overline{B(0,\varepsilon)} \vert} \varphi(x+tv,v) \nonumber\\
&\hspace{5mm}+ \sum_{k=1}^{+\infty} \frac{\mu_\varepsilon^k}{k!} \int_{c_1} \dots \int_{c_k} e^{- \mu_\varepsilon \vert \overline{B(x,\varepsilon)} \, \cup \, \mathcal{T}^t(c^k) \vert} \varphi\big( T_{c,\varepsilon}^t(x,v) \big) \\
&\hspace{40mm} \times \mathds{1}_{X_0}(c^k) \mathds{1}_{\mathcal{P}_\text{patho.}^c}(c^k) \mathds{1}_{\{ c_i \in B_\varepsilon \forall 1 \leq i \leq k \}} \mathds{1}_{\substack{\text{the obstacles}\\ c_j,\, 1\leq j \leq k,\\ \text{are internal}}} \dd c_k \dots \dd c_1, \nonumber
\end{align}
where $\mathcal{T}^t(c^k)$ is the dynamical tube, defined in \eqref{EQUATDynam_Tube} below, and $R_1$ being a remainder term such that:
\begin{align}
\label{EQUATDefinReste_R_1_}
\big\vert R_1 \big\vert \leq \vertii{ \varphi }_\infty \mu_\varepsilon \vert \overline{B(0,\varepsilon)} \vert.
\end{align}
\end{propo}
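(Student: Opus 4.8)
The plan is to expand the expectation $\varphi_\varepsilon = \mathbb{E}_{\mu_\varepsilon}\big[\varphi(\widetilde T^t_{c,\varepsilon}(x,v))\big]$ by conditioning on the number of scatterers that the tagged particle can possibly interact with, following the classical Gallavotti-type decomposition adapted to the Poisson process. First I would identify the region of $\mathbb{R}^d$ that is relevant: a scatterer whose center lies outside a large enough neighbourhood of the trajectory cannot be collided before time $t$; concretely, if the trajectory starting at $(x,v)$ with no collisions is the segment $[x,x+tv]$, then before the first collision only scatterers with centers in $[x,x+tv]+\overline{B(0,\varepsilon)}$ matter, and after a collision the relevant region grows into the dynamical tube $\mathcal T^t(c^k)$ around the (broken) trajectory determined by the already-fixed centers $c_1,\dots,c_k$. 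I would then use the Janossy-measure formula \eqref{EQUATPoissonPro} for the Poisson process: the probability of finding exactly $k$ scatterers in a measurable set $B$ with prescribed infinitesimal positions, and no scatterers in a (disjoint) ``forbidden'' set, factorizes as $e^{-\mu_\varepsilon|B\cup(\text{forbidden})|}\frac{\mu_\varepsilon^k}{k!}\,\dd c_1\cdots\dd c_k$. Summing over $k$ and integrating $\varphi(T^t_{c,\varepsilon}(x,v))$ against this measure produces exactly the series \eqref{EQUATDefinPhi_1Propo}, with the exponential weight $e^{-\mu_\varepsilon|\overline{B(x,\varepsilon)}\cup\mathcal T^t(c^k)|}$ accounting for the empty region (the ball around $x$ must be empty so that $c\in X_0$, and the rest of the dynamical tube must be empty so that exactly these $k$ scatterers are collided), and the indicator $\mathds 1_{\{c_i\in B_\varepsilon\}}$ together with the ``internal obstacles'' indicator enforcing that the $c_j$ are genuinely the collided scatterers in the right order, inside the relevant neighbourhood.

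The remainder $R_1$ is then defined as $\varphi_\varepsilon-\varphi_\varepsilon^{(1)}$, and collects precisely the configurations that were discarded when passing from $\widetilde T$ to $T$ and when imposing $c\in X_0$: namely (i) distributions $c\notin X_0$, i.e. for which some scatterer center lies in $\overline{B(x,\varepsilon)}$, and (ii) pathological distributions $c\in\mathcal P_{\text{patho.}}$ for which the forward flow fails to be globally defined (the inelastic-collapse configurations isolated in Proposition~\ref{PROPODefinGlobaDynam}). For (ii), Proposition~\ref{PROPODefinGlobaDynam} guarantees $\mathbb{P}_{\mu_\varepsilon}(\mathcal P_{\text{patho.}})=0$, so this contributes nothing. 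For (i), the event $\{c\notin X_0\}$ is the event that the Poisson process has at least one point in $\overline{B(x,\varepsilon)}$, which has probability $1-e^{-\mu_\varepsilon|\overline{B(0,\varepsilon)}|}\le \mu_\varepsilon|\overline{B(0,\varepsilon)}|$; since $\varphi$ is bounded and on this event we simply replace $\varphi(\widetilde T^t_{c,\varepsilon}(x,v))$ by a quantity bounded by $\|\varphi\|_\infty$, we get $|R_1|\le \|\varphi\|_\infty\,\mu_\varepsilon|\overline{B(0,\varepsilon)}|$, which is \eqref{EQUATDefinReste_R_1_}. I would also need a short argument that the series \eqref{EQUATDefinPhi_1Propo} converges absolutely and equals the restricted expectation: this follows from $|\varphi|\le\|\varphi\|_\infty$, the bound $\mu_\varepsilon^k\int_{c_1}\cdots\int_{c_k}\mathds 1_{\{c_i\in B_\varepsilon\}}\,\dd c^k\le (\mu_\varepsilon|B_\varepsilon|)^k$ combined with the ordering constraint giving the $1/k!$, and the exponential prefactor being $\le 1$; alternatively one invokes that the total Poisson mass is $1$ so the series is dominated termwise by $e^{-\mu_\varepsilon|B_\varepsilon|}\frac{(\mu_\varepsilon|B_\varepsilon|)^k}{k!}\|\varphi\|_\infty$.

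The main obstacle I anticipate is the bookkeeping needed to make the combinatorial identity exact rather than merely an inequality: one must check that the ``internal obstacles'' indicator and the choice of empty region $\overline{B(x,\varepsilon)}\cup\mathcal T^t(c^k)$ are defined so that, for almost every configuration $c\in X_0\cap\mathcal P^c_{\text{patho.}}$, there is a unique $k$ and a unique ordered tuple $(c_1,\dots,c_k)$ of collided scatterers, with all other scatterers lying outside the empty region — in other words that the decomposition is a genuine partition of the configuration space (up to a $\mathbb{P}_{\mu_\varepsilon}$-null set of grazing/tangential collisions and multiple-scatterer collisions, which are excluded by the definition of the flow and have zero Poisson measure). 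Making the dynamical tube $\mathcal T^t(c^k)$ precise — it is the $\varepsilon$-neighbourhood of the piecewise-linear trajectory $T^s_{c,\varepsilon}(x,v)$, $s\in[0,t]$, determined once $c_1,\dots,c_k$ are fixed, and it depends on $c^k$ precisely because the post-collisional velocities depend on where the scatterers sit — and verifying that ``no scatterer outside $\mathcal T^t(c^k)\cup\overline{B(x,\varepsilon)}$'' is exactly equivalent to ``$c_1,\dots,c_k$ are the only collided scatterers'' is the delicate point; the measure-theoretic estimate \eqref{EQUATDefinReste_R_1_} itself is then elementary.
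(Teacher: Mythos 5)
Your proposal is correct and takes essentially the same route as the paper: decompose $\varphi_\varepsilon$ by the number of collided scatterers, use the Poisson structure to integrate out the non-collided ones (producing the exponential weight $e^{-\mu_\varepsilon\vert \overline{B(x,\varepsilon)}\cup\mathcal T^t(c^k)\vert}$), and bound $R_1$ by the probability of the event $\{c\notin X_0\}$ plus the zero-probability pathological event from Proposition~\ref{PROPODefinGlobaDynam}. The only difference is presentational — the paper first conditions on the number of scatterers in the dynamical ball $B_\varepsilon$ and then splits into internal/external obstacles, summing out the externals, whereas you compress this into a single Janossy-type factorization — and the one small slip in your sketch is the claim that the internal-obstacles indicator enforces an ordering: at this stage the obstacles are unordered (hence the $1/k!$), and the relabelling only enters in the following proposition.
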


\noindent
Observe that, in principle, we have $R_1 = R_1(t,x,v)$, even though the estimate \eqref{EQUATDefinReste_R_1_} provides an upper bound that does not depend on the variables $t$, $x$ and $v$. In the sequel, we will consider additional remainder terms, that also depend on these variables, but for which we will obtain estimates that depend also on $t$, $x$ and $v$ (specifically, $R_2$ and $R_3$, see Propositions \ref{PROPOReecrPhiEpElimiRecol} and \ref{PROPOCompaFinal_Phi_PhiEp}).

\begin{proof}
We have:
\begin{align}
\varphi_\varepsilon = \mathbb{E}_{\mu_\varepsilon} \big[ \varphi\big( \widetilde{T}^t_{c,\varepsilon}(x,v) \big) \big] &= \sum_{k=0}^{+\infty} \mathbb{E}_{\mu_\varepsilon} \big[ \varphi\big( \widetilde{T}^t_{c,\varepsilon}(x,v) \big) \mathds{1}_{\# (c\, \cap \overline{B(x,\varepsilon)}) = k}\big] \nonumber\\
&= \mathbb{E}_{\mu_\varepsilon} \big[ \varphi\big( T^t_{c,\varepsilon}(x,v) \big) \mathds{1}_{X_0}(c) \mathds{1}_{\mathcal{P}_\text{patho.}^c}(c) \big] + \mathbb{E}_{\mu_\varepsilon} \big[ \varphi\big( \widetilde{T}^t_{c,\varepsilon}(x,v) \big) \mathds{1}_{X_0}(c) \mathds{1}_{\mathcal{P}_\text{patho.}}(c)\big] \nonumber\\
&\hspace{5mm}+ \sum_{k=1}^{+\infty} \mathbb{E}_{\mu_\varepsilon} \big[ \varphi\big( \widetilde{T}^t_{c,\varepsilon}(x,v) \big) \mathds{1}_{\# (c\, \cap \overline{B(x,\varepsilon)}) = k}\big].
\end{align}
Denoting by $R_1$ the sum of the two quantities:
\begin{align}
R_1 = \mathbb{E}_{\mu_\varepsilon} \big[ \varphi\big( \widetilde{T}^t_{c,\varepsilon}(x,v) \big) \mathds{1}_{X_0}(c) \mathds{1}_{\mathcal{P}_\text{patho.}}(c) \big]  + \sum_{k=1}^{+\infty} \mathbb{E}_{\mu_\varepsilon} \big[ \varphi\big( \widetilde{T}^t_{c,\varepsilon}(x,v) \big) \mathds{1}_{\# (c\, \cap \overline{B(x,\varepsilon)}) = k}\big],
\end{align}
we have $\big\vert \varphi\big( \widetilde{T}^t_{c,\varepsilon}(x,v) \big) \mathds{1}_{X_0}(c) \mathds{1}_{\mathcal{P}_\text{patho.}}(c) \big\vert \leq \vertii{\varphi}_{\infty} \mathds{1}_{\mathcal{P}_\text{patho.}}(c)$ so that
\begin{align}
\Big\vert \mathbb{E}_{\mu_\varepsilon} \big[ \varphi\big( \widetilde{T}^t_{c,\varepsilon}(x,v) \big) \mathds{1}_{X_0}(c) \mathds{1}_{\mathcal{P}_\text{patho.}}(c) \big]  \Big\vert &\leq \vertii{\varphi}_{\infty} \mathbb{E}_{\mu_\varepsilon} \big[ \mathds{1}_{\mathcal{P}_\text{patho.}} \big] = \vertii{\varphi}_{\infty} \mathbb{P}_{\mu_\varepsilon}\big( \mathcal{P}_\text{patho.} \big) = 0.
\end{align}
Besides, using the properties of the Poisson process, we have:
\begin{align}
\Big\vert \sum_{k=1}^{+\infty} \mathbb{E}_{\mu_\varepsilon} \big[ \varphi\big( \widetilde{T}^t_{c,\varepsilon}(x,v) \big) \mathds{1}_{\# (c\, \cap \overline{B(x,\varepsilon)}) = k}\big] \Big\vert &\leq \vertii{\varphi}_{\infty} \sum_{k=1}^{+\infty} e^{-\mu_\varepsilon \vert \overline{B(0,\varepsilon)} \vert} \frac{\mu_\varepsilon^k}{k!} \big\vert \overline{B(0,\varepsilon)} \big\vert^k \leq \vertii{\varphi}_{\infty} \mu_\varepsilon \vert \overline{B(0,\varepsilon)} \vert,
\end{align}
so that in the end we obtained:
\begin{align}
\big\vert R_1 \big\vert \leq  \vertii{\varphi}_{\infty} \mu_\varepsilon \vert \overline{B(0,\varepsilon)} \vert.
\end{align}
\noindent
The second step to rewrite $\varphi_\varepsilon$ consists in identifying a dynamical ball, and in decomposing the cases depending on the number of scatterers contained in the dynamical ball.\\
More precisely, considering the forward in time dynamics of the tagged particle, given by the flow $T^t_{c,\varepsilon}(x,v)$, since the norm of the velocity of the particle is decreasing at any collision, it is clear that the tagged particle cannot exit the ball $B(x,t \vert v \vert)$ during the time interval $[0,t]$. Therefore, only the obstacles in the ball $B_\varepsilon$, that we define as:
\begin{align}
B_\varepsilon = \overline{B(x,t\vert v \vert + \varepsilon)}
\end{align}
can contribute to the dynamics of the tagged particle, that is, $T^t_{c,\varepsilon}(x,v)$ depends only on the obstacles that are in $B_\varepsilon$. We will call the closed ball $B_\varepsilon$ the \emph{dynamical ball}. We separate then the cases according to the number $n$ of obstacles in $B_\varepsilon$, that is $
n = \# \big( c \cap B_\varepsilon \big)$, and introducing the quantity $\varphi_\varepsilon^{(1)}$ as follows, we obtain:
\begin{align}
\varphi_\varepsilon^{(1)} = \mathbb{E}_{\mu_\varepsilon} \big[ \varphi\big( T^t_{c,\varepsilon}(x,v) \big) \mathds{1}_{X_0}(c) \mathds{1}_{\mathcal{P}_\text{patho.}^c}(c) \big] &= \sum_{n = 0}^{+\infty} \mathbb{E}_{\mu_\varepsilon} \big[ \varphi\big( T^t_{c,\varepsilon}(x,v) \big) \mathds{1}_{X_0}(c) \mathds{1}_{\mathcal{P}_\text{patho.}^c}(c) \mathds{1}_{\# ( c\, \cap B_\varepsilon ) = n}\big].
\end{align}
In the case when $n = 0$, there is no obstacle in the dynamical ball $B_\varepsilon$, and therefore:
$T_{c,\varepsilon}^t(x,v) = \big( x+tv,v \big)$. The expected value of $\varphi\big( T^t_{c,\varepsilon}(x,v) \big)$ is independent from $c$, and by definition of a Poisson process of intensity $\mu_\varepsilon$ we have:
\begin{align}
\mathbb{E}_{\mu_\varepsilon} \big[ \varphi\big( T^t_{c,\varepsilon}(x,v) \big) \mathds{1}_{X_0}(c) \mathds{1}_{\mathcal{P}_\text{patho.}^c}(c) \mathds{1}_{\# ( c\, \cap B_\varepsilon ) = 0} \big] = e^{-\mu_\varepsilon \vert B_\varepsilon \vert} \varphi(x+tv,v).
\end{align}
In the general case, we find:
\begin{align*}
\varphi_\varepsilon^{(1)} &= e^{-\mu_\varepsilon \vert B_\varepsilon \vert} \varphi(x+tv,v) + \sum_{n = 1}^{+\infty} e^{-\mu_\varepsilon \vert B_\varepsilon \vert} \frac{\mu_\varepsilon^n}{n!} \int_{c_1} \hspace{-2mm} \dots \hspace{-0.5mm} \int_{c_n} \varphi\big( T_{c,\varepsilon}^t(x,v) \big) \mathds{1}_{X_0}(c) \mathds{1}_{\mathcal{P}_\text{patho.}^c}(c) \mathds{1}_{c_i \in B_\varepsilon \forall 1 \leq i \leq n} \dd c_n \dots \dd c_1.
\end{align*}
The next step in the rewriting consists in separating the obstacles, between the ``internal'' and ``external'' ones. We say that \emph{an obstacle $c_i$ ($1 \leq i \leq n$) is internal} if and only if:
\begin{align}
\label{EQUATPreuvReecrDefinObstaInter}
\inf_{s\in[0,t]} \big\vert x(s) - c_i \big\vert = \varepsilon,
\end{align}
where $x(s)$ correspond to the position variables of the inelastic hard sphere flow $T^s_{c,\varepsilon}(x,v)$ of the tagged particle. In other words, $c_i$ is internal if this obstacle is touched by the tagged particle during the time interval $[0,t]$.\\
An obstacle is said to be \emph{external} if it is not internal.\\
Observe that the infimum in \eqref{EQUATPreuvReecrDefinObstaInter} is a minimum: since $\{ T^s_{c,\varepsilon}(x,v)\ / \ s \in [0,t] \}$ is a compact set, the infimum of the distance between this set and the compact set $\{c_i\}$ (formed by the single point $c_i$) is reached for some $s_{c_i} \in [0,t]$. We then perform the following decomposition, separating the particular case $k = 0$:
\begin{align}
&\sum_{n = 1}^{+\infty} e^{-\mu_\varepsilon \vert B_\varepsilon \vert} \frac{\mu_\varepsilon^n}{n!} \int_{c_1} \dots \int_{c_n} \varphi\big( T_{c,\varepsilon}^t(x,v) \big) \mathds{1}_{X_0}(c) \mathds{1}_{\mathcal{P}_\text{patho.}^c}(c) \mathds{1}_{\{ c_i \in B_\varepsilon \forall 1 \leq i \leq n \}} \dd c_n \dots \dd c_1 \nonumber\\
&\hspace{-1mm}= \sum_{n=1}^{+\infty} e^{-\mu_\varepsilon \vert B_\varepsilon \vert} \frac{\mu_\varepsilon^n}{n!} \varphi (x+tv,v) \int_{c_1} \dots \int_{c_n} \mathds{1}_{X_0}(c) \mathds{1}_{\mathcal{P}_\text{patho.}^c}(c) \mathds{1}_{\{ c_i \in B_\varepsilon \forall 1 \leq i \leq n \}} \mathds{1}_{\substack{\, \text{no obstacle}\\ \text{is internal}}} \dd c_n \dots \dd c_1 \nonumber\\
&\hspace{5mm} +\sum_{n = 1}^{+\infty} \sum_{k = 1}^n e^{-\mu_\varepsilon \vert B_\varepsilon \vert} \frac{\mu_\varepsilon^n}{n!} \frac{n!}{k!(n-k)!} \int_{c_1} \dots \int_{c_n} \varphi\big( T_{c,\varepsilon}^t(x,v) \big) \mathds{1}_{X_0}(c) \mathds{1}_{\mathcal{P}_\text{patho.}^c}(c) \nonumber\\
&\hspace{70mm} \times \mathds{1}_{\{ c_i \in B_\varepsilon \forall 1 \leq i \leq n \}} \mathds{1}_{\substack{\text{only the }k\\ \text{first obstacles}\\ c_j,\, 1\leq j \leq k,\\ \text{are internal}}} \dd c_n \dots \dd c_1,
\end{align}
which allows in the first term to integrate with respect to obstacles, since all of them have to be outside the tube $\big\{ y \in \mathbb{R}^d\ /\ \exists s \in [0,t], z \in \overline{B(0,\varepsilon)}\ /\ y = x + sv + z \big\}$. We obtain:
\begin{align}
\label{EQUATReecrSerie_Phi1}
&\sum_{n = 1}^{+\infty} e^{-\mu_\varepsilon \vert B_\varepsilon \vert} \frac{\mu_\varepsilon^n}{n!} \int_{c_1} \dots \int_{c_n} \varphi\big( T_{c,\varepsilon}^t(x,v) \big) \mathds{1}_{X_0}(c) \mathds{1}_{\mathcal{P}_\text{patho.}^c}(c) \mathds{1}_{\{ c_i \in B_\varepsilon \forall 1 \leq i \leq n \}} \dd c_n \dots \dd c_1 \nonumber\\
&\hspace{-1mm} = e^{- \mu_\varepsilon \vert [x,x+tv] + \overline{B(0,\varepsilon)} \vert} \varphi(x+tv,v) - e^{-\mu_\varepsilon \vert B_\varepsilon \vert} \varphi(x+tv,v) \\
&\hspace{-1mm} +\sum_{n = 1}^{+\infty} \sum_{k = 1}^n e^{-\mu_\varepsilon \vert B_\varepsilon \vert} \frac{\mu_\varepsilon^n}{k!(n-k)!} \int_{c_1} \dots \int_{c_n} \varphi\big( T_{c,\varepsilon}^t(x,v) \big) \mathds{1}_{X_0}(c) \mathds{1}_{\mathcal{P}_\text{patho.}^c}(c) \mathds{1}_{\{ c_i \in B_\varepsilon \forall 1 \leq i \leq n \}} \mathds{1}_{\substack{\text{only the }k\\ \text{first obstacles}\\ c_j,\, 1\leq j \leq k,\\ \text{are internal}}} \dd c_n \dots \dd c_1 \nonumber
\end{align}
By definition, $T_{c,\varepsilon}^t(x,v)$ does not depend on the position of any of the external obstacles. We introduce the definition of the \emph{dynamical tube}, which corresponds to the set of positions that are at a distance smaller or equal than $\varepsilon$ from the trajectory of the tagged particle. In other words, we define:
\begin{align}
\label{EQUATDynam_Tube}
\mathcal{T}^t(x,v; c,\varepsilon)
= \big\{ y \in \mathbb{R}^d\ /\ \inf_{s\in[0,t]} \vert x_{c,\varepsilon}(s) - y \vert \leq \varepsilon \big\}, \hspace{3mm} \text{that we will also denote, in short, by} \hspace{3mm} \mathcal{T}^t(c)
\end{align}
where we recall that $x_{c,\varepsilon}(s) \in \mathbb{R}^d$ corresponds to the position of the particle, namely the first component of the flow, that is $T_{c,\varepsilon}^s(x,v) = \big(x_{c,\varepsilon}(s),v_{c,\varepsilon}(s)\big)$ (see Definition \ref{DEFINFlot_TempsCrois}). To light the notation we  simply denoted the dynamical tube $\mathcal{T}^t(c)=\mathcal{T}^t(x,v;c,\varepsilon)$ omitting the dependence on $\varepsilon, x,v$.\\  
\noindent
By definition, the internal obstacles belong to the (boundary of the) dynamical tube $\mathcal{T}^t(c)$, while the external obstacles are outside. Denoting by $c^k$ the first $k$ obstacles, that is $c^k = (c_1,\dots,c_k)$, by $c^{k+1}$ the last $n-k$ obstacles, that is $c^{k+1} = (c_{k+1},\dots,c_n)$, and observing that:
\begin{align}
\mathds{1}_{X_0}(c)& \mathds{1}_{\mathcal{P}^c_\text{patho.}}(c) \mathds{1}_{c_i\, \text{is internal } \forall\, 1 \leq i \leq k} \mathds{1}_{c_j\, \text{is external } \forall\, (k+1) \leq j \leq n} \nonumber\\
&= \mathds{1}_{X_0}(c^k) \mathds{1}_{X_0}(c^{k+1}) \mathds{1}_{\mathcal{P}^c_\text{patho.}}(c^k) \mathds{1}_{c_i\, \text{is internal } \forall\, 1 \leq i \leq k} \mathds{1}_{c_i \in \big( \mathcal{T}^t(c^k) \big)^c\ \forall\, (k+1) \leq i \leq n},
\end{align}
we can integrate with respect to the positions of the last scatterers, which provides:
\begin{align}
&\sum_{n = 1}^{+\infty} \sum_{k = 1}^n e^{-\mu_\varepsilon \vert B_\varepsilon \vert} \frac{\mu_\varepsilon^n}{k!(n-k)!} \int_{c_1} \dots \int_{c_n} \varphi\big( T_{c,\varepsilon}^t(x,v) \big) \mathds{1}_{X_0}(c) \mathds{1}_{\mathcal{P}_\text{patho.}^c}(c) \mathds{1}_{\{ c_i \in B_\varepsilon \forall 1 \leq i \leq n \}} \mathds{1}_{\substack{\text{only the }k\\ \text{first obstacles}\\ c_j,\, 1\leq j \leq k,\\ \text{are internal}}} \dd c_n \dots \dd c_1 \nonumber\\
&= \sum_{n = 1}^{+\infty} \sum_{k = 1}^n e^{-\mu_\varepsilon \vert B_\varepsilon \vert} \frac{\mu_\varepsilon^n}{k!(n-k)!} \int_{c_1} \dots \int_{c_k} \varphi\big( T_{c,\varepsilon}^t(x,v) \big) \mathds{1}_{X_0}(c^k) \mathds{1}_{\mathcal{P}_\text{patho.}^c}(c^k) \mathds{1}_{\{ c_i \in B_\varepsilon \forall 1 \leq i \leq k \}} \mathds{1}_{\substack{\text{the obstacles}\\ c_j,\, 1\leq j \leq k,\\ \text{are internal}}} \nonumber\\
&\hspace{10mm} \times \Bigg[ \int_{c_{k+1}} \hspace{-2mm}\dots\int_{c_n} \mathds{1}_{X_0}(c^{k+1}) \mathds{1}_{\{ c_i \in B_\varepsilon \forall (k+1) \leq i \leq n \}} \mathds{1}_{c_j \in \big( \mathcal{T}^t(c^k) \big)^c\ \forall\, (k+1) \leq j \leq n} \dd c_n \dots \dd c_{k+1} \Bigg] \dd c_k \dots \dd c_1 \nonumber\\
&= \sum_{n = 1}^{+\infty} \sum_{k = 1}^n e^{-\mu_\varepsilon \vert B_\varepsilon \vert} \frac{\mu_\varepsilon^n}{k!(n-k)!} \int_{c_1} \dots \int_{c_k} \varphi\big( T_{c,\varepsilon}^t(x,v) \big) \mathds{1}_{X_0}(c^k) \mathds{1}_{\mathcal{P}_\text{patho.}^c}(c^k) \mathds{1}_{\{ c_i \in B_\varepsilon \forall 1 \leq i \leq k \}} \mathds{1}_{\substack{\text{the obstacles}\\ c_j,\, 1\leq j \leq k,\\ \text{are internal}}} \nonumber\\
&\hspace{80mm} \times \Big\vert B_\varepsilon \backslash \big( \overline{B(x,\varepsilon)} \cup \mathcal{T}^t(c^k) \big) \Big\vert^{n-k} \dd c_k \dots \dd c_1.
\end{align}
Inverting the two sums, and performing the change of variable $m=n-k$ we find:
\begin{align}
&\sum_{n = 1}^{+\infty} \sum_{k = 1}^n e^{-\mu_\varepsilon \vert B_\varepsilon \vert} \frac{\mu_\varepsilon^n}{k!(n-k)!} \int_{c_1} \dots \int_{c_n} \varphi\big( T_{c,\varepsilon}^t(x,v) \big) \mathds{1}_{X_0}(c) \mathds{1}_{\mathcal{P}_\text{patho.}^c}(c) \mathds{1}_{\{ c_i \in B_\varepsilon \forall 1 \leq i \leq n \}} \mathds{1}_{\substack{\text{only the }k\\ \text{first obstacles}\\ c_j,\, 1\leq j \leq k,\\ \text{are internal}}} \dd c_n \dots \dd c_1 \nonumber\\
&\hspace{0mm} = \sum_{k=1}^{+\infty} \sum_{m=0}^{+\infty} e^{-\mu_\varepsilon \vert B_\varepsilon \vert}\frac{\mu_\varepsilon^k \mu_\varepsilon^m}{k! \, m!} \int_{c_1} \dots \int_{c_k} \varphi\big( T_{c,\varepsilon}^t(x,v) \big) \mathds{1}_{X_0}(c^k) \mathds{1}_{\mathcal{P}_\text{patho.}^c}(c^k) \mathds{1}_{\{ c_i \in B_\varepsilon \forall 1 \leq i \leq k \}} \mathds{1}_{\substack{\text{the obstacles}\\ c_j,\, 1\leq j \leq k,\\ \text{are internal}}} \nonumber\\
&\hspace{85mm} \times \Big\vert B_\varepsilon \backslash \big( \overline{B(x,\varepsilon)} \cup \mathcal{T}^t(c^k) \big) \Big\vert^m \dd c_k \dots \dd c_1.
\end{align}
Regrouping the terms that depend on the index $m$ in order to identify an exponential series, we obtain:
\begin{align}
&\sum_{n = 1}^{+\infty} \sum_{k = 1}^n e^{-\mu_\varepsilon \vert B_\varepsilon \vert} \frac{\mu_\varepsilon^n}{k!(n-k)!} \int_{c_1} \dots \int_{c_n} \varphi\big( T_{c,\varepsilon}^t(x,v) \big) \mathds{1}_{X_0}(c) \mathds{1}_{\mathcal{P}_\text{patho.}^c}(c) \mathds{1}_{\{ c_i \in B_\varepsilon \forall 1 \leq i \leq n \}} \mathds{1}_{\substack{\text{only the }k\\ \text{first obstacles}\\ c_j,\, 1\leq j \leq k,\\ \text{are internal}}} \dd c_n \dots \dd c_1 \nonumber\\
&\hspace{0mm} = \sum_{k=1}^{+\infty} \frac{\mu_\varepsilon^k}{k!} \int_{c_1} \dots \int_{c_k} e^{- \mu_\varepsilon \vert \overline{B(x,\varepsilon)} \, \cup \, \mathcal{T}^t(c^k) \vert} \varphi\big( T_{c,\varepsilon}^t(x,v) \big) \mathds{1}_{X_0}(c^k) \nonumber\\
&\hspace{70mm}\times \mathds{1}_{\mathcal{P}_\text{patho.}^c}(c^k) \mathds{1}_{\{ c_i \in B_\varepsilon \forall 1 \leq i \leq k \}} \mathds{1}_{\substack{\text{the obstacles}\\ c_j,\, 1\leq j \leq k,\\ \text{are internal}}} \dd c_k \dots \dd c_1.
\end{align}
$\varphi_\varepsilon^{(1)}$ being the sum of the term $e^{-\mu_\varepsilon \vert B_\varepsilon \vert} \varphi(x+tv,v)$ and of the series rewritten in \eqref{EQUATReecrSerie_Phi1}, this concludes the proof of Proposition \ref{PROPOReecrPhiEpIntegObsta}.
\end{proof}

\subsection{Comparing the series representation of $\varphi_\varepsilon$ with the solution of the adjoint equation}

In order to compare solutions to the adjoint equation with the expression \eqref{EQUATDefinPhi_1Propo} of $\varphi_\varepsilon = \mathbb{E}_{\mu_\varepsilon} \big[ \varphi\big( \widetilde{T}_{c,\varepsilon}^t(x,v) \big)\big]$ given by Proposition \ref{PROPOReecrPhiEpIntegObsta}, we rely on the series description of the solutions to the adjoint equation \eqref{EQUATLineaBoltzSphDuFormeAdjoi} provided by Proposition \ref{PROPORepreSerieEquatAdjoi}. 

%The expressions \eqref{EQUATDefinPhi_1Propo} and \eqref{EQUATRepreSerieEquatAdjoi}, given as in Propositions \ref{PROPOReecrPhiEpIntegObsta} and \ref{PROPORepreSerieEquatAdjoi} respectively, have strong similarities. 
To this aim, we consider \eqref{EQUATDefinPhi_1Propo} (see Proposition \ref{PROPOReecrPhiEpIntegObsta}) and \eqref{EQUATRepreSerieEquatAdjoi} (see Proposition \ref{PROPORepreSerieEquatAdjoi}) and estimate the difference between these two expressions.  We will now compare the two quantities:
\begin{align}
T_{c,\varepsilon}^t(x,v) \hspace{5mm} \text{and} \hspace{5mm} \big( x + \sum_{m=1}^k \big( t_{m-1}-t_m \big) v^{(m-1)} + t_k v^{(k)},v^{(k)} \big),
\end{align}
that are the respective variables of the initial datum $\varphi$ in \eqref{EQUATDefinPhi_1Propo} and \eqref{EQUATRepreSerieEquatAdjoi}.\\
There are two main differences between these two trajectories. On the one hand, the trajectory $T_{c,\varepsilon}^t(x,v)$ of the tagged particle, evolving among obstacles with a positive size, might experience \emph{recollisions}. In other words, some obstacles can be collided several times, inducing correlations. The microscopic dynamics is therefore non-Markovian.\\ %This phenomenon occurs with a positive probability because the scatterers are not point particles.\\
On the other hand, we consider the trajectory $x + \sum_{m=1}^k \big( t_{m-1}-t_m \big) v^{(m-1)} + t_k v^{(k)}$ associated to the adjoint Boltzmann equation, which is the limit stochastic trajectory of the microscopic Lorentz process. We will call such a limit trajectory a \emph{pseudo-trajectory}. We  first observe that no recollision can take place in such pseudo-trajectories, in the sense that the parameters $t_m$ and $\omega_m$ are always chosen independently from each other. This exhibits the memoryless nature of the process which generates such trajectories. In addition, we observe that the position of an ``obstacle'', for  %in the case of 
a trajectory associated to the adjoint equation, can be exactly on a part of the trajectory that connects two other obstacles which have been collided before. In other words, we may have:
\begin{align}
x(n) \in [x(p),x(p+1)] \hspace{5mm} \text{for} \hspace{5mm} n > p+1,
\end{align}
where:
\begin{align}
x(n) = x + \sum_{m=1}^n \big( t_{m-1}-t_m \big) v^{(m-1)}.
\end{align}
Such a phenomenon is called an \emph{interference}. It is important to observe that such a phenomenon cannot take place at the level of the particle system, with obstacles of a positive size. Indeed, a third obstacle cannot lie on the segment between two consecutive obstacles that are collided: the tagged particle would have collided with this third obstacle, violating the definition of the two consecutively collided obstacles.\\
In summary, the parametrizations of the pseudo-trajectories and the inelastic hard sphere flow present significant difference. The parametrization of the pseudo-trajectories allows to consider dynamics that cannot be achieved by a particle evolving according to the hard sphere flow, for which no interference might take place. Conversely, the parametrization of the hard sphere flow encodes also trajectories that present recollisions, which is not the case for the parametrization of the pseudo-trajectories.\\
Nevertheless, both interferences and recollisions correspond to extremely rare situtations in the low density limit. The purpose of Proposition \ref{PROPOReecrPhiEpElimiRecol} that follows is to estimate precisely the subsets of the domains of the integrals in \eqref{EQUATDefinPhi_1Propo} leading either to an interference or to a recollision, so that $\varphi_\varepsilon = \mathbb{E}_{\mu_\varepsilon}\big[ \varphi \big( \widetilde{T}_{c,\varepsilon}^t(x,v) \big)\big]$ and $\widetilde{\varphi}$, solution of the adjoint equation, can be easily compared on the complement of such pathological sets.

\begin{propo}[Elimination of the recollisions]
\label{PROPOReecrPhiEpElimiRecol}
Let $\varepsilon > 0$. Let $c$ be a $d$-dimensional Poisson process of intensity $\mu_\varepsilon > 0$ such that the \emph{Boltzmann-Grad} scaling \eqref{EQUATBoltzmann-Grad_Limit} holds. Then, there exists a universal constant $\varepsilon_0 > 0$ and two positive constants $\widetilde{C}_1 = \widetilde{C}_1(d,r)$ and $\widetilde{C}_2 = \widetilde{C}_2(d,r)$ that depend only on the dimension $d$ and the restitution coefficient $r$ such that, for any $0 < \varepsilon \leq \varepsilon_0$, and any $\mathcal{C}_0(\mathbb{R}^d\times\mathbb{R}^d)$ function $\varphi$, the quantity $\varphi_\varepsilon$, defined as \eqref{EQUATDefinPhiEp_Avec_FlotDefin}, satisfies:
\begin{align}
\varphi_\varepsilon = \varphi_\varepsilon^{(2)} + R_1 + R_2 \hspace{5mm} \text{with} \hspace{5mm} \varphi_\varepsilon^{(2)} = \sum_{k=1}^{+\infty} \varphi_{\varepsilon,k}^{(2)}(k)
\end{align}
and with $\varphi_{\varepsilon,k}^{(2)}(k)$ defined as
\begin{align}
\varphi_{\varepsilon,k}^{(2)}(k) &= e^{- \mu_\varepsilon \vert [x,x+tv] + \overline{B(0,\varepsilon)} \vert} \varphi(x+tv,v) \nonumber\\
&\hspace{3mm}+ \int_{t_1 = 0}^t \int_{\mathbb{S}^{d-1}_{\omega_1}} \mathds{1}_{\widetilde{\mathcal{P}}_{c_1}^c} \int_{t_2=0}^{t_1} \int_{\mathbb{S}^{d-1}_{\omega_2}} \mathds{1}_{\widetilde{\mathcal{P}}_{c_2}^c} \dots \int_{t_k=0}^{t_{k-1}} \int_{\mathbb{S}_{\omega_k}^{d-1}} \mathds{1}_{\widetilde{\mathcal{P}}_{c_k}^c} e^{- \mu_\varepsilon \vert \overline{B(x,\varepsilon)} \, \cup \, \mathcal{T}^t(c^k) \vert} \prod_{l=1}^k \vert v^{(l-1)} \cdot \omega_l \vert \nonumber\\
&\hspace{33.75mm} \times \varphi\big( x + \sum_{m=1}^k \big( t_{m-1}-t_m \big) v^{(m-1)} + t_k v^{(k)},v^{(k)} \big) \dd \omega_k \dd t_k \dd \omega_2 \dd t_2 \dd \omega_1 \dd t_1,
\end{align}
$R_1$ as in \eqref{EQUATDefinReste_R_1_}, and $R_2$ being a remainder term such that:
\begin{align}
\label{EQUATDefinReste_R_2_PropoElimiRecol}
\big\vert R_2  \big\vert \leq \vertii{\varphi}_\infty \widetilde{C}_1 \big( 1 + \max(1,t) \vert v \vert \big) e^{\widetilde{C}_2 t \vert v \vert} \varepsilon^{1/4}.
\end{align}
\end{propo}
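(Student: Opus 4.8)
The plan is to start from the series expansion $\varphi_\varepsilon = \varphi_\varepsilon^{(1)}+R_1$ of Proposition \ref{PROPOReecrPhiEpIntegObsta} and to perform, on every term of $\varphi_\varepsilon^{(1)}$, the Gallavotti change of variables from the obstacle positions $(c_1,\dots,c_k)$ to the collision parameters $(t_1,\omega_1,\dots,t_k,\omega_k)$, while discarding the rare configurations of scatterers that produce either a recollision or an interference. A structural simplification compared with the series for $f$ itself is available here: since $v^{(l)}=\kappa_{\omega_l}(v^{(l-1)})$ is a \emph{forward} scattering, one has $\vert v^{(l)}\vert\le\vert v\vert$ for every $l$, so the cross-section product $\prod_{l=1}^k\vert v^{(l-1)}\cdot\omega_l\vert$ is bounded by $\vert v\vert^k$ and no exponential-in-$k$ growth of the velocities occurs.

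\textbf{Step 1: change of variables on the recollision-free set.} For each fixed $k$, split the domain of the $k$-th integral of $\varphi_\varepsilon^{(1)}$ into the set of configurations for which the tagged particle meets each internal obstacle exactly once, and its complement (which goes into $R_2$). On the former, parametrize $c_l$ by the time $t_l$ at which the particle touches it and by the outward normal $\omega_l=(x_{c,\varepsilon}(t_l)-c_l)/\varepsilon$ at the contact point; this map is a diffeomorphism onto its image with Jacobian $\prod_{l=1}^k\varepsilon^{d-1}\vert v^{(l-1)}\cdot\omega_l\vert$, so the Boltzmann–Grad relation \eqref{EQUATBoltzmann-Grad_Limit} makes the factors $\mu_\varepsilon^k\varepsilon^{k(d-1)}$ cancel to $1$, leaving exactly $\prod_{l=1}^k\vert v^{(l-1)}\cdot\omega_l\vert$; summing over the $k!$ possible collision orders of the internal obstacles absorbs the $1/k!$ and produces the time-ordered integral $\int_{t_1=0}^t\int_{\mathbb{S}^{d-1}_{\omega_1}}\dots\int_{t_k=0}^{t_{k-1}}\int_{\mathbb{S}^{d-1}_{\omega_k}}$. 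Since between collisions the true trajectory is a straight segment and at each collision the velocity jumps exactly by $\kappa_{\omega_l}$, the endpoint $T_{c,\varepsilon}^t(x,v)$ becomes, in the new variables, \emph{exactly} $\big(x+\sum_{m=1}^k(t_{m-1}-t_m)v^{(m-1)}+t_kv^{(k)},v^{(k)}\big)$, with no $O(\varepsilon)$ error in the argument of $\varphi$; the only residual $\varepsilon$'s sit in $\mathcal{T}^t(c^k)$, $\overline{B(x,\varepsilon)}$ and $B_\varepsilon$ inside the exponential weight, whose treatment is deferred to Proposition \ref{PROPOCompaFinal_Phi_PhiEp}. The image of the change of variables is the set of $(t,\omega)$-tuples whose pseudo-trajectory is free of interferences and whose associated obstacles are met in order without earlier recollision; one then chooses the cut-off sets $\widetilde{\mathcal{P}}_{c_l}$ so that $\bigcap_{l=1}^k\widetilde{\mathcal{P}}_{c_l}^{c}$ is comparable to (slightly larger than, but easy to estimate) this image, which gives the stated expression for $\varphi_\varepsilon^{(2)}$; the discarded recollision configurations together with the gap between $\varphi_\varepsilon^{(1)}$ rewritten in $(t,\omega)$ and $\varphi_\varepsilon^{(2)}$ form $R_2$.

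\textbf{Step 2: estimate of $R_2$.} It remains to bound, for each $1\le l\le k$, the $(t,\omega)$-measure of the set in which the $l$-th collision triggers a recollision with a previously met obstacle $c_j$ (or an interference). Bounding $\varphi$ by $\vertii{\varphi}_\infty$ and the exponential weight by $1$, one must control $\int\prod_{m=1}^k\vert v^{(m-1)}\cdot\omega_m\vert\,\mathds{1}_{\text{bad}}\,\dd\omega\,\dd t$. Split according to whether $\vert c_l-c_j\vert\le\sqrt\varepsilon$ or $>\sqrt\varepsilon$: in the first case, for fixed $\omega_l$ the admissible $t_l$ lies in an interval of length $\lesssim\sqrt\varepsilon/\vert v^{(l-1)}\vert$, so the cross-section–weighted measure of the bad set for the $l$-th collision is $\lesssim\sqrt\varepsilon$; in the second case a recollision forces the post-collisional velocity $v^{(l)}$ to be almost colinear, with tolerance $\delta\lesssim\varepsilon/\sqrt\varepsilon=\sqrt\varepsilon$, to the fixed direction pointing to $c_j$, so Lemma \ref{LEMMEColinearitScatt} bounds the bad set of $\omega_l$ by $C(d,r)\delta^{1/2}\lesssim\varepsilon^{1/4}$, and after integrating in $t_l\in[0,t_{l-1}]$ and using $\vert v^{(l-1)}\cdot\omega_l\vert\le\vert v\vert$ the weighted measure is $\lesssim\max(1,t)\vert v\vert\,\varepsilon^{1/4}$; interferences and their $\varepsilon$-thickenings are handled in the same way (and are much smaller), with Lemma \ref{LEMMEEstimMesurTube_Dynam} used to bound the length of trajectory entering the close-encounter estimate. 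Since $\sqrt\varepsilon\le\varepsilon^{1/4}$ for $\varepsilon$ small, each single-collision bad set contributes at most $C'(1+\max(1,t)\vert v\vert)\,\varepsilon^{1/4}$. Summing over the at most $k-1$ candidate obstacles $c_j$ and the $k$ possible positions $l$ of the bad collision gives a combinatorial factor $\lesssim k^2$, while each of the remaining good collisions contributes at most $\vert v\vert\int_0^{t}\dots$, i.e. an overall factor $\tfrac{(C\,t\vert v\vert)^{k-1}}{(k-1)!}$; hence the $k$-th term of $R_2$ is $\le\vertii{\varphi}_\infty\,C''\,k^2\,\tfrac{(C\,t\vert v\vert)^{k-1}}{(k-1)!}\,(1+\max(1,t)\vert v\vert)\,\varepsilon^{1/4}$, and the resulting series in $k$ converges and sums to a bound of the form $\vertii{\varphi}_\infty\,\widetilde C_1(1+\max(1,t)\vert v\vert)e^{\widetilde C_2 t\vert v\vert}\varepsilon^{1/4}$, which is \eqref{EQUATDefinReste_R_2_PropoElimiRecol}.

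\textbf{Main obstacle.} The analytic manipulations (change of variables, summation of the Poissonian series, reconstruction of the exponential) are routine; the real work is the geometric and combinatorial bookkeeping in Step 2: fixing the threshold $\sqrt\varepsilon$ and proving the close/far dichotomy so that both regimes yield a rate at least $\varepsilon^{1/4}$ \emph{uniformly in the unbounded velocities} — which is exactly where the speed-monotonicity of $\kappa_\omega$ and Lemma \ref{LEMMEColinearitScatt} are indispensable — and organizing the excision of recollisions and interferences so that the combinatorial factors ($k$ possible bad collisions, $k-1$ possible recollided obstacles, the various orderings) are absorbed by the factorials, keeping the $k$-series convergent with the claimed $e^{\widetilde C_2 t\vert v\vert}$ growth.
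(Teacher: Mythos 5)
Your proposal follows essentially the same route as the paper's proof: relabel and order the internal obstacles, perform the Gallavotti change of variables $c_l\mapsto(t_l,\omega_l)$ with Jacobian $\varepsilon^{d-1}\vert v^{(l-1)}\cdot\omega_l\vert$, discard pathological configurations (recollisions, near-coincidences, interferences), and estimate the discarded set by a close/far dichotomy combined with Lemma~\ref{LEMMEColinearitScatt} and Lemma~\ref{LEMMEEstimMesurTube_Dynam}. Both proofs balance a close-encounter threshold against a colinearity tolerance to land on $\varepsilon^{1/4}$, and both rely on the crucial monotonicity $\vert v^{(l)}\vert\le\vert v\vert$ of forward scattering to avoid the divergence that plagues the backward series.

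The main difference is organizational. You present the argument as ``split the $c$-domain into recollision-free and its complement, then change variables on the former''; the paper instead interleaves: after reparametrizing $c_1,\dots,c_{j-1}$ and restricting to $\widetilde{\mathcal{P}}_{c_1}^c\cap\dots\cap\widetilde{\mathcal{P}}_{c_{j-1}}^c$, the conditions $\widetilde{\mathcal{P}}_{c_{j-1}}^1,\dots,\widetilde{\mathcal{P}}_{c_{j-1}}^5$ are exactly what guarantees that the flow is \emph{free} on $[t-t_{j-1},t-t_j]$, so that the $j$-th parametrization $c_j=x+\sum_{m\le j}(t_{m-1}-t_m)v^{(m-1)}-\varepsilon\omega_j$ is valid and the Jacobian is the advertised one. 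This sequential structure is not cosmetic: the change of variables for $c_j$ is only well defined once the earlier cut-offs have been imposed, so one cannot literally ``split first, then change variables'' in the $c$-coordinates without spelling out what the recollision-free set looks like there. Your writeup also compresses the five explicit cut-offs (small separation, initial-ball overlap, interference, near-colinearity with a past obstacle, near-parallelism of flight segments) into a single ``bad set'', with the interference bound deferred to a remark; the paper needs the parallelism cut-off $\widetilde{\mathcal{P}}_{c_j}^5$ precisely to make the interference estimate quantitative. A final small caveat: your thresholds ($\vert c_l-c_j\vert\lessgtr\sqrt\varepsilon$, colinearity tolerance $\delta\sim\varepsilon/\sqrt\varepsilon=\sqrt\varepsilon$) sit exactly at the boundary $\delta_3=1-\delta_2$ of the constraint the paper needs ($8\varepsilon^{1-\delta_2}\le\varepsilon^{\delta_3}$), so they only work because you allow multiplicative constants inside the $\lesssim$; the paper leaves explicit slack ($\delta_2=1/4$, $\delta_3=1/2$). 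Neither of these affects the final rate $\varepsilon^{1/4}$, but the paper's ordering makes the change of variables airtight where your sketch relies on the reader to reconstruct it.
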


\begin{proof}
We restart from the main term $\varphi_\varepsilon^{(1)}$ of $\varphi_\varepsilon$ given by Proposition \ref{PROPOReecrPhiEpIntegObsta}. First, we observe that, up to relabel the obstacles, we can always order them, in the following sense:
\begin{itemize}
\item $c_1$ is the first obstacle that is collided by the tagged particle evolving according to the trajectory $T_{c,\varepsilon}^s(x,v)$ of the inelastic hard sphere flow, that is:
\begin{align}
\hspace{-10mm} x(t-t_1) \in \overline{B(c_1,\varepsilon)} \hspace{2mm} \text{for a certain} \hspace{2mm} 0 < t_1 < t \hspace{5mm} \text{and} \hspace{5mm} t-t_1 = \min\{ s \in [0,t] \ /\ T_{c,\varepsilon}^s(x,v) \in c + \overline{B(0,\varepsilon)} \},
\end{align}
\item $c_2$ is the next first obstacle to be collided by the tagged particle:
\begin{align}
\hspace{-8mm} x(t-t_2) \in \overline{B(c_2,\varepsilon)} \hspace{2mm} \text{for a certain} \hspace{2mm} 0 < t_2 < t_1 \hspace{5mm} \text{and} \hspace{5mm} t-t_2 = \min\{ s \in [0,t] \ /\ T_{c,\varepsilon}^s(x,v) \in \big( c \backslash \{ c_1 \} \big) + \overline{B(0,\varepsilon)} \}.
\end{align}
\item Similarly, we order the other obstacles:
\begin{align}
\hspace{-8mm} &x(t-t_j) \in \overline{B(c_j,\varepsilon)} \hspace{2mm} \text{for a certain} \hspace{2mm} 0 < t_j < t_{j-1} \nonumber\\
&\hspace{20mm}\text{and} \hspace{5mm} t-t_j = \min\{ s \in [0,t] \ /\ T_{c,\varepsilon}^s(x,v) \in \big( c \backslash \{ c_1,\dots,c_{j-1} \} \big) + \overline{B(0,\varepsilon)} \}.
\end{align}
\end{itemize}
Observe that, a priori, the obstacles $c_1,\dots,c_j$ might be collided more than once by the tagged particle before $t-t_{j+1}$. Nevertheless, it is impossible in the case of $c_1$ before $t-t_2$: indeed, after the first collision with $c_1$, the tagged particle is in a post-collisional configuration with $c_1$, and the tagged particle cannot collide by definition with any other scatterer before $t-t_2$.\\
\newline
Since in the expression \eqref{EQUATDefinPhi_1Propo} of $\varphi_\varepsilon^{(1)}$ the obstacles are not ordered, we obtain $k!$ similar terms after the relabelling in the case of $k$ internal obstacles. Therefore we have:
\begin{align}
\varphi_\varepsilon^{(1)} &= e^{-\mu_\varepsilon \vert [x,x+tv] + \overline{B(0,\varepsilon)} \vert}\varphi(x+tv,v) \nonumber\\
&\hspace{5mm}+ \sum_{k=1}^{+\infty} \mu_\varepsilon^k \int_{c_1} \dots \int_{c_k} e^{- \mu_\varepsilon \vert \overline{B(x,\varepsilon)} \, \cup \, \mathcal{T}^t(c^k) \vert} \varphi\big( T_{c,\varepsilon}^t(x,v) \big) \mathds{1}_{X_0}(c^k) \mathds{1}_{\mathcal{P}_\text{patho.}^c}(c^k) \mathds{1}_{\{ c_i \in B_\varepsilon \forall\, 1 \leq i \leq k \}} \nonumber\\
&\hspace{70mm} \times \mathds{1}_{\substack{\text{the obstacles}\\ c_j,\, 1\leq j \leq k,\\ \text{are internal}}}(c^k) \mathds{1}_{\substack{\text{the $l$-th collided} \\ \text{obstacle is }c_l \\ \forall\, 1 \leq l \leq k}}(c^k) \dd c_k \dots \dd c_1.
\end{align}
We will denote by $\varphi_\varepsilon^{(2)}(k)$ ($k \geq 1$) the different terms in the previous series, that is $\varphi_\varepsilon^{(1)} = e^{-\mu_\varepsilon \vert [x,x+tv] + \overline{B(0,\varepsilon)} \vert}$ $\times\varphi(x+tv,v) + \sum_{k=1}^{+\infty} \varphi_\varepsilon^{(2)} (k)$ with
\begin{align}
\varphi_\varepsilon^{(2)} (k) &= \mu_\varepsilon^k \int_{c_1} \dots \int_{c_k} e^{- \mu_\varepsilon \vert \overline{B(x,\varepsilon)} \, \cup \, \mathcal{T}^t(c^k) \vert} \varphi\big( T_{c,\varepsilon}^t(x,v) \big) \mathds{1}_{X_0}(c^k) \mathds{1}_{\mathcal{P}_\text{patho.}^c}(c^k) \mathds{1}_{\{ c_i \in B_\varepsilon \forall\, 1 \leq i \leq k \}} \nonumber\\
&\hspace{60mm} \times \mathds{1}_{\substack{\text{the obstacles}\\ c_j,\, 1\leq j \leq k,\\ \text{are internal}}}(c^k) \mathds{1}_{\substack{\text{the $l$-th collided} \\ \text{obstacle is }c_l \\ \forall\, 1 \leq l \leq k}}(c^k) \dd c_k \dots \dd c_1.
\end{align}
\paragraph{Reparametrization of the positions of the scatterers.} After the relabelling, $c_1$ is the center of the first obstacle collided by the tagged particle, at time $t-t_1$. Therefore, before this time the trajectory of the tagged particle is the free flow:
\begin{align}
T_{c,\varepsilon}^s(x,v) = \big( x+sv,v \big) \hspace{3mm} \forall\ s \in [0,t-t_1[.
\end{align}
In addition, the position $c_1$ can be parametrized as follows:
\begin{align}
\label{EQUATParamObsta_c_1_}
c_1 = x + (t-t_1) v - \varepsilon \omega_1,
\end{align}
for a certain $\omega_1 \in \mathbb{S}^{d-1}$ such that $\omega_1 \cdot v \leq 0$ (because if the scalar product is positive, there exists a smaller time $s$ such that $x(s) = x + sv$ belongs to $\overline{B(c_1,\varepsilon)}$, which is absurd). Provided that $\omega_1\cdot v \leq 0$, observe that any choice of the parameters $(t_1,\omega_1) \in\ ]0,t[ \times\, \mathbb{S}^{d-1}$ corresponds to an admissible choice for the position $c_1$ of the first scatterer. In particular, the condition $c_1 \notin \overline{B(x,\varepsilon)}$ systematically holds true.\\
Based on \eqref{EQUATParamObsta_c_1_}, we can perform the change of variables $c_1 \rightarrow c_1(t_1,\omega_1)$. The Jacobian determinant of this change of variables is:
\begin{align}
\label{EQUATJacobChangVariaParam__c__}
\text{J}c_1(t_1,\omega_1) = \varepsilon^{d-1} \big\vert v \cdot \omega_1 \big\vert.
\end{align}
We find therefore:
\begin{align}
\varphi_\varepsilon^{(2)}(k) &= \mu_\varepsilon^k \int_{t_1 = 0}^t \int_{\mathbb{S}^{d-1}_{\omega_1}} \mathds{1}_{\omega_1 \cdot v \leq 0} \int_{c_2} \dots \int_{c_k} e^{- \mu_\varepsilon \vert \overline{B(x,\varepsilon)} \, \cup \, \mathcal{T}^t(c^k) \vert} \varphi\big( T_{c,\varepsilon}^t(x,v) \big) \mathds{1}_{X_0}(c^k) \mathds{1}_{\mathcal{P}_\text{patho.}^c}(c^k) \mathds{1}_{\{ c_i \in B_\varepsilon \forall\, 2 \leq i \leq k \}} \nonumber\\
&\hspace{42mm} \times \mathds{1}_{\substack{\text{the obstacles}\\ c_j,\, 2\leq j \leq k,\\ \text{are internal}}}(c^k) \mathds{1}_{\substack{\text{the $l$-th collided} \\ \text{obstacle is }c_l \\ \forall\, 2 \leq l \leq k}}(c^k) \varepsilon^{d-1} \vert v \cdot \omega_1 \vert \dd c_k \dots \dd c_2 \dd \omega_1 \dd t_1.
\end{align}
\paragraph{The cut-off concerning the integration variable $\omega_1$.} In order to estimate more easily the pathological positions $c_2$ of the second scatterer, we define the following subset of the positions $c_1$ of the first scatterer:
\begin{align}
\label{EQUATDefinPathoDelt1}
\widetilde{\mathcal{P}}_{c_1} = \{ \omega_1 \in \mathbb{S}^{d-1} \ /\ \omega_1 \cdot v \leq 0 \ \text{and}\ 1 - \Big\vert \frac{v'}{\vert v' \vert}\cdot\frac{v}{\vert v \vert} \Big\vert \leq \varepsilon^{\delta_1} \},
\end{align}
for a certain constant $\delta_1 > 0$ that will be chosen later. We denote the complement of $\widetilde{\mathcal{P}}_{c_1}$ by $\widetilde{\mathcal{P}}_{c_1}^c$ that is $\{ \omega_1 \in \mathbb{S}^{d-1}\ /\ \omega_1\cdot v \leq 0 \} = \widetilde{\mathcal{P}}_{c_1} \cup \widetilde{\mathcal{P}}_{c_1}^c$, and decompose then:
\begin{align}
\varphi_\varepsilon^{(2)}(k) &= \mu_\varepsilon^k \int_{t_1 = 0}^t \int_{\mathbb{S}^{d-1}_{\omega_1}} \mathds{1}_{\widetilde{\mathcal{P}}_{c_1}^c} \int_{c_2} \dots \int_{c_k} e^{- \mu_\varepsilon \vert \overline{B(x,\varepsilon)} \, \cup \, \mathcal{T}^t(c^k) \vert} \varphi\big( T_{c,\varepsilon}^t(x,v) \big) \mathds{1}_{X_0}(c^k) \mathds{1}_{\mathcal{P}_\text{patho.}^c}(c^k) \mathds{1}_{\{ c_i \in B_\varepsilon \forall\, 2 \leq i \leq k \}} \nonumber\\
&\hspace{42mm} \times \mathds{1}_{\substack{\text{the obstacles}\\ c_j,\, 2\leq j \leq k,\\ \text{are internal}}}(c^k) \mathds{1}_{\substack{\text{the $l$-th collided} \\ \text{obstacle is }c_l \\ \forall\, 2 \leq l \leq k}}(c^k) \varepsilon^{d-1} \vert v \cdot \omega_1 \vert \dd c_k \dots \dd c_2 \dd \omega_1 \dd t_1 \nonumber\\
&\hspace{3mm}+ \mu_\varepsilon^k \int_{t_1 = 0}^t \int_{\mathbb{S}^{d-1}_{\omega_1}} \mathds{1}_{\widetilde{\mathcal{P}}_{c_1}} \int_{c_2} \dots \int_{c_k} e^{- \mu_\varepsilon \vert \overline{B(x,\varepsilon)} \, \cup \, \mathcal{T}^t(c^k) \vert} \varphi\big( T_{c,\varepsilon}^t(x,v) \big) \mathds{1}_{X_0}(c^k) \mathds{1}_{\mathcal{P}_\text{patho.}^c}(c^k) \mathds{1}_{\{ c_i \in B_\varepsilon \forall\, 2 \leq i \leq k \}} \nonumber\\
&\hspace{42mm} \times \mathds{1}_{\substack{\text{the obstacles}\\ c_j,\, 2\leq j \leq k,\\ \text{are internal}}}(c^k) \mathds{1}_{\substack{\text{the $l$-th collided} \\ \text{obstacle is }c_l \\ \forall\, 2 \leq l \leq k}}(c^k) \varepsilon^{d-1} \vert v \cdot \omega_1 \vert \dd c_k \dots \dd c_2 \dd \omega_1 \dd t_1,
\end{align}
the second term corresponding to a pathological choice of scatterer positions, that we will estimate later. This second term will be denoted by $\varphi_{\varepsilon,c_1}^{(2),\text{patho.}}(k)$, while the first term will be denoted by $\varphi_{\varepsilon,1}^{(2)}(k)$, so that $\varphi_\varepsilon^{(2)}(k) = \varphi_{\varepsilon,1}^{(2)}(k) + \varphi_{\varepsilon,c_1}^{(2),\text{patho.}}(k)$.\\
Concerning the position of the second obstacle, since  no other collision  can happen on $[0,t-t_2[$ except the only collision, at time $t-t_1$, between the tagged particle and $c_1$, we deduce that $c_2$ necessarily writes:
\begin{align}
c_2 = x + (t-t_1)v + (t_1-t_2)v' - \varepsilon\omega_2,
\end{align}
for certain $t_2 \in\ ]0,t_1[$ and $\omega_2 \in \mathbb{S}^{d-1}$ such that $\omega_2 \cdot v' \leq 0$. This time, not all the choices of the parameters $(t_2,\omega_2)$ lead to an admissible position $c_2$ for the second scatterer. Indeed, one needs to ensure that:
\begin{align}
\label{EQUATCondi_c_2_}
c_2 \notin \overline{B(x,\varepsilon)}, \hspace{5mm} d\big(c_2,T_{c,\varepsilon}^s(x,v)\big) > \varepsilon\ \forall s \in [0,t-t_1],
\end{align}
the first condition corresponding to the fact that the tagged particle is positionned initially outside the scatterer centered on $c_2$, while the second condition describes the absence of interference.

\paragraph{The cut-off concerning the integration variables $t_2$ and $\omega_2$.} Let us denote by $\mathcal{A}_{c_2} \subset\ ]0,t_1[ \times \mathbb{S}^{d-1}$ the set of parameters $(t_2,\omega_2)$ such that the conditions \eqref{EQUATCondi_c_2_} are satisfied, as well as the other conditions already described by the product of the indicator functions:
\begin{align*}
\mathds{1}_{X_0}(c^k) \mathds{1}_{\mathcal{P}_\text{patho.}^c}(c^k) \mathds{1}_{\{ c_i \in B_\varepsilon \forall\, 2 \leq i \leq k \}} \mathds{1}_{\substack{\text{the obstacles}\\ c_j,\, 2\leq j \leq k,\\ \text{are internal}}}(c^k) \mathds{1}_{\substack{\text{the $l$-th collided} \\ \text{obstacle is }c_l \\ \forall\, 2 \leq l \leq k}}(c^k).
\end{align*}
In particular, we have $c_k \in \mathcal{P}_\text{patho.}^c$. We perform now the change of variables $c_2 \rightarrow c_2(t_2,\omega_2)$, of Jacobian determinant $\vert v' \cdot \omega_2 \vert$, and the first term $\varphi_{\varepsilon,1}^{(2)}(k)$ of $\varphi_\varepsilon^{(2)}(k)$ can be written as:
\begin{align}
\varphi_{\varepsilon,1}^{(2)}(k) &= \mu_\varepsilon^k \int_{t_1 = 0}^t \int_{\mathbb{S}^{d-1}_{\omega_1}} \mathds{1}_{\widetilde{\mathcal{P}}_{c_1}^c} \int_{t_2=0}^{t_1} \int_{\mathbb{S}^{d-1}_{\omega_2}} \mathds{1}_{\mathcal{A}_{c_2}} \int_{c_3} \dots \int_{c_k} e^{- \mu_\varepsilon \vert \overline{B(x,\varepsilon)} \, \cup \, \mathcal{T}^t(c^k) \vert} \varphi\big( T_{c,\varepsilon}^t(x,v) \big) \mathds{1}_{X_0}(c^k) \nonumber\\
&\hspace{45mm} \times \mathds{1}_{\mathcal{P}_\text{patho.}^c}(c^k) \mathds{1}_{\{ c_i \in B_\varepsilon \forall\, 3 \leq i \leq k \}} \mathds{1}_{\substack{\text{the obstacles}\\ c_j,\, 3 \leq j \leq k,\\ \text{are internal}}}(c^k) \mathds{1}_{\substack{\text{the $l$-th collided} \\ \text{obstacle is }c_l \\ \forall\, 3 \leq l \leq k}}(c^k) \nonumber\\ &\hspace{60mm}\times \varepsilon^{2(d-1)} \vert v \cdot \omega_1 \vert \cdot \vert v' \cdot \omega_2 \vert \dd c_k \dots \dd c_3 \dd \omega_2 \dd t_2 \dd \omega_1 \dd t_1.
\end{align}
From now on, we will not undertake to describe precisely the pathological parameters leading to recollisions or interferences. We will only estimate them, by introducing larger sets, such that outside those sets, the parameters will lead to well-defined dynamics as a consequence of elementary and direct arguments.\\
In the case of $c_2$, we decompose the set of parameters $\mathcal{A}_{c_2}$ as follows. We introduce:
\begin{align}
\widetilde{\mathcal{P}}_{c_2}^1 = \{ (t_2,\omega_2) \in \mathcal{A}_{c_2} \ /\ t_1-t_2 < \frac{\varepsilon^{\delta_2}}{r \vert v \vert} \}.
\end{align}
$\widetilde{\mathcal{P}}_{c_2}^1$ is the set such that the obstacles $c_1$ and $c_2$ are close, with $\delta_2 > 0$ a constant to be chosen later. We introduce also:
\begin{align}
\widetilde{\mathcal{P}}_{c_2}^2 = \{ (t_2,\omega_2) \in \mathcal{A}_{c_2} \ /\ x+(t-t_1)v + (t_1-t_2)v' - \varepsilon \omega_2 \in \overline{B(x,\varepsilon)} \},
\end{align}
$\widetilde{\mathcal{P}}_{c_2}^2$ is the set such that the tagged particle would be initially situated inside the second scatterer. And finally:
\begin{align}
\widetilde{\mathcal{P}}_{c_2}^3 = \{ (t_2,\omega_2) \in \mathcal{A}_{c_2} \ /\ x+(t-t_1)v+(t_1-t_2)v' - \varepsilon \omega_2 \in [x,x+(t-t_1)v] + \overline{B(0,\varepsilon)} \}.
\end{align}
$\widetilde{\mathcal{P}}_{c_2}^3$ is the set such that the second obstacle lies in between the initial position of the first particle and its position when it collides with the first scatterer. In other words, this set corresponds to an interference.\\
Observe that no recollision can occur before the collision with the second scatterer $c_2$, because the tagged particle is in post-collisional configuration with the first scatterer on the time interval $[t-t_1,t-t_2]$. Nevertheless, recollisions might occur after the collision with the second scatterer. For instance, the tagged particle can collide once again with $c_1$ after the time $t-t_2$. We will remove such a possibility by prescribing a condition on the velocity $v^{(2)}$ of the tagged particle after its collision with $c_2$. Since the velocity $v^{(2)}$ of the tagged particle after its collision with $c_2$ depends on the angular parameter $\omega_2$, we will impose a condition on this parameter, by introducing the following set:
\begin{align}
\label{EQUATDefinPathoDelt3}
\widetilde{\mathcal{P}}_{c_2}^4 = \{ (t_2,\omega_2) \in \mathcal{A}_{c_2} \ /\ 1 - \Big\vert \frac{v''}{\vert v'' \vert} \cdot \frac{c_2-c_1}{\vert c_2-c_1 \vert} \Big\vert \leq \varepsilon^{\delta_3}\},
\end{align}
with $\delta_3 > 0$ a positive constant to be chosen later.\\
Finally, we perform a last cut-off to prevent the trajectory to present segments between two consecutive collisions that are almost parallel, which will ensure that the time intervals during which the interfence may take place are small.
\begin{align}
\label{EQUATDefinPathoDelt1_bis_}
\widetilde{\mathcal{P}}_{c_2}^5 = \{ (t_2,\omega_2) \in \mathcal{A}_{c_2} \ /\ 1 - \Big\vert \frac{v''}{\vert v'' \vert} \cdot \frac{v}{\vert v \vert} \Big\vert \leq \varepsilon^{\delta_1} \hspace{3mm} \text{and} \hspace{3mm} 1 - \Big\vert \frac{v''}{\vert v'' \vert} \cdot \frac{v'}{\vert v' \vert} \Big\vert \leq \varepsilon^{\delta_1} \},
\end{align}
In the end, we define:
\begin{align}
\widetilde{\mathcal{P}}_{c_2} = \widetilde{\mathcal{P}}_{c_2}^1 \cup \widetilde{\mathcal{P}}_{c_2}^2 \cup \widetilde{\mathcal{P}}_{c_2}^3 \cup \widetilde{\mathcal{P}}_{c_2}^4 \cup \widetilde{\mathcal{P}}_{c_2}^5, \hspace{5mm} \widetilde{\mathcal{P}}_{c_2}^c = \mathcal{A}_{c_2} \backslash \widetilde{\mathcal{P}}_{c_2},
\end{align}
and two integrals $\varphi_{\varepsilon,c_2}^{(2),\text{patho.}}(k)$ and $\varphi_{\varepsilon,2}^{(2)}(k)$, which differ from each other only by the indicator functions $\mathds{1}_{\widetilde{\mathcal{P}}_{c_2}}$ and $\mathds{1}_{\widetilde{\mathcal{P}}^c_{c_2}}$ concerning the integration variables $(t_2,\omega_2)$:
\begin{align}
\varphi_{\varepsilon,c_2}^{(2),\text{patho.}}(k) &= \mu_\varepsilon^k \int_{t_1 = 0}^t \int_{\mathbb{S}^{d-1}_{\omega_1}} \mathds{1}_{\widetilde{\mathcal{P}}_{c_1}^c} \int_{t_2=0}^{t_1} \int_{\mathbb{S}^{d-1}_{\omega_2}} \mathds{1}_{\widetilde{\mathcal{P}}_{c_2}} \int_{c_3} \dots \int_{c_k} e^{- \mu_\varepsilon \vert \overline{B(x,\varepsilon)} \, \cup \, \mathcal{T}^t(c^k) \vert} \varphi\big( T_{c,\varepsilon}^t(x,v) \big) \mathds{1}_{X_0}(c^k) \nonumber\\
&\hspace{40mm} \times \mathds{1}_{\mathcal{P}_\text{patho.}^c}(c^k) \mathds{1}_{\{ c_i \in B_\varepsilon \forall\, 3 \leq i \leq k \}} \mathds{1}_{\substack{\text{the obstacles}\\ c_j,\, 3 \leq j \leq k,\\ \text{are internal}}}(c^k) \mathds{1}_{\substack{\text{the $l$-th collided} \\ \text{obstacle is }c_l \\ \forall\, 3 \leq l \leq k}}(c^k) \nonumber\\
&\hspace{55mm} \times \varepsilon^{2(d-1)} \vert v \cdot \omega_1 \vert \cdot \vert v' \cdot \omega_2 \vert \dd c_k \dots \dd c_3 \dd \omega_2 \dd t_2 \dd \omega_1 \dd t_1
\end{align}
and:
\begin{align}
\varphi_{\varepsilon,2}^{(2)}(k) &= \mu_\varepsilon^k \int_{t_1 = 0}^t \int_{\mathbb{S}^{d-1}_{\omega_1}} \mathds{1}_{\widetilde{\mathcal{P}}_{c_1}^c} \int_{t_2=0}^{t_1} \int_{\mathbb{S}^{d-1}_{\omega_2}} \mathds{1}_{\widetilde{\mathcal{P}}_{c_2}^c} \int_{c_3} \dots \int_{c_k} e^{- \mu_\varepsilon \vert \overline{B(x,\varepsilon)} \, \cup \, \mathcal{T}^t(c^k) \vert} \varphi\big( T_{c,\varepsilon}^t(x,v) \big) \mathds{1}_{X_0}(c^k) \nonumber\\
&\hspace{45mm} \times \mathds{1}_{\mathcal{P}_\text{patho.}^c}(c^k) \mathds{1}_{\{ c_i \in B_\varepsilon \forall\, 3 \leq i \leq k \}} \mathds{1}_{\substack{\text{the obstacles}\\ c_j,\, 3 \leq j \leq k,\\ \text{are internal}}}(c^k) \mathds{1}_{\substack{\text{the $l$-th collided} \\ \text{obstacle is }c_l \\ \forall\, 3 \leq l \leq k}}(c^k) \nonumber\\ &\hspace{60mm} \times \varepsilon^{2(d-1)} \vert v \cdot \omega_1 \vert \cdot \vert v' \cdot \omega_2 \vert \dd c_k \dots \dd c_3 \dd \omega_2 \dd t_2 \dd \omega_1 \dd t_1.
\end{align}
We have then:
\begin{align}
\varphi_{\varepsilon,1}^{(2)}(k) = \varphi_{\varepsilon,2}^{(2)}(k) + \varphi_{\varepsilon,c_2}^{(2),\text{patho.}}(k).
\end{align}

\paragraph{Eliminating the recollisions.} We will apply the same cut-off procedure for all the other scatterers $c_j$, with $j \geq 3$. This procedure presents an additional step with respect to the case of $c_2$. Let us present in detail the case of the third scatterer $c_3$, the case when $j \geq 4$ being exactly the same.\\
In the case of $c_3$, the hard sphere transport between $t-t_2$ (the first time of collision with the second scatterer) and $t-t_3$ (the first time of collision with the third scatterer) may not be only given by the free transport: the tagged particle may collide several time between the two first scatterers before reaching the third one. In this case, the parametrization of the position $c_3$ of this third scatterer becomes intricate, and it is a priori not clear how to proceed with the change of variables.\\
We will now choose the constants $\delta_1$ and $\delta_3$ (introduced respectively in \eqref{EQUATDefinPathoDelt1}, \eqref{EQUATDefinPathoDelt1_bis_} and \eqref{EQUATDefinPathoDelt3}) and the parameter $\varepsilon$ such that if $(t_2,\omega_2) \in \widetilde{\mathcal{P}}_{c_2}^c$, then no recollision can take place during the time interval $]t-t_2,t-t_3[$. In this case, the position $c_3$ of the third scatterer writes:
\begin{align}
\label{EQUATParam_c_3_}
c_3 = x + (t-t_1)v + (t_1-t_2)v' + (t_2-t_3)v'' - \varepsilon\omega_3,
\end{align}
for certain $t_3 \in\ ]0,t_2[$ and $\omega_3 \in \mathbb{S}^{d-2}$ such that $\omega_3 \cdot v'' \leq 0$. We can always parametrize $c_3$ as in \eqref{EQUATParam_c_3_}. Nevertheless, at this step it might be that this piecewise affine parametrization does not correspond to the trajectory of the tagged particle, because of the possible recollisions. By construction the distance between $c_1$ and $x(t-t_2)$ satisfies:
\begin{align}
\vert c_2 - c_1 \vert &= \big\vert x + (t-t_1)v + (t_1-t_2)v' - \varepsilon\omega_2 - x - (t-t_1)v + \varepsilon\omega_1 \big\vert \nonumber\\
&\geq \vert (t_1-t_2) v' \vert - 2\varepsilon \geq \varepsilon^{\delta_2} - 2\varepsilon \geq \frac{1}{2} \varepsilon^{\delta_2},
\end{align}
for $0 < \delta_2 < 1$ and $\varepsilon$ smaller than a certain $\varepsilon_1 = \varepsilon_1(\delta_2)$ that depends only on $\delta_2$. On the other hand, if we assume that a recollision takes place, there exists a time $\tau$ such that:
\begin{align}
x(t-\tau) = x + (t-t_1)v + (t_1-t_2)v' + (t_2-\tau)v'' = c_1 + \varepsilon\sigma
\end{align}
for a certain $\sigma \in \mathbb{S}^{d-1}$. We observe then that, since:
\begin{align}
x(t-\tau) - x(t-t_2) = c_1 + \varepsilon\sigma - c_2 - \varepsilon\omega_2,
\end{align}
we have:
\vspace{-6mm}
\begin{align}
\frac{c_1 -c_2}{\vert c_1 - c_2 \vert} \cdot \frac{x(t-\tau) - x(t-t_2)}{\vert x(t-\tau) - x(t-t_2) \vert} &= \frac{\vert c_1-c_2 \vert + \varepsilon( \sigma-\omega_2) \cdot \frac{c_1-c_2}{\vert c_1-c_2 \vert}}{\vert c_1-c_2 + \varepsilon\sigma - \varepsilon\omega_2 \vert} \nonumber\\
&\geq \frac{\vert c_1-c_2 \vert - 2\varepsilon}{\vert c_1-c_2 \vert + 2\varepsilon} \geq 1 - 4 \frac{\varepsilon}{\vert c_1-c_2 \vert} \geq 1 - 8\varepsilon^{1-\delta_2},
\end{align}
assuming that $\varepsilon/\vert c_1-c_2 \vert < 1$, which is the case for $\varepsilon$ small enough. As a consequence, since $x(t-\tau) - x(t-t_2) = (t_2-\tau)v'' $, if the direction of $x(t-\tau) - x(t-t_2)$ is different enough from the direction of $c_1-c_2$, in the sense that if:
\begin{align}
1 - \Big\vert \frac{v''}{\vert v'' \vert} \cdot \frac{c_2-c_1}{\vert c_2 - c_1 \vert} \Big\vert \leq \varepsilon^{\delta_3} \hspace{5mm} \text{and} \hspace{5mm} 8 \varepsilon^{1-\delta_2} \leq \varepsilon^{\delta_3},
\end{align}
then no recollision can take place. As a consequence, choosing:
\begin{align}
\delta_3 < 1 - \delta_2
\end{align}
so that $8 \varepsilon^{1-\delta_2} \leq \varepsilon^{\delta_3}$ for any $\varepsilon$ small enough (that is, smaller than a certain $\varepsilon_2 = \varepsilon_2(\delta_2,\delta_3)$ that depends only on $\delta_2$ and $\delta_3$), we see that if $(t_2,\omega_2) \in \big( \widetilde{\mathcal{P}}_{c_2}^1 \big)^c \cap \big( \widetilde{\mathcal{P}}_{c_2}^4 \big)^c$, then no recollision can take place on the time interval $[t-t_2,t-t_3]$.\\
\newline
We can therefore proceed to the change of variables, and then introduce the same pathological sets $\widetilde{\mathcal{P}}_{c_3}^i$ for the parameters $(t_3,\omega_3)$, $1 \leq i \leq 5$, as introduced for the second obstacle. Proceeding recursively, we can decompose after $k$ steps the term $\varphi_1(k)$ as follows:
\begin{align}
\varphi_{\varepsilon}^{(2)}(k) = \varphi_{\varepsilon,k}^{(2)}(k) + \sum_{j=1}^k \varphi_{\varepsilon,c_j}^{(2),\text{patho.}}(k),
\end{align}
\noindent
with:
\begin{align}
\varphi_{\varepsilon,k}^{(2)}(k) &= \mu_\varepsilon^k \int_{t_1 = 0}^t \int_{\mathbb{S}^{d-1}_{\omega_1}} \mathds{1}_{\widetilde{\mathcal{P}}_{c_1}^c} \int_{t_2=0}^{t_1} \int_{\mathbb{S}^{d-1}_{\omega_2}} \mathds{1}_{\widetilde{\mathcal{P}}_{c_2}^c} \dots \int_{t_k=0}^{t_{k-1}} \int_{\mathbb{S}_{\omega_k}^{d-1}} \mathds{1}_{\widetilde{\mathcal{P}}_{c_k}^c} e^{- \mu_\varepsilon \vert \overline{B(x,\varepsilon)} \, \cup \, \mathcal{T}^t(c^k) \vert} \prod_{l=1}^k \vert v^{(l-1)} \cdot \omega_l \vert \nonumber\\
&\hspace{20mm} \times \varphi\big( x + \sum_{m=1}^k \big( t_{m-1}-t_m \big) v^{(m-1)} + t_k v^{(k)},v^{(k)} \big) \varepsilon^{k(d-1)} \dd \omega_k \dd t_k \dd \omega_2 \dd t_2 \dd \omega_1 \dd t_1,
\end{align}
\begin{align}
\varphi_{\varepsilon,c_j}^{(2),\text{patho.}}(k) &= \mu_\varepsilon^k \int_{t_1 = 0}^t \int_{\mathbb{S}^{d-1}_{\omega_1}} \mathds{1}_{\widetilde{\mathcal{P}}_{c_1}^c} \dots \int_{t_{j-1} = 0}^{t_{j-2}} \int_{\mathbb{S}^{d-1}_{\omega_{j-1}}} \mathds{1}_{\widetilde{\mathcal{P}}_{c_{j-1}}^c} \int_{t_j=0}^{t_{j-1}} \int_{\mathbb{S}^{d-1}_{\omega_j}} \mathds{1}_{\widetilde{\mathcal{P}}_{c_j}} \int_{c_{j+1}} \dots \int_{c_k} e^{- \mu_\varepsilon \vert \overline{B(x,\varepsilon)} \, \cup \, \mathcal{T}^t(c^k) \vert} \nonumber\\
&\hspace{12mm} \times \varphi\big( T_{c,\varepsilon}^t(x,v) \big) \mathds{1}_{X_0}(c^k) \mathds{1}_{\mathcal{P}_\text{patho.}^c}(c^k) \mathds{1}_{\{ c_i \in B_\varepsilon \forall\, j+1 \leq i \leq k \}} \mathds{1}_{\substack{\text{the obstacles}\\ c_l,\, j+1 \leq l \leq k,\\ \text{are internal}}}(c^k) \mathds{1}_{\substack{\text{the $m$-th collided} \\ \text{obstacle is }c_m \\ \forall\, j+1 \leq m \leq k}}(c^k) \nonumber\\
&\hspace{45mm} \times \prod_{l=1}^j \vert v^{(l-1)} \cdot \omega_l \vert  \varepsilon^{j(d-1)}\dd c_k \dots \dd c_{j+1} \dd \omega_j \dd t_j \dots \dd \omega_1 \dd t_1
\end{align}
and finally
\begin{align}
\widetilde{\mathcal{P}}_{c_j} = \widetilde{\mathcal{P}}_{c_j}^1 \cup \widetilde{\mathcal{P}}_{c_j}^2 \cup \widetilde{\mathcal{P}}_{c_j}^3 \cup \widetilde{\mathcal{P}}_{c_j}^4 \cup \widetilde{\mathcal{P}}_{c_j}^5, \hspace{5mm} \widetilde{\mathcal{P}}_{c_j}^c = \mathcal{A}_{c_j} \backslash \widetilde{\mathcal{P}}_{c_j},
\end{align}
with
\begin{align}
\label{EQUATEnsmbPatho_P_cj__1__}
\widetilde{\mathcal{P}}_{c_j}^1 = \Big\{ (t_j,\omega_j) \in \mathcal{A}_{c_j} \ /\ t_{j-1}-t_j < \max\Big( \frac{\varepsilon^{\delta_2}}{r^{j-1} \vert v \vert },\varepsilon \Big) \Big\},
\end{align}
\begin{align}
\label{EQUATEnsmbPatho_P_cj__2__}
\widetilde{\mathcal{P}}_{c_j}^2 &= \Big\{ (t_j,\omega_j) \in \mathcal{A}_{c_j} \ /\ x+ \sum_{m=1}^j \big(t_{m-1}-t_m\big) v^{(m-1)} - \varepsilon \omega_j \in \overline{B(x,\varepsilon)} \Big\},
\end{align}
\begin{align}
\label{EQUATEnsmbPatho_P_cj__3__}
\widetilde{\mathcal{P}}_{c_j}^3 &= \Big\{ (t_j,\omega_j) \in \mathcal{A}_{c_j} \ /\ \nonumber\\
&\hspace{15mm} x+ \sum_{m=1}^j \big(t_{m-1}-t_m\big) v^{(m-1)} - \varepsilon \omega_j \in [x(t-t_{n-1}),x(t-t_n)] + \overline{B(0,\varepsilon)} \hspace{2mm} \text{for some } n < j \Big\},
\end{align}
\begin{align}
\label{EQUATEnsmbPatho_P_cj__4__}
\widetilde{\mathcal{P}}_{c_j}^4 = \Big\{ (t_j,\omega_j) \in \mathcal{A}_{c_j} \ /\ 1 - \Big\vert \frac{v^{(j)}}{\vert v^{(j)} \vert} \cdot \frac{c_j-c_n}{\vert c_j-c_n \vert} \Big\vert \leq \varepsilon^{\delta_3} \hspace{2mm} \text{for some } n < j \Big\},
\end{align}
\begin{align}
\label{EQUATEnsmbPatho_P_cj__5__}
\widetilde{\mathcal{P}}_{c_j}^5 = \Big\{ (t_j,\omega_j) \in \mathcal{A}_{c_j} \ /\ 1 - \Big\vert \frac{v^{(j)}}{\vert v^{(j)} \vert} \cdot \frac{v^{(n)}}{\vert v^{(n)} \vert} \Big\vert \leq \varepsilon^{\delta_1} \hspace{2mm} \text{for some } n < j  \Big\}.
\end{align}

\paragraph{Estimating the remainder terms.} We turn now to the estimates of the pathological terms $\varphi_{\varepsilon,c_j}^{(2),\text{patho.}}(k)$. To do so, we will decompose:
\begin{align}
\varphi_{\varepsilon,c_j}^{(2),\text{patho.}}(k) = \sum_{l=1}^5 \varphi_{\varepsilon,c_j,l}^{(2),\text{patho.}}(k),
\end{align}
where $\varphi_{\varepsilon,c_j,l}^{(2),\text{patho.}}(k)$ is defined by replacing the pathological set $\widetilde{\mathcal{P}}_{c_j}$ in the expression of $\varphi_{\varepsilon,c_j}^{(2),\text{patho.}}(k)$ by the subset $\widetilde{\mathcal{P}}_{c_j}^l$, defined in \eqref{EQUATEnsmbPatho_P_cj__1__}-\eqref{EQUATEnsmbPatho_P_cj__5__}, that is:
\begin{align}
\varphi_{\varepsilon,c_j,l}^{(2),\text{patho.}}(k) &= \mu_\varepsilon^k \int_{t_1 = 0}^t \int_{\mathbb{S}^{d-1}_{\omega_1}} \mathds{1}_{\widetilde{\mathcal{P}}_{c_1}^c} \dots \int_{t_{j-1} = 0}^{t_{j-2}} \int_{\mathbb{S}^{d-1}_{\omega_{j-1}}} \mathds{1}_{\widetilde{\mathcal{P}}_{c_{j-1}}^c} \int_{t_j=0}^{t_{j-1}} \int_{\mathbb{S}^{d-1}_{\omega_j}} \mathds{1}_{\widetilde{\mathcal{P}}_{c_j}}^l \int_{c_{j+1}} \hspace{-2mm}\dots \int_{c_k} e^{- \mu_\varepsilon \vert \overline{B(x,\varepsilon)} \, \cup \, \mathcal{T}^t(c^k) \vert} \nonumber\\
&\hspace{10mm} \times \varphi\big( T_{c,\varepsilon}^t(x,v) \big) \mathds{1}_{X_0}(c^k) \mathds{1}_{\mathcal{P}_\text{patho.}^c}(c^k) \mathds{1}_{\{ c_i \in B_\varepsilon \forall\, j+1 \leq i \leq k \}} \mathds{1}_{\substack{\text{the obstacles}\\ c_l,\, j+1 \leq l \leq k,\\ \text{are internal}}}(c^k) \mathds{1}_{\substack{\text{the $m$-th collided} \\ \text{obstacle is }c_m \\ \forall\, j+1 \leq m \leq k}}(c^k) \nonumber\\
&\hspace{50mm} \times \prod_{l=1}^j \vert v^{(l-1)} \cdot \omega_l \vert  \varepsilon^{j(d-1)}\dd c_k \dots \dd c_{j+1} \dd \omega_j \dd t_j \dots \dd \omega_1 \dd t_1
\end{align}
for any $1 \leq l \leq 5$. We will estimate the terms $\varphi_{\varepsilon,c_j,l}^{(2),\text{patho.}}(k)$ separately. As for $\varphi_{\varepsilon,c_j,1}^{(2),\text{patho.}}(k)$ (that is, for $l=1$), we will use the following inequality on the subset $S_{j,[a,b]}$ of the $k$-simplex, with $j \leq k$ and $0 \leq a < b \leq t_{j-1}$, defined as:
\begin{align}
S_{j,[a,b]} = \int_0^t\dots \int_{t_{j-1}=0}^{t_{j-2}}\int_{t_j \in [a,b]} \dd t_j \dots \dd t_1.
\end{align}
We have:
\begin{align}
S_{j,[a,b]} &= \int_0^t\dots\int_{t_{j-1}=0}^{t_{j-2}} \big[ b - a \big] \dd t_{j-1} \dots t_1 = \frac{t^{j-1}}{(j-1)!} \big[ b - a \big].
\end{align}

\noindent
In addition, by definition for $j+1 \leq l \leq k$ the obstacle $c_l$ belongs to the dynamical tube obtained for $t \in [t-t_{l-1},t-t_l]$. By construction the configuration of scatterers $c^k$ is chosen such that it belongs to $\mathcal{P}_\text{patho.}^c$, and the obstacles are ordered. In particular, the hard sphere flow $T_{c,\varepsilon}^s$ is well-defined for all time $s \in [0,t]$, and the trajectory presents only a finite number of collisions with the scatterers. So, there exists a sequence of times $0 < t_k < \dots < t_{j+1} < t_j$ such that the time $t-t_l$ corresponds to the first time of collision between the tagged particle and the obstacle $c_l$.\\
Besides, assuming that the position of all the obstacles are known, except the last obstacle $c_k$, we deduce that $c_k$ belongs to a portion of the dynamical tube of length equal to $t_{k-1} \vert v^{(k-1)} \vert$ (where $t-t_{k-1}$ is the time of the first collision with the penultimate obstacle $c_{k-1}$), corresponding to the portion of the dynamical tube obtained in the case when only the first $k-1$ obstacles exist and are internal, for the time interval $[t-t_{k-1},t]$. On this time interval, only a finite number of collisions take place. As a consequence, by recursion, and relying on Lemma \ref{LEMMEEstimMesurTube_Dynam}, we can estimate the measure of the following portion of the dynamical tube $\mathcal{T}^t(c^k)$ as:
\begin{align}
\Big\vert \big\{& y \in \mathbb{R}^d \ /\ \exists\, s\in [t-t_{k-1},t], z \in \partial B(0,\varepsilon) \ \text{such that}\ y = \big(T_{c,\varepsilon}^s(x,v)\big)_x + z \big\} \Big\vert \nonumber\\
&\leq \Big\vert \big[ x(t-t_{k-1}),x(t-t_{k-1}) + t_{k-1} \vert v^{(k-1)} \vert \big] + \overline{B(0,\varepsilon)} \Big\vert,
\end{align}
with
\begin{align}
x(t-t_{k-1}) = \big(T_{c,\varepsilon}^{t-t_{k-1}}(x,v)\big)_x \hspace{5mm} \text{and} \hspace{5mm} v^{(k-1)} = \big(T_{c,\varepsilon}^{t-t_{k-1}}(x,v)\big)_v,
\end{align}
where the subset of which the measure is taken as an upper bound corresponds to the rectified trajectory $T_{c,\varepsilon}^s(x,v)$ plus the closed ball $\overline{B(0,\varepsilon)}$, in the case when there is no collision on the time interval $]t-t_{l-1},t-t_l[$. Since in addition the last obstacle $c_k$ cannot belong to the ball
\begin{align}
\overline{B\big(\big(T_{c,\varepsilon}^{t-t_{k-1}}(x,v)\big)_x,\varepsilon\big)},
\end{align}
we deduce that $c_k$ belongs to the following subset $C_k \subset \mathbb{R}^d$:
\begin{align}
C_k = \Big\{& y \in \mathbb{R}^d \ /\ \exists\, s\in [t-t_{l-1},t-t_l], z \in \partial B(0,\varepsilon) \ \text{such that}\ y = \big(T_{c,\varepsilon}^s(x,v)\big)_x + z \Big\} \, \backslash \, \overline{B\big(\big(T_{c,\varepsilon}^{t-t_{k-1}}(x,v)\big)_x,\varepsilon\big)}
\end{align}
and the volume of $C_k$ can be estimated as follows:
\begin{align}
\vert C_k \vert \leq C(d-1) \varepsilon^{d-1}t_{k-1} \vert v^{(k-1)} \vert, 
\end{align}
where $C(d-1)$ is the constant that appears in the formula of the volume of the $(d-1)$-dimensional ball: $\vert B_{\mathbb{R}^{d-1}}(0,\varepsilon) \vert = C(d-1) \varepsilon^{d-1}$.
In the end, we can estimate the following integral as:
\begin{align}
&\int_{c_{j+1}}\dots\int_{c_k} \mathds{1}_{X_0}(c^k) \mathds{1}_{\mathcal{P}_\text{patho.}^c}(c^k) \mathds{1}_{\{ c_i \in B_\varepsilon \forall\, j+1 \leq i \leq k \}} \mathds{1}_{\substack{\text{the obstacles}\\ c_l,\, j+1 \leq l \leq k,\\ \text{are internal}}}(c^k) \mathds{1}_{\substack{\text{the $m$-th collided} \\ \text{obstacle is }c_m \\ \forall\, j+1 \leq m \leq k}}(c^k) \dd c_k \dots \dd c_{j+1} \nonumber\\
&\hspace{4mm}\leq \int_{c_{j+1}}\dots\int_{c_{k-1}} \mathds{1}_{X_0}(c^{k-1}) \mathds{1}_{\mathcal{P}_\text{patho.}^c}(c^{k-1}) \mathds{1}_{\{ c_i \in B_\varepsilon \forall\, j+1 \leq i \leq k-1 \}} \mathds{1}_{\substack{\text{the obstacles}\\ c_l,\, j+1 \leq l \leq k-1,\\ \text{are internal}}}(c^{k-1}) \mathds{1}_{\substack{\text{the $m$-th collided} \\ \text{obstacle is }c_m \\ \forall\, j+1 \leq m \leq k-1}}(c^{k-1}) \nonumber\\
&\hspace{75mm} \times C(d-1) \varepsilon^{d-1} t_{k-1}(c^{k-1}) \vert v^{(k-1)} \vert \dd c_{k-1} \dots \dd c_{j+1},
\end{align}
relying on the result of Lemma \ref{LEMMEEstimMesurTube_Dynam}. Then, for any integer $n$, we decompose the integral on $c_{k-1}$ as:
\begin{align}
\int_{c_{k-1}} t_{k-1}(c^{k-1}) \dd c_{k-1} = \int_{c_{k-1}} \sum_{l=1}^n \mathds{1}_{t_{k-1} \in [\frac{(l-1)}{n}t_{k-2},\frac{l}{n}t_{k-2}]} t_{k-1}(c^{k-1}) \dd c_{k-1},
\end{align}
which provides the upper bound:
\begin{align}
\int_{c_{k-1}} t_{k-1}(c^{k-1}) \dd c_{k-1} &\leq \sum_{l=1}^n \frac{l}{n} t_{k-2}(c^{k-2}) C(d-1) \varepsilon^{d-1} \frac{t_{k-2}(c^{k-2})}{n} \vert v^{(k-2)} \vert \nonumber\\
&\leq C(d-1) \varepsilon^{d-1} \vert v \vert t_{k-2}^2(c^{k-2}) \frac{n(n+1)}{2n^2} \cdotp
\end{align}
In the limit $n \rightarrow +\infty$, we find in particular:
\begin{align}
\int_{c_{k-1}} t_{k-1}(c^{k-1}) \dd c_{k-1} \leq  C(d-1) \varepsilon^{d-1} \vert v \vert \frac{t_{k-2}^2(c^{k-2})}{2}\cdotp
\end{align}
For the next step, in the general case, the same decomposition of the integral provides a series of the form:
\begin{align}
\frac{t_{k-m}^m}{n^m} \sum_{l=1}^n l^{m-1} = t_{k-m}^m \frac{1}{n} \sum_{l=1}^n \Big( \frac{l}{n} \Big)^{m-1},
\end{align}
which is a Riemann sum, and converges as $n$ goes to infinity towards $\frac{ t_{k-m}^m}{m}$. As a consequence, we find for the pathological term $\varphi_{\varepsilon,c_j,l}^{(1),\text{patho.}}(k)$ obtained in the case $l=1$:
\begin{align}
\label{EQUATEstimErreuPhi_1}
\big\vert \varphi_{\varepsilon,c_j,1}^{(2),\text{patho.}}(k) \big\vert &\leq \mu_\varepsilon^k \vertii{\varphi}_\infty \int_{t_1 = 0}^t \int_{\mathbb{S}^{d-1}_{\omega_1}}  \dots \int_{t_{j-1} = 0}^{t_{j-2}} \int_{\mathbb{S}^{d-1}_{\omega_{j-1}}} \int_{t_j=0}^{t_{j-1}} \int_{\mathbb{S}^{d-1}_{\omega_j}} \mathds{1}_{\widetilde{\mathcal{P}}_{c_j}^1} \int_{c_{j+1}} \dots \int_{c_k} e^{- \mu_\varepsilon \vert \overline{B(x,\varepsilon)} \, \cup \, \mathcal{T}^t(c^k) \vert} \nonumber\\
&\hspace{25mm} \times \mathds{1}_{X_0}(c^k) \mathds{1}_{\mathcal{P}_\text{patho.}^c}(c^k) \mathds{1}_{\{ c_i \in B_\varepsilon \forall\, j+1 \leq i \leq k \}} \mathds{1}_{\substack{\text{the obstacles}\\ c_l,\, j+1 \leq l \leq k,\\ \text{are internal}}}(c^k) \mathds{1}_{\substack{\text{the $m$-th collided} \\ \text{obstacle is }c_m \\ \forall\, j+1 \leq m \leq k}}(c^k) \nonumber\\
&\hspace{35mm} \times \prod_{l=1}^j \vert v^{(l-1)} \cdot \omega_l \vert  \varepsilon^{j(d-1)}\dd c_k \dots \dd c_{j+1} \dd \omega_j \dd t_j \dots \dd \omega_1 \dd t_1 \nonumber\\
&\leq \mu_\varepsilon^k \vertii{\varphi}_\infty \vert \mathbb{S}^{d-1} \vert^j \vert v \vert^j \varepsilon^{j(d-1)} \frac{\big[ C(d-1) \varepsilon^{d-1}t \vert v \vert \big]^{k-j}}{(k-j)!} \frac{t^{j-1}}{(j-1)!} \Big(\frac{\varepsilon^{\delta_2}}{r^{j-1} \vert v \vert} + \varepsilon \Big) \nonumber\\
&\leq \frac{ \big( \vert \mathbb{S}^{d-1} \vert t \vert v \vert r^{-1} \big)^{j-1}}{(j-1)!} \frac{\big[ C(d-1) t \vert v \vert \big]^{k-j}}{(k-j)!} \vert \mathbb{S}^{d-1} \vert \vertii{\varphi}_\infty \big( \varepsilon^{\delta_2} + \vert v \vert \varepsilon \big),
\end{align}
using in particular that in the Boltzmann-Grad limit we have $\mu_\varepsilon^k \varepsilon^{j(d-1)}\varepsilon^{(k-j)(d-1)} = 1$.\\
In the case of $\varphi_{\varepsilon,c_j,2}^{(2),\text{patho.}}(k)$ ($l = 2$) we use a similar argument. For $\omega_j$ fixed, the set of $t_j$ such that $(t_j,\omega_j) \in \widetilde{\mathcal{P}}_{c_j}^2$ is contained in a certain interval , with $I_\text{path.}$ that depends on the $x$, $t_m$ and $v^{(m-1)}$ for $1 \leq m \leq j-1$, and such that:
\begin{align}
\big\vert I_\text{path.} \big\vert \leq \frac{2}{\vert v^{(j)} \vert}\varepsilon \leq \frac{2}{r^j \vert v \vert} \varepsilon.
\end{align}
We find therefore:
\begin{align}
\label{EQUATEstimErreuPhi_2}
\big\vert \varphi_{\varepsilon,c_j,2}^{(2),\text{patho.}}(k) \big\vert &\leq \frac{ \big( \vert \mathbb{S}^{d-1} \vert t \vert v \vert r^{-1} \big)^{j-1}}{(j-1)!} \frac{\big[ C(d-1) t \vert v \vert \big]^{k-j}}{(k-j)!}\frac{\vert \mathbb{S}^{d-1} \vert \vertii{\varphi_0}_\infty}{r} \varepsilon.
\end{align}
As for $\varphi_{\varepsilon,c_j,3}^{(2),\text{patho.}}(k)$ ($l = 3$), we use the fact that $c^k \in \widetilde{\mathcal{P}}_{c_{j-1}}^c$, so that in particular $c^k \in \widetilde{\mathcal{P}}_{c_{j-1}}^5$, and so the direction of $v^{(j-1)}$ is far enough from the direction of $v^{(n)}$, that is, the direction of $x(t-t_{n-1})-x(t-t_n)$, for any $n < j-1$. Therefore, for any index $n < j-1$, only a small subset of time parameters $t_j$ are such that $T_{c,\varepsilon}^s(x,v)$ belongs to the cylinder $[x(t-t_{n-1}),x(t-t_n)] + \overline{B(0,\varepsilon)}$. More precisely, for $\theta$ the angle between $v^{(j-1)}$ and $v^{(n)}$, we have that $\vert \cos\theta \vert$ is bounded from above by $1 - \varepsilon^{\delta_1}$, so that the maximal length of a trajectory contained in the cylinder is bounded from above by:
\begin{align}
2 \frac{\varepsilon}{\vert v^{(j-1)} \vert \vert \sin\theta \vert} &\leq 2 \frac{\varepsilon}{r^{j-1} \vert v \vert \sqrt{1 - \big(1 - \varepsilon^{\delta_1}\big)^2}} \leq 2 \frac{\varepsilon}{r^{j-1} \vert v \vert\sqrt{\varepsilon^{\delta_1}}} \leq \frac{2}{r^{j-1} \vert v \vert} \varepsilon^{1-\delta_1/2},
\end{align}
for any $\varepsilon$ smaller than a certain $\varepsilon_3 = \varepsilon_3(\delta_1)$ that depends only on the positive constant $\delta_1$. We find then:
\begin{align}
\label{EQUATEstimErreuPhi_3}
\big\vert \varphi_{\varepsilon,c_j,3}^{(2),\text{patho.}}(k) \big\vert &\leq \mu_\varepsilon^k \vertii{\varphi}_\infty \vert \mathbb{S}^{d-1} \vert^j \vert v \vert^j \varepsilon^{j(d-1)} \frac{\big[ C(d-1) \varepsilon^{d-1}t \vert v \vert \big]^{k-j}}{(k-j)!} \frac{t^{j-1}}{(j-1)!} \frac{2(j-2)\varepsilon^{1-\delta_1/2}}{r^{j-1} \vert v \vert} \nonumber\\
&\leq 2\frac{ \big( \vert \mathbb{S}^{d-1} \vert t \vert v \vert r^{-1} \big)^{j-2}}{(j-2)!} \frac{\big[ C(d-1) t \vert v \vert \big]^{k-j}}{(k-j)!} t \vert v \vert r^{-1} \vert \mathbb{S}^{d-1} \vert^2 \vertii{\varphi}_\infty \varepsilon^{1-\delta_1/2}.
\end{align}
Finally, we will rely on Lemma \ref{LEMMEColinearitScatt} to estimate the size of the pathological terms $\varphi_{\varepsilon,c_j,4}^{(2),\text{patho.}}(k)$ and $\varphi_{\varepsilon,c_j,5}^{(2),\text{patho.}}(k)$. Each of the sets $\widetilde{\mathcal{P}}_{c_j}^4$ and $\widetilde{\mathcal{P}}_{c_j}^5$ corresponds to a set of $j-1$ pathological directions, that are obtained by taking the angular parameter $\omega_j$ in $j-1$ pathological subsets, of respective measure estimated by $C(d,r)\varepsilon^{\delta_3/2}$ and $C(d,r)\varepsilon^{\delta_1/2}$. In the end, we find in the case $l = 4$:
\begin{align}
\label{EQUATEstimErreuPhi_4}
\big\vert \varphi_{\varepsilon,c_j,4}^{(2),\text{patho.}}(k) \big\vert &\leq \mu_\varepsilon^k \vertii{\varphi}_\infty \vert v \vert^j \varepsilon^{j(d-1)} \vert \mathbb{S}^{d-1} \vert^{j-1}   \frac{\big[ C(d-1) \varepsilon^{d-1}t \vert v \vert \big]^{k-j}}{(k-j)!} \frac{t^j}{j!} (j-1) C(d,r) \varepsilon^{\delta_3/2}\nonumber\\
&\leq C(d,r) \frac{ \big( \vert \mathbb{S}^{d-1} \vert t \vert v \vert \big)^{j-1}}{(j-1)!} \frac{\big[ C(d-1) t \vert v \vert \big]^{k-j}}{(k-j)!} t \vert v \vert \vertii{\varphi}_\infty \varepsilon^{\delta_3/2},
\end{align}
and similarly, for $\varphi_{\varepsilon,c_j,5}^{(2),\text{patho.}}(k)$ ($l = 5$):
\begin{align}
\label{EQUATEstimErreuPhi_5}
\big\vert \varphi_{\varepsilon,c_j,5}^{(2),\text{patho.}}(k) \big\vert &\leq C(d,r) \frac{ \big( \vert \mathbb{S}^{d-1} \vert t \vert v \vert \big)^{j-1}}{(j-1)!} \frac{\big[ C(d-1) t \vert v \vert \big]^{k-j}}{(k-j)!} t \vert v \vert \vertii{\varphi}_\infty \varepsilon^{\delta_1/2}.
\end{align}
Gathering the estimates, we denote by $R_2$ the collection of the remainder terms:
\begin{align}
R_2 = \sum_{k=1}^{+\infty} \sum_{j=1}^k \sum_{l=1}^5 \varphi_{\varepsilon,c_j,l}^{(2),\text{patho.}}(k),
\end{align}
so that:
\begin{align}
\varphi_\varepsilon^{(1)} = \sum_{k=1}^{+\infty} \varphi_{\varepsilon,k}^{(2)}(k) + \sum_{k=1}^{+\infty} \sum_{j=1}^k \sum_{l=1}^5 \varphi_{\varepsilon,c_j,l}^{(2),\text{patho.}}(k) = \sum_{k=1}^{+\infty} \varphi_{\varepsilon,k}^{(2)}(k) + R_2.
\end{align}
Denoting by:
\begin{align}
C_1 = \vert \mathbb{S}^{d-1} \vert t \vert v \vert r^{-1} \hspace{5mm} \text{and} \hspace{5mm} C_2 = C(d-1) t \vert v \vert
\end{align}
and relying on \eqref{EQUATEstimErreuPhi_1}-\eqref{EQUATEstimErreuPhi_5}, we find:
\begin{align}
&\big\vert R_2 \big\vert \leq \vertii{\varphi}_\infty \sum_{k=1}^{+\infty} \sum_{j=1}^k \frac{C_1^{j-1}}{(j-1)!} \frac{C_2^{k-j}}{(k-j)!} \nonumber\\
&\hspace{10mm} \times \Big[ \vert \mathbb{S}^{d-1} \vert \big( \varepsilon^{\delta_2} + \vert v \vert \varepsilon \big) + \frac{\vert \mathbb{S}^{d-1} \vert}{r} \varepsilon + 2 t \vert v \vert \frac{\vert \mathbb{S}^{d-1} \vert^2}{r} \varepsilon^{1-\delta_1/2} + C(d,r) t \vert v \vert \varepsilon^{\delta_3/2} + C(d,r) t \vert v \vert \varepsilon^{\delta_1/2} \Big].
\end{align}
Inverting the sums:
\begin{align}
\sum_{k=1}^{+\infty} \sum_{j=1}^k \frac{C_1^{j-1}}{(j-1)!} \frac{C_2^{k-j}}{(k-j)!} = \sum_{j=1}^{+\infty} \sum_{k=j}^{+\infty} \frac{C_1^{j-1}}{(j-1)!} \frac{C_2^{k-j}}{(k-j)!} = e^{C_1+C_2}.
\end{align}
In the end, choosing:
\begin{align}
\delta_1 = \frac{1}{2}, \hspace{5mm} \delta_2 = \frac{1}{4} \hspace{5mm} \text{and} \hspace{5mm} \delta_3 = \frac{1}{2},
\end{align}
we have obtained that there exist two positive constants $\widetilde{C}_1 = \widetilde{C}_1(d,r)$ and $\widetilde{C}_2 = \widetilde{C}_1=2(d,r)$ that depend only on the dimension $d$ and the restitution coefficient $r$ such that:
\begin{align}
\big\vert R_2 \big\vert \leq \vertii{\varphi}_\infty \widetilde{C}_1 \big(1 + \max(1,t) \vert v \vert\big) e^{\widetilde{C}_2 t \vert v \vert} \varepsilon^{1/4}.
\end{align}
The proof of Proposition \ref{PROPOReecrPhiEpElimiRecol} is complete.
\end{proof}

\noindent
Proposition \ref{PROPOReecrPhiEpElimiRecol} completed the important step which consists in comparing $\varphi_\varepsilon$ with the solution $\widetilde{\varphi}$ of the adjoint equation \eqref{EQUATLineaBoltzSphDuFormeAdjoi}. To do so, we established that for most of the configurations $c$ of the scatterers, the flow $T_{c,\varepsilon}^t(x,v)$ of the tagged particle corresponds to $\big( x + \sum_{m=1}^k \big( t_{m-1}-t_m \big) v^{(m-1)} + t_k v^{(k)},v^{(k)} \big)$, the position and velocity at which the intial datum $\varphi_0$ is evaluated in the series representation \eqref{EQUATRepreSerieEquatAdjoi} of the solution $\widetilde{\varphi}$.

\begin{propo}[Final comparison between $\varphi_\varepsilon$ and $\varphi$]
\label{PROPOCompaFinal_Phi_PhiEp}
Let $\varepsilon > 0$. Let $c$ be a $d$-dimensional Poisson process of intensity $\mu_\varepsilon > 0$ such that the Boltzmann-Grad scaling \eqref{EQUATBoltzmann-Grad_Limit} holds. Then, there exists a universal constant $\varepsilon_0 > 0$ and two positive constants $\widetilde{C}_3 = \widetilde{C}_3(d,r)$ and $\widetilde{C}_4 = \widetilde{C}_4(d,r)$ that depend only on the dimension $d$ and the restitution coefficient $r$ such that, for any $0 < \varepsilon \leq \varepsilon_0$, and any function $\varphi \in \mathcal{C}_0(\mathbb{R}^d \times \mathbb{R}^d)$, the quantity $\varphi_\varepsilon$, defined as \eqref{EQUATDefinPhiEp_Avec_FlotDefin}, and the series $\psi$ defined in \eqref{EQUATRepreSerieEquatAdjoi} with initial datum $\varphi$ satisfy:
\begin{align}
\varphi_\varepsilon - \psi = R_1 + R_2 + R_3,
\end{align}
where $R_1$ and $R_2$ satisfy respectively \eqref{EQUATDefinReste_R_1_} and \eqref{EQUATDefinReste_R_2_PropoElimiRecol}, and $R_3$ is such that:
\begin{align}
\label{EQUATDefinReste_R_3_}
\big\vert R_3 \big\vert \leq \vertii{\varphi}_\infty \widetilde{C}_3 \max\big(1,t \vert v \vert^2, t^2 \vert v \vert^2 \big) e^{\widetilde{C}_4 t \vert v \vert} \varepsilon^{1/4}.
\end{align}
\end{propo}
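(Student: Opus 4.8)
The starting point is Proposition \ref{PROPOReecrPhiEpElimiRecol}, which gives $\varphi_\varepsilon=\varphi_\varepsilon^{(2)}+R_1+R_2$ with $\varphi_\varepsilon^{(2)}=e^{-\mu_\varepsilon|[x,x+tv]+\overline{B(0,\varepsilon)}|}\varphi(x+tv,v)+\sum_{k\ge1}\varphi_{\varepsilon,k}^{(2)}(k)$, the $\varphi_{\varepsilon,k}^{(2)}(k)$ being the $k$-fold integrals displayed there, together with the series representation of $\psi$ from Proposition \ref{PROPORepreSerieEquatAdjoi}, $\psi=e^{-C_d|v|t}\varphi(x+tv,v)+\sum_{k\ge1}\psi_k$. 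It therefore suffices to bound $R_3:=\varphi_\varepsilon^{(2)}-\psi$ term by term in $k$. For each $k\ge1$, $\varphi_{\varepsilon,k}^{(2)}(k)$ and $\psi_k$ are integrals over the same $k$-simplex $\{0<t_k<\dots<t_1<t\}$ and, once the integration over $\mathbb{S}^{d-1}$ in $\psi_k$ is reconciled with the physically admissible hemisphere arising from the reparametrization $c_l=x(t-t_l)-\varepsilon\omega_l$ via the $\omega_l\mapsto-\omega_l$ invariance of the collision map, over the same angular variables; they carry the same collisional weight $\prod_{l=1}^k|v^{(l-1)}\cdot\omega_l|$ and, crucially — this is the content of Proposition \ref{PROPOReecrPhiEpElimiRecol} — the same evaluation point $(x+\sum_{m=1}^k(t_{m-1}-t_m)v^{(m-1)}+t_kv^{(k)},v^{(k)})$ of $\varphi$. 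The only two discrepancies left are (i) the cut-off indicators $\mathds{1}_{\widetilde{\mathcal{P}}_{c_l}^c}$ present in $\varphi_{\varepsilon,k}^{(2)}(k)$ and absent in $\psi_k$, and (ii) the exponential prefactors, $e^{-\mu_\varepsilon|\overline{B(x,\varepsilon)}\cup\mathcal{T}^t(c^k)|}$ versus $e^{\sum_{j=1}^kC_d|v^{(j-1)}|(t_j-t_{j-1})-C_d|v^{(k)}|t_k}$. I would accordingly write $R_3=R_3^{\mathrm{cut}}+R_3^{\mathrm{tube}}$ and estimate the two pieces. Throughout, the key structural fact is that the forward flow is non-expanding in speed (since $r\le1$), so $|v^{(l)}|\le|v|$ for every $l$; in particular the $1/r$-growth responsible for the divergences of Theorem \ref{PROPOConveRepreSerieSolutBoltzLineaInela} does not enter here, and every integrand is bounded by $\vertii{\varphi}_\infty|v|^k$ (the decay exponential being $\le1$).

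For $R_3^{\mathrm{cut}}$, replacing each $\mathds{1}_{\widetilde{\mathcal{P}}_{c_l}^c}$ by $1$ in $\varphi_{\varepsilon,k}^{(2)}(k)$ amounts to adding back a sum of $\sum_{j=1}^k\sum_{i=1}^5$ integrals of the $\psi$-integrand over the bad sets $\widetilde{\mathcal{P}}_{c_j}^i$ of \eqref{EQUATEnsmbPatho_P_cj__1__}--\eqref{EQUATEnsmbPatho_P_cj__5__}, with the choices $\delta_1=\delta_3=1/2$, $\delta_2=1/4$. On each such set the integrand is bounded by $\vertii{\varphi}_\infty|v|^k$, so the bounds required are exactly those already obtained in the proof of Proposition \ref{PROPOReecrPhiEpElimiRecol}: Lemma \ref{LEMMEColinearitScatt} controls the angular slices, giving $|\widetilde{\mathcal{P}}_{c_j}^4|$ and $|\widetilde{\mathcal{P}}_{c_j}^5|\lesssim\varepsilon^{1/4}$; elementary one-dimensional estimates on $t_j$ control $\widetilde{\mathcal{P}}_{c_j}^{1}$ (size $\lesssim\varepsilon^{1/4}$) and $\widetilde{\mathcal{P}}_{c_j}^{2}$ (size $\lesssim\varepsilon$); and the transversality encoded in $(\widetilde{\mathcal{P}}_{c_j}^5)^c$ controls $\widetilde{\mathcal{P}}_{c_j}^{3}$, exactly as in \eqref{EQUATEstimErreuPhi_1}--\eqref{EQUATEstimErreuPhi_5}. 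Summing over $j$ costs a factor $\lesssim k$, and summing over $k$ against the simplex factor $\frac1{(k-1)!}(|\mathbb{S}^{d-1}|t|v|)^{k}$ yields $|R_3^{\mathrm{cut}}|\le\vertii{\varphi}_\infty\,\widetilde{C}\,\max(1,t|v|^2,t^2|v|^2)\,e^{\widetilde{C}t|v|}\,\varepsilon^{1/4}$, the polynomial prefactor coming from converting the $t_j$-lengths appearing in $\widetilde{\mathcal{P}}^{1,2,3}$ into time intervals and collecting the surviving powers of $t$ and $|v|$.

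For $R_3^{\mathrm{tube}}$, on $\bigcap_l\widetilde{\mathcal{P}}_{c_l}^c$ there are neither recollisions nor interferences (this was the purpose of the previous step), so the true trajectory $s\mapsto(T^s_{c,\varepsilon}(x,v))_x$ coincides with the polygonal pseudo-trajectory, and $(\widetilde{\mathcal{P}}_{c_l}^5)^c$ forces its segments to be pairwise transversal with angles $\gtrsim\varepsilon^{1/4}$. Hence $\overline{B(x,\varepsilon)}\cup\mathcal{T}^t(c^k)$ is, up to $k$ ``elbow'' regions near the collision points and the two endcaps, a union of $k+1$ essentially disjoint $\varepsilon$-cylinders of total length $L:=\sum_{j=1}^k|v^{(j-1)}|(t_{j-1}-t_j)+|v^{(k)}|t_k$ (with $t_0=t$), so that $\mu_\varepsilon|\overline{B(x,\varepsilon)}\cup\mathcal{T}^t(c^k)|=C_d L+\mathrm{Err}_k$; here the upper bound on $\mathrm{Err}_k$ uses Lemma \ref{LEMMEEstimMesurTube_Dynam} (the tube is dominated by its straightened version), and both bounds use a direct estimate of the overlap of two $\varepsilon$-cylinders meeting at an angle $\theta\gtrsim\varepsilon^{1/4}$, of $\mu_\varepsilon$-measure $\lesssim\varepsilon/\theta\lesssim\varepsilon^{3/4}$, so that $|\mathrm{Err}_k|\lesssim k\,\varepsilon^{3/4}+\varepsilon$. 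Since $\sum_{j=1}^kC_d|v^{(j-1)}|(t_j-t_{j-1})-C_d|v^{(k)}|t_k=-C_d L$, the elementary inequality $|e^{-a}-e^{-b}|\le|a-b|$ for $a,b\ge0$ bounds the difference of prefactors by $|\mathrm{Err}_k|$; multiplying by $\vertii{\varphi}_\infty|v|^k$ and by the simplex weight $\frac{t^k}{k!}|\mathbb{S}^{d-1}|^k$ and summing over $k$ (the extra factor $k$ in $\mathrm{Err}_k$ producing only one more power of $t|v|$) gives $|R_3^{\mathrm{tube}}|\le\vertii{\varphi}_\infty\,\widetilde{C}\,\max(1,t|v|^2,t^2|v|^2)\,e^{\widetilde{C}t|v|}\,\varepsilon^{1/4}$. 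The $k=0$ terms are handled identically: $[x,x+tv]+\overline{B(0,\varepsilon)}$ is a straight $\varepsilon$-cylinder of $\mu_\varepsilon$-measure $C_d|v|t+O(\varepsilon)$. Adding $R_3^{\mathrm{cut}}$ and $R_3^{\mathrm{tube}}$ and absorbing constants yields \eqref{EQUATDefinReste_R_3_}.

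I expect the main obstacle to be the tube-volume estimate of the third paragraph: identifying the exact constant $C_d$ in $\mu_\varepsilon|\overline{B(x,\varepsilon)}\cup\mathcal{T}^t(c^k)|$ and, above all, controlling the elbow corrections \emph{uniformly in $k$}, so that the resulting series still sums to an $e^{\widetilde{C}t|v|}$-type bound — which is precisely why the transversality cut-off $\widetilde{\mathcal{P}}^5$ and Lemma \ref{LEMMEEstimMesurTube_Dynam} were put in place at the previous stage. A secondary point of care is the bookkeeping of the polynomial prefactors in $(t,|v|)$ coming from the various cut-off estimates, which must remain compatible with the $1/k!$ decay of the simplex volume for the $k$-sums to converge.
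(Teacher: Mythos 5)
Your decomposition is the same as the paper's: the paper introduces an intermediate $\varphi^{(3)}_{\varepsilon,k}(k)$ with the cut-offs in place but the idealized exponential $e^{-C_d L}$, estimates $\varphi^{(2)}_{\varepsilon,k}(k)-\varphi^{(3)}_{\varepsilon,k}(k)$ (your $R_3^{\mathrm{tube}}$) via Lemma~\ref{LEMMEEstimMesurTube_Dynam} and the inequality $|e^{-a}-e^{-b}|\le|a-b|$, then estimates $\psi-\sum_k\varphi^{(3)}_{\varepsilon,k}(k)$ (your $R_3^{\mathrm{cut}}$) using the same pathological-set bounds from Proposition~\ref{PROPOReecrPhiEpElimiRecol}; the order in which you process the two pieces is reversed but that is immaterial. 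The one small divergence is in the tube error: you bound the elbow overlaps using the $\widetilde{\mathcal{P}}^5$ transversality and the intersection volume of two $\varepsilon$-cylinders ($\mu_\varepsilon$-measure $\lesssim\varepsilon/\sin\theta\lesssim\varepsilon^{3/4}$), whereas the paper relies only on the $\widetilde{\mathcal{P}}^1$ cut-off ($t_{j-1}-t_j\ge\varepsilon$) to shrink each segment by $2\varepsilon$ and obtain the sharper bound $|\mathrm{Err}_k|\lesssim\max(\varepsilon,k|v|\varepsilon)$; either bound is dwarfed by the cut-off term's $\varepsilon^{1/4}$, so the final rate is unaffected.
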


\begin{proof}
We observe that the series representation \eqref{EQUATRepreSerieEquatAdjoi} of $\widetilde{\varphi}$ is well defined assuming only that $\varphi$ is continuous and vanishing at infinity. To compare $\sum_{k=1}^{+\infty} \varphi_{\varepsilon,k}^{(2)}(k)$ with the series representation, we proceed in two steps: firstly, we determine lower and upper bounds for the measure of the dynamical tube, and secondly, we will estimate the error term coming from the truncations $\mathds{1}_{\widetilde{\mathcal{P}}_{c_j}}$ in the domain of integration.\\
As first step, we consider the collisionless tube, whose measure is given by:
\begin{align}
\mu_\varepsilon \big\vert [x,x+tv] + \overline{B(0,\varepsilon)} \big\vert = \mu_\varepsilon \big[ \vert \overline{B(0,\varepsilon)} \vert + C(d-1) \varepsilon^{d-1} t \vert v \vert \big],
\end{align}
so that
\begin{align}
\Big\vert \mu_\varepsilon \big\vert [x,x+tv] + \overline{B(0,\varepsilon)} \big\vert - C(d-1) t \vert v \vert \Big\vert \leq \mu_\varepsilon \vert \overline{B(0,\varepsilon)} \vert = C(d) \varepsilon.
\end{align}
Besides, considering the dynamical tube $\mathcal{T}^t(c^k)$ (that is, in the case when collisions can occur) and applying Lemma \ref{LEMMEEstimMesurTube_Dynam}, we find:
\begin{align}
\mu_\varepsilon \big\vert \mathcal{T}^t(c^k) \big\vert &\leq \mu_\varepsilon \vert \overline{B(0,\varepsilon)} \vert + \mu_\varepsilon C(d-1) \varepsilon^{d-1} \Big[ \sum_{m=1}^k (t_{m-1}-t_m) \vert v^{(m-1)} \vert + t_k \vert v^{(k)} \vert \Big] \nonumber\\
&\leq C(d) \varepsilon + C(d-1) \Big[ \sum_{m=1}^k (t_{m-1}-t_m) \vert v^{(m-1)} \vert + t_k \vert v^{(k)} \vert \Big].
\end{align}
To obtain a lower bound on the previous quantity, we have $t_m-t_{m+1} \geq \varepsilon$ for any $1 \leq m \leq k-1$ thanks to the cut-off $\widetilde{\mathcal{P}}_{c_j}^1$. Therefore:
\begin{align}
\mu_\varepsilon \big\vert \mathcal{T}^t(c^k) \big\vert &\geq \mu_\varepsilon C(d-1) \varepsilon^{d-1} \Big[ (t-t_1) \vert v \vert + (t_1-t_2-2\varepsilon) \vert v' \vert + \dots + (t_{k-1}-t_k-2\varepsilon)\vert v^{(k-1)} \vert + t_k \vert v^{(k)} \vert \Big]
\end{align}
which implies:
\begin{align}
\Big\vert \big( \mu_\varepsilon \big\vert \overline{B(x,\varepsilon)} \cup \mathcal{T}^t(c^k) \big\vert \big) - \big( \sum_{j=1}^k C_d (t_{j-1}-t_j) \vert v^{(j-1)} \vert + C_d t_k \vert v^{(k)} \vert \big)  \Big\vert \leq \max\big( 2C(d)\varepsilon,2C(d-1)k\vert v \vert\varepsilon \big),
\end{align}
observing that the constants $C_d$ and $C(d-1)$ match. As a consequence, comparing $\varphi_{\varepsilon,k}^{(2)}(k)$ with $\varphi_{\varepsilon,k}^{(3)}(k)$ defined as (the exponential term below the integrals is replaced by the corresponding term of $\psi$):
\begin{align}
\varphi_{\varepsilon,k}^{(3)}(k) &= \int_{t_1 = 0}^t \int_{\mathbb{S}^{d-1}_{\omega_1}} \mathds{1}_{\widetilde{\mathcal{P}}_{c_1}^c} \int_{t_2=0}^{t_1} \int_{\mathbb{S}^{d-1}_{\omega_2}} \mathds{1}_{\widetilde{\mathcal{P}}_{c_2}^c} \dots \int_{t_k=0}^{t_{k-1}} \int_{\mathbb{S}_{\omega_k}^{d-1}} \mathds{1}_{\widetilde{\mathcal{P}}_{c_k}^c} e^{-\sum_{j=1}^k C_d (t_{j-1}-t_j) \vert v^{(j-1)} \vert - C_d t_k \vert v^{(k)} \vert} \nonumber\\
&\hspace{10mm} \times \prod_{l=1}^k \vert v^{(l-1)} \cdot \omega_l \vert \varphi\big( x + \sum_{m=1}^k \big( t_{m-1}-t_m \big) v^{(m-1)} + t_k v^{(k)},v^{(k)} \big) \dd \omega_k \dd t_k \dd \omega_2 \dd t_2 \dd \omega_1 \dd t_1
\end{align}
we have:
\begin{align}
\big\vert \varphi_{\varepsilon,k}^{(2)}(k) - \varphi_{\varepsilon,k}^{(3)}(k) \big\vert &\leq \int_{t_1 = 0}^t \int_{\mathbb{S}^{d-1}_{\omega_1}} \mathds{1}_{\widetilde{\mathcal{P}}_{c_1}^c} \int_{t_2=0}^{t_1} \int_{\mathbb{S}^{d-1}_{\omega_2}} \mathds{1}_{\widetilde{\mathcal{P}}_{c_2}^c} \dots \int_{t_k=0}^{t_{k-1}} \int_{\mathbb{S}_{\omega_k}^{d-1}} 2 \max \big[ C(d),C(d-1) k \vert v \vert \big] \varepsilon \nonumber\\
&\hspace{5mm} \times \prod_{l=1}^k \vert v^{(l-1)} \cdot \omega_l \vert \varphi\big( x + \sum_{m=1}^k \big( t_{m-1}-t_m \big) v^{(m-1)} + t_k v^{(k)},v^{(k)} \big) \dd \omega_k \dd t_k \dd \omega_2 \dd t_2 \dd \omega_1 \dd t_1 \nonumber\\
&\leq 2 \max \big[ C(d),C(d-1) k \vert v \vert \big] \frac{\vert \mathbb{S}^{d-1} \vert^k t^k \vert v \vert^k}{k!} \vertii{ \varphi }_\infty \varepsilon,
\end{align}
so that
\begin{align}
\Big\vert \sum_{k=1}^{+\infty} \varphi_{\varepsilon,k}^{(2)}(k) - \sum_{k=1}^{+\infty} \varphi_{\varepsilon,k}^{(3)}(k) \Big\vert \leq 2 \max \Big[ C(d) e^{\vert \mathbb{S}^{d-1} \vert t \vert v \vert} \vertii{\varphi}_\infty,C(d-1) \vert \mathbb{S}^{d-1} \vert t \vert v \vert^2 e^{\vert \mathbb{S}^{d-1} \vert t \vert v \vert} \Big] \varepsilon.
\end{align}
As for the second step, we compare finally $\sum_{k=1}^{+\infty} \varphi_{\varepsilon,k}^{(3)}(k)$ with the series representation \eqref{EQUATRepreSerieEquatAdjoi} of $\psi$. We have:
\begin{align*}
\big\vert \psi - \sum_{k=1}^{+\infty} \varphi_{\varepsilon,k}^{(3)}(k) \big\vert &= \Big\vert \sum_{k=1}^{+\infty} \sum_{j=1}^k \int_{t_1 = 0}^t \int_{\mathbb{S}^{d-1}_{\omega_1}} \mathds{1}_{\widetilde{\mathcal{P}}_{c_1}^c} \int_{t_2=0}^{t_1} \int_{\mathbb{S}^{d-1}_{\omega_2}} \mathds{1}_{\widetilde{\mathcal{P}}_{c_2}} \dots \int_{t_{j-1}=0}^{t_{j-2}} \int_{\mathbb{S}^{d-1}_{\omega_{j-1}}} \mathds{1}_{\widetilde{\mathcal{P}}_{c_{j-1}}^c} \int_{t_j=0}^{t_{j-1}} \int_{\mathbb{S}^{d-1}_{\omega_j}} \mathds{1}_{\widetilde{\mathcal{P}}_{c_j}} \nonumber\\
&\hspace{10mm} \times\int_{t_{j+1}=0}^{t_j} \int_{\mathbb{S}_{\omega_{j+1}}^{d-1}} \dots \int_{t_k=0}^{t_{k-1}} \int_{\mathbb{S}_{\omega_k}^{d-1}} e^{-\sum_{j=1}^k C_d (t_{j-1}-t_j) \vert v^{(j-1)} \vert - C_d t_k \vert v^{(k)} \vert}  \prod_{l=1}^k \vert v^{(l-1)} \cdot \omega_l \vert \nonumber\\
&\hspace{25mm} \times \varphi\big( x + \sum_{m=1}^k \big( t_{m-1}-t_m \big) v^{(m-1)} + t_k v^{(k)},v^{(k)} \big) \dd \omega_k \dd t_k \dd \omega_2 \dd t_2 \dd \omega_1 \dd t_1 \Big\vert \nonumber
\end{align*}
so that
\begin{align}
\big\vert \psi - \sum_{k=1}^{+\infty} \varphi_{\varepsilon,k}^{(3)}(k) \big\vert &\leq \vertii{\varphi}_\infty \sum_{k=1}^{+\infty} \sum_{j=1}^k \vert \mathbb{S}^{d-1} \vert^{k-1} \frac{t^{j-1}}{(j-1)!}\frac{t^{k-j}}{(k-j)!} \vert v \vert^k \Bigg[ \frac{\varepsilon^{\delta_2}}{r \vert v \vert} \vert \mathbb{S}^{d-1} \vert + \frac{2 \varepsilon}{r^j \vert v \vert} \vert \mathbb{S}^{d-1} \vert \nonumber\\
&\hspace{17.5mm} + \frac{2(j-2) \varepsilon^{1-\delta_1/2}}{r^{j-1} \vert v \vert} \vert \mathbb{S}^{d-1} \vert + t(j-1) C(d,r) \varepsilon^{\delta_3/2} + t(j-1)C(d,r)\varepsilon^{\delta_1/2} \Bigg].
\end{align}
Therefore, choosing as before $\delta_1 = 1/2$, $\delta_2 = 1/4$ and $\delta_3 = 1/2$, there exists a constant $\widetilde{C} = \widetilde{C}(d,r)$ that depends on the dimension $d$ and the restitution coefficient $r$ such that:
\begin{align}
\big\vert \psi - \sum_{k=1}^{+\infty} \varphi_{\varepsilon,k}^{(3)}(k) \big\vert \leq \widetilde{C} \max\big(1,t^2\vert v \vert^2\big) e^{\big(1 + \frac{1}{r}\big) \vert \mathbb{S}^{d-1} \vert t \vert v \vert} \varepsilon^{1/4}.
\end{align}
The proof of Proposition \ref{PROPOCompaFinal_Phi_PhiEp} is now complete.
\end{proof}

\noindent
With the result of Proposition \ref{PROPOCompaFinal_Phi_PhiEp}, we are now in position to prove Theorem \ref{THEORDerivationBoltzmann_InelaLinea}.

\begin{proof}[Proof of Theorem \ref{THEORDerivationBoltzmann_InelaLinea}]
Let $f$ be the unique weak solution to the inelastic linear Boltzmann equation \eqref{EQUATLineaBoltzSphDuFormeForte} in the sense of Definition \ref{DEFINSolutFaibleEquatBoltzLineaInela}, with initial datum $f_0$. We consider a function $\varphi$ of $\mathcal{C}_0(\mathbb{R}^d \times \mathbb{R}^d)$, and a positive real number $\varepsilon > 0$. For any $t_0 > 0$ fixed and for any $\varepsilon > 0$ small enough, we will prove that we have for any $t \in [0,t_0]$:
\begin{align}
\Big\vert \int_{\mathbb{R}^d_x} \hspace{-1mm} \int_{\mathbb{R}^d_v} \varphi(x,v) f_\varepsilon
(t,\dd x, \dd v) - \int_{\mathbb{R}^d_x} \hspace{-1mm} \int_{\mathbb{R}^d_v} \varphi(x,v) f(t,\dd x,\dd v) \Big\vert \leq \varepsilon.
\end{align}
To do so, we prove instead the weak$-*$ convergence of $f_\varepsilon$ towards the weak solution $g$ introduced in Proposition \ref{PROPOExistWeak_Solut}, with initial datum $f_0$. By definition, we have:
\begin{align}
\int_{\mathbb{R}^d_x} \hspace{-1mm} \int_{\mathbb{R}^d_v} \varphi(x,v) f_\varepsilon
(t,\dd x, \dd v) = \int_{\mathbb{R}^d_x} \hspace{-1mm} \int_{\mathbb{R}^d_v} \varphi_\varepsilon(t,x,v) f_0(\dd x, \dd v)
\end{align}
where $\varphi_\varepsilon$ is defined as in \eqref{EQUATDefinPhiEp_Avec_FlotDefin}, and
\begin{align}
\int_{\mathbb{R}^d_x} \hspace{-1mm} \int_{\mathbb{R}^d_v} \varphi(x,v) g(t,\dd x,\dd v) = \int_{\mathbb{R}^d_x} \hspace{-1mm} \int_{\mathbb{R}^d_v} \psi(t,x,v) f_0(\dd x,\dd v),
\end{align}
where $\psi$ is defined by the series \eqref{EQUATRepreSerieEquatAdjoi}, with initial datum $\varphi$. We consider now $\varepsilon > 0$ smaller than the universal constant $\varepsilon_0$ given by Propositions \ref{PROPOReecrPhiEpElimiRecol} and \ref{PROPOCompaFinal_Phi_PhiEp}, so that gathering the results of Propositions \ref{PROPOReecrPhiEpIntegObsta}, \ref{PROPOReecrPhiEpElimiRecol} and \ref{PROPOCompaFinal_Phi_PhiEp}, we have:
\begin{align}
\Big\vert \int_{\mathbb{R}^d_x} \hspace{-1mm} \int_{\mathbb{R}^d_v} \varphi(x,v) f_\varepsilon(t,\dd x, \dd v) &- \int_{\mathbb{R}^d_x} \hspace{-1mm} \int_{\mathbb{R}^d_v} \varphi(x,v) f(t,\dd x,\dd v) \Big\vert \leq \Big\vert \int_{\mathbb{R}^d_x} \hspace{-1mm} \int_{\mathbb{R}^d_v} \big[ R_1 + R_2 + R_3 \big](t,x,v) f_0(\dd x,\dd v) \Big\vert,
\end{align}
where $R_1$, $R_2$ and $R_3$ satisfy respectively \eqref{EQUATDefinReste_R_1_}, \eqref{EQUATDefinReste_R_2_PropoElimiRecol} and \eqref{EQUATDefinReste_R_3_}.\\
Without loss of generality, we can assume that:
\begin{align}
\max \Big[ \int_{\mathbb{R}^d_x} \hspace{-1mm} \int_{\mathbb{R}^d_v} f_0(\dd x,\dd v), \int_{\mathbb{R}^d_x} \hspace{-1mm} \int_{\mathbb{R}^d_v} e^{\vert v \vert^p} f_0(\dd x,\dd v) \Big] = 1.
\end{align}
Since the intensity $\mu_\varepsilon$ of the Poisson process satisfies the Boltzmann-Grad scaling \eqref{EQUATBoltzmann-Grad_Limit}, we have:
\begin{align}
\label{EQUATTheorFinalEstim_R_1_}
\int_{\mathbb{R}^d_x} \hspace{-1mm} \int_{\mathbb{R}^d_v} \vert R_1 \vert f_0(\dd x,\dd v) \leq \vertii{\varphi}_\infty C(d) \varepsilon.
\end{align}
\noindent
Since by \eqref{EQUATDefinReste_R_2_PropoElimiRecol} and \eqref{EQUATDefinReste_R_3_} we have:
\begin{align}
\vert R_2 \vert \leq \vertii{ \varphi }_\infty \widetilde{C}_1 \big(1 + \max(1,t) \vert v \vert \big) e^{\widetilde{C}_2 t \vert v \vert} \varepsilon^{1/4} \hspace{2mm} \text{and} \hspace{2mm} \vert R_3 \vert \leq \vertii{ \varphi }_\infty \widetilde{C}_3 \max\big( 1,t\vert v \vert^2,t^2 \vert v \vert^2 \big) e^{\widetilde{C}_4 t \vert v \vert} \varepsilon^{1/4},
\end{align}
and since $t_0 > 0$ is fixed, there exists a constant $M = M(d,r,p,t_0,\vertii{\varphi}_\infty)$ such that
\begin{align}
\sup_{\substack{t \in [0,t_0] \\ x\in\mathbb{R}^d\hspace{-1mm},\ v \in \mathbb{R}^d}} \big[\vert R_2 \vert + \vert R_3 \vert\big] e^{- \vert v \vert^p} \leq M \varepsilon^{1/4}.
\end{align}
We find therefore:
\begin{align}
\label{EQUATTheorFinalEstim_R_2_v_grd}
\Big\vert \int_{\mathbb{R}^d_x} \hspace{-1mm} \int_{\mathbb{R}^d_v} R_2(t,x,v) f_0(\dd x,\dd v) \Big\vert \leq M \varepsilon^{1/4} \Big\vert \int_{\mathbb{R}^d_x} \hspace{-1mm} \int_{\mathbb{R}^d_v} e^{\vert v \vert^p} f_0(\dd x,\dd v) \Big\vert \leq M \varepsilon^{1/4},
\end{align}
and similarly:
\begin{align}
\label{EQUATTheorFinalEstim_R_3_v_grd}
\Big\vert \int_{\mathbb{R}^d_x} \hspace{-1mm} \int_{\mathbb{R}^d_v} R_3(t,x,v) f_0(\dd x,\dd v) \Big\vert \leq M \varepsilon^{1/4}.
\end{align}
Gathering \eqref{EQUATTheorFinalEstim_R_1_}, \eqref{EQUATTheorFinalEstim_R_2_v_grd} and \eqref{EQUATTheorFinalEstim_R_3_v_grd}, we obtain that there exists a constant $C_\text{final} = C_\text{final}(d,r,p,t_0,\vertii{\varphi}_\infty)$ such that for any $\varepsilon$ smaller than the universal constant $\min(\varepsilon_0,1) > 0$, we have:
\begin{align}
\Big\vert \int_{\mathbb{R}^d_x} \hspace{-1mm} \int_{\mathbb{R}^d_v} \big[ \varphi_\varepsilon(t,x,v) - \psi(t,x,v) \big] f_0(\dd x, \dd v) \Big\vert \leq C_\text{final} \varepsilon^{1/4},
\end{align}
which concludes the proof of the weak$-*$ convergence of $f_\varepsilon(t,\cdot,\cdot)$ towards the measure $g$. Since $g$ is a weak solution to \eqref{EQUATLineaBoltzSphDuFormeForte} in the sense of Definition \ref{DEFINSolutFaibleEquatBoltzLineaInela} according to Proposition \ref{PROPOExistWeak_Solut}, and since such a weak solution is unique according to Proposition \ref{PROPOUniquWeak_Solut}, the proof of Theorem \ref{THEORDerivationBoltzmann_InelaLinea} is complete.
\end{proof}

\section{Well-posedness of the dynamics of the particle system}
\label{SECTIWell_PosedDynam}

In this section, we will address the question of the well-posedness of the dynamics of the particle system. More precisely, we will prove that the forward dynamics of a tagged particle colliding inelastically with inelastic, fixed scatterers distributed according to a Poisson process is globally well-posed, except for a set of distributions of scatterers realized with a zero probability.\\
This question is fundamental: there is no hope to establish a rigorous derivation of a kinetic equation from a particle system without proving before that the dynamics of the particle system is indeed well-posed. We also emphasize that, to the best of our knowledge, such a result has not been established  in the case of inelastic scatterers.\\
It is usually argued that the elastic case can be addressed using the result of Burago-Ferleger-Kononenko \cite{BuFK998}, which establishes that there exists a bound on the maximal number of collisions that a system of $N$ elastic hard spheres can experience, globally in time. Remarkably, the result depends on the number of hard spheres $N$, but is uniform on the initial configurations of the particle system, and holds also when the hard spheres evolve in a domain with boundary, covering in particular the case of one single elastic hard sphere evolving among fixed scatterers.\\
In this section, we present a self-contained result, which applies of course to the inelastic case, but also to the elastic case $r=1$. It can also be easily extended to any type of tagged particle dynamics such that the particle evolves according to the free flow between two collisions with scatterers, and such that the norm of the velocity of the tagged particle does not increase at any collision. The proof relies on a direct adaptation of the original proof due to Alexander (\cite{Alex975}, \cite{Alex976}, revisited in \cite{GSRT013}), concerning the global well-posedness of the system of $N$ elastic hard spheres.

\begin{remar} The proof of Alexander does not rule out the possibility that trajectories mght present infinitely many collisions. To this regard, the results of Burago-Ferleger-Kononenko \cite{BuFK998} and Alexander \cite{Alex975}-\cite{Alex976} are complementary.
\end{remar}

\noindent
From Definition \ref{DEFINFlot_TempsCrois}, it is possible to identify the potential difficulties that may arise when defining the dynamics. The first one is the fact that is not clear how to define the particle dynamics when a collision involving the tagged particle and multiple scatterers occurs at a specific time. Additionally, it is also unclear whether the dynamics can be constructed globally over any arbitrary time interval $[0,t]$, as an infinite number of collisions, i.e. a collapse, may occur strictly before time $t$. We prove the following result. 

\begin{propo}[Global well-posedness of the forward flow]
\label{PROPODefinGlobaDynam}
Let $r \in \ ]0,1[$, $\varepsilon > 0$ and $\mu > 0$ be three positive real numbers. Let $x,v \in \mathbb{R}^d$ be two vectors. We consider a Poisson point process $C$ of intensity $\mu$ in $\mathbb{R}^d$. Then, there exists a subset $\mathcal{P}_{\text{patho.}}$ of scatterers, contained in:
\begin{align}
X_0 = \big\{ c \in C\ /\ c \cap \overline{B(0,\varepsilon)} = \emptyset \big\}
\end{align}
such that
\begin{align}
\mathbb{P}_\mu\big(\mathcal{P}_{\text{patho.}}\big) = 0
\end{align}
and such that if $c \in \mathcal{P}_{\text{patho.}}^c$, then the forward inelastic hard sphere flow, introduced in Definition \ref{DEFINFlot_TempsCrois}, of the tagged particle among the distribution of scatterers $c$ of radius $\varepsilon$ is well-posed on $[0,t]$ for any $t > 0$. In other words, the mapping $s \in [0,t] \mapsto T^s_{c,\varepsilon}(x,v) \in \mathbb{R}^d$ introduced in \ref{DEFINFlot_TempsCrois} is globally well-defined for almost every distribution of scatterers $c$. In addition, the mapping $s \in [0,t] \mapsto T^s_{c,\varepsilon}(x,v)\in \mathbb{R}^d$ is piecewise affine, right continuous and with a limit from the left at all point, and it satisfies \eqref{EQUATDefinFlot_Cond1}-\eqref{EQUATDefinFlot_Cond2}.
\end{propo}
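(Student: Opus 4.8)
The plan is to adapt Alexander's argument for $N$ hard spheres to the present setting of a single tagged particle colliding with Poisson-distributed fixed scatterers, exploiting crucially that the speed of the tagged particle is non-increasing in forward time. The only way the forward flow can fail to be globally defined is (i) a collision involving two or more scatterers simultaneously, or (ii) an accumulation of collision times at some finite $t^* > 0$ (a collapse). I would define $\mathcal{P}_{\text{patho.}}$ as the union of the sets of configurations producing one of these two phenomena, intersected with $X_0$, and show each has zero probability. The set $X_0$ itself is not pathological but is imposed so that the initial position lies outside every scatterer; note $\mathbb{P}_\mu(c \cap \overline{B(0,\varepsilon)} = \emptyset) = e^{-\mu|\overline{B(0,\varepsilon)}|} > 0$, and we work conditionally on (or restricted to) $X_0$.

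First I would handle the simultaneous-collision set. Between collisions the particle moves along a straight segment, and the set of scatterer centers $c_i$ that would be hit simultaneously with $c_j$ by a given piecewise-affine trajectory segment lies on a lower-dimensional algebraic variety (a sphere of radius $\varepsilon$ around the collision point, intersected with the analogous constraint for $c_j$); since the trajectory is determined by countably many earlier scatterers and the free parameters, a standard Fubini argument over the Poisson process shows that almost surely no two scatterers are hit at the same instant. More carefully, one enumerates the (countably many) finite sub-collections of scatterers that could be involved in the first $n$ collisions, and for each shows the bad event is contained in a set where one extra scatterer center is constrained to a set of zero Lebesgue measure in $\mathbb{R}^d$; since a Poisson point process places points on a fixed null set with probability zero, summing over the countable family gives a null set. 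The same reasoning rules out grazing/tangential degeneracies (where $v(s^-)\cdot\omega(s) = 0$), which we also place inside $\mathcal{P}_{\text{patho.}}$.

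The heart of the proof, and the main obstacle, is ruling out the collapse. Here is where I would follow Alexander's geometric argument as revisited in \cite{GSRT013}. Suppose, for contradiction, that on a configuration $c \notin$ (the simultaneous-collision set) there are infinitely many collision times $s_1 < s_2 < \dots \to t^* < +\infty$. Let $v_n = |v(s_n^+)|$, which is non-increasing, hence converges to some $v_\infty \geq 0$; if $v_\infty = 0$ the particle comes to rest and no further collisions occur, a contradiction, so $v_\infty > 0$ and the speeds are bounded below by $v_\infty$ after finitely many steps. Consequently the arclength traveled between consecutive collisions near $t^*$ is at least $v_\infty (s_{n+1}-s_n)$, and the total arclength on $[s_{N}, t^*)$ is finite; so the collision points $x(s_n)$ form a Cauchy sequence converging to some point $x^* \in \mathbb{R}^d$. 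Each $x(s_n)$ lies on the boundary of a scatterer, and all scatterers involved in the tail lie within distance $O(\varepsilon + \sum_{m \geq n}(s_{m+1}-s_m)v(s_m)) $ of $x^*$; since the scatterers are distinct points of a locally finite configuration, only finitely many lie in any bounded neighborhood of $x^*$, so some fixed scatterer $c_j$ is collided infinitely often along $s_{n_k} \to t^*$. Now one uses the \emph{reflection/angle} argument of Alexander: after a collision with a fixed sphere, the particle is in a post-collisional configuration and must travel a definite minimal distance — bounded below in terms of $\varepsilon$ and the \emph{direction} of the outgoing velocity relative to the normal — before it can re-touch the same sphere; combined with the fact that between two collisions with $c_j$ there might be collisions with other spheres (only finitely many available near $x^*$), a pigeonhole/compactness argument on the directions of incidence forces the inter-collision times not to shrink to zero, contradicting $s_n \to t^*$. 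Quantitatively, the key geometric lemma states that the time between two successive collisions of the tagged particle with the \emph{same} scatterer is bounded below by $c(\varepsilon)/v_\infty$ for a constant depending only on $\varepsilon$; this, together with finiteness of scatterers near $x^*$, caps the number of collisions in $[s_N,t^*)$, the desired contradiction. The delicate point — and where I expect to spend the most effort — is making the ``definite minimal distance after reflection'' estimate uniform: in the inelastic case the outgoing velocity can be nearly tangential, so one must either show that near-tangential reflections themselves form a null set of configurations (adding them to $\mathcal{P}_{\text{patho.}}$) or obtain the lower bound in a form that degenerates only on such a null set. Having excluded both pathologies off a set of probability zero, the remaining assertions — piecewise affineness, right-continuity with left limits, and that the flow satisfies \eqref{EQUATDefinFlot_Cond1}--\eqref{EQUATDefinFlot_Cond2} — follow directly by construction, concatenating the free-flight segments and the reflection law \eqref{EQUATDefinFlot_TempsCroissLoi_Crois} across the now locally finite set of collision times.
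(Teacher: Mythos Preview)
Your approach differs substantially from the paper's, and the collapse argument as written has a genuine gap.

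The paper does not attempt any deterministic or geometric exclusion of collapse. Instead, it runs a quantitative $\delta$-scale argument directly on the Poisson process: at each collision point $x_k$, either the annulus $\overline{B(x_k,\varepsilon+\delta)}\setminus B(x_k,\varepsilon)$ contains exactly one scatterer (in which case the outgoing velocity is unambiguous and, since $|v^{(k)}|\le |v|=1$, no further collision can occur for at least time $\delta$), or it contains two or more, an event of probability $\le C(d,\varepsilon)\mu^2\delta^2$ under the Poisson law. Since at most $t/\delta$ collisions fit in $[0,t]$ under the first alternative, the union of the bad events has probability $\le C\mu^2 t\,\delta$; intersecting over a countable sequence $\delta\downarrow 0$ gives the null set $\mathcal{P}_{\text{patho.}}$. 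No geometric re-collision lemma is invoked at all.

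Your route tries to exclude collapse by arguing that near the accumulation point $x^*$ only finitely many scatterers are present, one of them is hit infinitely often, and then a ``definite minimal distance after reflection'' forces a lower bound on inter-collision times. The gap is the claimed key lemma: the time between two successive collisions with the \emph{same} scatterer $c_j$ is \emph{not} bounded below by any $c(\varepsilon)/v_\infty$ once intermediate collisions with the other nearby scatterers are allowed. After hitting $c_j$, the particle may strike a neighbouring scatterer arbitrarily close by and be redirected back toward $c_j$ almost immediately; with several scatterers near $x^*$ this is precisely the mechanism of inelastic collapse, and your ``pigeonhole/compactness on directions of incidence'' is not a proof --- there is no conserved quantity or monotone functional in the inelastic reflection law that makes such an argument go through (this is exactly why the elastic Burago--Ferleger--Kononenko bound does not extend trivially). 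You would need either to produce that monotone quantity, or to show that the set of scatterer configurations admitting such a collapsing orbit from the fixed $(x,v)$ has Poisson-measure zero --- but you give no mechanism for the latter, and the null-set arguments you do supply (simultaneous collisions, grazing) do not cover it. The paper's $\delta$-annulus argument sidesteps this entirely by never separating the two pathologies: having two scatterers within $\delta$ of a collision point is the single bad event, and its $O(\delta^2)$ probability handles both ambiguity and clustering at once.
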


\begin{proof}
Fixing a small time interval $I$, the idea is to determine a set $\mathcal{P}$ of distributions of scatterers for which the consecutive collisions of the tagged particles cannot take place in the same time interval $I$, and to prove that the probability of choosing a distribution in the complement of the set $\mathcal{P}$ is small. As a consequence, the dynamics of a tagged particle evolving among scatterers distributed according to an element of $\mathcal{P}$ is well-posed on such a time interval $I$. Finally, decomposing any arbitrary time interval $[0,t]$ as $[0,t] = \bigcup_{k} I_k$, the objective is to show that $\sum_k \mathcal{P}_k^c = 0$, where $\mathcal{P}_k^c$ is the probability to choose a distribution of scatterers for which the dynamics is not well-defined on $I_k$.\\
Without loss of generality, we assume that the initial velocity $v$ of the tagged particle is such that $\vert v \vert = 1$, and that the initial position of the tagged particle is $0$. Let $t > 0$ be a positive real number. Let $\delta > 0$. We introduce the following sets:
\begin{gather}
X_0 = \Big\{ c\ /\ c \cap \overline{B(0,\varepsilon)} = \emptyset \Big\},\\
X_1 = X_1(t) = \Big\{ c\ /\ c \cap \left( [0,tv] + \overline{B(0,\varepsilon)} \right) \neq \emptyset \Big\},
\end{gather}
where $[0,tv] + \overline{B(0,\varepsilon)}$ denotes:
\begin{align}
[0,tv] + \overline{B(0,\varepsilon)} = \big\{x \in \mathbb{R}^d\ /\ \exists s\in [0,t], y \in \overline{B(0,\varepsilon)}\ /\ x = sv + y \big\}.
\end{align}
In addition, if the distribution $c$ of scatterers belongs to $X_1$, we define the point $x_1$ as:
\begin{align}
x_1 = t_1 v \hspace{5mm} \text{where} \hspace{5mm} t_1 = \min\{s \geq 0\ /\ sv \in c + \overline{B(0,\varepsilon)}\},
\end{align}
and the set:
\begin{align}
Y_1 = Y_1(t,\delta) = \{ c \in X_1 \ /\ \# \{ c \cap \big( x_1 + \overline{B(0,\varepsilon+\delta \vert v \vert}\big)\} = 1 \}.
\end{align}
Let us observe first that if $c \notin X_0$, then the tagged particle, initially at the position $x=0$, lies outside any scatterer. If in addition $c \notin X_1$, then there is no scatterer that intersects the segment $[0,tv]$. In this case, we define the dynamics of the tagged particle on the time interval $[0,t]$ as:
\begin{align}
\left(x(s),v(s)\right) = \left( sv,v \right) \hspace{3mm} \forall s \in [0,t].
\end{align}
If on the contrary $c \in X_0 \cap X_1$, at least one scatterer is intersecting the segment $[0,tv]$. In this case, we have to distinguish between two situations.\\
If $c \in X_0 \cap X_1 \cap Y_1$, in particular a single scatterer intersects the point $x_1$ with probability $1$ (because the center of such a scatterer has to be at distance exactly $\varepsilon$ from $x_1$), so that we can define in a unique way $v'$, using the the reflection law \eqref{EQUATDefinFlot_TempsCroissLoi_Crois}, and choosing the angular parameter $\omega$ as $(c_{i_1}-x_1)/\varepsilon$, where $i_1$ is the index of the scatterer that intersects $x_1$. In addition, we know that no scatterer will intersect the segment $[x_1,x_1+\delta v']$, and since $\vert v' \vert \leq \vert v \vert$, we define the dynamics of the tagged particle as:
\begin{gather}
\left( x(s),v(s) \right) = \left( sv,v \right) \hspace{3mm} \forall s \in [0,t_1[,\\
\left( x(s),v(s) \right) = \left( t_1v + (s-t_1)v',v' \right) = \left(x_1 + (s-t_1)v',v'\right) \hspace{3mm} \forall s \in [t_1,t_1+\delta].
\end{gather}
so that in particular we know that the tagged particle will undergo a single collision on the time interval $[0,\delta]$.\\
If now $c \in X_0 \cap X_1 \cap Y_1^c$, then in particular:
\begin{align}
\# \{ c \cap \big( x_1 + \overline{B(0,\varepsilon+\delta)}\big)\} \geq 2.
\end{align}
In addition, by definition of $x_1$, we have $\# \{ c \cap B(x_1,\varepsilon) \} = 0$, so that if $c \in X_0 \cap X_ \cap Y_1^c$ we have:
\begin{align}
\# \{ c \cap \big( \overline{B(x_1,\varepsilon+\delta)} \backslash B(x_1,\varepsilon) \big)\} \geq 2.
\end{align}
Since the measure of $\overline{B(x_1,\varepsilon+\delta)} \backslash B(x_1,\varepsilon)$ is $C(d,\varepsilon) \delta$, where $C(d,\varepsilon)$ is a number that depends only on the dimension $d$ and the size $\varepsilon$ of the scatterers, the probability that the distribution of scatterers $c$ belongs to $X_0 \cap X_1 \cap Y_1^c$ is given by:
\begin{align}
\mathbb{P}_\mu( c \in X_0 \cap X_1 \cap Y_1^c ) &\leq \sum_{k \geq 2} e^{- \mu \left\vert \overline{B(x_1,\varepsilon+\delta)} \backslash B(x_1,\varepsilon) \right\vert} \mu^k \frac{\left\vert \overline{B(x_1,\varepsilon+\delta)} \backslash B(x_1,\varepsilon) \right\vert^k}{k!} \leq C(d,\varepsilon) \mu^2 \delta^2. 
\end{align}
We proceed recursively: we assume now that we define the dynamics of the particle globally on $[0,t]$ if $c \in \left(X_0 \cap X_1^c\right) \cup \left(X_0 \cap X_1 \cap Y_1 \cap X_2^c\right) \cup \dots \cup \left(X_0 \cap X_1 \cap Y_1 \cap \dots \cap X_{k-2} \cap Y_{k-2} \cap X_{k-1}^c\right)$, and on the time interval $[0,t_k+\delta]$ if $c \in X_0 \cap X_1 \cap Y_1 \cap \dots \cap X_{k-1} \cap Y_{k-1}$, as:
\begin{gather}
\left( x(s),v(s) \right) = \big( sv,v \big) \hspace{3mm} \forall\, s \in [0,t_1[,\label{EQUATDefinTraje__1__}\\
\left( x(s),v(s) \right) = \big( t_1v + (s-t_1)v',v' \big) = \big( x_1+(s-t_1)v',v' \big) \hspace{3mm} \forall s \in [t_1,t_2[,\\
\dots \nonumber\\
\left( x(s),v(s) \right) = \Big( \sum_{j=1}^{k-1} t_j v^{(j-1)} + (s-t_{k-1}) v^{(k-1)},v^{(k-1)} \Big)\nonumber\\
\hspace{48.5mm}= \big( x_{k-1} + (s-t_{k-1})v^{(k-1)},v^{(k-1)} \big) \hspace{3mm} \forall\, s \in [t_{k-1},t_{k-1}+\delta]. \label{EQUATDefinTraje_k-1_}
\end{gather}
Introducing then the set:
\begin{align}
X_k = \big\{ c \in X_0 \cap X_1 \cap Y_1 \cap \dots \cap X_{k-1} \cap Y_{k-1}\ /\ c \cap \Big( \big[x_{k-1},x_{k-1}+t\frac{v^{(k-1)}}{\vert v^{(k-1)} \vert}\big] + \overline{B(0,\varepsilon)} \Big) \neq 0 \big\},
\end{align}
we observe that if $c \in X_0 \cap X_1 \cap Y_1 \cap \dots \cap X_k^c$, we can define globally on $[0,t]$ the dynamics of the tagged particle, by completing the definition \eqref{EQUATDefinTraje__1__}-\eqref{EQUATDefinTraje_k-1_} as:
\begin{align}
\left( x(t),v(t) \right) = \big( x_{k-1} + (s-t_{k-1})v^{(k-1)},v^{(k-1)} \big) \hspace{3mm} \forall\, s \in [t_{k-1},t].
\end{align}
If now $c \in X_0 \cap X_1 \cap Y_1 \cap \dots \cap X_k$, we introduce the position $x_k$ and the set $Y_k$, defined as:
\begin{align}
x_k = x_{k-1} + (t_k-t_{k-1}) v^{(k-1)} \hspace{3mm} \text{where} \hspace{3mm} t_k = \min \{ s \geq t_{k-1}\ /\ x_{k-1} + (s-t_{k-1})v^{(k-1)} \in c + \overline{B(0,\varepsilon)} \}
\end{align}
and
\begin{align}
Y_k = \{ c \in X_1 \cap Y_1 \cap \dots \cap X_k \ /\ \# \{ c \cap \big( x_k + \overline{B(0,\varepsilon+\delta}\big)\} = 1 \}.
\end{align}
As before, we have:
\begin{align}
\mathbb{P}_\mu( c \in X_0 \cap X_1 \cap Y_1 \cap \dots \cap X_k \cap Y_k^c ) \leq C(d,\varepsilon) \mu^2 \delta^2,
\end{align}
and if $c \in X_0 \cap X_1 \cap Y_1 \cap \dots \cap X_k \cap Y_k$, $x_k$ intersects a single scatterer $i_k$ with probability $1$, so that we can complete the definition of the dynamics of the tagged particle as:
\begin{gather}
\big(x(s),v(s)\big) = \big( x_{k-1} + (s-t_{k-1})v^{(k-1)},v^{(k-1)} \big) \hspace{3mm} \forall\, s \in [t_{k-1},t_k[\\
\big(x(s),v(s)\big) = \big( x_{k-1} + (t_k-t_{k-1})v^{(k-1)} + (s-t_k)v^{(k)},v^{(k)} \big) \hspace{3mm} \forall\, s \in [t_k,t_k+\delta].
\end{gather}
In summary, we defined recursively the dynamics of the tagged particle, either on the whole time interval $[0,t]$, or on the time interval $[0,t_k+\delta]$ for any $k \geq 1$ if $c \in (X_0 \cap X_1^c) \cup (X_0 \cap X_1 \cap Y_1 \cap X_2^c) \cup \dots \cup (X_0 \cap X_1 \cap Y_1 \cap \dots X_k \cap Y_k)$.\\
On the one hand, since by construction we have $t_{k+1} - t_k \geq \delta \hspace{1mm} \forall\, k$, we need only $t/\delta$ iterations to define the dynamics on the whole time interval $[0,t]$.\\
On the other hand, the dynamics is not defined if $c \in X_0$, or if $c \in (X_0 \cap X_1 \cap Y_1^c) \cup (X_0 \cap X_1 \cap Y_1 \dots \cap X_k \cap Y_k^c)$. But since we have, for $k_0$ the smallest integer larger or equal to $t/\delta$:
\begin{align}
\label{EQUATMesurEnsmbPatho}
\mathbb{P}_\mu \big( (X_0 \cap X_1 \cap Y_1^c) \cup \dots \cup (X_0 \cap X_1 \cap Y_1 \dots \cap X_{k_0} \cap Y_{k_0}^c) \big)
&\leq \sum_{k=1}^{k_0} \mathbb{P} \big( X_0 \cap X_1 \cap Y_1 \dots \cap X_k \cap Y_k^c \big) \nonumber\\
&\leq C(d,\varepsilon) \mu^2 k_0 \delta^2.
\end{align}
Now, for all $\delta > 0$ such that $t/\delta$ is an integer, we consider the intersection:
\begin{align}
\label{EQUATDefinEnsmbPatho}
\mathcal{P}_{\text{patho.}}(t) = \bigcap_{\substack{\delta > 0\\ t/\delta \in \mathbb{N}^*}} \Big[ (X_0 \cap X_1 \cap Y_1(\delta)^c) \cup \dots \cup (X_0 \cap X_1 \cap Y_1(\delta) \dots \cap X_{k_0}(\delta) \cap Y_{k_0}(\delta)^c) \big) \Big].
\end{align}
By \eqref{EQUATMesurEnsmbPatho}, we have
\begin{align}
\mathbb{P}_\mu\big(\mathcal{P}_{\text{patho.}}(t)\big) = 0,
\end{align}
and also, if $c \in X_0 \cap \mathcal{P}(t)^c$, that the dynamics of the tagged particle is defined on the whole time interval $[0,t]$. In addition, if $c \in X_0 \cap \mathcal{P}(t)^c$, there exists $\delta_0 > 0$ such that all the collisions of the trajectory of the tagged particles are separated by a time interval larger than $\delta_0$.\\
Finally, we can repeat the argument for a countable sequence of times $(t_n)_{n\in \mathbb{N}^*}$ which tends to infinity as $n \rightarrow +\infty$, to obtain that the dynamics of the tagged particle is defined on the whole time interval $[0,+\infty[$, with probability $1$. The proof of Proposition \ref{PROPODefinGlobaDynam} is complete.
\end{proof}

\begin{remar}
As in the original proof of Alexander for the global well-posedness of the elastic hard sphere system, the key argument is that, except for a pathological set of measure $\delta^2$, we can define the dynamics further on a time interval of length $\delta$. In the original proof, the pathological set corresponds to the configurations in the phase space such that at least two pairs of particles are at a distance smaller than $\delta$. In the present case, we rely on the property of a Poisson process: having two scatterers or more in an annulus of radius $\delta$ (or, in general dimension $d$, in the difference of two concentric balls with radii that differ by $\delta$) has a probability smaller than $\delta^2$.\\
We observe also that the proof can be reused without any change in the case when the interaction with the scatterers is more general, provided only that the norm of the velocity of the tagged particle is non-increasing along the collisions.
\end{remar}

\section{Conclusion}

In the present paper we presented a rigorous derivation of the inelastic linear Boltzmann equation \eqref{EQUATInelaLineaBoltzFinal}, from the inelastic Lorentz gas, in the Boltzmann-Grad limit (Theorem \ref{THEORDerivationBoltzmann_InelaLinea}). To the best of our knowledge, this result constitutes the first rigorous derivation of an inelastic version of the Boltzmann equation.\\
To achieve such a derivation, we proved the absence of inelastic collapse in the forward in time dynamics of the inelastic Lorentz gas (Proposition \ref{PROPODefinGlobaDynam}), and the analog of Alexander's theorem in our setting, ensuring that the dynamics of the particle system is well-posed.\\
In addition, we relied on the series representation \eqref{EQUATRepreSerieEquatAdjoi} of the solutions of the adjoint equation \eqref{EQUATLineaBoltzSphDuFormeAdjoi}. Assuming that an exponential moment of the initial datum is finite, we showed the convergence of such a series. This result allowed us to proof the existence of weak solutions to the inelastic linear Boltzmann equation (Proposition \ref{PROPOExistWeak_Solut}). Under the same assumption, we also proved the convergence of the series \eqref{EQUATRepreSerieSolution__BoltzLineaInela}, which provided the existence of explicit strong solutions to the linear inelastic Boltzmann equation (Theorem \ref{PROPOConveRepreSerieSolutBoltzLineaInela}).\\
We further remark that we obtained explicit estimates  on the measures of the pathological sets preventing the Markovian behaviour of the limit process, i.e.~the configurations of scatterers leading to recollisions or interferences, hence allowing us to provide a quantitative derivation of the linear inelastic Boltzmann equation.
\newline 
Nevertheless,  the rigorous derivation we established in this paper holds only in terms of weak convergence of the distribution function \eqref{EQUATQuantGallavotti} of the microscopic tagged particles towards the associated solution of \eqref{EQUATInelaLineaBoltzFinal}. In future work, our aim is to perform the derivation in the stronger $L^p$ topology, in the same spirit of the original proof obtained in the elastic case by Gallavotti \cite{Gallavotti}. Another question that it is worth investigating is the analysis of the long-time behaviour of solutions to the linear inelastic Boltzmann equation \eqref{EQUATInelaLineaBoltzFinal}. In particular, the characterization of the decay of the temperature (Haff's law) remains to be proved.

\begin{appendices}

\section{Proof of the series representation of the solutions to the adjoint equation \eqref{EQUATLineaBoltzSphDuFormeAdjoi}}
\label{SECTIProofSerieRepreSolutAdjoi}

\begin{proof}[Proof of Proposition \ref{PROPORepreSerieEquatAdjoi}]
For any integer $n \geq 0$, we introduce the function $\varphi_n$ defined as:
\begin{align}
\widetilde{\varphi}_n(t,x,v) &= e^{-C_d \vert v \vert t}\varphi(x+tv,v) + \sum_{k=1}^n \int_{t_1=0}^t \int_{\mathbb{S}^{d-1}_{\omega_1}} \hspace{-3mm}\dots \int_{t_k=0}^{t_{k-1}}\int_{\mathbb{S}^{d-1}_{\omega_k}} e^{\big[ \sum_{j=1}^k C_d \vert v^{(j-1)} \vert (t_j-t_{j-1}) - C_d \vert v^{(k)} \vert t_k \big]}  \nonumber\\
&\hspace{15mm} \times \prod_{l=1}^k \big\vert v^{(l-1)}\cdot\omega_l \big\vert \varphi(x + \sum_{m=1}^k \big(t_{m-1}-t_m\big) v^{(m-1)} + t_k v^{(k)},v^{(k)}) \dd \omega_k \dd t_k \dots \dd \omega_1 \dd t_1.
\end{align}
The function $\widetilde{\varphi}_n$ is $\mathcal{C}^1$, and we have:
\begin{align}
\partial_t \widetilde{\varphi}_n(t,x,v) &= v \cdot \nabla_x \varphi(x+tv,v) e^{-C_d \vert v \vert t} - C_d \vert v \vert \varphi(x+tv,v) e^{-C_d \vert v \vert t} + I_1 + I_2 + I_3
\end{align}
with
\begin{align}
I_1 &= \sum_{k=1}^n \int_{\mathbb{S}^{d-1}_{\omega_1}} \hspace{-3mm}\dots \int_{t_k=0}^{t_{k-1}}\int_{\mathbb{S}^{d-1}_{\omega_k}} e^{\big[ \sum_{j=2}^k C_d \vert v^{(j-1)} \vert (t_j-t_{j-1}) - C_d \vert v^{(k)} \vert t_k \big]} \prod_{l=1}^k \big\vert v^{(l-1)}\cdot\omega_l \big\vert \nonumber\\
&\hspace{45mm} \times \varphi(x + \sum_{m=2}^k \big(t_{m-1}-t_m\big) v^{(m-1)} + t_k v^{(k)},v^{(k)}) \dd \omega_k \dd t_k \dots \dd t_2 \dd \omega_1,
\end{align}
\begin{align}
I_2 &= \sum_{k=1}^n \int_{t_1=0}^t \int_{\mathbb{S}^{d-1}_{\omega_1}} \hspace{-3mm}\dots \int_{t_k=0}^{t_{k-1}}\int_{\mathbb{S}^{d-1}_{\omega_k}} \big( - C_d \vert v \vert \big) e^{\big[ \sum_{j=1}^k C_d \vert v^{(j-1)} \vert (t_j-t_{j-1}) - C_d \vert v^{(k)} \vert t_k \big]} \prod_{l=1}^k \big\vert v^{(l-1)}\cdot\omega_l \big\vert \nonumber\\
&\hspace{45mm} \times \varphi(x + \sum_{m=1}^k \big(t_{m-1}-t_m\big) v^{(m-1)} + t_k v^{(k)},v^{(k)}) \dd \omega_k \dd t_k \dots \dd \omega_1 \dd t_1
\end{align}
and
\begin{align}
I_3 &= \sum_{k=1}^n \int_{t_1=0}^t \int_{\mathbb{S}^{d-1}_{\omega_1}} \hspace{-3mm}\dots \int_{t_k=0}^{t_{k-1}}\int_{\mathbb{S}^{d-1}_{\omega_k}} e^{\big[ \sum_{j=1}^k C_d \vert v^{(j-1)} \vert (t_j-t_{j-1}) - C_d \vert v^{(k)} \vert t_k \big]} \prod_{l=1}^k \big\vert v^{(l-1)}\cdot\omega_l \big\vert \nonumber\\
&\hspace{40mm} \times v \cdot \nabla_x \varphi(x + \sum_{m=1}^k \big(t_{m-1}-t_m\big) v^{(m-1)} + t_k v^{(k)},v^{(k)}) \dd \omega_k \dd t_k \dots \dd \omega_1 \dd t_1.
\end{align}
Observe that the effect of the time derivative on the sum produces two terms. On the one hand for the first term, the integral in $t_1$ disappears, and the integrand is evaluated in the variable $t_1$ at $t$. Therefore, the first term in the sum inside the exponential vanishes, as well as the first term in $\varphi$ below the integral. On the other hand, the second term is obtained by differentiating with respect to $t$ below the integrals.\\
Identifying the different terms in the previous computation, we find that $\varphi_n$ solves the equation:
\begin{align}
\partial_t \widetilde{\varphi}_n = v \cdot \nabla_x \widetilde{\varphi}_n - C_d \vert v \vert \widetilde{\varphi}_n + \int_{\mathbb{S}^{d-1}_\omega} \hspace{-3mm} \vert v \cdot \omega \vert \widetilde{\varphi}_{n-1}(t,x,v') \dd \omega.
\end{align}
In addition, the sequences $\big( \widetilde{\varphi}_n \big)_{n \geq 0}$, $\big( \partial_t \widetilde{\varphi}_n \big)_{n \geq 0}$ and $\big( \nabla_x \widetilde{\varphi}_n \big)_{n \geq 0}$ being Cauchy sequences locally uniformly in $(t,x,v)$, they converge locally uniformly towards respective limits $\widetilde{\psi}$, $\widetilde{g}$ and $\widetilde{h}$, that satisfy:
\begin{align}
\widetilde{g}(t,x,v) = v \cdot \widetilde{h}(t,x,v) - C_d \vert v \vert \widetilde{\psi}(t,x,v) +  \int_{\mathbb{S}^{d-1}_\omega} \hspace{-3mm} \vert v \cdot \omega \vert \widetilde{\psi}(t,x,v') \dd \omega,
\end{align}
and where $\widetilde{\psi}$ is given by the expression \eqref{EQUATRepreSerieEquatAdjoi}. Since in addition $\widetilde{\varphi}_n$ and its derivatives are converging locally uniformly, we have $\partial_t \widetilde{\psi} = \widetilde{g}$ and $\nabla_x \widetilde{\psi} = \widetilde{h}$, so that $\widetilde{\psi}$ solves the adjoint equation \eqref{EQUATLineaBoltzSphDuFormeAdjoi} of the linear inelastic Boltzmann equation \eqref{EQUATLineaBoltzSphDuFormeForte}, concluding the proof of the first part of Proposition \ref{PROPORepreSerieEquatAdjoi}.\\
Conversely, for a solution $\widetilde{\varphi}$ of the adjoint equation \eqref{EQUATLineaBoltzSphDuFormeAdjoi} with initial datum $\varphi$, integrating $\widetilde{\varphi}$ along the characteristics it is possible to show by recursion that, for all $n \geq 0$:
\begin{align}
\label{EQUATFormuRecurSolutFormeAdjoi}
\widetilde{\varphi}(t,x,v) &= e^{-C_d \vert v \vert t}\varphi(x+tv,v) + \sum_{k=1}^{n} \int_{t_1=0}^t \int_{\mathbb{S}^{d-1}_{\omega_1}} \hspace{-3mm}\dots \int_{t_k=0}^{t_{k-1}}\int_{\mathbb{S}^{d-1}_{\omega_k}} e^{\big[ \sum_{j=1}^k C_d \vert v^{(j-1)} \vert (t_j-t_{j-1}) - C_d \vert v^{(k)} \vert t_k \big]} \prod_{l=1}^k \big\vert v^{(l-1)}\cdot\omega_l \big\vert \nonumber\\
&\hspace{45mm} \times \varphi(x + \sum_{m=1}^k \big(t_{m-1}-t_m\big) v^{(m-1)} + t_k v^{(k)},v^{(k)}) \dd \omega_k \dd t_k \dots \dd \omega_1 \dd t_1 \nonumber\\
&+ \int_{t_1=0}^t \int_{\mathbb{S}^{d-1}_{\omega_1}} \hspace{-3mm}\dots \int_{t_n=0}^{t_{n-1}}\int_{\mathbb{S}^{d-1}_{\omega_n}} \int_{t_{n+1}=0}^{t_n} \int_{\mathbb{S}^{d-1}_{\omega_{n+1}}} e^{\big[ \sum_{j=1}^{n+1} C_d \vert v^{(j-1)} \vert (t_j-t_{j-1}) \big]} \prod_{l=1}^{n+1} \big\vert v^{(l-1)}\cdot\omega_l \big\vert \nonumber\\
&\hspace{35mm} \times \widetilde{\varphi}(t_{n+1},x + \sum_{m=1}^{n+1} \big(t_{m-1}-t_m\big) v^{(m-1)},v^{(n+1)}) \dd \omega_{n+1} \dd t_{n+1} \dd \omega_n \dd t_n \dots \dd \omega_1 \dd t_1.
\end{align}
The difference between $\widetilde{\varphi}$ and $\widetilde{\varphi}_n$ is exactly equal to the last term in the previous formula \eqref{EQUATFormuRecurSolutFormeAdjoi}. We call $R$ such a term. Since by assumption $\widetilde{\varphi}$ is a $\mathcal{C}^1$ function, it is locally bounded, and we have:
\begin{align}
\big\vert R \big\vert &\leq \vertii{ \mathds{1}_{[0,t]\times B(x,t \vert v \vert)\times B(0,\vert v \vert)} \widetilde{\varphi}}_\infty \Big\vert \int_{t_1=0}^t \int_{\mathbb{S}^{d-1}_{\omega_1}} \hspace{-3mm}\dots \int_{t_n=0}^{t_{n-1}}\int_{\mathbb{S}^{d-1}_{\omega_n}} \int_{t_{n+1}=0}^{t_n} \int_{\mathbb{S}^{d-1}_{\omega_{n+1}}}  \vert v \vert^{n+1} \dd \omega_{n+1} \dd t_{n+1} \dd \omega_n \dd t_n \dots \dd \omega_1 \dd t_1 \Big\vert \nonumber\\
&\leq \vertii{ \mathds{1}_{[0,t]\times B(x,t \vert v \vert)\times B(0,\vert v \vert)} \widetilde{\varphi}}_\infty \frac{t^{n+1}  \vert v \vert^{n+1} \vert \mathbb{S} \vert^{n+1}}{(n+1)!},
\end{align}
we deduce that $\widetilde{\varphi} = \widetilde{\psi}$, concluding the proof of the second part of Proposition \ref{PROPORepreSerieEquatAdjoi}.
\end{proof}

\section{Proofs of the geometrical lemmas \ref{LEMMEColinearitScatt} and \ref{LEMMEEstimMesurTube_Dynam}}

\subsection{Estimate coming from the colinearity condition after scattering}

\begin{proof}[Proof of Lemma \ref{LEMMEColinearitScatt}]
The proof of Lemma \ref{LEMMEColinearitScatt} is divided into the following steps:
\begin{itemize}
\item we write the scalar product $\frac{v'}{\vert v' \vert} \cdot p$ in coordinates, in terms of the three vectors $v$, $\omega$ and $\sigma$ ($\sigma$ being the angular parameter such that the renormalized post-collisional velocity obtained from $v$ with angular parameter $\sigma$ is $p$),
\item we analyze in detail the particular case when $v$, $\omega$ and $\sigma$ are coplanar,
\item we finally prove that the case when $v$, $\omega$ and $\sigma$ are not coplanar can be treated as a consequence of the result obtained in the previous particular case.
\end{itemize}

\noindent
The scalar product $\frac{v'}{\vert v' \vert} \cdot p$, that we will denote as $f$ in the rest of this proof, can be rewritten as:
\begin{align}
\label{EQUATPreuvLemmeColinExprePrdSc__1__}
f = \frac{v'}{\vert v' \vert} \cdot p &= \left[ \frac{v - (1+r) \big( v \cdot \omega \big) \omega}{ \big\vert v - (1+r) \big( v \cdot \omega \big) \omega \big\vert} \right] \cdot \left[ \frac{ v - (1+r) \big( v\cdot\sigma \big) \sigma}{ \big\vert v - (1+r) \big( v\cdot\sigma \big) \sigma \big\vert } \right] \nonumber\\
&= \left[ \frac{\frac{v}{\vert v \vert} - (1+r) \big( \frac{v}{\vert v \vert} \cdot \omega \big) \omega}{ \big\vert \frac{v}{\vert v \vert} - (1+r) \big( \frac{v}{\vert v \vert} \cdot \omega \big) \omega \big\vert } \right] \cdot \left[ \frac{ \frac{v}{\vert v \vert} - (1+r) \big( \frac{v}{\vert v \vert}\cdot\sigma \big) \sigma}{ \big\vert \frac{v}{\vert v \vert} - (1+r) \big( \frac{v}{\vert v \vert}\cdot\sigma \big) \sigma \big\vert } \right] .
\end{align}
The norms in the denominator are given by the following expressions:
\begin{align}
\Big\vert \frac{v}{\vert v \vert} - (1+r) \big( \frac{v}{\vert v \vert} \cdot \omega \big) \omega \Big\vert &= \sqrt{ 1 - 2 (1+r) \Big( \frac{v}{\vert v \vert} \cdot \omega \Big)^2 + (1+r)^2 \Big( \frac{v}{\vert v \vert} \cdot \omega \Big)^2} \nonumber\\
&= \sqrt{ 1 - (1-r^2) \Big( \frac{v}{\vert v \vert} \cdot \omega \Big)^2},
\end{align}
and similarly:
\begin{align}
\Big\vert \frac{v}{\vert v \vert} - (1+r) \big( \frac{v}{\vert v \vert} \cdot \omega \big) \omega \Big\vert = \sqrt{ 1 - (1-r^2) \Big( \frac{v}{\vert v \vert} \cdot \sigma \Big)^2} .
\end{align}
Expanding the scalar product in the numerator of \eqref{EQUATPreuvLemmeColinExprePrdSc__1__}, we find:
\begin{align}
f = \frac{1 - (1+r) \big( \frac{v}{\vert v \vert}\cdot\sigma \big)^2 - (1+r) \big( \frac{v}{\vert v \vert}\cdot\omega \big)^2 + (1+r)^2 \big(v \cdot \sigma\big)\big(v \cdot \omega\big) \big(\sigma\cdot\omega\big)}{ \sqrt{ \Big[ 1 - (1-r^2) \big( \frac{v}{\vert v \vert} \cdot \sigma \big)^2 \Big] \cdot \Big[ 1 - (1-r^2) \big( \frac{v}{\vert v \vert} \cdot \omega \big)^2 \Big]} } \cdotp
\end{align}
From this point, it will be convenient to decompose the unit vector $\omega$ as its component along the plane spanned by $v$ and $\sigma$ and its orthogonal, that is we write:
\begin{align}
\omega = a \frac{v}{\vert v \vert} + b \sigma + u,
\end{align}
with $a,b \in \mathbb{R}$, and $u \in \mathbb{R}^d$ such that $u \cdot v = u \cdot \sigma = 0$.\\
In what follows, we will denote the following scalar products as:
\begin{align}
\frac{v}{\vert v \vert}\cdot\sigma = \cos\theta_0 \hspace{3mm} \text{and} \hspace{3mm} \frac{v}{\vert v \vert} \cdot \omega = \cos\theta,
\end{align}
where $\theta_0$ is the orientated angle from $v$ to $\sigma$, and $\theta$ is the orientated angle from $v$ to $\omega$ (after having chosen an orientation on the planes respectively spanned by $(v,\sigma)$, and $(v,\omega)$). $\omega$ being unitary, we have:
\begin{align}
1 = \vert \omega \vert^2 = a^2 + 2ab \cos\theta + b^2 + \vert u \vert^2.
\end{align}
With these notations, we have:
\begin{align}
\frac{v}{\vert v \vert}\cdot\omega = \cos\theta = a + b \cos\theta_0 \hspace{3mm} \text{and} \hspace{3mm} \sigma\cdot\omega = a \cos\theta_0 + b.
\end{align}
Finally, considering the scalar product $f$ at $v$, $\sigma$ and $\vert u \vert$ fixed, it is possible to express $\sigma\cdot\omega$ as a function of $\cos\theta$ by writing:
\begin{align}
\sigma \cdot \omega = a \cos\theta_0 + b = \cos\theta_0 \big(a + b \cos\theta_0\big) + b \sin^2\theta_0 = \cos\theta_0 \cos\theta + \sin\theta_0 \, \text{sgn}(b) \sqrt{1 - \vert u \vert^2 - \cos^2\theta},
\end{align}
where we used the identity:
\begin{align}
1 = \cos^2\theta + b^2\sin^2\theta_0 + \vert u \vert^2.
\end{align}
Observe that the sign of $\sin\theta_0$ has been taken positive by assumption, which can always been done by choosing appropriately the orientation in the plane spanned by $v$ and $\sigma$. Nevertheless, the sign of $b$ has to be taken into account.\\
In the end, the scalar product $f$ can be rewritten as:
\begin{align}
\label{EQUATPreuvLemmeColinExprePrdScGener}
f = \frac{1 - (1+r) \cos^2\theta_0 - (1+r) \cos^2\theta + (1+r)^2 \cos\theta_0 \cos\theta \big( \cos\theta_0 \cos\theta + \sin\theta_0 \, \text{sgn}(b) \sqrt{1 - \vert u \vert^2 - \cos^2\theta} \big)}{ \sqrt{ \Big[ 1 - (1-r^2) \cos^2\theta_0 \Big] \cdot \Big[ 1 - (1-r^2) \cos^2\theta \Big]} } \cdotp
\end{align}
Obtaining the general expression \eqref{EQUATPreuvLemmeColinExprePrdScGener} of the scalar product $f$ in coordinates concludes the first step of the proof.\\
\newline
Turning to the second step, we now assume that the three vectors $v$, $\sigma$ and $\omega$ are coplanar, that is, we assume that $u = 0$. We now study $f$ under this assumption.\\
In such a case, the sign of $\sin\theta$ is equal to the sign of $b$, and we have:
\begin{align}
\cos\theta_0 \cos\theta + \sin\theta_0 \, \text{sgn}(b) \sqrt{1 - \vert u \vert^2 - \cos^2\theta} = \cos\theta_0 \cos\theta + \sin\theta_0 \, \text{sgn}\big( \sin\theta \big) \sqrt{1 - \cos^2 \theta} = \cos(\theta-\theta_0).
\end{align}
Therefore, the scalar product $f$ can be rewritten as:
\begin{align}
f = \frac{1 - (1+r) \cos^2\theta_0 - (1+r) \cos^2\theta + (1+r)^2 \cos\theta_0 \cos\theta \cos(\theta-\theta_0)}{ \sqrt{ \Big[ 1 - (1-r^2) \cos^2\theta_0 \Big] \cdot \Big[ 1 - (1-r^2) \cos^2\theta \Big]} } \cdotp
\end{align}
We consider $f$ as a function of $\theta$, with $\theta_0$ fixed. Its derivative with respect to $\theta$ writes:
\begin{align}
\partial_\theta f &= \frac{\Big[ 2(1+r)\cos\theta\sin\theta - (1+r)^2 \cos\theta_0 \sin\theta \cos(\theta-\theta_0) - (1+r)^2 \cos\theta_0 \cos\theta \sin(\theta-\theta_0) \Big]\big[ 1 - (1-r^2)\cos^2\theta \big]}{\Big[ 1 - (1-r^2) \cos^2\theta_0 \Big]^{1/2} \cdot \Big[ 1 - (1-r^2) \cos^2\theta \Big]^{3/2}} \nonumber\\
& \hspace{5mm} - \frac{ \Big[ 1 - (1+r) \cos^2\theta_0 - (1+r) \cos^2\theta + (1+r)^2 \cos\theta_0 \cos\theta \cos(\theta-\theta_0)\Big] \big[ (1-r^2) \cos\theta\sin\theta \big] }{\Big[ 1 - (1-r^2) \cos^2\theta_0 \Big]^{1/2} \cdot \Big[ 1 - (1-r^2) \cos^2\theta \Big]^{3/2}} \cdotp
\end{align}
Gathering the terms:
\begin{align}
2(1+r) \cos\theta \sin\theta - (1-r^2)\cos\theta\sin\theta = (1+r)^2 \cos\theta\sin\theta,
\end{align}
we find:
\begin{align}
\partial_\theta f &= \frac{(1+r)^2\cos\theta\sin\theta -2(1+r)(1-r^2)\cos^3\theta\sin\theta}{ \Big[ 1 - (1-r^2) \cos^2\theta_0 \Big]^{1/2} \cdot \Big[ 1 - (1-r^2) \cos^2\theta \Big]^{3/2} } \nonumber\\
&\hspace{5mm} - \frac{\Big[ (1+r)^2 \cos\theta_0 \sin\theta \cos(\theta-\theta_0) + (1+r)^2 \cos\theta_0 \cos\theta\sin(\theta-\theta_0) \Big] \big[1 - (1-r^2)\cos^2\theta \big]}{ \Big[ 1 - (1-r^2) \cos^2\theta_0 \Big]^{1/2} \cdot \Big[ 1 - (1-r^2) \cos^2\theta \Big]^{3/2} } \nonumber\\
&\hspace{5mm} + \frac{ \Big[ (1+r)\cos^2\theta_0 + (1+r) \cos^2\theta - (1+r)^2 \cos\theta_0 \cos\theta \cos(\theta-\theta_0) \Big] \big[ (1-r^2) \cos\theta \sin\theta \big] }{ \Big[ 1 - (1-r^2) \cos^2\theta_0 \Big]^{1/2} \cdot \Big[ 1 - (1-r^2) \cos^2\theta \Big]^{3/2} } \cdotp
\end{align}
It is now possible to factor by $(1+r)^2$ in all the terms of the derivative. Denoting now by $g$ the quantity:
\begin{align}
g = \frac{\Big[ 1 - (1-r^2) \cos^2\theta_0 \Big]^{1/2} \cdot \Big[ 1 - (1-r^2) \cos^2\theta \Big]^{3/2} \partial_\theta f }{(1+r)^2},
\end{align}
and observing that the terms $(1+r)^2(1-r^2)\cos\theta_0\cos^2\theta\sin\theta\cos(\theta-\theta_0)$ in the second and third lines cancel each other, we obtain:
\begin{align}
g &= \cos\theta\sin\theta - 2 (1-r) \cos^3\theta\sin\theta - \big[ \cos\theta_0 \sin\theta \cos(\theta-\theta_0) + \cos\theta_0 \cos\theta \sin(\theta-\theta_0) \big] \nonumber\\
&\hspace{5mm} + (1-r^2) \cos\theta_0 \cos^3\theta \sin(\theta-\theta_0) + (1-r) \big[ \cos^2\theta_0 + \cos^2\theta \big] \cos\theta\sin\theta \nonumber\\
&= \sin(\theta-\theta_0) \Big[ - \cos\theta_0\cos\theta + (1-r^2) \cos\theta_0\cos^3\theta \Big] \nonumber\\
&\hspace{5mm} + \sin\theta \Big[ \cos\theta - 2(1-r)\cos^3\theta - \cos\theta_0\cos(\theta-\theta_0) + (1-r) \big[\cos^2\theta_0 + \cos^2\theta \big]\cos\theta \Big] \nonumber\\
&= \sin(\theta-\theta_0) \Big[ (1-r^2) \cos\theta_0\cos^3\theta - \cos\theta_0\cos\theta \Big] \nonumber\\
&\hspace{5mm} + \sin\theta \Big[ \cos\theta + (1-r)\big[ \cos^2\theta_0 - \cos^2\theta \big] \cos\theta - \cos\theta_0\cos(\theta-\theta_0) \Big].
\end{align}
In the case when $\theta = \theta_0$, $\omega$ and $\sigma$ are colinear, so $v'$ and $p$ are colinear in this case. We have in particular $f =1$. Since $\frac{v'}{\vert v' \vert} \cdot p$ is the scalar product of two unit vectors, $\theta = \theta_0$ corresponds therefore to a maximum of the function $f$. As a consequence, to obtain a simpler expression of the derivative of $f$, we will factor the expression of $g$ by $\sin(\theta-\theta_0)$, that is a factor of already some terms.\\
To do so, we use the following identity:
\begin{align}
\cos^2 \theta_0 - \cos^2\theta = \sin^2\theta - \sin^2\theta_0 = \sin(\theta+\theta_0)\sin(\theta-\theta_0).
\end{align}
Therefore we have:
\begin{align}
\cos\theta &+ (1-r) \big[ \cos^2\theta_0-\cos^2\theta \big] \cos\theta - \cos\theta_0\cos(\theta-\theta_0) \nonumber\\
&= \cos\theta + (1-r) \sin(\theta+\theta_0)\sin(\theta-\theta_0)\cos\theta - \cos^2\theta_0 \cos\theta - \cos\theta_0\sin\theta_0 \sin\theta \nonumber\\
&= (1-r) \sin(\theta+\theta_0)\sin(\theta-\theta_0)\cos\theta + \sin^2\theta_0 \cos\theta - \cos\theta_0\sin\theta_0\sin\theta \nonumber\\
&= \Big[ (1-r) \sin(\theta+\theta_0) \cos\theta - \sin\theta_0 \Big] \sin(\theta-\theta_0).
\end{align}
As a consequence, $g$ can be rewritten as:
\begin{align}
g &= \sin(\theta-\theta_0) \Big[ (1-r^2)\cos\theta_0\cos^3\theta - \cos\theta_0\cos\theta + (1-r) \cos\theta\sin\theta\sin(\theta+\theta_0) - \sin\theta_0\sin\theta \Big] \nonumber\\
&= \sin(\theta-\theta_0) \Big[ \Big( (1-r^2)\cos^2\theta - 1 + (1-r)\sin^2\theta \Big) \cos\theta_0\cos\theta + \Big( (1-r)\cos^2\theta - 1 \Big) \sin\theta_0\sin\theta \Big] \nonumber\\
&= \sin(\theta-\theta_0) \Big[ \Big( -r^2\cos^2\theta -r\sin^2\theta \Big) \cos\theta_0\cos\theta + \Big( -r\cos^2\theta - \sin^2\theta \Big) \sin\theta_0\sin\theta \Big].
\end{align}
In the end, we find the following factorization for $g$:
\begin{align}
g = \sin(\theta-\theta_0) \Big( -r\cos^2\theta - \sin^2\theta \Big) \Big( r\cos\theta_0\cos\theta + \sin\theta_0\sin\theta \Big),
\end{align}
which allows to deduce that $g$ is zero if and only if $\theta = \theta_0 + k\pi$ or $\theta_1 + k\pi$ ($k \in \mathbb{Z}$), where $\theta_1$ is defined as:
\begin{align}
\label{EQUATPreuvLemmeColinExpreSnCosThet1}
\sin\theta_1 = -\frac{r\cos\theta_0}{\sqrt{r^2\cos^2\theta_0 + \sin^2\theta_0}} \hspace{3mm} \text{and} \hspace{3mm} \cos\theta_1 = \frac{\sin\theta_0}{\sqrt{r^2\cos^2\theta_0 + \sin^2\theta_0}} \cdotp
\end{align}
We observe now that the function $f$ is $\pi$-periodic in $\theta$. Since there is exactly one of each of the angles of the form $\theta_0+k\pi$ and $\theta_1+k\pi$ in the interval $[\frac{\pi}{2},\frac{3\pi}{2}[$ (this interval being chosen because we consider only pre-collisional configurations, that is such that $v \cdot \omega \leq 0$), decomposing the interval $[\frac{\pi}{2},\frac{3\pi}{2}[$ into three intervals delimited respectively by $\frac{\pi}{2}$, $\theta_0 + k_1\pi$, $\theta_1 + k_2\pi$ ($k_1,k_2 \in \mathbb{Z}$ being chosen such that $\theta_0 + k_1\pi,\theta_1 + k_2\pi \in [\frac{\pi}{2},\frac{3\pi}{2}[$) we deduce that the function $f$ is monotone on each of these three intervals.\\
\newline
We observed already that $f(\theta_0) = 1$, because:
\begin{align}
f(\theta_0) = \frac{1 - 2(1+r)\cos^2\theta_0 + (1+r)^2 \cos^2\theta_0}{1 - (1-r^2) \cos^2\theta_0} = 1.
\end{align}
In the case when $\theta = \theta_1$ we have:
\begin{align}
f(\theta_1) &= \frac{1 - (1+r)\cos^2\theta_0 - (1+r)\cos^2\theta_1 + (1+r)^2 \cos\theta_0\cos\theta_1 \Big(\cos\theta_1\cos\theta_0 + \sin\theta_1\sin\theta_0\Big)}{\sqrt{ \Big[ 1 - (1-r^2)\cos^2\theta_0 \Big] \cdot \Big[ 1 - (1-r^2)\cos^2\theta_1 \Big] }} \nonumber\\
&\hspace{-10mm}= \frac{1 - (1+r)\cos^2\theta_0 - (1+r) \displaystyle{\frac{\sin^2\theta_0}{r^2\cos^2\theta_0 + \sin^2\theta_0}} + (1+r)^2 \displaystyle{\frac{\cos^2\theta_0 \sin^2\theta_0 - r \cos^2\theta_0 \sin^2\theta_0 }{r^2\cos^2\theta_0 + \sin^2\theta_0}} }{\sqrt{ \Big[ 1 - (1-r^2)\cos^2\theta_0 \Big] \cdot \Big[ \displaystyle{ \frac{r^2\cos^2\theta_0 + \sin^2\theta_0 - (1-r^2)\sin^2\theta_0}{r^2\cos^2\theta_0 + \sin^2\theta_0}} } \Big] } \nonumber\\
&\hspace{-10mm}= \frac{r^2 \cos^2\theta_0 + \sin^2\theta_0 - (1+r)r^2\cos^4\theta_0 - (1+r)\cos^2\theta_0\sin^2\theta_0 - (1+r)\sin^2\theta_0 + (1+r)^2(1-r)\cos^2\theta_0\sin^2\theta_0}{r \Big(r^2 \cos^2\theta_0 + \sin^2\theta_0 \Big)} \nonumber\\
&\hspace{-10mm}= \frac{r\cos^2 \theta_0 - (1+r)r\cos^4\theta_0 - (1+r)r \cos^2\theta_0\sin^2\theta_0 - \sin^2\theta_0}{r^2\cos^2\theta_0 + \sin^2\theta_0} = \frac{r \cos^2\theta_0 - (1+r)r\cos^2\theta_0 - \sin^2\theta_0}{r^2\cos^2\theta_0 + \sin^2\theta_0} = -1.
\end{align}
As a consequence, the global extrema of $f$ are reached exactly at $\theta = \theta_0$, and $\theta = \theta_1$, and there are no other local extrema on the whole interval $[\frac{\pi}{2},\frac{3\pi}{2}[$. In between two consecutive global extrema, which are necessarily distinct (the extrema $1$ is followed by $-1$ and vice versa), $f$ is monotone.\\
We will now assume that:
\begin{align}
\big\vert f(\theta) - 1 \big\vert \leq \delta,
\end{align}
for $\delta$ small enough, and we will characterize such angles $\theta$. More precisely, we choose a positive real number $\mu_1 > 0$ such that:
\begin{align}
\big\vert r \cos \theta_0 \cos\theta + \sin\theta_0 \sin\theta \big\vert \geq \frac{r}{2} \hspace{3mm} \forall\, \theta \in [\theta_0-\mu_1,\theta_0+\mu_1].
\end{align}
We also choose a positive real number $\mu_2 > 0$ such that:
\begin{align}
\mu_2 \leq \min(\mu_1,\frac{\pi}{2}) \hspace{8mm} \text{and} \hspace{8mm} \frac{x}{2} \leq \sin(x) \hspace{3mm} \forall\, x \in [0,\mu_2],
\end{align}
and we finally define:
\begin{align}
\delta_0 = C(r) \int_0^{\mu_2} \frac{x}{2} \dd x \hspace{5mm} \text{with} \hspace{5mm} C(r) = \frac{(1+r)^2}{2r^2}.
\end{align}
We consider now a positive number $\delta$ smaller than $\delta_0$. By continuity, there exist $\theta_\delta^\pm$ such that:
\begin{align}
\label{EQUATPreuvLemmeColinDefinThetaDelta}
1 - f(\theta_\delta^\pm) = \delta \hspace{3mm} \text{and} \hspace{3mm} \theta_\delta^- < \theta_0 < \theta_\delta^+.
\end{align}
We denote by $\theta_\delta^\pm$ the closest solutions to $\theta_0$ of \eqref{EQUATPreuvLemmeColinDefinThetaDelta}. We now estimate the distance between $\theta_0$ and $\theta_\delta^\pm$. We will treat in detail only the case of $\theta_\delta^-$, since the case of $\theta_\delta^+$ can be obtained with exactly the same arguments. We consider then the angle $\theta_m = \theta_0-\mu_2$, so that we have:
\begin{align}
1 - f(\theta_m) = \int_{\theta_m}^{\theta_0} \partial_\theta f(\theta) \dd \theta.
\end{align}
Turning to the derivative $f'$, we observe that on the interval $[\theta_m,\theta_0]$ we have:
\begin{align}
\big\vert \partial_\theta f \big\vert &= \frac{(1+r)^2 \big\vert \sin(\theta-\theta_0) \big(-r\cos^2\theta-\sin^2\theta\big) \big(r\cos\theta_0\cos\theta+\sin\theta_0\sin\theta\big) \big\vert}{\Big[1 - (1-r^2)\cos^2\theta_0\Big]^{1/2} \cdot \Big[1 - (1-r^2)\cos^2\theta\Big]^{3/2}} \nonumber\\
&\geq \frac{(1+r)^2}{r^4} \cdot r \cdot \frac{r}{2} \sin(\theta_0-\theta)
\end{align}
because $0 \leq \theta_0 - \theta_m \leq \mu_1$. In particular, the only zero of the derivative on this interval is at $\theta = \theta_0$ because $\mu_2 \leq \frac{\pi}{2}$, and therefore $\partial_\theta f$ is non-negative on the whole interval $[\theta_m,\theta_0]$, so that:
\begin{align}
1 - f(\theta_m) \geq C(r) \int_{\theta_m}^{\theta_0} \sin(\theta_0-\theta) \dd \theta.
\end{align}
Since we have by definition $\theta_0-\theta_m = \mu_2$, we have in particular:
\begin{align}
\int_{\theta_m}^{\theta_0} \sin(\theta_0-\theta) \dd \theta = \int_0^{\mu_2} \sin(\theta) \dd \theta \geq \int_0^{\mu_2} \frac{\theta}{2} \dd \theta = \frac{\delta_0}{C(r)} \cdotp
\end{align}
Therefore, by continuity, there exists an angle $\theta_\delta^- \in [\theta_m,\theta_0]$ such that:
\begin{align}
1 - f(\theta_\delta^-) = \int_{\theta_\delta^-}^{\theta_0} \partial_\theta f(\theta) \dd \theta = \delta. 
\end{align}
In particular, we deduce that:
\begin{align}
\delta = \int_{\theta_\delta^-}^{\theta_0} \partial_\theta f(\theta) \dd \theta \geq C(r) \int_0^{\theta_0-\theta_\delta^-} \frac{\theta}{2} \dd \theta = C(r)\frac{\big(\theta_0 - \theta_\delta^-\big)^2}{4} \cdotp
\end{align}
As a consequence, $f$ being monotone between the consecutive global extrema for which $f = \pm 1$, these extrema being reach at points of the form $\theta_0 + k\pi$ when $f(\theta) = 1$ and of the form $\theta_1 + k\pi$ when $f(\theta) = -1$, we deduce that:
\begin{align}
f(\theta) \geq 1 - \delta \hspace{5mm} \text{if and only if} \hspace{5mm} \theta \in [\theta_\delta^- + k\pi,\theta_\delta^+ + k\pi] \ \text{for a certain } k \in \mathbb{Z},
\end{align}
and we have:
\begin{align}
\label{EQUATPreuvLemmeColinEstimThetaDelta}
\theta_0 - \frac{2 \delta^{1/2}}{\sqrt{C(r)}} \leq \theta_\delta^- \leq \theta_0 \leq \theta_\delta^+ \leq \theta_0 + \frac{2 \delta^{1/2}}{\sqrt{C(r)}} \cdotp
\end{align}
The same approach applied to $\theta=\theta_1$ leads to the existence of a similar family of intervals in which lies $\theta$ if $f(\theta) \leq -1+\delta$, of which the size is estimated exactly in the same way.\\
Estimate \eqref{EQUATPreuvLemmeColinEstimThetaDelta} concludes the study of the scalar product $f$ in the case when $\vert u \vert = 0$.\\
\newline
As for the third and last step of the proof, we consider the general case $\vert u \vert \neq 0$. In such a case, the vectors $v$, $\sigma$ and $\omega$ are not coplanar. In particular, the plane spanned by $v$ and $\omega$ is not the same plane as the one spanned by $v$ and $\sigma$, and the orientations of these two planes are a priori independent, so that the orientation of the plane spanned by $v$ and $\omega$ remains to be chosen. In order to make a consistent choice with the case when $\vert u \vert = 0$, we define the orientation of the plane spanned by $v$ and $\omega$ so that the sign of $b$ corresponds to the sign of $\sin\theta$ (we recall that the angle $\theta$ is the angle orientated from $v$ to $\omega$).\\
With such a choice of orientation, according to the expression \eqref{EQUATPreuvLemmeColinExprePrdScGener} the scalar product $f$ writes:
\begin{align}
\label{EQUATPreuvLemmeColinExprePrdScGener__2__}
f = \frac{1 - (1+r)\cos^2\theta_0 - (1+r) \cos^2\theta + (1+r)^2 \cos\theta_0 \cos\theta \Big( \cos\theta_0 \cos\theta + \sin\theta_0 \,\text{sgn}(\sin\theta) \sqrt{1 - \vert u \vert^2 - \cos^2\theta} \Big) }{ \sqrt{ \Big[ 1 - (1-r^2)\cos^2\theta_0 \Big] \cdot \Big[ 1 - (1-r^2) \cos^2 \theta \Big]} } \cdotp
\end{align}
%First, we localize more precisely in the interval $[\frac{\pi}{2},\frac{3\pi}{2}[$ the two zeros $\theta_0+k\pi$, $\theta_1+k\pi$ of the derivative of the scalar product $f$ in the particular case when $\vert u \vert = 0$. Without loss of generality, we assume that $\theta_0 \in [\frac{\pi}{2},\frac{3\pi}{2}[$ (because $f$ is $\pi$-periodic), and that $\theta_1$ defined by the expressions \eqref{EQUATPreuvLemmeColinExpreSnCosThet1} belongs to $[0,2\pi[$. In particular, we have that $\cos\theta_0 \leq 0$. We observe from \eqref{EQUATPreuvLemmeColinExpreSnCosThet1} then that $\sin\theta_1 \geq 0$. In addition, $\cos\theta_1$ has the same sign as $\sin\theta_0$. In particular, if $\theta_0 \in [\frac{\pi}{2},\pi[$, then $\theta_1 \in [0,\frac{\pi}{2}[$. Conversely, if $\theta_0 \in \ ]\pi,\frac{3\pi}{2}[$, then $\theta_1 \in \ ]\frac{\pi}{2},\pi]$.\\
%In the first case, in order to have $\theta_1 + k\pi \in [\frac{\pi}{2},\frac{3\pi}{2}[$, we need to choose $k = 1$, so that $\theta_1+k\pi \in [\pi,\frac{3\pi}{2}[$. In the second case, $\theta_1$ is already in $[\frac{\pi}{2},\frac{3\pi}{2}]$. In both cases, when $\theta_0$ lies in $[\frac{\pi}{2},\pi[$, then $\theta_1$ lies in $[\pi,\frac{3\pi}{2}[$ and vice versa.\\
Introducing the following function $g: x \in [0,1] \mapsto \displaystyle{\frac{1 - (1+r)x}{\sqrt{1 - (1-r^2)x}}}$,
and computing its two first derivatives, we find:
%\begin{align}
%h_1'(x) &= \frac{2(1+r) \big( 1 - (1-r^2)(1-x) \big) - \big( -r + (1+r)x \big)(1-r^2)}{2\big( 1 - (1-r^2)(1-x) \big)^{3/2}} \nonumber\\
%&= \frac{(1+r) \Big( 2 - 2(1-r^2) + 2(1-r^2)x + r(1-r) - (1-r^2)x \Big)}{2\big( 1 - (1-r^2)(1-x) \big)^{3/2}} \nonumber\\
%&= \frac{(1+r) \big( r^2 + r + (1-r^2)x \big)}{2\big( 1 - (1-r^2)(1-x) \big)^{3/2}} \geq 0,
%\end{align}
%and
\begin{align}
g'(x) &= \frac{-2(1+r) \big(1 - (1-r^2)x \big) - \big(1 - (1+r)x\big) \big( -(1-r^2) \big)}{2 \big( 1 - (1-r^2)x \big)^{3/2}} \nonumber\\
&= \frac{(1+r) \Big( -2 + 2(1-r^2)x + (1-r) - (1-r^2)x \Big)}{2 \big( 1 - (1-r^2)x \big)^{3/2}} \nonumber\\
&= \frac{(1+r)^2 \big( (1-r)x - 1 \big)}{2 \big( 1 - (1-r^2)x \big)^{3/2}} \leq 0
\end{align}
and
\begin{align}
g''(x) = (1+r)^2(1-r) \frac{\big( (1-r^2) x - 1 - 3r \big)}{4\big( (r^2-1)x+1 \big)^{5/2}},
\end{align}
so we deduce then that:
\begin{align}
-1 = g(1) \leq g(x) \leq g(0) = 1 \ \forall x \in [0,1],
\end{align}
and that $g$ is decreasing and concave.\\
%\begin{align}
%-1 = h_1(0) \leq h_1(x) \leq h_1(1) = 1 \ \forall x \in [0,1] \hspace{5mm} \text{and} \hspace{5mm} -1 = h_2(1) \leq h_2(x) \leq h_2(0) = 1 \ \forall x \in [0,1]. 
%\end{align}
%In particular we have:
%\begin{align}
%\big\vert f(\partial\pi) \big\vert \leq \big\vert f(\frac{3\pi}{2}) %\big\vert \ \forall u,\theta_0.
%\end{align}
Collecting the results we established above, we can now conclude the third and last step of the proof of Lemma \ref{LEMMEColinearitScatt}. We recall that we will conduct the proof under the assumption that $\theta_0 \in [\frac{\pi}{2},\pi[$, without loss of generality.\\
First, we observe that the derivative of the scalar product with respect to the real variable $\vert u \vert^2$ is:
\begin{align}
\partial_{\vert u \vert^2} f = - \frac{(1+r)^2 \cos\theta_0\sin\theta_0 \, \text{sgn}(\sin\theta) \cos\theta}{ 2\sqrt{ \Big[ 1 - (1-r^2)\cos^2\theta_0 \Big] \cdot \Big[ 1 - (1-r^2) \cos^2 \theta \Big] \cdot \big[ 1 - \vert u \vert^2 - \cos^2\theta \big]} } \cdotp
\end{align}
In particular, on the interval $[\frac{\pi}{2},\frac{3\pi}{2}[$, this derivative has the sign of $\cos\theta_0\sin\theta_0$ on $[\frac{\pi}{2},\pi]$, and has the opposite sign on $[\pi,\frac{3\pi}{2}[$.\\
We observe also that in the case when $\cos\theta_0\sin\theta_0 = 0$, we have $f(\vert u \vert^2,\theta) = f(0,\theta)$, and this case follows directly from the second step of the proof.\\
Assuming then $\cos\theta_0\sin\theta_0 \neq 0$, and:
\begin{align}
\label{EQUATPreuvLemmeColinInegaFinal}
f(\vert u \vert^2,\theta) \geq 1 - \delta,
\end{align}
we consider three different cases.\\
First, let us assume that \eqref{EQUATPreuvLemmeColinInegaFinal} holds for a certain $\theta \in [\frac{\pi}{2},\pi[$. Using the sign of the derivative $\partial_{\vert u \vert^2} f$, which is negative on $[\frac{\pi}{2},\pi[$, we have:
\begin{align}
1 - \delta \leq f(\vert u \vert^2,\theta) \leq f(0,\theta),
\end{align}
and using again the second step of the proof, we deduce that $\theta$ belongs to an interval of measure smaller than $C(r) \sqrt{\delta}$.\\
We consider now a second case, when \eqref{EQUATPreuvLemmeColinInegaFinal} holds for a certain $\theta \in [\partial\pi,\frac{3\pi}{2}] \subset \, ]\pi,\frac{3\pi}{2}[$, and we assume also that:
\begin{align}
\label{EQUATPreuvLemmeColinCas_2Thet0}
\vert \cos\theta_0 \vert > \frac{\sqrt{2}}{1+r} \sqrt{\delta} \hspace{5mm} \text{and} \hspace{5mm} \vert \sin \theta_0 \vert > \frac{2 r}{1+r} \sqrt{\delta}
\end{align}
for any $0 < \delta$ small enough, that is, smaller than $\delta_0 = 1$. %such that:
%\begin{align}
%\delta_0 \leq 1 \hspace{5mm} \text{and} \hspace{5mm} -g(x) \leq 1 + \frac{(1+r)^2}{4r^2}(x-1) \ \forall x \in [1-\delta_0,1].
%\end{align}
Relying on the sign of $\partial_{\vert u \vert^2} f$, which is this time non-negative because $\theta \in [\pi,\frac{3\pi}{2}]$, we deduce that $f(\vert u \vert^2,\theta)$ is bounded from above by the corresponding expression of $f$ that we can define choosing $\vert u \vert^2$ maximal, which provides:
\begin{align}
\label{EQUATPreuvLemmeColinBorne_de_D|u|^2}
f(\vert u \vert^2,\theta) &\leq \frac{1 - (1+r) \cos^2\theta_0 - (1+r) \cos^2\theta + (1+r)^2 \cos^2\theta_0 \cos^2\theta}{\sqrt{ \Big[ 1 - (1-r^2)\cos^2\theta_0 \Big] \cdot \Big[ 1 - (1-r^2) \cos^2\theta \Big]}} \nonumber\\
&\leq \frac{1 - (1+r) \cos^2\theta_0}{\sqrt{ 1 - (1-r^2)\cos^2\theta_0 }} \cdot \frac{1 - (1+r)\cos^2 \theta}{\sqrt{ 1 - (1-r^2)\cos^2\theta }}\cdot
\end{align}
Therefore, since the function $g$ takes values in $[-1,1]$, we have:
\begin{align}
\label{EQUATPreuvLemmeColinContrFinal__f_1}
f(\vert u \vert^2,\theta) = \big\vert f(\vert u \vert^2,\theta) \big\vert \leq \left\vert \frac{1 - (1+r) \cos^2\theta_0}{\sqrt{ 1 - (1-r^2)\cos^2\theta_0 }} \right\vert.
\end{align}
In particular, if on the one hand $\cos^2\theta_0$ is such that the right hand side in \eqref{EQUATPreuvLemmeColinContrFinal__f_1} is non-negative, using the convexity of the function $g$, we have:
\begin{align}
\left\vert \frac{1 - (1+r) \cos^2\theta_0}{\sqrt{ 1 - (1-r^2)\cos^2\theta_0 }} \right\vert = \frac{1 - (1+r) \cos^2\theta_0}{\sqrt{ 1 - (1-r^2)\cos^2\theta_0 }} &\leq 1 - \frac{(1+r)^2}{2} \cos^2\theta_0 \nonumber\\
&< 1 - \delta
\end{align}
using in the last inequality the assumption \eqref{EQUATPreuvLemmeColinCas_2Thet0} on the cosine of $\theta_0$. On the other hand, if $\cos^2\theta_0$ is such that $g(\cos^2\theta_0)$ is negative, we have:
\begin{align}
\left\vert \frac{1 - (1+r) \cos^2\theta_0}{\sqrt{ 1 - (1-r^2)\cos^2\theta_0 }} \right\vert &= - \frac{1 - (1+r) \cos^2\theta_0}{\sqrt{ 1 - (1-r^2)\cos^2\theta_0 }} \nonumber\\
&\leq 1 + \frac{(1+r)^2}{4r^2} \big( \cos^2\theta_0 - 1 \big) \nonumber\\
&\leq 1 - \frac{(1+r)^2}{4r^2} \sin^2\theta_0 \nonumber\\
&< 1 - \delta,
\end{align}
using in the first inequality that we have:
\begin{align}
-g(x) \leq 1 + \frac{(1+r)^2}{2r}(x-1)
\end{align}
for any $x \in [0,1]$ such that $g(x)$ is negative (because the line $y = -1 + (1-x)\frac{(1+r)^2}{2r}$ intersects the first axis at $x_0 = \frac{1+r^2}{(1+r)^2}$, and we have $g(x_0) = \frac{1-r}{\sqrt{1+r^2}} > 0$), and using in the last inequality the assumption \eqref{EQUATPreuvLemmeColinCas_2Thet0} on $\sin\theta_0$.
As a consequence, we see that, regardless the sign of $g(\cos^2\theta_0)$, when \eqref{EQUATPreuvLemmeColinCas_2Thet0} holds, there exists no angle $\theta \in [\pi,\frac{3\pi}{2}]$ such that \eqref{EQUATPreuvLemmeColinInegaFinal} can hold.\\
Turning now the last of the three cases we consider, we assume that \eqref{EQUATPreuvLemmeColinInegaFinal} holds, together with the fact that:
\begin{align}
\label{EQUATPreuvLemmeColinCas_3Thet0}
\vert \cos\theta_0 \vert \leq  \frac{\sqrt{2}}{1+r} \sqrt{\delta} \hspace{5mm} \text{or} \hspace{5mm} \vert \sin \theta_0 \vert \leq \frac{2 r}{1+r} \sqrt{\delta}
\end{align}
for any $\delta \leq 1$. In this case, we see that if we assume that the angle $\theta$ is such that:
\begin{align}
\big\vert \cos\theta \big\vert > \frac{\sqrt{2}}{1+r}\sqrt{\delta} \hspace{5mm} \text{and} \hspace{5mm} \big\vert \sin\theta \big\vert > \frac{2r}{1+r}\sqrt{\delta},
\end{align}
that is, up to exclude two intervals in $\theta$ of measure at most $C(r)\sqrt{\delta}$, we deduce using the same arguments as in the previous case that:
\begin{align}
f(\vert u \vert^2,\delta) \leq \left\vert \frac{1 - (1+r)\cos^2\theta}{\sqrt{1 - (1-r^2) \cos^2\theta}} \right\vert < 1 - \delta.
\end{align}
As a consequence, in the case when \eqref{EQUATPreuvLemmeColinCas_3Thet0} holds, \eqref{EQUATPreuvLemmeColinInegaFinal} cannot hold, except if $\theta$ belongs to one of two intervals of length at most $C(r)\sqrt{\delta}$.\\
\newline
The proof of Lemma \ref{LEMMEColinearitScatt} can now be concluded, since except if the angle $\theta$ defined as $\cos(\theta) = \frac{v}{\vert v \vert} \cdot \omega$ belongs to a finite family of intervals of length at most $C(r)\sqrt{\delta}$, we have $f(\vert u \vert^2,\theta) < 1-\delta$. This condition defines a subset of the unit sphere $\mathbb{S}^{d-1}$ of measure $C(d) \sqrt{\delta}$. The proof of Lemma \ref{LEMMEColinearitScatt} is now complete.
\end{proof}

\noindent
It is also possible to describe in more details the behaviour of the function $f$ defined by the expression \eqref{EQUATPreuvLemmeColinExprePrdScGener__2__}. More precisely, we have the following result.

\begin{propo}[Behaviour of the scalar product \eqref{EQUATPreuvLemmeColinExprePrdSc__1__} in the general case $\vert u \vert \neq 0$]
\label{PROPOPlus_de_Detailssur_f}
Without loss of generality, let us assume that $\theta_0 \in [\frac{\pi}{2},\pi[$. If we assume in addition that $\cos\theta_0\sin\theta_0 \neq 0$, then the scalar product \eqref{EQUATPreuvLemmeColinExprePrdSc__1__}, also describe by the expression \eqref{EQUATPreuvLemmeColinExprePrdScGener__2__}, has a global minimum on the interval $[\partial\pi,\frac{3\pi}{2}] \subset [\pi,\frac{3\pi}{2}]$, with $\partial\pi$ defined as $\cos^2(\partial\pi) = 1 - \vert u \vert^2$, and the function is decreasing between $\partial\pi$ and the abscissa of its global minimum, and increasing between this abscissa and $\frac{3\pi}{2}$.\\
In addition we have:
\begin{align}
f(\partial\pi) = \frac{1 - (1+r)\cos^2\theta_0}{\sqrt{1 - (1-r^2)\cos^2\theta_0}} \cdot \frac{-r + (1+r) \vert u \vert^2}{\sqrt{ 1 - (1-r^2)\big( 1 - \vert u \vert^2 \big) }} \hspace{5mm} \text{and} \hspace{5mm} f(\frac{3\pi}{2}) = \frac{1 - (1+r)\cos^2\theta_0}{\sqrt{1 - (1-r^2)\cos^2\theta_0}}\cdotp
\end{align}
\end{propo}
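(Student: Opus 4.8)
The plan is to study the function $f(\vert u\vert^2,\theta)$ given by the explicit expression \eqref{EQUATPreuvLemmeColinExprePrdScGener__2__}, restricted to the interval $\theta\in[\partial\pi,\frac{3\pi}{2}]$, where $\cos^2(\partial\pi)=1-\vert u\vert^2$ marks the smallest admissible value of $\theta$ larger than $\pi$ (recall that we need $1-\vert u\vert^2-\cos^2\theta\geq 0$, so that the square root in \eqref{EQUATPreuvLemmeColinExprePrdScGener__2__} is well defined, and on $[\pi,\frac{3\pi}{2}]$ this forces $\cos^2\theta\leq 1-\vert u\vert^2$, i.e. $\theta\geq \partial\pi$). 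On this interval $\operatorname{sgn}(\sin\theta)=-1$, so the nasty sign factor disappears and $f$ becomes a genuine smooth function of $\cos\theta$ (equivalently of $\theta$). The first step is to differentiate $f$ with respect to $\theta$ on $[\partial\pi,\frac{3\pi}{2}]$. I expect, after factoring out $(1+r)^2$ and the positive denominators exactly as in the coplanar case of the proof of Lemma \ref{LEMMEColinearitScatt}, a factorization of the derivative of the form (a strictly negative smooth factor) times $\sin(\theta-\theta_*)$ for a single critical abscissa $\theta_*\in(\partial\pi,\frac{3\pi}{2})$; this is the analogue of the factorization $g=\sin(\theta-\theta_0)(-r\cos^2\theta-\sin^2\theta)(r\cos\theta_0\cos\theta+\sin\theta_0\sin\theta)$ obtained there, now with a square-root term present. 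From such a factorization monotonicity on each side of $\theta_*$ is immediate, and since $f$ is continuous on the compact interval, $\theta_*$ is the unique interior critical point and hence the location of the global minimum of $f$ on $[\partial\pi,\frac{3\pi}{2}]$, which decreases on $[\partial\pi,\theta_*]$ and increases on $[\theta_*,\frac{3\pi}{2}]$.

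The second, purely computational, step is to evaluate $f$ at the two endpoints. At $\theta=\frac{3\pi}{2}$ we have $\cos\theta=0$, so all the $\cos\theta$-dependent terms in the numerator and the bracket $[1-(1-r^2)\cos^2\theta]$ collapse, leaving
\begin{align}
f\Big(\tfrac{3\pi}{2}\Big) = \frac{1-(1+r)\cos^2\theta_0}{\sqrt{1-(1-r^2)\cos^2\theta_0}},
\end{align}
exactly the value of the function $g$ from the third step of the proof of Lemma \ref{LEMMEColinearitScatt} evaluated at $x=\cos^2\theta_0$. At $\theta=\partial\pi$ we substitute $\cos^2\theta=1-\vert u\vert^2$ (and $\cos\theta=-\sqrt{1-\vert u\vert^2}$ since $\partial\pi\in[\pi,\frac{3\pi}{2}]$, while $\sqrt{1-\vert u\vert^2-\cos^2\theta}=0$ there so the $\sin\theta_0$ term drops out of the numerator); a direct simplification of the numerator $1-(1+r)\cos^2\theta_0-(1+r)(1-\vert u\vert^2)+(1+r)^2\cos^2\theta_0(1-\vert u\vert^2)$ and the denominator $\sqrt{[1-(1-r^2)\cos^2\theta_0][1-(1-r^2)(1-\vert u\vert^2)]}$ yields the claimed product
\begin{align}
f(\partial\pi) = \frac{1-(1+r)\cos^2\theta_0}{\sqrt{1-(1-r^2)\cos^2\theta_0}}\cdot\frac{-r+(1+r)\vert u\vert^2}{\sqrt{1-(1-r^2)(1-\vert u\vert^2)}},
\end{align}
where one checks the second factor is precisely $g(1-\vert u\vert^2)$ rewritten (indeed $1-(1+r)(1-\vert u\vert^2)=-r+(1+r)\vert u\vert^2$). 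These are straightforward algebraic identities, so I would only indicate the cancellations rather than write them out in full.

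The main obstacle is the first step: producing the clean factorization of $\partial_\theta f$ in the non-coplanar case, because the presence of the term $\operatorname{sgn}(\sin\theta)\sqrt{1-\vert u\vert^2-\cos^2\theta}=-\sqrt{1-\vert u\vert^2-\cos^2\theta}$ breaks the symmetry that made the coplanar computation collapse so neatly. The way I would handle this is to introduce the variable $x=\cos^2\theta\in[0,1-\vert u\vert^2]$ and view $f$ as a one-variable function $F(x)$ on $[\partial\pi,\frac{3\pi}{2}]$ (the map $\theta\mapsto x$ being a decreasing diffeomorphism there), differentiate $F$, and show $F'$ has exactly one zero in the open interval; alternatively, one can reuse the monotonicity structure already established in the third step of the proof of Lemma \ref{LEMMEColinearitScatt}, where the sign of $\partial_{\vert u\vert^2}f$ on $[\pi,\frac{3\pi}{2}]$ and the behaviour of $g$ were pinned down, to squeeze $f$ between its values at $\vert u\vert^2$ and at $\vert u\vert^2$ maximal and deduce the qualitative shape. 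Since both endpoint values are already identified with values of the monotone, convex function $g$, and $g$ is decreasing with $g(1)=-1\leq g(x)\leq g(0)=1$, the unimodality claim on $[\partial\pi,\frac{3\pi}{2}]$ follows once the derivative has a single interior sign change. I expect no genuinely new idea is needed beyond bookkeeping, but the algebra is the part that must be done carefully.
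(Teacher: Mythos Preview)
Your endpoint computations are correct and match the paper exactly. The gap is in the first step, and it is a real one: the clean factorization you hope for does not exist when $\vert u\vert\neq 0$. In the coplanar case the identity $\cos\theta_0\cos\theta+\sin\theta_0\,\text{sgn}(\sin\theta)\sqrt{1-\cos^2\theta}=\cos(\theta-\theta_0)$ is what makes the numerator collapse and produces the factor $\sin(\theta-\theta_0)$; once $\vert u\vert^2$ sits inside the square root that identity is lost, and the derivative does not factor as (negative)\,$\times\,\sin(\theta-\theta_*)$ for any $\theta_*$.

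What the paper actually does is quite different from both of your fallback suggestions. After isolating the sign-carrying factor $F(\vert u\vert,\theta_0,\theta)$ of $\partial_\theta f$, it observes that any zero of $F$ gives a zero of an explicit \emph{quartic} polynomial $P(X)$ in $X=\cos^2\theta$ (obtained by isolating the square root and squaring). The heart of the argument is then to show $P$ has at most two roots in $[0,1]$: one checks $P\leq 0$ outside $[0,1]$, computes $P'(1/(1-r))\geq 0$ to force a root of $P'$ outside $[0,1]$, and combines this with Rolle's theorem to rule out three or more roots of $P$ in $[0,1]$. A separate multiplicity argument shows that a double root of $F$ forces a double root of $P$; together with the sign of $F$ at the endpoints ($F(\partial\pi)<0$, $F(3\pi/2)>0$, using $\cos\theta_0\sin\theta_0\neq 0$), a case analysis then excludes two or more roots of $F$ on $[\partial\pi,\tfrac{3\pi}{2}]$, leaving exactly one. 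Your suggestion to ``show $F'$ has exactly one zero'' or to squeeze via $\partial_{\vert u\vert^2}f$ does not supply this root-counting mechanism; the quartic reduction and the Rolle/multiplicity bookkeeping are the missing ingredients.
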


\begin{proof}
We start with considering the derivative of $f$ with respect to $\theta$ in the general case when $\vert u \vert \neq 0$. Without generality, we will focus on the particular case when $\theta_0 \in [\frac{\pi}{2},\pi[$, and when $\theta \in ]\pi,\frac{3\pi}{2}[$. In the present case, we have $\text{sgn}(\sin\theta) = -1$. All the other cases can be studied in the same way. A direct but tedious computation enables to obtain that the derivative of the expression \eqref{EQUATPreuvLemmeColinExprePrdScGener__2__} of $f$ writes:
\begin{align}
\partial_\theta f = \frac{(1+r)^2\sin\theta \Big( 1 + (r^2-1)\cos^2\theta_0 \Big)}{\sqrt{1 - \vert u \vert^2 - \cos^2\theta} \Big( \big[ 1-(1-r^2)\cos^2\theta_0 \big] \cdot \big[ 1-(1-r^2)\cos^2\theta \big]\Big)^{3/2}} F(\vert u \vert,\theta_0,\theta)
\end{align}
with
\begin{align}
F(\vert u \vert,\theta_0,\theta) &= \sqrt{1 - \vert u \vert^2 - \cos^2\theta} \Big[ \big( 1 - (1+r)\cos^2\theta_0 \big)\cos\theta + \big( -1+r + (1-r^2)\cos^2\theta_0 \big)\cos^3\theta\Big] \nonumber\\
&\hspace{5mm}+ (\vert u \vert^2 - 1) \cos\theta_0\sin\theta_0 + 2 \cos\theta_0\sin\theta_0 \cos^2\theta + (r^2-1) \cos\theta_0\sin\theta_0 \cos^4\theta.
\end{align}
The zeros of the derivative $\partial_\theta f$ correspond exactly to the zero of $F$. Although obtaining a simple expression of such zeros seems impossible, we observe that if $\theta$ is a zero of $F$, then $X = \cos^2\theta$ is a zero of the quartic polynomial $P$ defined as:
\begin{align}
\label{EQUATPreuvLemmeColinPolynDeriv}
P(X) = C^2 \big[ 1 - \vert u \vert^2 - X \big] X \big( 1 + (r-1) X \big)^2 - \cos^2\theta_0\sin^2\theta_0 \Big( 1 - \vert u \vert^2 - 2X + (1-r^2) X^2 \Big)^2,
\end{align}
with $C = 1 - (1+r)\cos^2\theta_0$. We will now estimate the maximal number of zeros of $P$ in the interval $[0,1]$.\\
We observe first that the polynomial $P$ satisfies the following properties:
\begin{align}
P(X) \leq 0 \hspace{5mm} \forall\, X \leq 0 \hspace{5mm} \text{and} \hspace{5mm} \forall\, X \geq 1.
\end{align}
In addition, except if $\vert u \vert = 0$, which corresponds to a set of angles $\omega$ of zero Lebesgue measure, we have $P(X) < 0\ \forall\, X \geq 1$. Similarly, we have always $P(X) < 0\ \forall\, X < 0$.\\
If we assume now that there are three distinct zeros or more to the polynomial $P$ in $[0,1]$, because of the sign of $P$ outside $[0,1]$ we deduce that there are at least four distinct zeros, or if the number of zeros is exactly three, then at least one of them has to be a root of multiplicity at least $2$. As a consequence, using Rolle's theorem, we deduce that there exist at least three zeros (counted with their multiplicity) of the derivative $P'$ of the polynomial $P$ in the interval $[0,1]$.\\
Then, since:
\begin{align}
P'(X) &= - C^2 X \big( 1 + (r-1)X \big)^2 + C^2 \big[ 1 - \vert u \vert^2 - X \big] \big( 1 - (r-1)X \big)^2 \nonumber\\
&\hspace{5mm}+ 2 C^2(r-1) \big[ 1 - \vert u \vert^2 - X \big] X \big( 1 + (r-1)X \big) \nonumber\\
&\hspace{5mm}- 2\cos^2\theta_0\sin^2\theta_0 \big( -2 + 2(1-r^2)X \big) \Big( 1- \vert u \vert^2 - 2X + (1-r^2) X^2 \Big),
\end{align}
we observe that:
\begin{align}
P'\big( \frac{1}{1-r} \big) &= 4 \cos^2\theta_0\sin^2\theta_0 \Big( 1 - \frac{1-r^2}{1-r} \Big) \Big( 1 - \vert u \vert^2 - \frac{2}{1-r} + \frac{1-r^2}{1-r}\Big) \nonumber\\
&= - 4 \cos^2\theta_0 \sin^2\theta_0 r \Big( - \vert u \vert^2 - r \frac{(1+r)}{(1-r)} \Big) \geq 0.
\end{align}
Taking into account the limits of $P$ as $X \rightarrow \pm \infty$, we deduce that there exists another root to the derivative $P'$ outside the interval $[0,1]$.\\
We obtain therefore a contradiction: $P$ being a quartic polynomial, $P'$ is a cubic polynomial and has therefore at most three real roots counted with their multiplicity. We deduce then that $P$ has at most two real roots in the interval $[0,1]$.\\
Back to the expression of $F$, we observe that when $\theta = \partial \pi$ (where $\partial\pi$ denotes the angle such that $1 - \vert u \vert^2 - \cos^2\theta = 0$, which delimits the interval of definition of $F$ in $[\pi,\frac{3\pi}{2}]$) we have:
\begin{align}
F(\vert u \vert,\theta_0,\partial\pi) = \cos\theta_0\sin\theta_0 \big[ \vert u \vert^2 - 1 + 2 + r^2 - 1\big] \leq 0,
\end{align}
keeping in mind that $\theta_0 \in [\frac{\pi}{2},\pi[$, so that $\sin\theta_0 > 0$ and $\cos\theta_0 \leq 0$. We have also:
\begin{align}
F(\vert u \vert,\theta_0,\frac{3\pi}{2}) = \big( \vert u \vert^2 - 1 \big) \cos\theta_0\sin\theta_0 \geq 0.
\end{align}
In the case when $\vert u \vert \neq 1$ and $\cos\theta_0\sin\theta_0 \neq 0$ we have even:
\begin{align}
F(\vert u \vert,\theta_0,\partial\pi) < 0 \hspace{5mm} \text{and} \hspace{5mm} F(\vert u \vert,\theta_0,\frac{3\pi}{2}) > 0.
\end{align}
We prove now that if $F$ has a double root, in the sense that $F$ and its derivative $\partial_\theta F$ vanish together, then so does $P$. Indeed, if $F$ and $\partial_\theta F$ vanish at the same value $\widetilde{\theta}$, there exists a differentiable function $G$ such that:
\begin{align}
F(\theta) = \big( \cos\theta - \cos\widetilde{\theta} \big) G\big( \cos\theta \big) \ \forall\, \theta, \hspace{5mm} \text{and} \hspace{5mm} G\big( \cos\widetilde{\theta} \big) = 0. 
\end{align}
$F$ has the form:
\begin{align}
F ( \theta ) = \cos\theta R\big( \cos^2\theta \big) P_1\big( \cos^2\theta \big) + P_2\big( \cos^2\theta \big)
\end{align}
where $P_1$ and $P_2$ are two polynomials, and $R$ is a function such that its square is a polynomial. By definition, the polynomial $P$ introduced in \eqref{EQUATPreuvLemmeColinPolynDeriv} is equal to:
\begin{align}
P(X) = X R^2(X)P_1^2(X)-P_2^2(X).
\end{align}
Therefore, if $F$ and $\partial_\theta F$ both vanish at $\widetilde{\theta}$, we have:
\begin{align}
P(\cos^2\theta) &= F\big( \cos\theta \big) \Big( \cos\theta R\big( \cos^2\theta \big) P_1\big( \cos^2\theta \big) + P_2\big( \cos^2\theta \big) \Big) \nonumber\\
&= \big( \cos\theta - \cos\widetilde{\theta} \big) G\big( \cos\theta \big) \Big( \cos\theta R\big( \cos^2\theta \big) P_1\big( \cos^2\theta \big) + P_2\big( \cos^2\theta \big) \Big).
\end{align}
This last expression proves that $P$ and its derivative also vanish at $\cos^2\widetilde{\theta}$, so that $P$ would have a double root.\\
\newline
We are now in position to deduce that $F$ has exactly one root in the case when $\cos\theta_0\sin\theta_0 \neq 0$. Let us assume that $F$ has at least two distinct roots. First, if $F$ has three distinct roots or more, we obtain a contradiction because we would deduce that $P$ has three distinct roots or more, which is not possible.\\
If now we assume that $F$ has exactly two distinct roots, necessarily at least one of these roots is a double root, because of the signs of $F$ at the boundary of its interval of definition. We deduce then that $P$ would have two distinct roots, one having a multiplicity at least two. However, in the case when $P$ has at least two distinct roots (hence, exactly two), the respective multiplicities of these roots have to be equal to $1$. Indeed, if the two roots have both a respective multiplicity $k_1,k_2$ larger than one, the derivative of $P$ would have two roots with respective multiplicity $(k_1-1)$ and $(k_2-1)$. Applying Rolle's theorem, $P'$ vanishes also at another point, between the two distinct roots of $P$. Using once again the fact that $P'$ vanishes at a certain point above $\frac{1}{1-r}$ we obtain a contradiction, because the non-trivial cubic polynomial $P'$ would have at least four roots, counted with their multiplicities. Finally, if one root has multiplicity $1$, while the other has a multiplicity $k > 1$, on the one hand we obtain the same contradiction by counting the roots if $k \geq 3$. On the other hand, if the multiplicity $k$ of the second root is exactly $2$, we obtain a contradiction by sign considerations: the limits of $P$ cannot be both $-\infty$ in such a case.\\
\newline
Therefore, we have proved that $F$ has exactly one root in the interval $[\pi,\frac{3\pi}{2}[$, provided that $\cos\theta_0\sin\theta_0 \neq 0$. Relying on the signs of $F(\vert u \vert,\theta_0,\partial\theta)$ and $F(\vert u \vert,\theta_0,\frac{3\pi}{2})$, we can deduce that there exists an angle $\theta_\text{min}$ such that:
\begin{align}
F(\theta) < 0 \ \forall\, \theta \in [\partial\pi,\theta_\text{min}[ \hspace{5mm} \text{and} \hspace{5mm} F(\theta) > 0 \ \forall\, \theta \in\ ]\theta_\text{min},\frac{3\pi}{2}]. 
\end{align}
In conclusion, we proved that if $\cos\theta_0\sin\theta_0 \neq 0$ and $\vert u \vert \neq 0$, then:
\begin{align}
f(\vert u \vert,\theta_0,\theta) \leq \max\Big(f(\vert u \vert,\theta_0,\partial\pi),f(\vert u \vert,\theta_0,\frac{3\pi}{2})\Big).
\end{align}
We conclude now the proof of the proposition by evaluating the two boundaries values of the scalar product $f$, at $\theta = \partial\pi$ and $\frac{3\pi}{2}$ respectively. On the one hand we have, relying on the identity $\cos^2(\partial\pi) = 1 - \vert u \vert^2$:
\begin{align}
f(\partial \pi) &= \frac{1 - (1+r)\cos^2\theta_0 - (1+r)\big( 1 - \vert u \vert^2 \big) + (1+r)^2 \cos^2\theta_0 \big( 1 - \vert u \vert^2 \big)}{ \sqrt{ \Big[ 1 - (1-r^2)\cos^2\theta_0 \Big] \cdot \Big[ 1 - (1-r^2) \big( 1 - \vert u \vert^2 \big) \Big]} } \nonumber\\
&= \frac{-r +r(1+r)\cos^2\theta_0 + (1+r)\big[ 1 - (1+r)\cos^2\theta_0 \big] \vert u \vert^2}{ \sqrt{ \Big[ 1 - (1-r^2)\cos^2\theta_0 \Big] \cdot \Big[ 1 - (1-r^2) \big( 1 - \vert u \vert^2 \big) \Big]} } \nonumber\\
&= \frac{1 - (1+r) \cos^2\theta_0}{\sqrt{ 1 - (1-r^2)\cos^2\theta_0 }} \cdot \frac{-r + (1+r) \vert u \vert^2}{\sqrt{ 1 - (1-r^2) \big( 1 - \vert u \vert^2 \big) }}\cdotp
\end{align}
On the other hand, we find:
\begin{align}
f(\frac{3\pi}{2}) = \frac{1 - (1+r)\cos^2\theta_0}{\sqrt{ 1 - (1-r^2)\cos^2\theta_0 }} \cdotp
\end{align}
The proof of Proposition \ref{PROPOPlus_de_Detailssur_f} is complete.
\end{proof}

\subsection{Estimate of the measure of the dynamical tube}

\begin{proof}[Proof of Lemma \ref{LEMMEEstimMesurTube_Dynam}]
To prove Lemma \ref{LEMMEEstimMesurTube_Dynam}, we will decompose the set $\mathcal{T}_\varepsilon$ into the three following subsests, that are not disjoint. We define:
\begin{align}
T_1 &= \big( [x_1,x_2] + \overline{B(0,\varepsilon)} \big) \cap \{ y \in \mathbb{R}^d\ /\ (y-x_2)\cdot(x_2-x_1) \leq 0\} \nonumber\\
&= \big\{ y \in \mathbb{R}^d\ /\ \exists\ \lambda \in [0,1],\, z \in \overline{B(0,\varepsilon)} \ \text{such that}\ y = \lambda x_1 + (1-\lambda) x_2 + z,\ \text{and} (y-x_2)\cdot(x_2-x_1) \leq 0 \big\}.
\end{align}
$T_1$ is the ``first part'' of the tube $\mathcal{T}_\varepsilon$. It is constituted of a portion of a cylinder, together with a half sphere centered on $x_1$. Similarly, we introduce:
\begin{align}
T_2 &= \big( [x_2,x_3] + \overline{B(0,\varepsilon)} \big) \cap \{ y \in \mathbb{R}^d\ /\ (y-x_2)\cdot(x_2-x_3) \leq 0\} \nonumber\\
&= \big\{ y \in \mathbb{R}^d\ /\ \exists\ \lambda \in [0,1],\, z \in \overline{B(0,\varepsilon)} \ \text{such that}\ y = \lambda x_2 + (1-\lambda) x_3 + z,\ \text{and} (y-x_2)\cdot(x_2-x_3) \leq 0 \big\},
\end{align}
and we denote finally by $B$ the ball:
\begin{align}
B = \overline{B(x_2,\varepsilon)}.
\end{align}
We have $\mathcal{T}_\varepsilon = T_1 \cup B \cup T_2$, where the three subsets $T_1$, $B$, $T_2$ intersect each other pairwise.\\
Relying the inclusion-exclusion formula, we have:
\begin{align}
\big\vert \mathcal{T}_\varepsilon \big\vert = \vert T_1 \vert + \vert T_2 \vert + \vert B \vert - \vert T_1 \cap B \vert - \vert T_2 \cap B \vert - \vert T_1 \cap T_2 \vert + \vert T_1 \cap T_2 \cap B \vert.
\end{align}
The objective is to prove that $\big\vert \mathcal{T}_\varepsilon \big\vert \leq \vert T_1 \vert + \vert T_2 \vert$, which would conclude the proof of the lemma. Since by the inclusion-exclusion formula we have also:
\begin{align}
\big\vert B \cap (T_1 \cup T_2) \big\vert = \big\vert \big( B \cap T_1 \big) \cup \big( B \cap T_2 \big) \big\vert = \vert B \cap T_1 \vert + \vert B \cap T_2 \vert - \vert B \cap T_1 \cap T_2 \vert,
\end{align}
we deduce:
\vspace{-5mm}
\begin{align}
&\vert B \vert - \vert T_1 \cap B \vert - \vert T_2 \cap B \vert - \vert T_1 \cap T_2 \vert + \vert T_1 \cap T_2 \cap B \vert \nonumber\\
&\hspace{5mm}= \vert B \vert - \big\vert B \cap \big( T_1 \cup T_2 \big) \big\vert - \vert T_1 \cap T_2 \vert.
\end{align}
Finally, decomposing $
\vert B \vert = \big\vert B \cap \big( T_1 \cup T_2 \big) \big\vert + \big\vert B \cap T_1^c \cap T_2^c \big\vert$ and observing by symmetry that $\big\vert B \cap T_1^c \cap T_2^c \big\vert = \big\vert B \cap T_1 \cap T_2 \big\vert$ (in this last step we used the fact $T_1$ and $T_2$ both contain a half ball centered on $x_2$), we conclude:
\begin{align}
\vert B \vert - \vert T_1 \cap B \vert - \vert T_2 \cap B \vert - \vert T_1 \cap T_2 \vert + \vert T_1 \cap T_2 \cap B \vert \leq 0.
\end{align}
concluding the proof of Lemma \ref{LEMMEEstimMesurTube_Dynam}.
\end{proof}

\end{appendices}

\bigskip

\noindent\textbf{Acknowledgements.} 
The authors gratefully
acknowledges the support by the project PRIN 2022 (Research Projects of National Relevance)
- Project code 202277WX43.

\normalsize

\vspace{1.5cm}

\def\adresse{
\begin{description}

\item[T.~Dolmaire:] { 
Dipartimento di Ingegneria e Scienze\\ dell'Informazione e Matematica (DISIM),\\ Università degli Studi dell'Aquila, \\ 67100  L'Aquila, Italy \\  E-mail: \texttt{theophile.dolmaire@univaq.it}} 

\item[A. Nota:] {Gran Sasso Science Institute,\\ Viale Francesco Crispi 7, \\ 67100 L’Aquila, Italy  \\
E-mail: \texttt{alessia.nota@gssi.it}}

\end{description}
}

\adresse


\begin{thebibliography}{99}

\bibitem{Alex975} R.~K.~Alexander, \emph{The Infinite Hard-Sphere System}, Ph.D thesis, University of California in Berkeley (1975)

\bibitem{Alex976} R.~K.~Alexander, ``Time Evolution for Infinitely Many Hard Spheres'', \emph{Commun. Math. Phys.} \textbf{49}, 217--232 (1976)

\bibitem{Ayi_017} N.~Ayi, ``From Newton's law to the linear Boltzmann equation without cut-off'', \emph{Commun. Math. Phys.} \textbf{350}, 1219--1274 (2017)

\bibitem{BNP} G.~Basile, A.~Nota, M.~Pulvirenti, ``A Diffusion Limit for a Test Particle in a Random Distribution of Scatterers'', \emph{J. Stat. Phys.} \textbf{155}:6, 1087--1111 (2014)

\bibitem{BNPP} G.~Basile, A.~Nota, F.~Pezzotti, M.~Pulvirenti, ``Derivation of the Fick’s Law for the Lorentz Model in a Low Density Regime'', \emph{Commun. Math. Phys.} \textbf{336}, 1607--1636 (2015)

\bibitem{BeCP997} D.~Benedetto, E.~Caglioti, M.~Pulvirenti, ``A kinetic equation for granular media'', \emph{Math. Model. Numer. Anal.} \textbf{31}:5, 615--641 (1997).

\bibitem{BeCa999} D.~Benedetto, E.~Caglioti, ``The collapse phenomenon in one-dimensional inelastic point particle systems'', \emph{Physica D} \textbf{132}, 457--475 (1999)

\bibitem{Bill999} P.~Billingsley, \emph{Convergence of Probability Measures}, 2nd edition, Wiley, New-York (1999)

\bibitem{BMHH}
A.~V.~Bobylev, F.~A.~Maaø, A.~Hansen, E. H.~Hauge, ``Two-Dimensional Magnetotransport according to the Classical Lorentz Model'', \emph{Phys. Rev. Lett.} \textbf{75}, 197--200 (1995)

\bibitem{BHPH}
A.~V.~Bobylev, A.~Hansen, J.~Piasecki, E.~H.~Hauge, ``From the Liouville Equation to the Generalized Boltzmann Equation for Magnetotransport in the 2D
Lorentz Model'', \emph{J. Stat. Phys.} \textbf{102}:5-6, 1133--1150 (2001)

\bibitem{BGSR016} T.~Bodineau, I.~Gallagher, L.~Saint-Raymond, ``The Brownian motion as the limit of a deterministic system of hard-spheres'', \emph{Invent. Math.} \textbf{203}, 493--553 (2016)

\bibitem{BGSS023} T.~Bodineau, I.~Gallagher, L.~Saint-Raymond, S.~Simonella, ``Statistical dynamics of a hard sphere gas: fluctuating Boltzmann equation and large deviations'', \emph{Ann. Math.} \textbf{198}:3, 1047--1201 (2023)

\bibitem{BGSS023_3} T.~Bodineau, I.~Gallagher, L.~Saint-Raymond, S.~Simonella, ``On the dynamics of dilute gases'', \emph{Eur. Math. Soc. Mag.} \textbf{128}, 13--22 (2023)

\bibitem{BBS} C.~Boldrighini, L.~A.~Bunimovich, Y.~G.~Sinai, ``On the Boltzmann equation for the Lorentz gas'', \emph{J. Stat. Phys.} \textbf{32}:3, 477--501 (1983)

\bibitem{BrPo004} N.~V.~Brilliantov, T.~P\"oschel, \emph{Kinetic Theory of Granular Gases}, Oxford University Press, Oxford (2004)

\bibitem{BuFK998} D.~Burago, S.~Ferleger, A.~Kononenko, ``Uniform estimates on the number of collisions in semi-dispersing billiards'', \emph{Ann. Math.} \textbf{147}:3, 695--708 (1998)

\bibitem{CG1} E.~Caglioti, F.~Golse, ``On the Distribution of Free Path Lengths for the Periodic Lorentz Gas III'', \emph{Commun. Math. Phys.} \textbf{236}, 199-–221 (2003)

\bibitem{CG3} E.~Caglioti, F.~Golse, ``On the Boltzmann-Grad Limit for the Two Dimensional Periodic Lorentz Gas'', \emph{J. Stat. Phys.} \emph{141}:2, 264--317 (2010)

\bibitem{CHMR021} J.~A.~Carrillo, J.~Hu, Z.~Ma, T.~Rey, ``Recent development in kinetic theory of granular materials: analysis and numerica methods'', in \emph{Trails in Kinetic Theory}, SEMA SIMAI Springer Series, \textbf{25}, 1--36, Springer-Verlag (2021)

\bibitem{CDKK999} B.~A.~Cipra, P.~Dini, S.~Kennedy, A.~Kolan, ``Stability of one-dimensional inelastic collision sequences of four balls'', \emph{Physica D} \textbf{125}, 183--200 (1999)

\bibitem{CoGM995} P.~Constantin, E.~Grossman, M.~Mungan, ``Inelastic collisions of three particles on a line as a two-dimensional billiard'', \emph{Physica D} \textbf{83}, 409--420 (1995)

\bibitem{DeHM024} Y.~Deng, Z.~Hani, X.~Ma, ``Long time derivation of the Boltzmann equation from hard sphere dynamics'', preprint arXiv:2408.07818 (08/2024 for the first version, 11/2024 for the second version, and 07/2025 for the third version)

\bibitem{DP} L.~Desvillettes, M.~Pulvirenti, ``The linear Boltzmann equation for long-range
forces: a derivation from particle systems'', \emph{Math. Models Methods Appl. Sci.} \textbf{09}, 1123--1145 (1999)

\bibitem{DR} L.~Desvillettes, V. Ricci, ``A rigorous derivation of a linear kinetic equation of Fokker-Planck type in the limit of grazing collisions'', \emph{J. Stat. Phys.} \textbf{104}:5-6, 1173–1189, (2001)

\bibitem{Dolm023} T.~Dolmaire, ``About Lanford's theorem in the half-space with specular reflection'', \emph{Kinetic and Related Models} \textbf{16}:2, 207-–268 (2023)

\bibitem{DoVeNot} T.~Dolmaire, J.~J.~L.~Vel\'azquez, ``An Alexander-like theorem for a particle model with inelastic collisions'' (previously titled ``A particle model that conserves the measure in the phase space, but does not conserve the kinetic energy''), preprint arXiv:2403.02162 (03/2024).

\bibitem{DoVe024} T.~Dolmaire, J.~J.~L.~Vel\'azquez, ``Collapse of Inelastic Hard Spheres in Dimension $d \geq 2$'', \emph{J. Nonlinear Sci.} \textbf{34}, article number 111 (2024)

\bibitem{DoVe025} T.~Dolmaire, J.~J.~L.~Vel\'azquez, ``Properties of some dynamical systems for three collapsing inelastic particles'', \emph{Physica D} \textbf{472}, article number 134477 (2025)

\bibitem{DoHu024} T.~Dolmaire, E.~H\"ubner-Rosenau, ``One-dimensional inelastic collapse of four particles: asymmetric collision sequences and spherical billiard reduction'', \emph{Nonlinearity} \textbf{38}, article number 055011 (2025)

\bibitem{Dolm025} T.~Dolmaire, ``One-dimensional inelastic collapse: a review'', accepted for publication in the volume of proceeding \emph{From Particle Systems to Partial Differential Equations International Conference, Particle Systems and PDEs XI and XII, 2023 -- 2024} (editors: Gioia Carinci, Chiara Franceschini, Patricia Gonçalves, Federico Sau and Ana Jacinta Soares) (2025)

\bibitem{DoMN025} T.~Dolmaire, N.~Miele, A.~Nota, ``The global attractor of the inelastic linear Boltzmann equation in a gravity field for Maxwell molecules'', preprint arXiv:2504.02155 (04/2025).

\bibitem{DGL} D.~D\"urr, S.~Goldstein, J.~L.~Lebowitz, ``A Mechanical Model of Brownian Motion'', \emph{Commun. Math. Phys.} \textbf{78}, 507--530 (1981)

\bibitem{ErSY008} L.~Erd\"os, M.~Salmhofer, H.-T.~Yau, ``Quantum diffusion of the random Schr\"odinger evolution in the scaling limit'', \emph{Acta Math.} \textbf{200}, 211--277 (2008)

\bibitem{Foll999} G.~F.~Folland, \emph{Real Analysis: Modern Techniques and Their Applications}, Wiley, New-York (1999)

\bibitem{GSRT013} I.~Gallagher, L.~Saint-Raymond, B.~Texier, \emph{From Newton to Boltzmann: Hard Spheres and Short-Range Potentials}, Zurich Lectures in Advanced Mathematics, \textbf{18}, European Mathematical Society (EMS), Z\"urich (2013)

\bibitem{Galla1969} G.~Gallavotti, ``Divergences and the Approach to Equilibrium in the Lorentz and the Wind-Tree Models'', \emph{Phys. Rev.} \textbf{185}:1, 308--322 (1969)

\bibitem{Gallavotti} G.~Gallavotti, ``Grad-Boltzmann limit and Lorentz's Gas.'' in \emph{Statistical Mechanics. A short
treatise.} Appendix 1.A2. Springer, Berlin (1999)

\bibitem{Gols022} F.~Golse, ``The Boltzmann-Grad limit for the Lorentz gas with a Poisson distribution of obstacles'', \emph{Kinet. Relat. Models} \textbf{15}:3, 517--534 (2022)

\bibitem{HDHa001} K.~Hyeon-Deuk, H.~Hayakawa, ``Statistical Properties of Inelastic Lorentz Gas'', \emph{J. Phys. Soc. Jpn.} \textbf{70}:7, 1954--1965 (2001)

\bibitem{KP} H.~Kesten, G.~C.~Papanicolaou, ``A limit theorem for stochastic acceleration'', \emph{Commun. Math. Phys.} \textbf{78}, 19--63 (1980)

\bibitem{KuSp} A.~Kuzmany, H.~Spohn, ``Magnetotransport in the two-dimensional Lorentz gas'', \emph{Phys. Rev. E} \textbf{57}, 5544--5553 (1998)

\bibitem{Lanf975} O.~E.~Lanford, ``Time evolution of large classical systems'', in \emph{Dynamical Systems, Theory and Applications}, Lecture Notes in Physics, \textbf{38}, 1--111, Springer-Verlag (1975)

\bibitem{LaPe} G.~Last, M.~Penrose, \emph{Lectures on the Poisson Process}, Cambridge Univ. Press (2018)

\bibitem{LeBi022} C.~Le~Bihan, ``Boltzmann-Grad limit of a hard sphere system in a box with isotropic boundary conditions'', \emph{Discrete and Contin. Dyn. Syst.} \textbf{42}:4, 1903--1932 (2022)

\bibitem{LuTo} C.~Lutsko, B.~Tóth, ``Invariance Principle for the Random Lorentz Gas—Beyond
the Boltzmann-Grad Limit'', \emph{Commun. Math. Phys.} \textbf{379}, 589--632 (2020)

\bibitem{MN} M.~Marcozzi, A.~Nota, ``Derivation of the Linear Landau Equation and Linear
Boltzmann Equation from the Lorentz Model with Magnetic Field'', \emph{J. Stat. Phys.} \textbf{162}:6, 1539--1565 (2016)

\bibitem{MS} J.~Marklof, A.~Str\"ombergsson, ``The Boltzmann-Grad limit of the periodic Lorentz gas'', \emph{Ann. Math.} \textbf{174}, 225--298 (2011)

\bibitem{MS_2} J. Marklof, A. Str\"ombergsson, \emph{Kinetic Theory for the Low-Density Lorentz Gas}, Mem. Am. Math. Soc., \textbf{294}, American Mathematical Society, Providence (2024)
%Memoirs of American Mathematical Society 2024; Volume 294, Number 1464.} \textbf{174}, 225-298, 2011

\bibitem{MaPi999} P.~A.~Martin, J.~Piasecki, ``Lorentz's model with dissipative collisions'', \emph{Physica A} \textbf{265}, 19--27 (1999)

\bibitem{MaPi007} P.~A.~Martin, J.~Piasecki, ``Particle propagation under an external field and inelastic collisions: existence of stationary states'', \emph{J. Phys. A: Math. Theor.} \textbf{40}, 361--369 (2007)

\bibitem{MNYo994} S.~McNamara, W.~R.~Young, ``Inelastic collapse in two dimensions'', \emph{Phys. Rev. E} \textbf{50}:1, R28--31 (1994)

\bibitem{McNa012} S.~McNamara, ``The onset of inelastic collapse in a one-dimensional granular gas'', \emph{Granul. Matter} \textbf{14}:2, 121--126 (2012).

\bibitem{MiMo009} S.~Mischler, C.~Mouhot, ``Stability, Convergence to Self-Similarity and Elastic Limit for the Boltzmann Equation for Inelastic Hard Spheres'', \emph{Commun. Math. Phys.} \textbf{288}, 431--502 (2009)

\bibitem{NoVe017} A.~Nota, J.~J.~L.~Vel\'azquez, ``On the Growth of a Particle Coalescing in a Poisson Distribution of Obstacles'', \emph{Commun. Math. Phys.} \textbf{354}, 957--1013 (2017)

\bibitem{NSV} A.~Nota, S.~Simonella, J.~J.~L.~Vel\'azquez, ``On the theory of Lorentz gases with long range interactions'', \emph{Rev. Math. Phys.} \textbf{30}, 1850007 (2018)

\bibitem{NSS} A.~Nota, C.~Saffirio, S.~Simonella, ``Two-dimensional Lorentz process for magnetotransport: Boltzmann-Grad limit'', \emph{Ann. Inst. H. Poincaré Probab. Statist.} \textbf{58}, 1228--1243 (2022)

\bibitem{PuSS014} M.~Pulvirenti, C.~Saffirio, S.~Simonella, ``On the validity of the Boltzmann equation for short range potentials'', \emph{Rev. Math. Phys.} \textbf{26}:2, 1450001 (2014)

\bibitem{Rudi987} W.~Rudin, \emph{Real and Complex Analysis}, 3rd ed., McGraw-Hill Book Company, Singapore (1987)

\bibitem{Spohn1} H.~Spohn, ``The Lorentz flight process converges to a random flight process'', \emph{Commun. Math. Phys.} \textbf{60}, 277--290 (1978)

\bibitem{Spohn2} H.~Spohn, ``Kinetic equations from Hamiltonian dynamics: Markovian limits'',
\emph{Rev. Mod. Phys.} \textbf{53}, 569--615 (1980)

\bibitem{Tosc000} G.~Toscani, ``One-dimensional kinetic models of granular flows'', \emph{Modél. Math. Anal. Numér.} \textbf{34}:6, 1277--1291 (2000)

\bibitem{Vill006} C.~Villani, ``Mathematics of Granular Materials'', \emph{J. Stat. Phys.} \textbf{124}:2-4, 781--822 (2006)

\bibitem{ZhKa996} T.~Zhou, L.~P.~Kadanoff, ``Inelastic collapse of three particles'', \emph{Phys. Rev. E} \textbf{54}:1, 623--628 (1996)

\end{thebibliography}
\end{document}